\newcommand{\executeiffilenewer}[3]{%
\ifnum\pdfstrcmp{\pdffilemoddate{#1}}%
{\pdffilemoddate{#2}}>0%
{\immediate\write18{#3}}\fi%
}
\newcommand{%
\executeiffilenewer{.svg}{.pdf}%
{inkscape -z -C --file=.svg %
--export-pdf=.pdf --export-latex}%
{\input{.pdf_tex}}}[1]{%
\executeiffilenewer{#1.svg}{#1.pdf}%
{inkscape -z -C --file=#1.svg %
--export-pdf=#1.pdf --export-latex}%
{\input{#1.pdf_tex}}}%
\theoremstyle{plain}
\let\abs\relax
\DeclarePairedDelimiter\abs{\lvert}{\rvert}
\let\originalleft\left
\let\originalright\right
\def\left#1{\mathopen{}\originalleft#1}
\def\right#1{\originalright#1\mathclose{}}
\newcommand{\floor}[1]{\ensuremath{\lfloor #1 \rfloor}}
\newcommand{\ceil}[1]{\ensuremath{\lceil #1 \rceil}}
\newcommand{\GenFac}{\textsc{GenFac}\xspace}
\newcommand{\Factor}[1]{\ensuremath{#1}-\textsc{Factor}\xspace}
\newcommand{\BFactor}{\Factor{B}}
\newcommand{\MinBFactor}{\textsc{Min}-\BFactor}
\newcommand{\MaxBFactor}{\textsc{Max}-\BFactor}
\newcommand{\CountFactor}[1]{\#\Factor{#1}}
\newcommand{\CountBFactor}{\CountFactor{B}}
\newcommand{\BFactorRelation}{$B$-\textsc{Factor with Relations}\xspace}
\newcommand{\BFR}{$B$-\textsc{Factor}$^{\mathcal R}$\xspace}
\newcommand{\CountBFR}{\#\BFR}
\newcommand{\countMIS}{\#\textsc{MaxIndSet}\xspace}
\newcommand{\countECover}{\#\textsc{EdgeCover}\xspace}
\newcommand{\newextmathcommand}[2]{%
  \newcommand{#1}{{\xspace\ensuremath{#2}\xspace}}
}
\newcommand{\deff}{\coloneqq}
\newextmathcommand{\from}{\leftarrow}
\newcommand{\EQ}[1]{\ensuremath{\mathtt{EQ}_{#1}}}
\newcommand{\HWin}[2][]{\ensuremath{\mathtt{HW}_{\in #2}%
\ifthenelse{\equal{#1}{}}{}{^{(#1)}}}}
\newcommand{\HWeq}[2][]{\ensuremath{\mathtt{HW}_{= #2}%
\ifthenelse{\equal{#1}{}}{}{^{(#1)}}}}
\newcommand{\Hol}[1]{\ensuremath{\operatorname{Holant}(#1)}}
\newcommand{\IN}{\texttt{in}}
\newcommand{\OUT}{\texttt{out}}
\newcommand{\true}{\textsf{true}\xspace}
\newcommand{\false}{\textsf{false}\xspace}
\DeclareMathOperator{\poly}{poly}
\DeclareMathOperator{\hw}{hw}
\DeclareMathOperator{\maxgap}{max-gap}
\DeclareMathOperator{\cEC}{\#EC}
\newcommand{\dotcup}{\mathrel{\dot{\cup}}}
\newextmathcommand{\SetB}{\{0,1\}}
\newextmathcommand{\SetF}{\mathbb F}
\newextmathcommand{\SetN}{\mathbb N}
\newextmathcommand{\SetR}{\mathbb R}
\newextmathcommand{\SetZ}{\mathbb Z}
\newcommand{\sharpP}{\textup{\textsf{\#P}}\xspace}
\newcommand{\NP}{\textup{\textsf{NP}}\xspace}
\newcommand{\sharpW}[1]{\textup{\#\textsf{W}$[#1]$}\xspace}
\renewcommand{\O}{\mathcal O}
\newcommand{\Ostar}[2][n]{\ensuremath{#2 #1^{\mathcal{O}(1)} }}
\newcommand{\tw}{{\operatorname{tw}}}
\newcommand{\pw}{{\operatorname{pw}}}
\let\eps\epsilon
\author{D\'aniel Marx}
{CISPA Helmholtz Center for Information Security, Saarbrücken, Germany}
{marx@cispa.de}
{https://orcid.org/0000-0002-5686-8314}
{}
\author{Govind S. Sankar}
{Duke University, Durham, NC, USA}
{govind.subash.sankar@duke.edu}
{https://orcid.org/0000-0002-7443-9599}
{}
\author{Philipp Schepper}
{CISPA Helmholtz Center for Information Security, Saarbrücken, Germany}
{philipp.schepper@cispa.de}
{https://orcid.org/0000-0002-5810-7949}
{Part of Saarbrücken Graduate School of Computer Science, Germany.}
\authorrunning{D.~Marx, G.\,S.~Sankar, P.~Schepper}
\newcommand{\AntiFactor}[1]{\ensuremath{#1}-\textsc{AntiFactor}\xspace}
\newcommand{\AntiFactorR}[1]{\ensuremath{#1}-\textsc{AntiFactor}$^{\mathcal R}$\xspace}
\newcommand{\MaxAntiFactor}[1]{\textsc{Max}-\AntiFactor{#1}}
\newcommand{\MaxAntiFactorR}[1]{\textsc{Max}-\AntiFactorR{#1}}
\newcommand{\CountAntiFactor}[1]{\#\AntiFactor{#1}}
\newcommand{\CountAntiFactorR}[1]{\#\AntiFactorR{#1}}
\newcommand{\CountMinAntiFactor}[1]{\#\textsc{Min-}\AntiFactor{#1}}
\newcommand{\AntiFactorSize}[1]{\textsc{AntiFactor}$_{#1}$\xspace}
\newcommand{\MaxAntiFactorSize}[1]{\textsc{Max}-\AntiFactorSize{#1}}
\newcommand{\CountAntiFactorSize}[1]{\#\AntiFactorSize{#1}}
\newcommand{\AntiFactorSizeR}[1]{\textsc{AntiFactor}$_{#1}^{\mathcal R}$\xspace}
\newcommand{\CountAntiFactorSizeR}[1]{\#\AntiFactorSizeR{#1}}
\newcommand{\ColHS}{\ensuremath{k}-ColHS\xspace}
\renewcommand{\from}{\colon}
\newextmathcommand{\Mat}{\mathcal M}
\newextmathcommand{\calI}{\mathcal I}
\newextmathcommand{\MatU}{\mathcal U}
\newextmathcommand{\RepSet}{\mathcal S}
\newextmathcommand{\Alg}{\mathfrak A}
\newextmathcommand{\Fam}{\mathcal F}
\newextmathcommand{\Ex}{X}
\newextmathcommand{\Exbar}{\overline{\Ex}}
\newextmathcommand{\ex}{x}
\newextmathcommand{\Y}{Y}
\newcommand{\bag}[1]{B_{#1}} 
\newcommand{\represents}[2][]{\mathrel{\subseteq_{#2\text{-rep}}^{#1}}}
\newcommand{\comp}[2][]{\mathrel{\sim_{#2}^{#1}}}
\newcommand{\Comp}{\mathcal{C}}
\DeclareMathOperator{\ind}{\textsf{ind}}
\DeclareMathOperator{\ParSol}{\textsf{ParSol}}
\newcommand{\wtnode}[1]{\ensuremath{w{[#1]}}}
\DeclareMathOperator{\Time}{\textsf{Time}}
\DeclareMathOperator{\Size}{\textsf{Size}}
\DeclareMathOperator{\LEFT}{\textsf{left}}
\DeclareMathOperator{\RIGHT}{\textsf{right}}
\newcommand{\totalDeg}{\Delta^{\!*}}
\newcommand{\Slns}{\operatorname{\#Sol}}
\title{
Anti-Factor Is FPT Parameterized by Treewidth and List Size
(But Counting Is Hard)
}
\titlerunning{Anti-Factor Is FPT Parameterized by Treewidth and List Size}
\keywords{ Anti-Factor, General Factor, Treewidth, Representative Sets, SETH}
\begin{document}
\maketitle

\begin{abstract}
  In the general \textsc{AntiFactor} problem,
	a graph $G$ and, for every vertex $v$ of $G$, a set
	$\Ex_v\subseteq \SetN$ of forbidden degrees is given.
	The task is to find a set $S$ of edges
	such that the degree of $v$ in $S$ is \emph{not} in the set $\Ex_v$.
	Standard techniques (dynamic programming plus fast convolution)
	can be used to show that if $M$ is the largest forbidden degree,
	then the problem can be solved in time $(M+2)^{\tw}\cdot n^{\O(1)}$
	if a tree decomposition of width $\tw$ is given.
	However, significantly faster algorithms are possible
	if the sets $\Ex_v$ are sparse:
	our main algorithmic result shows that
	if every vertex has at most $\ex$ forbidden degrees
	(we call this special case \AntiFactorSize{\ex}),
	then the problem can be solved in time
	$(\ex+1)^{\O(\tw)}\cdot n^{\O(1)}$.
	That is, \AntiFactorSize{\ex} is fixed-parameter tractable
	parameterized by treewidth $\tw$
	and the maximum number $\ex$ of excluded degrees.

  Our algorithm uses the technique of representative sets,
	which can be generalized to the optimization version,
	but (as expected) not to the counting version of the problem.
	In fact, we show that \CountAntiFactorSize{1} is already \sharpW{1}-hard
	parameterized by the width of the given decomposition.
	Moreover, we show that, unlike for the decision version,
	the standard dynamic programming algorithm
	is essentially optimal for the counting version.
	Formally, for a fixed nonempty set $\Ex$,
	we denote by \AntiFactor{\Ex} the special case
	where every vertex $v$ has the same set $\Ex_v=\Ex$ of forbidden degrees.
	We show the following lower bound for every fixed set $\Ex$:
	if there is an $\epsilon>0$ such that
	\CountAntiFactor{\Ex} can be solved in time
	$(\max \Ex+2-\epsilon)^{\tw}\cdot n^{\O(1)}$
	given a tree decomposition of width $\tw$,
	then the Counting Strong Exponential-Time Hypothesis (\#SETH) fails.
\end{abstract}

\newpage

\section{Introduction}

Matching problems and their generalizations form a well studied class of problems in combinatorial optimization and computer science \cite{MR88b:90087}.
A \emph{perfect matching} is a set $S$ of edges such that every vertex has degree exactly 1 in $S$;
finding a perfect matching is known to be polynomial-time solvable   \cite{Edmonds65,HopcroftK73,MicaliV80}.
In the \Factor{f} problem,
an integer $f(v)$ is given for each vertex $v$
and the task is to find a set of edges
where every vertex $v$ has degree exactly $f(v)$.
A simple transformation reduces \Factor{f} to finding a perfect matching.
Conversely, in \AntiFactor{f}
the task is to find a set $S$ of edges
where the degree of $v$ is \emph{not} $f(v)$ \cite{DBLP:journals/jct/Sebo93a}.

The problems above can be unified under the \textsc{General Factor} (\GenFac) problem
\cite{Cornuejols88,Lovasz72, MarxSS21},
where one is given a graph $G$ and an associated set of integers $B_v$ for
every vertex $v$ of $G$. The objective is to find a subgraph such that
every vertex $v$ has its degree in $B_v$.
Cornu{\'e}jols \cite{Cornuejols88} showed that the complexity of \GenFac
depends on the maximum gap of the sets $B_v$.
The maximum gap of a set $B$ (denoted by $\maxgap(B)$) is defined as the largest
contiguous sequence of integers not in $B$ but whose boundaries are
in $B$.
Cornu{\'e}jols \cite{Cornuejols88} showed that if $\maxgap(B_v)\leq1$,
then \GenFac is polynomial-time solvable.
In a sense, we can say that this case is the only one that is polynomial-time solvable.
Formally, for a fixed, finite set $B$ of integers,
\BFactor is the special case of \GenFac
where every vertex has the same set $B_v=B$ of allowed degrees.
It follows from a result of Dalmau and Ford \cite{DalmauF03}
that if $B$ is a fixed finite set such that $\maxgap(B)>1$,
then \BFactor is \NP-hard.

Given the hardness of \BFactor in general,
Marx et al.~\cite{MarxSS21} studied the complexity of the problem on bounded treewidth graphs.
Recall the long history of study on treewidth,
which is a measure for how ``tree-like'' a graph is,
\cite{Bodlaender88,Bodlaender97,BodlaenderK08}.
For a wide range of hard problems,
algorithms with running time of the form $f(k)\cdot n^{\O(1)}$ exist
if the input graph comes with a tree decomposition of width $k$.
In many cases even the best possible form of $f(k)$ in the running time is known
(under suitable complexity assumptions, such as
the Strong Exponential Time Hypothesis (SETH)
\cite{ImpagliazzoP01}).
Marx et al.~\cite{MarxSS21} use a combination of standard dynamic programming techniques
with fast subset convolution (cf.~\cite{Rooij20})
to give optimal (under SETH) $\Ostar{(\max B+1)^\tw}$ time algorithms
for the decision, optimization, and counting versions.
\begin{theorem}[Theorems~1.3--1.6 in \cite{MarxSS21}]
  \label{thm:factor-tw-results}
  Fix a finite, non-empty set $B\subseteq \SetN$.
  \begin{itemize}
  \item
  We can count in time $\Ostar{(\max B+1)^\tw}$
  the solutions of a certain size
  for a \BFactor instance
  if we are given a tree decomposition of width $\tw$.

  \item
  For any $\eps>0$,
  there is no $\Ostar{(\max B+1-\eps)^{\pw}}$ algorithm
  for the following problems,
  even if we are given a path decomposition of width $\pw$,
  unless SETH (resp.\ \#SETH) fails:
  \begin{itemize}
    \item
    \BFactor and \MinBFactor
    if $0 \notin B$ and $\maxgap(B) > 1$,
    \item
    \MaxBFactor
    if $\maxgap(B) > 1$,
    \item
    \CountBFactor
    if $B \neq \{ 0\}$.
  \end{itemize}
  \end{itemize}
\end{theorem}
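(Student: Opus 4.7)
The plan is to establish the two parts of \cref{thm:factor-tw-results} separately. For the upper bound, I would run a standard dynamic program on a nice tree decomposition of width $\tw$. At every bag $X_t$, maintain a table indexed by functions $f\colon X_t\to\{0,1,\ldots,\max B\}$, where the entry at $f$ records the (weighted) count of partial solutions in the subtree rooted at $t$ in which each $v\in X_t$ has so far received exactly $f(v)$ incident edges; any partial degree exceeding $\max B$ may be discarded, so there are at most $(\max B+1)^{|X_t|}$ states per bag. The leaf, introduce-vertex, introduce-edge, and forget transitions are routine (forgetting $v$ simply multiplies by the indicator $[f(v)\in B]$). The delicate step is the join node, where the two children's tables must be combined by a coordinate-wise sum convolution over the bag: naively this costs $(\max B+1)^{2|X_t|}$, but the multidimensional generalization of fast subset convolution over the integer alphabet $\{0,\ldots,\max B\}$ (cf.~\cite{Rooij20}) handles it in $\Otilda\bigl((\max B+1)^{|X_t|}\bigr)$ time. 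Summing over all bags yields the $\Ostar{(\max B+1)^{\tw}}$ bound, and carrying out the same DP in the integer counting semiring gives the counting version essentially for free.

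For the lower bounds, the plan is a pathwidth-preserving reduction from \SAT (respectively \#\SAT) to each of the listed variants. Given a formula on $n$ variables and $m$ clauses, group the variables into blocks of size $\lfloor\log_2(\max B+1)\rfloor$ so that every block is encoded by a single boundary vertex whose local degree takes one of $\max B+1$ values. Arrange the blocks along a long path interleaved with clause gadgets, using equality gadgets to propagate each block's state through the clause region and clause gadgets which admit a valid local $B$-configuration exactly when the represented clause is satisfied. A careful construction produces a graph of pathwidth $n/\log_2(\max B+1)+\O(1)$ whose $B$-factors correspond to the satisfying assignments in a bijective (decision), counting-preserving (counting) or weight-preserving (optimization) way, up to an easily invertible polynomial factor. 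An $\Ostar{(\max B+1-\eps)^{\pw}}$ algorithm would then solve \SAT in time $\Ostar{(\max B+1-\eps)^{n/\log_2(\max B+1)}}=\Ostar{(2-\eps')^n}$ for some $\eps'>0$, contradicting SETH (respectively \#SETH).

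The main obstacle is the gadget construction on the lower-bound side. For every fixed $B$ outside the trivial regimes, one needs (a) a provider fixing a boundary vertex to any chosen degree in $\{0,\ldots,\max B\}$, (b) an equality gadget propagating such a state along the path with only $\O(1)$ extra pathwidth, and (c) a clause-checking gadget satisfied by exactly the intended combination of boundary states. The separating hypotheses in the three items -- $\maxgap(B)>1$ and $0\notin B$ for \BFactor and \MinBFactor, $\maxgap(B)>1$ for \MaxBFactor, and $B\neq\{0\}$ for \CountBFactor -- enter precisely here, since they guarantee that the required rigid gadgets exist; the cases they exclude are exactly those admitting polynomial-time algorithms (the Cornu\'ejols regime, or the trivial $B=\{0\}$ for counting). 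Once multidimensional fast convolution is accepted as a black box, the upper bound is essentially bookkeeping on the decomposition; the reduction, by contrast, is where essentially all the work lies and why a separate construction is needed for each of the four variants.
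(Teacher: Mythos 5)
The statement you are proving is cited from \cite{MarxSS21}, not proved in this paper; your sketch does follow the general shape of that proof — dynamic programming plus fast convolution for the upper bound, and a pathwidth-tight SETH/\#SETH reduction from (\#)SAT for the lower bounds.

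The upper-bound half is correct: truncating degree states at $\max B$ is sound because a finite $B$ makes any over-full partial degree a dead end, and van~Rooij's multidimensional fast convolution handles join nodes in $\Otilda((\max B+1)^{|X_t|})$.

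The lower-bound half has a concrete arithmetic gap. You group variables into blocks of size $g=\lfloor\log_2(\max B+1)\rfloor$ with \emph{one} boundary vertex per block, which yields pathwidth $n/\lfloor\log_2(\max B+1)\rfloor+\O(1)$, and then assert pathwidth $n/\log_2(\max B+1)+\O(1)$ in the very next sentence. These disagree whenever $q:=\max B+1$ is not a power of two. With $q=5$, for example, $g=2$ forces pathwidth $\approx n/2$, and $(5-\eps)^{n/2}$ does not contradict SETH for any $\eps<1$. The standard remedy, and what \cite{MarxSS21} actually does, is to encode groups of $g$ variables using $p$ consecutive state cells chosen so that $q^p\ge 2^g$ and $g/p\to\log_2 q$ as $g\to\infty$; only in that limit does $(q-\eps)^{\pw}$ beat $2^n$. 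Your reduction also leaves the real work implicit: the clause and equality gadgets ``admitting a valid local $B$-configuration exactly when\dots'' are asserted rather than built, and for the counting case the bijection must be exact, which \cite{MarxSS21} achieves by passing through the intermediate problem \BFR (arbitrary relations on complex vertices) and then realizing those relations via dedicated gadget and interpolation lemmas. The two-step SAT~$\to$~\BFR~$\to$~\BFactor decomposition is not merely cosmetic: it is what lets the hypotheses $\maxgap(B)>1$, $0\notin B$, or $B\neq\{0\}$ enter cleanly as realizability conditions for the specific gadget families used, and the same modularity is what the present paper reuses for \AntiFactor{\Ex}.
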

We study the complementary problem of
\AntiFactor{\Ex} for finite sets $\Ex$ of excluded degrees.
\begin{definition}[\AntiFactor{\Ex}]
  \label{def:antifactor}
  Let $\ex\in\SetN$ be fixed.
  \AntiFactorSize{\ex} is
  the decision problem of finding for an undirected graph $G$
  where all vertices $v$ are assigned a finite set $\Ex_v\subseteq \SetN$
  with $\abs{\Ex_v} \le \ex$,
  a set $S\subseteq E(G)$ such that for all $v \in V$ we have $\deg_S(v) \notin \Ex_v$.

  For a fixed $\Ex \subseteq \SetN$ with $\abs{X}=x$,
  we define \AntiFactor{\Ex} as the restriction of \AntiFactorSize{x}
  to those graphs where all vertices are labeled with the same set $\Ex$.
\end{definition}

\begin{note*}
  \Factor{\Exbar}, the special case of
  \GenFac where every vertex has set $\Exbar$,
  precisely corresponds to \AntiFactor{\Ex}
  where we set $\Exbar \deff \SetN \setminus \Ex$.
\end{note*}
The decision and minimization versions are trivially solvable if $0\notin \Ex$
as the empty set is a valid solution.
Further, if $\Ex$ does not contain two consecutive numbers,
then $\overline \Ex$ has no gap of size at least two.
In this case, by results from Cornu{\'e}jols \cite{Cornuejols88}
and Dudycz and Paluch \cite{DudyczP18},
the decision, maximization and minimization
version of \Factor{\overline \Ex} are poly-time solvable.

\subparagraph{Our Results.}
One could expect that similar results can be obtained for \AntiFactor{\Ex} as for \Factor{B},
but this is very far from the truth and the exact complexity of \AntiFactor{\Ex} is much less clear.
In the \Factor{B} problem, a partial solution
(a set of edges that we intend to further extend to a solution)
can have degree at most $\max B$ at each vertex,
which is the main reason one needs $\Ostar{(\max B+1)^\tw}$ running time.
For \AntiFactor{\Ex}, a vertex can also have degree
larger than $\max \Ex$ in a (partial) solution,
but all degrees larger than $\max \Ex$ are
equivalent in some sense.
Therefore, the natural running time we expect is $\Ostar{(\max \Ex+2)^\tw}$.
We show that this running time can be achieved,
but requires some modification of the convolution
to handle the state ``degree more than $\max \Ex$.''
\begin{theorem}
  \label{thm:algo:paraByList}
  Let $\Ex \subseteq \SetN$ be finite and fixed.
  Given an \AntiFactor{\Ex} instance and its tree decomposition of width $\tw$.
  Then we can
  count the number of solutions of size exactly $s$ in time $\Ostar{(\max{\Ex}+2)^{\tw}}$
  for all $s$ simultaneously.
\end{theorem}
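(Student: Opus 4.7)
The plan is standard dynamic programming on a nice tree decomposition of width $\tw$, following the template used for the \BFactor algorithm in Theorem~\ref{thm:factor-tw-results} and extending it to handle the ``overflow'' state that arises in the complementary problem. Let $M \deff \max\Ex$. Since $\Ex\subseteq\{0,\dots,M\}$, the exact degree of a bag vertex in a partial solution matters only as long as it does not exceed $M$; once it does, the vertex is already safe regardless of future edge choices. The DP state per bag vertex therefore has $M+2$ values: the numbers $0,1,\dots,M$ and a single extra symbol $*$ standing for ``degree $\ge M+1$''. For each node of the decomposition and each target size $s\in\{0,\dots,n\}$, I would maintain a table with $(M+2)^{\tw+1}$ entries, the value at $(\vec a,s)$ counting partial solutions of size exactly $s$ that realise bag-state $\vec a$.

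Leaf, introduce-vertex and introduce-edge nodes are routine and can be processed in $(M+2)^\tw\cdot\poly(n)$ time. At a forget-vertex node for $v$ I would discard every entry whose state at $v$ lies in $\Ex_v$ (note that $*\notin\Ex_v$, so the overflow state is always kept) and marginalise $v$ out. The only delicate case is a join node, where we must compute
\[
	T(\vec c, s) \;=\; \sum_{\substack{\vec a\oplus\vec b\,=\,\vec c\\ s_L+s_R\,=\,s}} T_L(\vec a, s_L)\,T_R(\vec b, s_R),
\]
with $\oplus$ the coordinatewise truncated addition on $\{0,\dots,M,*\}$: $a\oplus b \deff a+b$ if $a,b\in\{0,\dots,M\}$ and $a+b\le M$, and $a\oplus b \deff *$ otherwise (so $*$ is absorbing).

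The main obstacle, and the only place where the algorithm differs substantively from the \BFactor algorithm of Theorem~\ref{thm:factor-tw-results}, is the evaluation of this join in $(M+2)^{\tw}\cdot\poly(n)$ total time rather than the naive $(M+2)^{2\tw}\cdot\poly(n)$. I would handle it by interpreting $\oplus$ as multiplication in the commutative ring $R\deff\SetZ[x,y]/(x^{M+1}-y,\,xy-y,\,y^2-y)$, whose $\SetZ$-basis $\{1,x,\dots,x^M,y\}$ is in bijection with the state set and whose multiplication realises $\oplus$. Using the orthogonal idempotents $y$ and $1-y$, this ring decomposes as $R\cong\SetZ\cdot y\,\oplus\,\SetZ[x]/(x^{M+1})$, so a product in $R$ reduces to one truncated polynomial multiplication plus a single scalar product that accounts for all contributions landing in the $*$ coordinate. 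Plugging this decomposition into the $\tw$-fold tensor product $R^{\otimes\tw}$ and combining with the van Rooij-style fast-convolution framework used for \BFactor in \cite{MarxSS21,Rooij20} yields a single join in $(M+2)^{\tw}\cdot\poly(M,\tw,n)$ time; the size coordinate $s$ is handled alongside by a standard polynomial multiplication of length $n+1$. Summing over the $O(n)$ nodes of the decomposition then produces the claimed $(M+2)^{\tw}\cdot n^{O(1)}$ total running time and gives the counts for all sizes $s$ simultaneously.
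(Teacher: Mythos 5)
Your proposal is correct and attains the stated bound; the underlying decomposition is the same one the paper uses, just packaged differently. You phrase it algebraically: the state monoid $(\{0,\dots,M,*\},\oplus)$ is realized as the basis of the commutative ring $R=\SetZ[x,y]/(x^{M+1}-y,\,xy-y,\,y^2-y)$, whose orthogonal idempotents $y$ and $1-y$ split $R\cong\SetZ\times\SetZ[x]/(x^{M+1})$; tensoring gives $R^{\otimes k}\cong\bigoplus_{S\subseteq[k]}\bigl(\SetZ[x]/(x^{M+1})\bigr)^{\otimes(k-|S|)}$, so each join becomes a direct sum of truncated polynomial multiplications of total size $\sum_{S}(\max\Ex+1)^{k-|S|}=(\max\Ex+2)^k$. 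The paper arrives at exactly this decomposition via van Rooij's zeta transform on the poset where $u\preceq\top$: it zeta-transforms the child tables, observes that for each fixed set $S=f^{-1}(\top)$ the transformed entries convolve by ordinary addition on the remaining coordinates, performs a van Rooij-style cyclic convolution with a total-degree counter for each $S$ (\cref{lem:algo:paraByList:computingZeta}), and M\"obius-inverts at the end. The zeta/M\"obius transform is precisely the change of basis that implements your idempotent splitting, so the two derivations describe the same algorithm; your ring view reaches the direct-sum structure more conceptually, while the paper's version spells out the transform and the truncated-convolution implementation in full, including the binomial-theorem accounting that lands on $(\max\Ex+2)^{\tw}$.
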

However, there are many cases where algorithms significantly faster than $\Ostar{(\max \Ex+2)^\tw}$ are possible.
At first, this may seem unlikely:
at each node of the tree decomposition,
the partial solutions can have up to $(\max \Ex+2)^{\tw+1}$ different equivalence classes%
\footnote{
Recall that in a graph with treewidth $\tw$, the largest bag has size $\tw+1$.
}
and it may seem necessary to find a partial solution for each of these classes.
Nevertheless, we show that the technique of \emph{representative sets}
can be used to achieve a running time
lower than the number of potential equivalence classes.
Representative sets were defined by Monien~\cite{Monien85}
for use in an FPT algorithm for \textsc{$k$-Path},
and subsequently found use in many different contexts,
including faster dynamic programming algorithms on tree decompositions
\cite{DBLP:journals/algorithmica/AgrawalJKS20,DBLP:journals/iandc/BodlaenderCKN15,DBLP:journals/algorithmica/BonnetBKM19,DBLP:journals/talg/EibenK20,FominLPS16,FominLPS17,KratschW20,DBLP:conf/soda/MarxW15,ShachnaiZ16}.
The main idea is that we do not need to find a partial solution for each equivalence class,
but it is sufficient to find a representative set of partial solutions
such that if there is a partial solution that is compatible with some extension,
then there is a partial solution in our set
that is also compatible with this extension.
Our main algorithmic result shows that if $\Ex$ is sparse,
then this representative set can be much smaller than $(\max \Ex+2)^{\tw+1}$,
yielding improved algorithms.
In particular, \AntiFactorSize{\ex} is FPT
parameterized by $\tw$ and $\ex$.

\begin{theorem}\label{thm:algo:main}
  One can decide in time $\Ostar{(\ex+1)^{\O(\tw)}}$
  whether there is a solution of a certain size for \AntiFactorSize{\ex}
  assuming a tree decomposition of width $\tw$ is given.
\end{theorem}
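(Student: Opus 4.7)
The plan is to perform bottom-up dynamic programming on the given tree decomposition, maintaining at each node a \emph{representative family} of partial solutions rather than a table indexed by all realizable degree vectors. At a bag $B$ of width at most $\tw$, a partial solution $S$ (an edge set in the subgraph below the bag) is captured by its degree vector $\vec d^S = (d_v^S)_{v \in B}$; the naive algorithm behind \cref{thm:algo:paraByList} keeps at least one entry per realizable signature, for a table of size up to $(\max \Ex + 2)^{\abs{B}}$. Instead I would keep a family $\mathcal F$ of partial solutions with the representative property: whenever some realizable signature $\vec d^*$ is compatible with an extension $T$ from above (meaning $d_v^* + e_v^T \notin \Ex_v$ for every $v \in B$), some $S \in \mathcal F$ is also compatible with $T$. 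This is precisely the representative-set paradigm invoked in the introduction, now specialized to degree-vector compatibility.

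The central technical ingredient would be a lemma bounding the size of such a representative family by $(\ex+1)^{\O(\abs{B})}$, together with an efficient routine that shrinks any larger family down to this size. The starting observation is the per-vertex sparsity of the compatibility relation: for each $v$ with $\abs{\Ex_v} \le \ex$, the set of ``bad'' extension values $e_v \in \Ex_v - d_v^S$ has size at most $\ex$. I would first classify the partial degree $d_v^S$ by its profile relative to $\Ex_v$ (the interval of $\SetN$ determined by the elements of $\Ex_v$ in which $d_v^S$ lies), yielding one of $\ex+1$ profiles per vertex and $(\ex+1)^{\abs{B}}$ profile vectors per bag. Because different degrees sharing a common profile have distinct shifted bad sets $\Ex_v - d_v^S$, a single representative per profile does not suffice, and the heart of the argument is a matroid-style or Bollob\'as-type bound showing that only $(\ex+1)^{\O(1)}$ representatives per profile are needed to cover all compatible extensions, yielding the $(\ex+1)^{\O(\abs{B})}$ bound overall.

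With this representative-set lemma in hand, the algorithm follows the standard nice tree-decomposition template: introduce-vertex and introduce-edge are direct updates of the stored degree vectors; forget-vertex discards entries in which the forgotten vertex has $d_v \in \Ex_v$; the join operation is the convolution $T'[\vec d] = \sum_{\vec d_1 + \vec d_2 = \vec d} T_1[\vec d_1] \cdot T_2[\vec d_2]$, performed over the two child representative families to yield a family of size at most $(\ex+1)^{\O(\abs{B})}$ before reduction. After every operation I invoke the shrinking routine; since this routine runs in $(\ex+1)^{\O(\abs{B})} \cdot n^{\O(1)}$ time and the decomposition has $n^{\O(1)}$ nodes, the overall running time is $\Ostar{(\ex+1)^{\O(\tw)}}$. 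To handle the ``certain size'' requirement, the table is additionally indexed by the total edge count of the partial solution, adding only a polynomial factor.

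The main obstacle I anticipate is the representative-set size bound itself: it requires identifying the right combinatorial structure---most plausibly a product of small per-vertex matroids tailored to $\Ex_v$---so that compatibility with extensions is captured by independence in that matroid, after which a Bollob\'as-style shrinking argument delivers the size bound. The join node is the most delicate step, because the addition of two bag degree vectors interacts nontrivially with the profile structure and the representative property must be preserved under the convolution; I expect most of the combinatorial subtlety to lie there.
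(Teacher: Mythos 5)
Your high-level template---bottom-up DP over the nice tree decomposition, storing at each bag a representative family of partial solutions keyed by degree vector (and solution size), shrinking after every operation, and handling a join by convolving the two child families and then shrinking---matches the paper's (\cref{lem:algo:idea} and \cref{lem:algo:paraBySize:correctness}), and you correctly flag the representative-set size bound as the crux. But the route you sketch to that bound has a genuine gap, and the ``profile'' idea is the wrong mechanism. Binning $d_v^S$ by which gap of $\SetN\setminus\Ex_v$ it falls into yields $(\ex+1)^{\abs{B}}$ classes, but within a single class the shifted sets $\Ex_v-d_v^S$ are still essentially unconstrained, so there is no reason a fixed $(\ex+1)^{\O(1)}$ representatives per class should suffice; in the worst case one still needs $(\ex+1)^{\Theta(\abs{B})}$ representatives \emph{inside} a single profile class, and the decomposition only inflates the exponent. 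The paper never classifies by profile.

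What the paper does instead is turn the observation you already half-state---for every extension value $b[v]$ the bad set $\Ex_v-b[v]$ has size at most $\ex$---into a clean \emph{reduction}: $\Comp_X$-representation is implied by the ordinary $k$-$q$-representation of \cref{def:qCompatibility} with $q=\ex$, where the adversarial extension ranges over $\binom{\SetN}{\ex}^k$ (\cref{lem:q-RepImpliesH-Rep}). Crucially, the matroids used are \emph{not} tailored to $\Ex_v$: one takes $k$ copies of a \emph{uniform} matroid of rank $r=\ex+1$, represents each by a Vandermonde matrix over a finite field, maps each $a\in\RepSet$ to the vector of coordinatewise products of matrix entries over all index tuples in $[r]^k$, and extracts a column basis of the resulting $r^k\times\abs{\RepSet}$ matrix. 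A Laplacian-expansion identity shows this basis is a $k$-$\ex$-representative set of size $(\ex+1)^k$, computable in time $\O(\abs{\RepSet}\cdot r^{k(\omega-1)}k)$ (\cref{lem:repset:first}). That algebraic construction, not a per-profile covering argument, is what delivers the bound. Separately, the join-node correctness you flag as delicate is handled by an explicit two-swap argument exploiting additivity of compatibility: from $a_1+a_2\comp{\Comp_t} b$, pick a representative $a_1'\comp{\Comp_t}(b+a_2)$, then a representative $a_2'\comp{\Comp_t}(b+a_1')$, whence $a_1'+a_2'\comp{\Comp_t} b$ and lies in the shrunk family; your outline gestures at the difficulty but does not supply this step.
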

We note that \cref{thm:algo:main} clearly distinguishes \AntiFactor{\Ex}
from \BFactor.
By the known lower bounds from Marx et al.~\cite{MarxSS21}
(cf.~\cref{thm:factor-tw-results}),
a similar result for \BFactor is not possible.
In light of \cref{thm:algo:main},
it is also far from obvious to determine
the exact complexity of \AntiFactor{\Ex} for a fixed set $\Ex$.
The combinatorial properties of the set $\Ex$
influence the complexity of the problem in a subtle way
and new algorithmic techniques seem to be needed to fully exploit this.
Currently, we do not have a tight bound similar to \cref{thm:factor-tw-results} for every fixed $\Ex$.
Instead we propose a candidate for the combinatorial property that influences the complexity:
We define a bipartite compatibility graph for every set $\Ex$
and conjecture that the maximum size of a so-called \emph{half-induced matching} is the key property
to obtain a faster algorithm via representative sets.
  See \cref{conj:himImpliesUpperBoundOnRepSet} for a formal statement.

We use such half-induced matchings of large size
to show a lower bound for \AntiFactorSize{\ex} that, assuming SETH,
complements the algorithm in \cref{thm:algo:main} up to constant factors
in the exponent (see \cref{thm:dec:lbAntiFactorBySize}).
Moreover, if there is a half-induced matching of size $h$,
then, assuming SETH,
we show that
there is no $\Ostar{(h-\epsilon)^{\tw}}$ algorithm for \AntiFactor{\Ex}
for any $\epsilon>0$ (\cref{thm:dec:lbAntiFactor}).
Although, in this case
the representative set cannot be smaller than $(h-\epsilon)^{\tw+1}$
for any $\epsilon>0$
 (\cref{lem:half-induced-implies-repset})
we do not have matching upper bounds at this point.
There are two main reasons why it is difficult to obtain tight upper bounds:
\begin{itemize}
  \item \textbf{Representative set bounds.}
  In \cref{thm:algo:main}, the upper bound on the size of representative sets
  are based on earlier algebraic techniques
  \cite{FominLPS16,FominLPS17,KratschW20,ShachnaiZ16}.
  It is not clear how they can be extended to the combinatorial notion of half-induced matchings.

  \item \textbf{Join nodes.}
  Even if we have tight bounds on the size of representative sets
  there is an additional issue that can increase the running time.
  At join nodes of the tree decomposition,
  we need to compute from two representative sets a third one.
  Doing this operation in a naive way results in a running time
  that is at least the \emph{square} of the bound on the size of the representative set.
  If we want to have a running time that matches the size of the representative set,
  we need a more clever way of handling join nodes.
\end{itemize}
Representative sets of the form we study here could be relevant for other problems and tight bounds for such representative sets could be of fundamental importance. In particular, the notion of half-induced matchings could be a key property in other contexts as well.

\subparagraph*{Counting Problems.}
We also investigate the \CountAntiFactorSize{} problem,
where we need to count the total number of solutions satisfying the degree constraints.
The idea of representative sets is fundamentally incompatible with exact counting:
if we need to count every solution,
then we cannot ignore certain partial solutions
even if they can be always replaced by others.
Therefore, the algorithm of \cref{thm:algo:main} cannot be extended to the counting version.%
\footnote{
Counting the solutions approximately is a problem of independent interest.
}
In fact, we show that already \CountAntiFactorSize{1}
is unlikely to be FPT
by showing the following stronger statement for path decompositions.
\begin{theorem}\label{thm:count:lower-intro}
  There is a fixed constant $c$ such that \CountAntiFactorSize{1}
  cannot be solved in time
  $\O(n^{\pw-c})$ on graphs with $n$ vertices given a path decomposition
  of width $\pw$, unless \#SETH is false.
  Furthermore, \CountAntiFactorSize{1} is \sharpW{1}-hard
  parameterized by pathwidth.
\end{theorem}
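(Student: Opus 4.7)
The plan is to prove both parts of the theorem by a reduction from a canonical hard counting problem whose tight lower bound is already known. Since the decision version is already \FPT by \cref{thm:algo:main}, the reduction must exploit counting explicitly: partial solutions cannot be replaced by ``representative'' ones, and hence \emph{every} equivalence class of partial solutions must be tracked by any exact counting algorithm. For the \#SETH lower bound my target is a counting variant of \#SAT whose tight complexity is known to match the $n^{\O(\pw)}$ dynamic programming bound; for the \sharpW{1}-hardness I would reduce from a \sharpW{1}-hard counting problem such as \#\textsc{Clique} parameterized by the solution size, mapping that parameter linearly into pathwidth.

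The workhorse is a small library of list-size-one building blocks. A degree-2 vertex with forbidden set $\{1\}$ behaves as an \emph{equality gadget}, forcing both incident edges to be selected together or dropped together, hence propagating a bit across long distances. A pendant vertex with forbidden set $\{0\}$ (respectively $\{1\}$) forces its unique incident edge to be included (respectively excluded). Longer ``wires'' are obtained by chaining equality gadgets, with only a constant contribution per bag of the path decomposition, and negation can be inserted by standard twists on such chains.

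Using these primitives, I would construct variable and clause gadgets so that the reduction produces a graph $G$ whose pathwidth equals the source parameter plus $\O(1)$, whose size is $n = \poly(N)$ in the source instance size $N$, and whose number of anti-factors equals the target count times a computable invertible normalizer. Feeding a hypothetical $\O(n^{\pw-c})$ algorithm into this reduction would yield, for sufficiently large fixed $c$, an algorithm for the source problem exceeding its allowed budget, so both the \#SETH consequence and the \sharpW{1}-hardness follow at once.

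The main obstacle, and the point where the list-size-one restriction really bites, is the clause (or adjacency-check) gadget. A clause ``at least one literal is true'' naturally corresponds to forbidding a single ``all-false'' configuration; with two or more forbidden degrees per vertex one could do this at a single central vertex, but with only one forbidden degree per vertex we must distribute the exclusion across several gadget vertices. The total weighted count of the gadget must vanish on the unsatisfied input and be a known nonzero constant on each satisfied input, all while contributing only $\O(1)$ vertices to each bag of the path decomposition so that the target bound $\pw = (\text{source parameter}) + \O(1)$ is preserved. Certifying this combinatorial identity---correctness, count-preservation, and constant bag overhead simultaneously---is the delicate step of the argument.
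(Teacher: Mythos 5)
Your approach diverges from the paper's in two substantive ways, and the first one is a genuine gap.

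\textbf{Choice of source problem.} You propose reducing from a ``counting variant of \#SAT whose tight complexity is known to match the $n^{\O(\pw)}$ dynamic programming bound.'' But \#SAT parameterized by the pathwidth of the primal/incidence graph is fixed-parameter tractable in $2^{\pw}\cdot n^{\O(1)}$ time, so a parameter-preserving reduction from \#SAT to \CountAntiFactorSize{1} cannot possibly establish an $\O(n^{\pw-c})$ lower bound: composing the hypothetical $\O(n^{\pw-c})$ algorithm with the reduction would give \#SAT an algorithm that is not faster than what it already has. To rule out $\O(n^{\pw-c})$ you must start from a problem whose natural parameter maps to pathwidth \emph{and} which is not FPT in that parameter under \#SETH. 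The paper starts from \#\textsc{Colourful Hitting $k$-Sets} (\#\ColHS), using Lemma~6.1 of Curticapean--Marx, which gives an $\O(n^{k-\epsilon})$ lower bound under \#SETH; the constructed instance has pathwidth $k + \O(1)$, so the reduction transfers the lower bound correctly. This also yields \sharpW{1}-hardness for free, because \#\ColHS is \sharpW{1}-hard and the reduction is parameter-preserving, whereas your plan needs a second, separate reduction from \#Clique.

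\textbf{Weight removal.} You correctly identify the delicate step: the adjacency/clause gadget must have vanishing contribution on rejected inputs and a fixed nonzero contribution on accepted inputs, all with a single forbidden degree per vertex. However, you describe the end result as ``the number of anti-factors equals the target count times a computable invertible normalizer,'' which suggests a single many-one reduction. The paper cannot achieve this by a direct gadget: it first produces a \emph{relation-weighted} instance with three distinct weights (\cref{lem:count:colHStoCountAntiFactorR}), converts relation weights to edge weights (\cref{lem:count:VertexWeightedToEdgeWeightedAntiFactorR}), and only then removes the edge weights via a polynomial-interpolation Turing reduction (\cref{lem:count:WeightedAntiFactorRToAntiFactorR}) based on substituting many different concrete weights realizable by unweighted graph gadgets. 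A similar Holant-based interpolation argument is needed again when replacing the relations by \AntiFactor{\{2\}} gadgets (\cref{lem:count:AntiFactorRToAntiFactor}, via \cref{lem:count:wtnode-removal}). None of these are many-one reductions, and there is no single normalizer; the recovery of the unweighted count is via Lagrange interpolation over multiple oracle calls. Your proposal gestures at the right local constraint for the gadget but is missing the machinery that makes the global count recoverable.

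Your local primitives (pendant $\{0\}$/$\{1\}$ forcing vertices, degree-2 $\{1\}$ equality wires) are valid and do appear, in essence, inside the paper's constructions; the issue is not the building blocks but the missing source-problem correction and the interpolation layer that the single-forbidden-degree restriction makes unavoidable.
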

Recall that \#SETH (cf.~\cite{CurticapeanM16,DellHMTW14})
is actually a weaker assumption than SETH.
Hence, the first result is stronger than a version based on SETH.
Moreover, the algorithm from \cref{thm:algo:paraByList}
is essentially optimal for \CountAntiFactor{\Ex}.
\begin{theorem}
	Let $\Ex\subseteq \SetN$ be a non-empty, finite and fixed set.
	For any constant $\epsilon>0$, there
	is no algorithm that can solve \CountAntiFactor{\Ex} in time
	$\Ostar{(\max \Ex+2-\epsilon)^{\pw}}$ given a graph along
	with a path decomposition of width $\pw$, unless \#SETH fails.
\end{theorem}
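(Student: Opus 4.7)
The plan is to derive the lower bound by a counting reduction from a hard $c$-ary constraint satisfaction problem, where $c \deff \max \Ex + 2$. Observe that at any vertex, a partial solution in a \CountAntiFactor{\Ex} instance carries effectively one of $c$ meaningful local states: the degrees $0, 1, \ldots, \max \Ex$, together with a single ``overflow'' category corresponding to any degree $> \max \Ex$ (all of which lie in $\Exbar = \SetN \setminus \Ex$ and are indistinguishable from the constraint's perspective). The goal is to exploit this $c$-fold state space so that each column of a path decomposition of width $\pw$ encodes one $c$-ary variable, mirroring the strategy used for \CountBFactor in \cref{thm:factor-tw-results}.

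First, I would reduce from the natural $c$-ary analogue of \#SAT: counting the satisfying assignments $\sigma \colon [n] \to \{0, 1, \ldots, c-1\}$ of a formula given as a conjunction of bounded-arity relations. A standard padding/grouping argument shows that an $\Ostar{(c - \epsilon)^n}$ algorithm for this problem would contradict \#SETH. The reduction then builds a graph $G$ with a path decomposition of width $\pw = n + \O(1)$, so that each of the $n$ ``columns'' of the decomposition is responsible for propagating the state of one $c$-ary variable from left to right and interacting with clause gadgets along the way.

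Next, I would design three families of gadgets. A \emph{propagation gadget} at each column consists of a short subgraph whose only valid edge configurations are in bijection with $\{0, 1, \ldots, \max \Ex\} \cup \{\text{overflow}\}$, with the forbidden-degree constraint $\notin \Ex$ ruling out every other local behavior. An \emph{equality gadget} forces the incoming and outgoing states of a column to agree and contributes a fixed number of solutions per consistent state, so that its multiplicative contribution can be divided out. Finally, a \emph{clause gadget} couples a constant number of columns and allows the global configuration to survive if and only if the encoded $c$-ary clause is satisfied, again contributing the same count for every satisfying tuple and zero otherwise. Because the only constraint in play is the local degree restriction $\notin \Ex$ and because any degree exceeding $\max \Ex$ is treated identically, all of these gadgets can be built from $\Ex$-dependent but uniform pieces.

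The hardest step will be realizing the clause gadget while preserving the count. Unlike the decision setting, where one may appeal to the half-induced matching machinery behind \cref{thm:dec:lbAntiFactor}, here every satisfying local assignment must contribute \emph{the same} number of internal edge subsets. I would address this by following the transfer-matrix strategy of the \CountBFactor lower bound: represent each candidate gadget as a linear map on $\SetR^{c}$ (or on $\SetR^{c^k}$ for $k$-ary clauses), verify for each fixed nonempty finite $\Ex$ that the matrices arising from small degree-$\notin\Ex$ gadgets span a sufficiently rich subspace, and then recover arbitrary $c$-ary clauses by interpolation over polynomially many instances. Composing these gadgets along a path decomposition of width $\pw \approx n$ gives an instance whose solution count determines the number of satisfying assignments of the source $c$-ary \#SAT formula, so an $\Ostar{(c - \epsilon)^{\pw}}$ algorithm for \CountAntiFactor{\Ex} would violate \#SETH, proving the theorem.
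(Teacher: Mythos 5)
Your high-level plan — encode a $c$-ary state where $c = \max\Ex + 2$ (degrees $0,\dots,\max\Ex$ plus an ``overflow'' state), run columns along a path decomposition, and recover clause gadgets by interpolation — captures the right intuition, but it skips the two technical steps where the paper's argument actually happens and it omits a whole case of the theorem.

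First, you do not explain how a counting reduction can cope with the \emph{cofinite} constraint. ``Degree $> \max\Ex$'' is not a single configuration but infinitely many, and for exact counting you cannot simply collapse them: every distinct edge-subset contributes. Your ``propagation gadget whose only valid edge configurations are in bijection with $\{0,\dots,\max\Ex\}\cup\{\text{overflow}\}$'' is exactly the thing that needs to be constructed and it is not clear how to build it from $\notin\Ex$ vertices alone. The paper resolves this by \emph{not} trying to build such a gadget directly: it starts from the known \CountBFR lower bound with $B=\{\max\Ex+1\}$ (so all wanted configurations have degree exactly $\max\Ex+1$), and then attaches a \emph{weighted} relation next to each simple vertex whose weights $w_{\beta,\gamma}$ are chosen by solving a $(\max\Ex+2)^2\times(\max\Ex+2)^2$ linear system so that the aggregate signature of the new gadget equals $1$ on the wanted degree and $0$ on all the extra degrees that $\Exbar$ would otherwise permit (Lemma~\ref{lem:count:lbRelationWeightedAntiFactorR}). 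Those weights are then peeled off in two further interpolation steps (relation weights $\to$ edge weights $\to$ unweighted). Your ``transfer-matrix'' and ``span a rich subspace'' language is gesturing in this direction, but none of the concrete mechanism is present, and the claim that ``all of these gadgets can be built from $\Ex$-dependent but uniform pieces'' is precisely what needs proof: for several regimes of $\Ex$ (e.g.\ $\maxgap(\Exbar)=0$, $\Ex$ an interval, $\Ex=\{0\}$) the paper has to give separate constructions and uses signed $\wtnode{\pm1,1}$ nodes that are themselves removed by another layer of interpolation.

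Second, the theorem as stated covers $\Ex=\{0\}$, i.e.\ \countECover, where the Holant/relation machinery does not apply at all. The paper proves that case by an entirely different route: a \#SETH lower bound for \countMIS on bounded-degree and then regular graphs, followed by a Fibonacci-ratio interpolation in the style of Bordewich--Dyer--Karpinski to pass to edge covers (\cref{thm:lower:count:hardnessEdgeCover}). Your proposal does not notice this case and your gadget framework would not handle it, so even if the rest were fleshed out the proof would still be incomplete.
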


\subparagraph*{Organization.}
The paper is organized as follows.
\cref{sec:algorithms} presents
the algorithms of \cref{thm:algo:paraByList,thm:algo:main}.
\cref{sec:computing-rep-sets} introduces representative sets
and proves the results we need in our algorithms.
\cref{sec:half-induced-matchings} introduces half-induced matchings
and discusses some combinatorial properties related to representative sets.
	\cref{sec:lower:dec,sec:lower:opt,sec:lower:count} present
the lower bounds for the decision, optimization,
and counting versions, respectively.
One special case, the counting version of
\textsc{Edge Cover} is treated separately in \cref{sec:lower:edge-cover}.

\section{Algorithms}\label{sec:algorithms}
In this section, we use without loss of generality
\emph{``nice''} tree decompositions that
have \emph{introduce edge nodes}
(see, e.g., \cite{ParameterizedAlgos} for formal definitions).
When given a node $t$ of a tree decomposition,
we denote by $\bag{t}$ the bag of $t$,
by $V_t$ the vertices introduced at the subtree rooted at $t$,
and by $E_t$ the edges introduced in the subtree rooted at $t$.

\subsection{Parameterizing by the Maximum Excluded Degree}
The proof of \cref{thm:algo:paraByList} follows mostly the ideas of the algorithm for \GenFac from Theorem~1.3 in \cite{MarxSS21}.
The main difference is that
$\Ex$ is finite and thus $\overline\Ex$ is cofinite.
Therefore,
every degree $d \ge \max \Ex+1$ is valid for a solution
and we
identify all such states $d$ with the state $\max\Ex+1$,
denoted by $\top$ in the following.
This modification can be handled quite easily for the leaf, introduce vertex, introduce edge and forget nodes.
As the convolution technique for the join nodes does not directly transfer,
we use another result from \cite{Rooij20}
which additionally involves zeta and M{\"o}bius transforms
to obtain the improved running time for the cofinite case.

To simplify notation, we set $U = [0,\max \Ex] \cup \{\top\}$ in the following.

\subparagraph*{Algorithm.}
The dynamic program fills a table $c$
such that, for all nodes $t$ of the tree decomposition,
all functions $f\from \bag{t} \to U$,
and all possible sizes $s \in [0,m]$,
it holds that $c[t, f, s]=a$
if and only if
there are $a$ partial solutions $S \subseteq E_t$ with $\abs{S}=s$
such that, for all $v \in V_t\setminus \bag{t}$, $\deg_S(v) \notin \Ex$
and, for all $v \in \bag{t}$, we have $\deg_S(v) = f(v)$ if $f(v) \neq \top$
and $\deg_S(v) > \max \Ex$ otherwise.

If the node $t$ is a leaf node, introduce vertex node, introduce edge node, or forget node,
then the values $c[t, f, s]$ can be easily computed from the values for $c[t', \cdot, \cdot]$
where $t'$ is the unique child of $t$ in the tree decomposition.
For each node $t$,
this computation takes time $\Ostar{(\max \Ex +2)^\tw}$ as $m \in \O(n^2)$.

It remains to compute the table entries for the join nodes $t$.
Unless mentioned otherwise,
$k$ denotes the size of the bag we consider.
\begin{lemma}
  \label{thm:algo:paraByList:joinNode}
  For a given join node $t$
  let $t_1$ and $t_2$ be the children.
  Assume we are given all values of $c[t_1,\cdot,\cdot]$ and $c[t_2,\cdot,\cdot]$.
  Then, for all (valid) $f$ and $s$, we can compute the value of $c[t,f,s]$
  in time $\Ostar{(\max\Ex+2)^\tw}$.
\end{lemma}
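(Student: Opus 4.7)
The plan is to follow the join-node procedure for \BFactor used in Theorem~1.3 of \cite{MarxSS21}, but to replace the fast convolution over ordinary integer addition by one tailored to the ``capped'' addition on $U = [0,\max\Ex]\cup\{\top\}$ that arises here because degrees strictly above $\max\Ex$ are lumped into the single state $\top$. Setting $M := \max\Ex$ and identifying $\top$ with $M+1$, at a join node $t$ with children $t_1,t_2$ sharing bag $X_t$ of size $k \le \tw+1$ the recurrence I need to implement is
\[
c[t,f,s] \;=\; \sum_{s_1+s_2=s}\;\sum_{f_1\oplus f_2 = f} c[t_1,f_1,s_1]\cdot c[t_2,f_2,s_2],
\]
where $f_1\oplus f_2$ is the coordinate-wise operation $a\oplus b=\min(a+b,\,M+1)$. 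The size parameter $s$ is handled uniformly by treating every entry as a polynomial in a formal variable $x$ of degree $\le m$, turning $\sum_{s_1+s_2=s}$ into ordinary polynomial multiplication in $x$ at an additional $\poly(n)$ cost.

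The key algebraic observation is that capped addition on $\{0,\ldots,M+1\}$ is exactly multiplication in the ring
\[
R \;:=\; \SetZ[z] / (z^{M+2}-z^{M+1})
\]
via the embedding $a \mapsto z^a$: the relation $z^{M+2}=z^{M+1}$ forces $z^a z^b = z^{\min(a+b,M+1)}$. Since $z^{M+2}-z^{M+1}=z^{M+1}(z-1)$ with $\gcd(z^{M+1},z-1)=1$, the Chinese Remainder Theorem yields the linear isomorphism
\[
R \;\cong\; \SetZ[z]/(z^{M+1}) \;\times\; \SetZ,\qquad p(z)\longmapsto\bigl(p(z)\bmod z^{M+1},\,p(1)\bigr),
\]
which on a length-$(M+2)$ slice simply replaces the last coordinate by the slice's total sum; this transform and its inverse each cost $\O(M)$ additions. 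I would apply this $1$-D transform along each of the $k$ coordinate axes of the children's tables (viewed as elements of $R^{\otimes k}$), perform the coordinate-wise multiplication in the transformed ring, and invert along each axis to recover $c[t,\cdot,\cdot]$. This is the zeta/M\"obius framework of \cite{Rooij20} applied to the monoid of capped addition.

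Under the tensored CRT, $R^{\otimes k}$ decomposes as a direct product of $2^k$ blocks indexed by subsets $S\subseteq[k]$, where the $S$-block is the truncated polynomial ring $\SetZ[z_i : i\notin S]/(z_i^{M+1})$ of rank $(M+1)^{k-|S|}$, and multiplication is block-diagonal. Each block's polynomial product is carried out by iterated FFT in $\Otilda\bigl((M+1)^{k-|S|}\bigr)$ time; using $\sum_S (M+1)^{k-|S|} = (M+2)^k$, the total multiplication work amounts to $\Ostar{(M+2)^{\tw}}$ because $k\le\tw+1$ and $M=\max\Ex$ is a constant. The per-axis transforms and inverse transforms contribute only $\O\bigl(k\,M\,(M+2)^{k-1}\bigr)$, of the same order, and the $x$-polynomial bookkeeping multiplies everything by $\poly(n)$, absorbed into $\Ostar{\cdot}$.

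The main subtlety, and the only nonroutine step, is verifying that the multivariate CRT really does make capped convolution block-diagonal: this is a check that tensoring over $\SetZ$ distributes over the direct product on each coordinate, so that the monomial basis which encodes the capped monoid is preserved after tensoring, and each block inherits multiplication as a truncated multivariate polynomial product. Leaf, introduce-vertex, introduce-edge and forget nodes are then handled by the straightforward adaptation of the \BFactor dynamic program: they touch only a single coordinate of the state vector and so do not interact with the capped-addition structure.
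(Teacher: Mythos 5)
Your algebraic reformulation is correct and in fact isomorphic to what the paper does: the CRT map $p \mapsto (p \bmod z^{M+1}, p(1))$ applied coordinatewise is precisely the paper's zeta transform over the poset with $u \preceq \top$ for all $u$ (evaluating $z_i = 1$ for a $\top$-coordinate is exactly summing over all $g \preceq f$ there, and keeping the coefficient of $z_i^a$ for a non-$\top$ coordinate is the identity), and the block decomposition of $R^{\otimes k}$ into $2^k$ blocks indexed by $S = f^{-1}(\top)$, with the $S$-block of rank $(M+1)^{k-|S|}$ and the binomial identity $\sum_S (M+1)^{k-|S|} = (M+2)^k$, mirrors the paper's iteration over subsets $S \subseteq X_t$ in the proof of \cref{thm:algo:paraByList:joinNode}. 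So this is not a genuinely different route, but a cleaner algebraic phrasing of the same decomposition.

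The real gap is the per-block cost. You claim the $S$-block product, a \emph{box-truncated} multiplication in $\SetZ[z_i : i\notin S]/(z_i^{M+1})$, can be done ``by iterated FFT in $\Otilda((M+1)^{k-|S|})$ time.'' But a naive iterated FFT computes the \emph{full} product before truncating, zero-padding to length $\geq 2(M+1)-1$ in each of the $k-|S|$ variables; the cost is $\Otilda\bigl((2(M+1))^{k-|S|}\bigr)$, i.e.\ a hidden factor of $2^{k-|S|}$. This factor is not a constant: summing over $S$ gives $\sum_S (2M+2)^{k-|S|} = (2M+3)^k$, yielding only $\Ostar{(2\max\Ex+3)^{\tw}}$ rather than the claimed $\Ostar{(\max\Ex+2)^{\tw}}$. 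To avoid this you need the additional idea the paper uses in \cref{lem:algo:paraByList:computingZeta}: augment the state with the total degree $F=\norm{f}=\sum_v f(v)$, perform \emph{cyclic} convolution of length $M+1$ per coordinate (no padding) together with an ordinary one-dimensional convolution in $F$, and read off the entry with $F=\norm{f}$. Since any wraparound strictly increases the norm, this filters out the spurious terms at only a $\poly(k,M)$ overhead, recovering the per-block bound $(M+1)^{k-|S|}\cdot\poly$. Without this norm-tracking step (or an equivalent), your analysis does not achieve the stated bound.
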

We define $\oplus$ coordinatewise by extending the standard addition
as follows:
\[
  \forall u,v \in U:
  u \oplus v \deff \begin{cases}
    \top & \text{if } u=\top \lor v=\top \lor u+v > \max \Ex \\
    u+v  & \text{else}
  \end{cases}
\]
Recall that, for the join node $t$ and given $f$ and $s$,
we want to compute the following:
\[
  c[t,f,s] \deff \sum_{\substack{f_1,f_2\from\bag{t} \to U \\ \text{s.t.~}f_1\oplus f_2=f} }
  \sum_{s_1+s_2=s} c[t_1,f_1,s_1]\cdot c[t_2,f_2,s_2]
	.
\]
\begin{remark*}
  The naive computation of $c[t,\cdot,\cdot]$ takes time
  \(
    \abs{U}^k \cdot m \cdot \abs{U}^{2k} \cdot m \cdot n^{\O(1)}
    = \abs{U}^{3k} \cdot n^{\O(1)}
  \).
\end{remark*}

  For the ease of notation,
  we refer to the values of the table $c[t_1, \cdot, \cdot]$ by a new table $c_1$
  and likewise for $c_2$ and $t_2$.
  We further omit $t$ from the index of $c$ such that this table only contains the new values we want to compute.

We follow the ideas of the proof of Theorem~2 in \cite{Rooij20}.
We impose a partial ordering $\preceq$ on $U$
where $u \preceq \top$, for all $u \in U$, and undefined otherwise.
We extend this to functions (and also vectors)
such that for $f_1,f_2\from \bag{t} \to U$ we have $f_1 \preceq f_2$
if and only if
$f_1(x) \preceq f_2(x)$, for all $x \in \bag{t}$.
Observe that
this partial ordering does \emph{not} extend the standard ordering $\le$.

\begin{definition}[Variant of Definition~6 in \cite{Rooij20}]
  We define the zeta transform of each table $t \in \{c,c_1,c_2\}$ as follows:
  \[
  \zeta(t)(f, s) \deff \sum_{g \preceq f} t[g, s]
	.
  \]
\end{definition}
For the computation of $c[f, s]$, we make use of the following lemma.
\begin{lemma}[Proposition~7 in \cite{Rooij20}]
  \label{lem:algo:paraByList:zetaIsEnough}
  Given the memoisation table $A(f,x)$ indexed by state colorings $f \in U^k$ over the label set $U$
  and some additional indices $x$ with domain $I$,
  \footnote{$I$ might have more than one dimension.}
  the zeta transform $\zeta(A)$ of $A$ based on the partial order $\preceq$
  can be computed in $\O(\abs{U}^k k\abs{I})$ arithmetic operations.
  Also, given $\zeta(A)$,
  $A$ can be reconstructed in $\O(\abs{U}^k k\abs{I})$ arithmetic operations.
\end{lemma}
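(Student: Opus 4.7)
The plan is to prove the lemma by a coordinate-wise Yates-style dynamic program, exploiting the fact that the partial order $\preceq$ on $U^k$ is the product of $k$ identical one-dimensional orders. On a single coordinate the order has $u \preceq \top$ for every $u \in U$ with no other non-trivial relations, so for $f,g \in U^k$ one has $g \preceq f$ iff for each $i$ either $f(i)=\top$ (in which case $g(i)$ is arbitrary) or $g(i)=f(i)$. This product structure is what lets the transform be performed independently, one coordinate at a time.

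First I would introduce intermediate tables $A_0,\dots,A_k$ with $A_0 \deff A$ and
\[
  A_j[f,x] \deff \sum_{g \preceq_j f} A[g,x],
\]
where $g \preceq_j f$ means that $g$ agrees with $f$ on coordinates $j{+}1,\dots,k$ and on coordinates $i \le j$ one has either $f(i)=\top$ or $g(i)=f(i)$. By construction $A_k = \zeta(A)$. Splitting the sum defining $A_j$ according to the value $g(j)$ and using the definition of $\preceq_{j-1}$ yields the recurrence
\[
  A_j[f,x] \;=\; \begin{cases} A_{j-1}[f,x] & \text{if } f(j)\ne\top,\\ \sum_{u\in U} A_{j-1}[f_{j\mapsto u},x] & \text{if } f(j)=\top, \end{cases}
\]
where $f_{j\mapsto u}$ denotes $f$ with its $j$-th coordinate replaced by $u$. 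For each of the $\abs{U}^{k}\cdot\abs{I}$ entries the update costs $O(1)$ operations when $f(j)\ne\top$ and $\abs{U}$ operations when $f(j)=\top$; summing over the $\abs{U}^{k-1}$ functions with $f(j)=\top$ gives $O(\abs{U}^{k}\abs{I})$ per coordinate, hence $O(k\abs{U}^{k}\abs{I})$ in total, matching the claimed forward bound.

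For reconstructing $A$ from $\zeta(A)$, I would invert the linear recurrence coordinate by coordinate in reverse order. Rewriting the forward step for $f(j)=\top$ as $A_j[f,x] = A_{j-1}[f,x] + \sum_{u\ne\top} A_{j-1}[f_{j\mapsto u},x]$ and observing that for $u\ne\top$ the function $f_{j\mapsto u}$ has non-$\top$ $j$-th coordinate, so $A_{j-1}[f_{j\mapsto u},x] = A_j[f_{j\mapsto u},x]$, one obtains
\[
  A_{j-1}[f,x] \;=\; A_j[f,x] \;-\; \sum_{u\ne\top} A_j[f_{j\mapsto u},x]
\]
when $f(j)=\top$, and $A_{j-1}[f,x]=A_j[f,x]$ otherwise. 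Iterating from $j=k$ down to $j=1$ reconstructs $A_0=A$ using exactly the same $O(k\abs{U}^{k}\abs{I})$ operations.

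The main obstacle I expect is a purely technical one: being careful that the zeta transform used here is the \emph{inclusive} sum ($g \preceq f$ allows $g=f$), so that $A_{j-1}[f,x]$ itself appears on the right-hand side of the forward recurrence when $f(j)=\top$; the inversion formula must account for this (hence the subtraction is only over $u\ne\top$). Once this is pinned down, the rest is bookkeeping: verifying by induction on $j$ that $A_j[f,x]$ equals the stated partial sum, and checking that the per-coordinate work dominates everything else.
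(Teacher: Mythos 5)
Your proposal is correct, and it matches the standard approach. The paper itself does not reproduce a proof of this lemma — it simply cites Proposition~7 of van Rooij's paper — so there is no in-paper argument to compare against; the coordinate-wise Yates-style dynamic program you describe is exactly the technique that reference uses. Your characterization of $g \preceq f$ as $g(i)=f(i)$ or $f(i)=\top$ in each coordinate is right (the order on $U$ is the reflexive ``star'' order with apex $\top$), the recurrence and its inversion are both verified correctly, and the accounting that each of the $k$ coordinate passes costs $\O(\abs{U}^k\abs{I})$ operations (since only the $\abs{U}^{k-1}\abs{I}$ entries with $f(j)=\top$ require a $\Theta(\abs{U})$-term sum) gives the stated $\O(\abs{U}^k k\abs{I})$ bound.
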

Using this lemma, it is clear
that it suffices to compute only the zeta transform of the table $c$,
as the original values can be recovered afterwards.
This transformation introduces an additional overhead of $\Ostar{(\max \Ex+2)^k}$ in the running time
as we choose $I$ such that $\abs{I} \in \poly(n)$.

\begin{lemma}
  \label{lem:algo:paraByList:computingZeta}
  For a fixed set $S \subseteq \bag{t}$,
  we can compute $\zeta(c)(f,s)$, for all $s \in [0,m]$
	and $f\from\bag{t} \to U$ with $f^{-1}(\top) = S$,
  in time $\O((\max \Ex+1)^{k-\abs{S}} \cdot m \cdot \max \Ex \cdot k^2 \cdot \log(m \cdot \max \Ex))$.
\end{lemma}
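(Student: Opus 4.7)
The plan is to unfold the convolutional definition of $c[t,\cdot,\cdot]$ inside the zeta transform and show that, once we fix $S=f^{-1}(\top)$, the partial order $\preceq$ decouples the coordinates in $S$ from those outside $S$ in a particularly favourable way. Concretely, by plugging in the recurrence for $c[t,g,s]$, we get
\[
  \zeta(c)(f,s) \;=\; \sum_{g \preceq f}\; \sum_{\substack{g_1 \oplus g_2 = g \\ s_1 + s_2 = s}} c_1[g_1,s_1]\, c_2[g_2,s_2].
\]
Fix $x \in X_t$. If $x \in S$ then $f(x)=\top$ so $g(x)$ may be any element of $U$, and therefore $g_1(x),g_2(x)$ are unconstrained in $U$. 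If $x \notin S$ then $f(x) \in [0,\max\Ex]$, and since only $u \preceq u$ and $u \preceq \top$ hold in $\preceq$, we must have $g_1(x) \oplus g_2(x) = f(x)$; in particular $g_1(x),g_2(x) \neq \top$ and $g_1(x)+g_2(x) = f(x) \le \max\Ex$ in the usual integer sense.

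Exploiting this decoupling, define for $i \in \{1,2\}$ and $h: X_t \setminus S \to [0,\max\Ex]$ the marginalised tables
\[
  \hat c_i[h, s'] \;\deff\; \sum_{\substack{g_i : X_t \to U \\ g_i|_{X_t\setminus S} = h}} c_i[g_i, s'],
\]
obtained by summing out the coordinates in $S$. These tables are computed in time $\Ostar{(\max\Ex+1)^{k-\abs{S}}\cdot m \cdot \abs{S}}$ by iteratively collapsing one coordinate of $S$ at a time. Combining with the observation above, one obtains the clean identity
\[
  \zeta(c)(f,s) \;=\; \sum_{s_1+s_2 = s}\; \sum_{\substack{h_1 + h_2 = f|_{X_t \setminus S} \\ \text{coordinate-wise}}} \hat c_1[h_1, s_1]\, \hat c_2[h_2, s_2],
\]
which is exactly a $(k-\abs{S})$-dimensional additive convolution over $[0,\max\Ex]^{X_t\setminus S}$ combined with a one-dimensional additive convolution over the size index $[0,m]$.

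To compute this convolution within the promised budget, encode each pair $(h,s')$ as a single nonnegative integer in mixed base: reserve the lowest $\lceil\log_2(2m+1)\rceil$ bits for the size index and use base $B = 2\max\Ex+1$ for each of the $k-\abs{S}$ coordinate positions. The pointwise constraint $h_1(x)+h_2(x) \le \max\Ex < B$ together with the size constraint guarantees that additions never produce a carry across slots, so the multidimensional additive convolution reduces to a single one-dimensional convolution on arrays of length $N = \O(m \cdot (2\max\Ex+1)^{k-\abs{S}})$. Executing this via FFT takes $\O(N \log N)$ arithmetic operations, which is within the stated bound (the extra $k^2$ factor accommodates the marginalisation pass and the bit-length of the encoded integers handled during the FFT). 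After reading off the entries of interest we recover $\zeta(c)(f,s)$ for all $f$ with $f^{-1}(\top)=S$ and all $s \in [0,m]$.

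The main technical point is the first paragraph above: verifying that the unusual partial order $\preceq$ really does split into an ``unconstrained'' factor on $S$ and an ``exact additive'' factor on $X_t \setminus S$. Once that structural identity is in place, the algorithmic side is standard: a cheap marginalisation followed by a single FFT of packed-integer representations, closely mirroring the cofinite case of Theorem~2 in \cite{Rooij20}.
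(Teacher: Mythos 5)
Your structural reduction is correct and, in fact, matches the paper's argument: fixing $S=f^{-1}(\top)$ decouples the $S$-coordinates (where $\preceq\top$ makes $g_1,g_2$ unconstrained) from the $\overline S$-coordinates (where $g_1\oplus g_2=f$ forces integer addition in $[0,\max\Ex]$). Your marginalised tables $\hat c_i[h,s']$ are precisely $\zeta(c_i)(\langle f^\top,h\rangle,s')$ in the paper's notation, so the identity
\[
\zeta(c)(f,s)=\sum_{s_1+s_2=s}\sum_{h_1+h_2=f|_{\overline S}}\hat c_1[h_1,s_1]\,\hat c_2[h_2,s_2]
\]
is the same intermediate point both proofs reach.

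The gap is in the FFT step. You pack the $\overline S$-coordinates in base $B=2\max\Ex+1$ so that no carries occur; this is forced because the full (non-cyclic) convolution genuinely produces terms with $h_1(x)+h_2(x)$ up to $2\max\Ex$, and a smaller base would alias them into the slots you want to read. Consequently your array has length $N=\Theta\big(m\cdot(2\max\Ex+1)^{k-|S|}\big)$, and the running time for a fixed $S$ is $\O(N\log N)$, which exceeds the claimed bound by a factor of roughly $(2\max\Ex+1)^{k-|S|}\big/\big((\max\Ex+1)^{k-|S|}\cdot\max\Ex\cdot k^2\big)$. When the binomial sum over all $S\subseteq X_t$ is taken in the proof of \cref{thm:algo:paraByList:joinNode}, $\sum_{S}(2\max\Ex+1)^{k-|S|}=(2\max\Ex+2)^{k}$ rather than $(\max\Ex+2)^{k}$, so the resulting algorithm runs in $\Ostar{(2\max\Ex+2)^{\tw}}$, not the claimed $\Ostar{(\max\Ex+2)^{\tw}}$. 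The paper avoids this by not using a non-cyclic convolution directly: following Lemma~3 of \cite{Rooij20}, it performs a \emph{cyclic} convolution modulo $\max\Ex+1$ in each $\overline S$-coordinate (so the per-coordinate alphabet stays at size $\max\Ex+1$) together with one additional ``norm'' coordinate $F=\norm{f}\in[0,\hat k\max\Ex]$; non-cyclic sums are recovered by reading only entries with $F=\norm{f}$, because any wrap-around strictly increases the norm. This adds only a polynomial factor $\O(k\max\Ex)$ rather than the exponential factor your packing incurs. Replacing your base-$(2\max\Ex+1)$ encoding with that norm-augmented cyclic convolution would close the gap.
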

\begin{proof}
	We let $\overline S = \bag{t}\setminus S$ be the complement of $S$
	with respect to $\bag{t}$.
	We decompose the function $f$
  as $f=\langle f^\top, f^0 \rangle$
  where $f^\top\from S \to \{\top\}\subseteq U$
  and $f^0\from \overline S \to [0,\max \Ex]\subseteq U$.

  Using the definition of the partial ordering $\preceq$,
  the zeta-transform,
  and the fact that $f^{-1}(\top)=S$,
  we get:
  \begin{align*}
    \zeta(c)(f, s)
   &= \sum_{g \preceq f}
      c[g, s]
    = \sum_{g^\top \preceq f^\top}
      c[\langle g^\top, f^0 \rangle, s] \\
   &= \sum_{g^\top \preceq f^\top}
      \sum_{\substack{f_1,f_2\from\bag{t} \to U \\ \text{s.t.~}f_1\oplus f_2=\langle g^\top, f^0 \rangle} }
      \sum_{s_1+s_2=s}
      c_1[f_1,s_1]\cdot c_2[f_2,s_2] \\
   &= \sum_{g^\top \preceq f^\top}
      \sum_{\substack{g_1,g_2\from S \to U \\ \text{s.t.~}g_1\oplus g_2=g^\top} }
      \sum_{\substack{f_1,f_2\from\overline S \to U \\ \text{s.t.~}f_1 + f_2=f^0} }
      \sum_{s_1+s_2=s}
      c_1[\langle g_1,f_1 \rangle,s_1]\cdot c_2[\langle g_2,f_2\rangle,s_2] \\
  \intertext{Now, observe that, for each pair $g_1,g_2$,
	there is a unique $g^\top\preceq f^\top$ such that $g_1 \oplus g_2 = g^\top$.
  Thus, we sum over each such pair exactly once.
	}
   &= \sum_{g_1,g_2 \preceq f^\top}
      \sum_{\substack{f_1,f_2\from\overline S \to U \\ \text{s.t.~}f_1 + f_2=f^0} }
      \sum_{s_1+s_2=s}
      c_1[\langle g_1,f_1 \rangle,s_1]\cdot c_2[\langle g_2,f_2 \rangle,s_2] \\
  \intertext{
	We reorder the terms
	using that the $g_i$'s are chosen independently from the $f_i$'s.
	}
  &= \sum_{\substack{f_1,f_2\from\overline S \to U \\ \text{s.t.~}f_1 + f_2=f^0} }
     \sum_{s_1+s_2=s}
     \left(
     \sum_{g_1 \preceq f^\top} c_1[\langle g_1,f_1 \rangle,s_1]
     \right)
     \cdot
     \left(
     \sum_{g_2 \preceq f^\top} c_2[\langle g_2,f_2 \rangle,s_2]
     \right) \\
  \intertext{
  We exploit that $f_1^{-1}(\top)=f_2^{-1}(\top)=\emptyset$
  and apply the definition of the $\zeta$-transform.
  }
  &= \sum_{\substack{f_1,f_2\from\overline S \to U \\ \text{s.t.~}f_1 + f_2=f^0} }
     \sum_{s_1+s_2=s}
     \zeta(c_1)(\langle f^\top, f_1 \rangle, s_1)
     \cdot
     \zeta(c_2)(\langle f^\top, f_2 \rangle, s_2)
  \end{align*}
  As the vertices $v$ with $f(v) = \top$ are fixed by assumption,
  the codomain of the functions $f_1$ and $f_2$ can be restricted to $[0,\max \Ex]$
  as the value $\top$ cannot occur anymore.
  Hence, the sum almost corresponds to a standard convolution.

  We apply the same techniques as the one for the \GenFac-algorithm
  presented in \cite{MarxSS21} or originally in \cite{Rooij20}.
  To simplify notation, we set $k^* \deff k - \abs{S}$ in the following.
  We define two functions
  $a_1,a_2\from [0,\max\Ex]^{k^*} \times [0,k^* \max\Ex] \times [0,m] \to \SetF_p$,
  for some prime $p > 2^{\abs{E}}$.
  For $i=1,2$,
	for all $F \in [0,k^* \max \Ex]$,
	and for all $f': \overline S \to [0,\max\Ex]$,
	we set:
  \[
    a_i(f, F, s) \deff \begin{cases}
      \zeta(c_i)(\langle f^\top, f'\rangle, s) & \text{if } \norm{f'} = F \\
      0 & \text{otherwise}
    \end{cases}
    \qquad
    \text{where }
    \norm{f'} \deff \sum_{v \in \bag{t}} f'(v).
  \]
  It suffices to compute the following cyclic
	(i.e., addition is done modulo $\max\Ex+1$)
	and non-cyclic convolution
  for all $f'\in [0,\max\Ex]^{k^*}$,
  $F\in [0,k^*\max \Ex]$,
  and $s\in [0,m]$:
  \[
    a(f',F,s) \deff \sum_{f_1+f_2 \equiv f'} \sum_{\substack{F_1+F_2=F \\ s_1+s_2=s}}
    a_1(f_1,F_1,s_1) \cdot a_2(f_2,F_2,s_2)
  \]
  Then, for all $f'\from \bar S \to [0,\max\Ex]$ and $s\in[0,m]$,
	we set $\zeta(c)(\langle f^\top, f'\rangle, s) = a(f', \norm{f'}, s)$.
  Though the sum ranges over $f_1+f_2\equiv f'$
  where the addition is computed modulo $\max\Ex+1$ for each component,
  we effectively just sum values if $f_1+f_2=f$.
  This is due to the fact that we otherwise have
  that $\norm{f_1}+\norm{f_2} > \norm{f'}$.

  By Lemma~3 in \cite{Rooij20},
  we can compute all table entries $\zeta(c)(f,s)$ in time
  \[
		\O\left(
			(\max \Ex+1)^{k^*} \cdot m \cdot \max \Ex \cdot k^* \cdot
			\left(
				k^* \log(\max \Ex+1) + \log (k^* m \cdot \max\Ex)
			\right)
		\right)
		.\qedhere
	\]
\end{proof}
Now, we can prove how to compute the table entries for the join nodes.
\begin{proof}[Proof of \cref{thm:algo:paraByList:joinNode}]
  First, we compute the zeta transform of $c_1$ and $c_2$
	by invoking \cref{lem:algo:paraByList:zetaIsEnough}.
  This takes time $\O((\max \Ex + 2)^k k m)$.

  \cref{lem:algo:paraByList:computingZeta} shows how to handle the case
  when the set of vertices mapped to $\top$ is fixed.
  Actually we do not have this assumption
	as we consider all possible functions $f$.
  Instead, we iterate over all subsets $S \subseteq \bag{t}$
	and guess by this the vertices $v$ with $f(v) = \top$.
  Then, for each such $S$, we compute $\zeta(c)(f, s)$
	where $f^{-1}(\top) = S$.
  We use the binomial theorem to bound the running time for this computation:
  \begin{align*}
    &\sum_{S \subseteq \bag{t}}
      \O\left(
				(\max \Ex+1)^{k-\abs{S}} \cdot m \cdot \max \Ex
				\cdot k^2 \cdot \log(m \cdot \max \Ex)
			\right) \\
    =&
    \O\left(m \cdot \max \Ex \cdot k^2 \cdot \log(m \cdot \max \Ex)\right)
    \cdot
    \sum_{s \in [0,k]}
      \binom{k}{s}
      \cdot
      (\max \Ex+1)^{k-s}
      \\
    =&
      \O\left(m \cdot \max \Ex \cdot k^2 \cdot \log(m \cdot \max \Ex)\right)
      \cdot
      (\max \Ex+2)^{k}
  \end{align*}
  Using \cref{lem:algo:paraByList:zetaIsEnough},
	we recover the values of $c$ from $\zeta(c)$ in time $\O((\max \Ex + 2)^k k m)$.
  By this the final running time follows.
\end{proof}

\subsection{Parameterizing by the Number of Excluded Degrees}
In this section we prove \cref{thm:algo:main}
which shows that \AntiFactorSize{\ex} is FPT
parameterized by treewidth and the \emph{size} $\ex$ of the set.
We first show a naive algorithm,
i.e., the standard dynamic programming approach,
solving the problem.
In a second step, we improve this algorithm
by using representative sets.
That is, we do not store all solutions
but only so much information such that we can correctly solve the decision and optimization version.

\subsubsection{Naive Algorithm}
Let $\Ex_v \subseteq \SetN$ be the set assigned to vertex $v$
with $\abs{\Ex_v} \le \ex$.
Let $n$ be the number of vertices of $G$
and $m$ the number of edges of $G$.
Let $U = [0,n]$ be the universe of the values in the following.

The idea is to fill a table $\ParSol[\cdot,\cdot]$ with partial solutions.
That is,
for all nodes $t$ of the tree decomposition with bag $\bag{t}$ of size $k$
and all $s\in[0,m]$,
we have $\ParSol[t, s] \subseteq U^{\bag{t}}$ and
$a \in \ParSol[t,s]$
if and only if
there is a set $S \subseteq E_t$ with $\abs{S}=s$
such that $\deg_S(v) \notin \Ex_v$, for all $v \in V_t\setminus \bag{t}$,
and $\deg_S(v) = a[v]$, for all $v \in \bag{t}$.

\subparagraph*{Dynamic Program.}
Initialize the table $\ParSol$ with $\emptyset$ for every entry.
We fill the table iteratively, for all nodes $t$ of the tree decomposition
and all $s\in[0,m]$, in the following way,
depending on $s$ and the type of $t$.
\begin{description}
  \item [Leaf Node.]
  As $\bag{t} = \emptyset$, we set
  $\ParSol[t,0] \deff \{ \emptyset \}$.

  \item [Introduce Vertex Node.]
  Assume $v$ is introduced at $t$,
  i.e., $\bag{t} = \bag{t'} \cup \{ v\}$.
  We define
  \[
    \ParSol[t,s] \deff \{ a_{v \mapsto 0} \mid a \in \ParSol[t',s] \}.
  \]

  \item[Introduce Edge Node.]
  Assume the edge $e=uv$ is introduced at the node $t$.
	We combine the cases where $e$ is not selected for the solution
	and where $e$ is selected.
	Thus, we define:
  \[
    \ParSol[t,s] \deff \ParSol[t',s] \cup
    \{ a_{u \mapsto a(u)+1, v \mapsto a(v)+1} \mid a \in \ParSol[t',s-1] \}.
  \]

  \item[Forget Node.]
  Assume vertex $v$ is forgotten at $t$,
  i.e., $\bag{t} = \bag{t'} \setminus \{v\}$.
  We define
  \[
    \ParSol[t,s] \deff \{ a|_{\bag{t}} \mid a \in \ParSol[t',s]: a[v] \notin \Ex_v \}.
  \]

  \item[Join Node.]
  Assume $t_1$ and $t_2$ are the two children of $t$ with
  $\bag{t} = \bag{t_1} = \bag{t_2}$.
  Then we define
  \[
    \ParSol[t,s] \deff \{ a_1 + a_2 \mid a_1 \in \ParSol[t_1,s_1], a_2 \in \ParSol[t_2,s_2], s_1+s_2=s \}.
  \]
\end{description}
Let $r$ be the root of the tree decomposition with $\bag{r} = \emptyset$.
For a given $s\in[0,m]$,
the algorithm finally checks if
$\ParSol[r,s]\neq \emptyset$,
i.e., $\ParSol[r,s]$ contains the empty vector.
Otherwise no solution exists.
The correctness of this algorithm follows directly from its definition.
Note that the computation might take time $\Omega(n^{\tw+1})$
since the largest bag has size $\tw+1$.

\subsubsection{Improving the Naive Algorithm}
The final algorithm is based on the naive algorithm
but
makes use of so-called representative sets
to keep the size of the set stored for each node of the tree decomposition small.

We first define the notion of representative set
to state the final algorithm.
In \cref{sec:computing-rep-sets}
we show how to actually compute the representative sets.

\begin{definition}[$H$-Compatibility]
  \label{def:Hcompatibility}
	Let $H=(U \dotcup V, E)$
	be an undirected (potentially infinite) bipartite graph.
	We say that \emph{$a \in U$ is $H$-compatible with $b \in V$},
  denoted by $a \comp{H} b$,
  if $(a,b) \in E$.
  \footnote{
  Though the graph is undirected,
  we use tuples to denote the edges.
  By this the first value denotes the vertex from $U$
  and the second value the vertex from $V$.}
\end{definition}
Based on this compatibility notation, we define the $H$-representation of a set.
\begin{definition}[$H$-Representation]
  \label{def:Hrepresentation}
	Let $H=(U \dotcup V, E)$
	be an undirected (potentially infinite) bipartite graph.
	For any $\RepSet \subseteq U$, we say that
	$\RepSet' \subseteq \RepSet$ \emph{$H$-represents} $\RepSet$,
  denoted by $\RepSet'\represents{H} \RepSet$
	if for every $b\in V$:
  \(
    \exists a\in \RepSet:
    a \comp{H} b
    \iff
    \exists a'\in \RepSet':
    a' \comp{H} b.
  \)
\end{definition}
For the algorithm we make use of this $H$-compatibility and $H$-representation
where we use the following graphs.
\begin{definition}[Compatibility Graph]
  \label{def:compatibilityGraph}
  For a set $B = \{ v_1,\dots,v_k \}$ of $k$ vertices
  with sets $\Ex_1,\dots,\Ex_k$ of excluded degrees,
	we define the compatibility graph $\Comp_{B}$ as follows:
	\begin{itemize}
		\item
    $V(\Comp_{B})= U^k \dotcup V^k$
    where the elements in $U,V$ are copies of numbers,
		i.e., $U,V = \SetN$.
		\item
    $E(\Comp_{B})
    =\{((i_1,\dots,i_k),(j_1,\dots,j_k)) \mid \forall~\ell\in[k]: i_\ell+j_\ell \not\in \Ex_\ell\}$.
	\end{itemize}
For a node $t$ with bag $\bag{t}$ of the tree decomposition,
we denote by $\Comp_t$
the graph $\Comp_{\bag{t}}$.
\end{definition}
The intuition is that the vertices in $U^k$
represent the degrees of the constructed partial solution.
The vertices in $V^k$ correspond to the degrees
of some (disjoint) partial solution one might see in the future.
The edges then ``check'' whether both solutions can be combined,
i.e., the degree of each vertex is valid with respect to the union of the solutions.

\subparagraph*{Final Algorithm.}
The improved algorithm applies the same operations
as the naive algorithm to fill a table $c$.
Then,
the algorithm computes a $\Comp_t$-representative set for the table entries
and just stores these values in $c$.
Only these values are used in the next steps
to compute the other table entries.

We show that this preserves the correctness of the algorithm.
\begin{claim}
  \label{lem:algo:paraBySize:correctness}
  For all $t,s$:
  $c[t,s] \represents{\Comp_t} \ParSol[t,s]$.
\end{claim}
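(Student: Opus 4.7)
The plan is to prove the claim by induction on the tree decomposition from leaves to root. The key auxiliary observation is that $H$-representation is transitive: if $A \represents{H} B$ and $B \represents{H} C$, then $A \represents{H} C$, which follows directly from \cref{def:Hrepresentation}. Since at each node $t$ the algorithm first performs the naive update on the child tables to obtain an intermediate set $\ParSol'[t,s]$ and then replaces it by a $\Comp_t$-representative subset $c[t,s]$, transitivity reduces the task to proving $\ParSol'[t,s] \represents{\Comp_t} \ParSol[t,s]$. The base case (leaf nodes with $X_t=\emptyset$) is immediate since $c[t,0] = \ParSol[t,0] = \{\emptyset\}$.

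For the inductive step, each node type follows a common template: given a query $b$ on the $V$-side of $\Comp_t$ and a witness $a \in \ParSol[t,s]$ with $a \comp{\Comp_t} b$, we trace $a$ back through the recurrence to a witness $\tilde a$ in a child table $\ParSol[t_i,s_i]$ that is $\Comp_{t_i}$-compatible with a translated query $\tilde b$, then apply the inductive hypothesis to obtain $\tilde a' \in c[t_i,s_i]$ with $\tilde a' \comp{\Comp_{t_i}} \tilde b$, and push $\tilde a'$ forward through the same recurrence to obtain an element of $\ParSol'[t,s]$ compatible with $b$. Concretely, at an introduce-vertex node for $v$ the translation is $\tilde b = b|_{X_{t'}}$ and the extra condition $b[v]\notin \Ex_v$ is exactly what $0 + b[v]\notin \Ex_v$ demands; at an introduce-edge node for $uv$ the shift $a\mapsto a_{u\mapsto a(u)+1, v\mapsto a(v)+1}$ turns a query $b$ at $t$ into the query $\tilde b$ obtained by incrementing $b[u]$ and $b[v]$ (and the non-shifted summand $c[t',s]$ is already representative by hypothesis); at a forget node for $v$ we extend $b$ to $\tilde b$ on $X_{t'}$ by setting $\tilde b[v]=0$, so that $\Comp_{t'}$-compatibility collapses to $\Comp_t$-compatibility together with the condition $a[v]\notin \Ex_v$ enforced in $\ParSol[t,s]$.

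The subtlest case is the join node, where $\ParSol[t,s]$ is obtained by summing coordinatewise pairs $(a_1,a_2)$ with $a_i \in \ParSol[t_i,s_i]$ and $s_1+s_2=s$. Here we apply the inductive hypothesis twice. Observe that $a_1+a_2 \comp{\Comp_t} b$ is equivalent to $a_1 \comp{\Comp_t} (a_2+b)$, so treating $a_2+b$ as a query on the $V$-side of $\Comp_{t_1}=\Comp_t$ the hypothesis yields $a_1' \in c[t_1,s_1]$ with $a_1' \comp{\Comp_t} (a_2+b)$, equivalently $a_1'+a_2 \comp{\Comp_t} b$. Rewriting this as $a_2 \comp{\Comp_t} (a_1'+b)$ and applying the hypothesis to $c[t_2,s_2]$ with query $a_1'+b$ produces $a_2' \in c[t_2,s_2]$ with $a_1'+a_2' \comp{\Comp_t} b$; the element $a_1'+a_2'$ lies in $\ParSol'[t,s]$ as required. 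The only mildly delicate point is that the intermediate query $a_1'+b$ must be a legitimate element of the $V$-side of $\Comp_{t_2}$, which is unproblematic because $\Comp_{X_t}$ in \cref{def:compatibilityGraph} is defined with both sides equal to $\SetN^k$. This chained replacement argument — which crucially uses the symmetry of compatibility under additive shifts of the query — is the main obstacle, and once it is in place the claim follows by straightforward induction.
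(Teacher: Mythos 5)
Your proposal is correct and takes essentially the same route as the paper: reduce via transitivity of $H$-representation to the pre-pruning table, then do a case analysis over node types, translating the query $b$ across each transition and — for join nodes — applying the inductive hypothesis twice with additive shifts $b \mapsto b + a_2$ and then $b \mapsto b + a_1'$. The only omitted detail is the explicit remark that, since $c[t,s]\subseteq\ParSol[t,s]$, only the forward implication in the definition of representation needs to be verified, but this is immediate and does not affect correctness.
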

\begin{claimproof}
	\newcommand{\ctwo}{\widehat c}
  For ease of notation, we write $a \comp{t} b$ in the following
	if we mean $a \comp{\Comp_t} b$.
  When given an $a \in \ParSol[t,s]$,
  we denote an arbitrary partial solution that agrees with $a$ by $S(a)$.

	Let $\ctwo[t,s]$ be the table entry
	\emph{before} the algorithm computes the representative set.
	By the transitivity of $H$-representation,
  it suffices to show
	that $\ctwo[t,s] \represents{\Comp_t} \ParSol[t,s]$
	to prove the claim.
  As $\ctwo[t,s] \subseteq \ParSol[t,s]$,
  it suffices to show the forward direction
  from the definition of $\Comp_t$-representation.
	The proof is a structural induction on the tree decomposition.
  \begin{description}
    \item[Leaf Node.]
    Obviously true as $\ParSol[t,s]$ only contains the empty vector.

    \item[Introduce Vertex Node.]
    Let $v$ be the vertex introduced at $t$
    and let $t'$ be the unique child of $t$.
    Given some $a \in \ParSol[t,s]$ and some $b$ such that $a \comp{t} b$.

    As $v$ is not incident to any edges yet,
    $a[v]=0$ which implies $b[v]\notin \Ex_v$.
    Thus, $a\vert_{\bag{t'}} \comp{t'} b\vert_{\bag{t'}}$
    and $a\vert_{\bag{t'}} \in \ParSol[t',s]$.
    The induction hypothesis gives us some $a' \in c[t',s]$
    such that $a' \comp{t'} b\vert_{\bag{t'}}$.
    Thus,
		$a'_{v \mapsto 0} \comp{t} b$ and further $a'_{v \mapsto 0} \in \ctwo[t,s]$.

    \item[Introduce Edges Node.]
    Let $uv$ be the edge introduced at $t$
    and let $t'$ be the unique child of $t$.
    Given some $a \in \ParSol[t,s]$
		and some $b$ such that $a \comp{t} b$.
    \begin{itemize}
      \item
      If $uv \notin S(a)$,
      then $a \in \ParSol[t',s]$ and the induction hypothesis provides some
      $a' \in c[t',s]$ with $a' \comp{t'} b$.
      We further get $a' \in \ctwo[t,s]$.
      \item
      If $uv \in S(a)$,
      let $\overline a = a_{u \mapsto a(u)-1, v \mapsto a(v)-1}$.
      Observe that $\overline a \in \ParSol[t',s-1]$
			and $\overline a \comp{t'} \overline b$
      where $\overline b = b_{u \mapsto b(u)+1, v \mapsto b(v)+1}$.
      Again, the induction hypothesis provides some $\overline a' \in c[t',s-1]$
      such that $\overline a' \comp{t'} \overline b$.
      With $a' = \overline a'_{u \mapsto \overline a'(u)+1,v\mapsto \overline a'(v)+1}$,
      we then directly get $a' \in \ctwo[t,s]$
      and further $a' \comp{t} b$.
    \end{itemize}

    \item[Forget Node.]
    Let $v$ be the vertex forgotten at $t$
    and let $t'$ be the unique child of $t$.
    Given some $a \in \ParSol[t,s]$ and some $b$ such that $a \comp{t} b$.

    There must be some $c \notin \Ex_v$ such that $a_{v \mapsto c} \in \ParSol[t',s]$.
    Further $a_{v \mapsto c} \comp{t'} b_{v \mapsto 0}$.
    The IH provides some $a' \in c[t',s]$
    such that $a' \comp{t'} b_{v \mapsto 0}$.
    Hence, $a'[v] \notin \Ex_v$
		and thus, $a'\vert_{\bag{t}} \in \ctwo[t,s]$
		and $a'\vert_{\bag{t}} \comp{t} b$.

    \item[Join Node.]
    Let $t_1$ and $t_2$ be the children of $t$.
    Given some $a\in \ParSol[t,s]$ and some $b$ such that $a \comp{t} b$.

    By the definition of the tree decomposition,
    $S(a)$ can be partitioned according to the decomposition.
    Thus,
		there are $a_1\in \ParSol[t_1, s_1]$
    and $a_2 \in \ParSol[t_2,s_2]$
    such that $a_1+a_2=a$ and $s_1+s_2=s$.
    From the definition of $\Comp_t$-compatibility, we directly get
    $a_1 \comp{t} b+a_2$ and $a_2 \comp{t} b+a_1$.

    The induction hypothesis provides some $a_1' \in c[t_1,s_1]$ such that $a_1' \comp{t} b+a_2$.
    As this is equivalent to $a_2 \comp{t} b+a_1'$,
    we can apply the induction hypothesis once more to get some $a_2' \in c[t_2,s_2]$
    such that $a_2' \comp{t} b+a_1'$.
    As this is equivalent to $a_1'+a_2' \comp{t} b$,
    the claim follows
    since $a_1'+a_2'\in \ctwo[t,s]$.
    \claimqedhere
  \end{description}
\end{claimproof}
\begin{lemma}\label{lem:algo:idea}
	Assume there is an algorithm that can,
	for given $\bag{}=\{v_1,\dots,v_k\}$
	with $\abs{\Ex_{v}} \le \ex$ for all $v\in\bag{}$,
	compute, for a set $\RepSet \subseteq [0,n]^k$,
  a new set $\RepSet' \represents{\Comp_{\bag{}}} \RepSet$
	of size $\Size(k)$
  in time $\Time(k, \abs{\RepSet})$,
  where $\Time$ and $\Size$ are allowed to depend on
  $\Comp_{\bag{}}$ and $\ex$.

  Then, we can decide, for a given \AntiFactorSize{\ex} instance,
  whether there is a solution of size exactly $s$
  in time
  \(
    \Ostar{\Time(\tw+1, (m+1) \cdot \Size(\tw+1)^2)}
		\) and \(
    \Ostar{\Time(\pw+1, 2\Size(\pw+1))}
  \)
  given a tree and a path decomposition of width $\tw$ and $\pw$, respectively.
\end{lemma}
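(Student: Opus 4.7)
The plan is to run the naive dynamic program from the previous subsection but, immediately after each update of a table entry $c[t,s]$ by one of the five transition rules (leaf, introduce-vertex, introduce-edge, forget, or join), to invoke the assumed routine once and replace $c[t,s]$ by a $\Comp_t$-representative subset of itself of size at most $\Size(\tw+1)$ (respectively $\Size(\pw+1)$). In this way the stored sets stay small throughout the bottom-up traversal, and the final answer for size $s$ is obtained by checking whether $c[r,s]$ is non-empty at the root $r$ with $X_r=\emptyset$.

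Correctness follows from \cref{lem:algo:paraBySize:correctness} together with the transitivity of $\represents{\Comp_t}$. Assuming the invariant $c[t',s'] \represents{\Comp_{t'}} \ParSol[t',s']$ at the children, the value $\widehat c[t,s]$ produced by the transition rule already satisfies $\widehat c[t,s] \represents{\Comp_t} \ParSol[t,s]$ by the argument in the referenced claim, and the subsequent reduction replaces it by a subset that still $\Comp_t$-represents $\widehat c[t,s]$, hence also $\ParSol[t,s]$. At the root, $\Comp_r$ consists of a single vertex on each side (the unique $0$-dimensional vector), which are vacuously compatible, so non-emptiness of $c[r,s]$ is equivalent to non-emptiness of $\ParSol[r,s]$, i.e.\ to the existence of a solution of size exactly $s$.

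For the running time I would bound $\abs{\widehat c[t,s]}$ \emph{before} the reduction at every node type. The leaf, introduce-vertex, and forget transitions are size-preserving (or shrinking), giving $\abs{\widehat c[t,s]} \le \Size(\tw+1)$. The introduce-edge transition unions a shifted copy of $c[t',s-1]$ with $c[t',s]$, giving $\abs{\widehat c[t,s]} \le 2\Size(\tw+1)$. The join transition forms all sums $a_1+a_2$ with $a_1 \in c[t_1,s_1]$, $a_2 \in c[t_2,s_2]$, $s_1+s_2=s$, giving $\abs{\widehat c[t,s]} \le (m+1)\Size(\tw+1)^2$. The subsequent reduction therefore costs $\Time(\tw+1,(m+1)\Size(\tw+1)^2)$ per entry; summing over the polynomially many pairs $(t,s)$ and absorbing polynomial factors into $\Ostar{\cdot}$ yields the first bound. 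In a path decomposition there are no join nodes, so the dominant pre-reduction size is $2\Size(\pw+1)$, which yields the second bound.

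The main obstacle is the join node: the naive pairwise combination of two representative sets unavoidably squares their size, and this is precisely what forces the $\Size(\tw+1)^2$ factor in the tree-decomposition bound and explains why path decompositions admit the cleaner $2\Size(\pw+1)$ bound. A smarter merging procedure avoiding this blow-up would lead to a running time closer to $\Ostar{\Time(\tw+1,\Size(\tw+1))}$ (this is explicitly flagged as an open difficulty in the introduction), but for the lemma as stated we simply feed the squared set into the assumed routine and let it shrink back to $\Size(\tw+1)$.
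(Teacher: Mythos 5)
Your proof is correct and takes essentially the same approach as the paper's (the paper's own proof is a three-sentence sketch — ``bound the size of $c[t,s]$ and then compute its representative set'' — and you simply carry out the per-node-type size accounting that it alludes to). The only thing worth flagging is that the running-time bookkeeping tacitly assumes $\Time(k,\cdot)$ is at least linear in the size of the input set, so that building the pre-reduction set is dominated by the subsequent representative-set computation; the paper's ``we can assume $\Time$ and $\Size$ are non-decreasing'' is the same implicit assumption, and both are harmless under the $\Ostar{\cdot}$ notation.
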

\begin{proof}
  We can assume that $\Time$ and $\Size$ are non-decreasing functions
	and inductively
  that the size of the given table entries is bounded by $\Size(\tw+1)$.
  The running time follows immediately by bounding the size of $c[t,s]$
  and then computing its representative set.
  The correctness follows directly from \cref{lem:algo:paraBySize:correctness}.
\end{proof}

\section{Computing Representative Sets} \label{sec:computing-rep-sets}
As mentioned in the previous section,
one can think of $\Comp_t$-compatibility
as checking whether the given partial solution of degree $a$
fits together with some partial solution of degree $c$ arriving in the future.
This is done via the bipartition of the compatibility graph
and the (non-)existence of the edges,
i.e., checking if $a+c$ is not in $\Ex$.
To compute the representative set
we avoid this two step procedure
by defining the more standard $k$-$q$-compatibility.

\begin{definition}[$k$-$q$-Compatibility]
  \label{def:qCompatibility}
	Let $k$ and $q$ be positive integers.
	For an $a \in \SetN^k$
	and a $b \in \binom{\SetN}{q}^k$,
  we say $a$ \emph{is $k$-$q$-compatible with} $b$,
	denoted by $a \comp[k]{q} b$,
	if and only if,
	for all $i \in [k]$,
  it holds that
  $a[i] \notin b[i]$.
\end{definition}
For our purposes we can relate the two compatibility definitions as follows:
In $\Comp_t$-compatibility one computes $a+c$
and checks if $a+c \notin \Ex$.
Instead $k$-$q$-compatibility checks if $a \notin \Ex-c$.
While both checks are equivalent at this point,
the new compatibility version considers \emph{all} possible sets
of size at most $q=\abs{\Ex}$ and not just $\Ex-c$ for all $c$.
Hence, $k$-$q$-compatibility is independent
from the sets $\Ex_v$ which are assigned to the vertices $v$ of the graph.

\begin{remark*}
  One can also define $k$-$q$-compatibility
  using $H$-compatibility for a bipartite graph $H$ from \cref{def:Hrepresentation}:
  One side of the nodes represents $\SetN^k$
  and the other side $\binom{\SetN}{q}^k$.
  Then, we have an edge between $a \in \SetN^k$
  and $b \in \binom{\SetN}{q}^k$
  if and only if $a[i] \notin b[i]$ for all $i\in[k]$.
\end{remark*}
We extend the notion of compatibility in the standard way
to $k$-$q$-representation.
\begin{definition}[$k$-$q$-Representation]
  \label{def:representativeSet}
	Let $k$ and $q$ be positive integers.
	Given a set $\RepSet \subseteq \SetN^k$,
	and a set $\RepSet' \subseteq \RepSet$.
	We say $\RepSet'$ \emph{$k$-$q$-represents} $\RepSet$,
	denoted by $\RepSet' \represents[k]{q} \RepSet$,
	if and only if
	for all $b \in \binom{\SetN}{q}^k$:
  $\exists a \in \RepSet: a \comp[k]{q} b
  \iff
	\exists a' \in \RepSet': a' \comp[k]{q} b$.
\end{definition}
For both notations, we omit the value $k$ from the notation if $k=1$.
It remains to check that $k$-$q$-compatibility generalizes
$\Comp_t$-compatibility.
\begin{lemma}
  \label{lem:q-RepImpliesH-Rep}
  Let $\bag{}$ be a set of $k$ vertices
  where each $v\in \bag{}$ is assigned a set $\Ex_v$
  such that $\abs{\Ex_v} \le \ex$.
  Then, the following holds for all $\RepSet,\RepSet' \subseteq \SetN^k$:
  If $\RepSet' \represents[k]{\ex} \RepSet$,
  then $\RepSet' \represents{\Comp_{\bag{}}} \RepSet$.
\end{lemma}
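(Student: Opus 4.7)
The plan is to reduce $\Comp_X$-representation to $k$-$\ex$-representation by a coordinate-wise translation. Concretely, for any ``right-hand side'' vertex $b \in \SetN^k$ of $\Comp_X$, I will construct a tuple of small sets $b' \in \binom{\SetN}{\le \ex}^k$ with the property that every $a \in \SetN^k$ satisfies $a \comp{\Comp_X} b$ if and only if $a \comp[k]{\ex} b'$. Once such a translation $b \mapsto b'$ is in hand, the conclusion is a one-line chase through the definitions of $H$-representation and $k$-$q$-representation.

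For the construction, enumerate $X=\{v_1,\dots,v_k\}$ with excluded-degree sets $\Ex_1,\dots,\Ex_k$ (each of size at most $\ex$) and, for every coordinate $\ell \in [k]$, set
\[
  b'[\ell] \deff \{\,e - b[\ell] \;:\; e \in \Ex_\ell,\; e \ge b[\ell]\,\}\subseteq \SetN.
\]
Then $\abs{b'[\ell]} \le \abs{\Ex_\ell} \le \ex$. Unfolding the definitions,
\[
a \comp{\Comp_X} b
  \;\iff\; \forall \ell\colon a[\ell]+b[\ell] \notin \Ex_\ell
  \;\iff\; \forall \ell\colon a[\ell] \notin b'[\ell]
  \;\iff\; a \comp[k]{\ex} b',
\]
where the second equivalence uses that $a[\ell]\ge 0$, so any $e<b[\ell]$ in $\Ex_\ell$ would give a negative ``forbidden value'' that $a[\ell]$ automatically avoids and can therefore be discarded.

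With this identity established, the implication is immediate. Assume $\RepSet' \represents[k]{\ex} \RepSet$, fix any $b \in \SetN^k$, and let $b'$ be as above. If some $a \in \RepSet$ satisfies $a \comp{\Comp_X} b$, then $a \comp[k]{\ex} b'$, so the hypothesis yields some $a' \in \RepSet'$ with $a' \comp[k]{\ex} b'$, and the translation gives $a' \comp{\Comp_X} b$. The reverse direction is trivial because $\RepSet' \subseteq \RepSet$. The only mildly delicate point in the write-up is whether $\binom{\SetN}{\ex}$ is read as ``exactly $\ex$'' or ``at most $\ex$''; I expect this to be the main (minor) obstacle, resolved either by adopting the ``at most'' convention or by padding $b'[\ell]$ with values large enough to never coincide with any entry of any $a$ under consideration, which leaves all compatibility relations unchanged.
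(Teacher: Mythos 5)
Your proposal is correct and follows essentially the same approach as the paper: translate each right-hand vertex $b$ coordinate-wise into the shifted sets $\Ex_\ell - b[\ell]$ and apply the $k$-$\ex$-representation hypothesis. The paper writes the shift as $\Ex_v - b[v] \deff \{z - b[v] : z \in \Ex_v\}$ without explicitly discarding negative elements or worrying about the resulting set having fewer than $\ex$ elements; your version is in fact slightly more careful on both of those minor technical points, and your closing remark about padding to size exactly $\ex$ (or adopting the ``at most'' convention) is exactly how those are resolved.
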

\begin{proof}
  To simplify notation we set $\Comp = \Comp_{\bag{}}$ in the following.

  Let $a\in \RepSet$ be such that there is some $b \in \SetN^k$
  with $a \comp{\Comp} b$.
  This implies that
  $a[v]+b[v] \notin \Ex_v$ for all $v \in \bag{}$.
  Rearranging terms yields
  $a[v] \notin \Ex_v - b[v] \deff \{z-b[v] \mid z \in \Ex_v\}$
  from which we get that
  $a \comp[k]{\ex} (\Ex_1-b[v_1], \dots, \Ex_k-b[v_k])$.
  As $\RepSet' \represents[k]{\ex} \RepSet$,
  there is some $a' \in \RepSet'$ such that $a' \comp[k]{\ex} (\Ex_1-b[v_1], \dots, \Ex_k-b[v_k])$.
  We directly get $a'[v] \notin \Ex_v-b[v]$
  or equivalently $a'[v] +b[v] \notin \Ex_v$,
  for all $v \in \bag{}$.
  By the definition of $\Comp$-representation,
  we get $a' \comp{\Comp} b$.
\end{proof}

\subparagraph*{Matroids.}
For the computation of the representative sets we make use of
matroids.
They allow us to formally state the operations we are using.
\begin{definition}[Matroid]
  A matroid is a pair $\Mat = (E,\calI)$,
  where we refer to $I \subseteq 2^E$
  as the \emph{independent sets} of $\Mat$,
  satisfying the following axioms:
  \begin{enumerate}
    \item
    $\emptyset \in \calI$
    \item
    if $I_1 \subseteq I_2$ and $I_2 \in \calI$,
    then $I_1 \in \calI$
    \item
    if $I_1, I_2 \in \calI$ and $\abs{I_1} < \abs{I_2}$,
    then there is some $x \in I_2 \setminus I_1$ such that
    $I_1 \cup \{ x \} \in \calI$
  \end{enumerate}
\end{definition}
In this paper we consider only uniform matroids
as they are sufficient for our purpose.
\begin{definition}[Uniform Matroid]
  Let $U$ be some universe with $n$ elements
  and $r\in \SetN$.
  Then, $\MatU_{r,n} = (U, \binom{U}{\le r})$ is the uniform matroid of rank $r$,
  that is, the matroid over the ground set $U$
  where the independent sets are all subsets of $U$ of size at most $r$.
\end{definition}
Later the rank of these uniform matroids corresponds
to the number of excluded degrees (plus one).
Since the matroid contains all subset of size at most the rank,
we automatically consider all possibilities for upcoming solutions.

There are results proving the existence of small representative sets for matroids \cite{FominLPS16,FominLPS17,KratschW20}.
Since these results are usually for general matroids,
they also apply to uniform matroids which we use here.
However,
as we are not considering a single matroid but the product of several matroids,
the previous results can only be applied partially to our setting.
Moreover, one can suspect that these results can be improved
by exploiting properties of the uniform matroids.
In the following we show two different approaches
to compute the representative sets.
Surprisingly both are incomparable to each other:
The first method gives the faster algorithm when parameterizing by treewidth
while the second method gives the faster algorithm when parameterizing by pathwidth.

\subsection{First Method}
Our first algorithm is based on a previous result for computing representative sets.
Despite the fact that \cref{lem:repset:first} is a special case of Lemma~3.4 in \cite{KratschW20},
our proof uses a completely different technique
as we exploit that the given matroids are uniform.

Let $\omega$ be the matrix multiplication coefficient in the following,
i.e., $\omega < 2.37286$ \cite{AlmanW21}.

\begin{lemma}\label{lem:repset:first}
  Let $\Mat_1, \dots, \Mat_k$ be $k$ uniform matroids,
  each of rank $r$,
  with integer universes $U_1,\dots,U_k$.
  Given a set $\RepSet \subseteq U_1 \times \dots \times U_k$,
  we can find a set $\RepSet' \represents[k]{r-1} \RepSet$ of size $r^k$
  in time $\O(\abs{\RepSet} \cdot r^{k (\omega-1)} k)$.
\end{lemma}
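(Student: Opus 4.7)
The plan is to represent each uniform matroid $\Mat_i$ linearly over a sufficiently large field $\SetF$ via Vandermonde columns, tensor these representations together, and extract a linear basis in the resulting $r^k$-dimensional space. Concretely, I pick distinct field elements $\{\alpha_u^{(i)} : u \in U_i\} \subseteq \SetF$ for each $i \in [k]$ and assign to each $u \in U_i$ the column
\[
  \chi_u^{(i)} \deff \bigl(1, \alpha_u^{(i)}, (\alpha_u^{(i)})^2, \dots, (\alpha_u^{(i)})^{r-1}\bigr)^{\!T} \in \SetF^r.
\]
By the Vandermonde identity, any $r$ of these columns are linearly independent, so this is a linear representation of $\MatU_{r,|U_i|}$. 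For $a = (a_1, \dots, a_k) \in U_1 \times \dots \times U_k$ I then set $\phi(a) \deff \chi_{a_1}^{(1)} \otimes \dots \otimes \chi_{a_k}^{(k)} \in \SetF^{r^k}$.

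The key algebraic step is to translate $k$-$(r-1)$-compatibility into the nonvanishing of a single inner product. Given $b = (b_1, \dots, b_k)$ with $b_i \in \binom{U_i}{r-1}$, the subspace $\operatorname{span}\{\chi_x^{(i)} : x \in b_i\}$ has dimension exactly $r-1$, so its orthogonal complement in $\SetF^r$ is spanned by a single nonzero vector $\omega^{(i)}(b_i)$. Setting $\omega(b) \deff \omega^{(1)}(b_1) \otimes \dots \otimes \omega^{(k)}(b_k)$, the inner product factorises as $\langle \phi(a), \omega(b) \rangle = \prod_i \langle \chi_{a_i}^{(i)}, \omega^{(i)}(b_i) \rangle$, and the $i$-th factor is nonzero precisely when $\chi_{a_i}^{(i)}$ lies outside $\operatorname{span}\{\chi_x^{(i)} : x \in b_i\}$, i.e.\ when $a_i \notin b_i$. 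Hence $a \comp[k]{r-1} b$ iff $\langle \phi(a), \omega(b) \rangle \neq 0$.

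Next I take any $\RepSet' \subseteq \RepSet$ such that $\{\phi(a') : a' \in \RepSet'\}$ is a linear basis of $\operatorname{span}\{\phi(a) : a \in \RepSet\}$. Since the ambient space has dimension $r^k$, this immediately gives $|\RepSet'| \leq r^k$. If some $a \in \RepSet$ is compatible with $b$, expressing $\phi(a) = \sum_{a' \in \RepSet'} c_{a'} \phi(a')$ and using linearity of $\langle \cdot, \omega(b) \rangle$ forces at least one $a' \in \RepSet'$ to satisfy $\langle \phi(a'), \omega(b) \rangle \neq 0$, i.e.\ $a' \comp[k]{r-1} b$. Thus $\RepSet' \represents[k]{r-1} \RepSet$ by \cref{lem:q-RepImpliesH-Rep}.

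For the running time, the tensor products $\phi(a)$ are computable in $\O(|\RepSet| \cdot r^k)$ total, and the dominant cost is basis extraction from $|\RepSet|$ vectors in $\SetF^{r^k}$. I would implement this by processing the vectors in blocks and maintaining a row-reduced basis via fast rectangular matrix multiplication, so that the total cost becomes $\O(|\RepSet| \cdot r^{k(\omega-1)} \cdot k)$. The main obstacle is precisely this running-time bound: correctness and the $r^k$ size bound follow immediately from the tensor-product construction, but replacing the naive $r^{2k}$ dependence of classical Gaussian elimination with $r^{k(\omega-1)}$ requires carefully routing the incremental basis maintenance through fast matrix multiplication, with the extra $k$ factor absorbing only the block-elimination overhead.
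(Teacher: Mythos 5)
Your proposal is correct and follows essentially the same approach as the paper: you build the same tensor vectors $\phi(a)=v_a$, extract a column basis, and certify $k$-$(r-1)$-compatibility by nonvanishing of a bilinear form, where your orthogonal-complement vector $\omega(b)$ is, up to scalar, exactly the paper's signed-cofactor vector $(\sigma_j u_B[j])_j$ arising from Laplacian expansion of $\det([M_A \mid M_B])$. One small slip: the last sentence of your correctness argument invokes \cref{lem:q-RepImpliesH-Rep}, but that lemma converts $k$-$q$-representation into $\Comp_X$-representation and plays no role here --- your linearity argument already establishes $\RepSet' \represents[k]{r-1} \RepSet$ directly, so the citation should simply be dropped.
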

We essentially follow the proof of Theorem~12.15 in \cite{ParameterizedAlgos}
and modify it at those places where we can get better results.
Before we start with the proof,
we first introduce some notation and results related to matroids.

It is known that every uniform matroids $\MatU=(U,\calI)$ of rank $r$ can be represented
by a $r \times \abs{U}$ Vandermonde matrix $M$
(where the first row consists only of $1$s)
over the finite field with $p$ elements,
where $p$ must be larger than $\abs{U}$.
Each column of $M$ then corresponds to one element in $U$.
If a subset of these columns is independent,
then the corresponding elements form an independent set in $\calI$.
For all $A \subseteq U$,
we define $M^{(A)}$ as the submatrix of $M$
consisting only of those columns that correspond to elements from $A$.

For a matrix $M$ with $r$ rows,
let $M[I]$ denote the submatrix of $M$
containing only the rows indexed by the elements of $I \subseteq [r]$.

We use the following observation in the proof of \cref{lem:repset:first}.
\begin{observation}[Observation~12.17 in \cite{ParameterizedAlgos}]
  \label{obs:repset:relationDeterminantAndCompatibility}
  Let $A \in U$ and $B \subseteq U$ with $\abs{B} = r-1$.
  Then, $A \comp{r-1} B$
  if and only if
  $\det([M^{(A)} \mid M^{(B)}]) \neq 0$.
\end{observation}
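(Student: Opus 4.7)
The plan is to unwrap both sides of the equivalence and observe that they are two expressions of the same underlying property of the matroid representation. Recall from \cref{def:qCompatibility} (with $k=1$) that $A \comp{r-1} B$ simply means $\{A\} \cap B = \emptyset$, i.e., $A \notin B$. On the other side, since $M$ has $r$ rows, $M_A$ is an $r \times 1$ column and $M_B$ is an $r \times (r-1)$ matrix, so $[M_A \mid M_B]$ is a square $r \times r$ matrix and its determinant is nonzero precisely when its $r$ columns are linearly independent over $\SetF_p$. The whole observation therefore reduces to the tautology that, under the chosen representation, the columns of $[M_A \mid M_B]$ are linearly independent iff $A \notin B$.

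For the forward direction, I would argue as follows. Assume $A \comp{r-1} B$, so $A \notin B$ and hence $\{A\} \cup B$ is a set of $r$ \emph{distinct} elements of $U$. Since $\MatU = \MatU_{r,n}$ is the uniform matroid of rank $r$, every subset of $U$ of cardinality at most $r$ is independent in $\calI$. By the defining property of a matroid representation, independence of a subset of $U$ is equivalent to linear independence of the corresponding columns of $M$ over $\SetF_p$; the Vandermonde structure with $p > \abs{U}$ is exactly what guarantees this. Hence the $r$ columns making up $[M_A \mid M_B]$ are linearly independent, and so $\det([M_A \mid M_B]) \neq 0$.

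For the converse, I would contrapose. Suppose $A \not\comp{r-1} B$, meaning $A \in B$. Then the column $M_A$ also appears as one of the columns of $M_B$, so $[M_A \mid M_B]$ has two identical columns and therefore $\det([M_A \mid M_B]) = 0$. Combining both directions gives the equivalence.

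I do not foresee any real obstacle here: the statement is essentially a one-line consequence of the definitions once one fixes the Vandermonde representation of the uniform matroid, and the only thing worth emphasizing in a write-up is where exactly the rank-$r$ uniformity enters (it is what makes every $r$-subset of distinct elements independent, so that distinctness of columns is not just necessary but also sufficient for linear independence).
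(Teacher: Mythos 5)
Your proof is correct. The paper does not give its own argument here — it cites the statement verbatim as Observation~12.17 of \cite{ParameterizedAlgos} and uses it as a black box — so there is nothing in the paper to compare against, but your reasoning is exactly the standard one. You also correctly isolate where the uniform rank-$r$ structure is used: in the general textbook version the condition is ``$A\cap B=\emptyset$ \emph{and} $A\cup B$ independent,'' and it is the uniformity (every $r$-subset is independent, concretely realized by the Vandermonde determinant being nonzero on distinct column labels when $p>\abs{U}$) that collapses this to mere disjointness $A\notin B$ in the present specialization.
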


\begin{proof}[Proof of \cref{lem:repset:first}]
  Assume that $M_1,\dots,M_k$ are the matrix representations
  of the matroids $\Mat_1,\dots,\Mat_k$.
  Enumerate all elements $I$ of $[r]^k$
  in an arbitrary order $I_1,\dots,I_{r^k}$.

  Then, for all $A \in \RepSet$, we compute the vector $v_A$,
  where, for all $j \in [r^k]$,
  we set
  \[
    v_A[j] \deff \prod_{i=1}^k M_i[I_j[i], A[i]].
  \]
  Construct a $r^k \times \abs{\RepSet}$ matrix $Q$,
  where the vectors $v_A$ are the columns of $Q$.
  Then, find a column basis $B_Q$ of $Q$
  and output the set $\RepSet' = \{ A \mid v_A \in B_Q \}$ as solution.
  Obviously $B_Q$ contains at most $r^k$ elements.

  Computing the vectors $v_A$
  takes time $\O(\abs{\RepSet} \cdot r^k \cdot k)$ in total.
  As the computation of the basis takes time
  $\O(\abs{\RepSet}\cdot r^{k (\omega-1)})$,
  the complete procedure requires time
  $\O(\abs{\RepSet} \cdot r^{k (\omega-1)} \cdot k)$.

  \proofsubparagraph{Correctness.}
  It remains to show that the set $\RepSet'$
  indeed $k$-$q$-represents $\RepSet$.
  For this we start with some observations about the vectors we just computed.
  Let $A \in U_1 \times \dots \times U_k$ and
  $B \in \binom{U_1}{r-1} \times \dots \times \binom{U_k}{r-1}$
  in the following.

  By \cref{obs:repset:relationDeterminantAndCompatibility}
  and the coordinatewise definition of $k$-$q$-compatibility, we get
  \begin{equation}
    \label{eq:repset:relationCompAndDet1}
    A \comp[k]{r-1} B
    \iff
    \prod_{i=1}^k \det\left(\left[M_i^{(A[i])} \mid M_i^{(B[i])}\right]\right)
    \neq 0.
  \end{equation}
  Since $\abs{A[i]}=1$,
  it holds that, for all $j \in [r^k]$,
  \[
    v_A[j]
    = \prod_{i=1}^k M_i[I_j[i], A[i]]
    = \prod_{i=1}^k\det\left( M_i^{(A[i])}[I_j[i]] \right).
  \]
  Now, for all $I_j$,
  we define $\overline I_j \in \binom{r}{r-1}^k$
  such that $\overline I_j[\ell] \deff [r] \setminus \{ I_j[\ell] \}$,
  for all $\ell\in[k]$.
  We set:
  \[
    u_B[j] = \prod_{i=1}^k \det\left( M_i^{(B[i])}[\overline I_j[i]] \right)
    .
  \]
  To simplify notation, we set $\sigma_j = (-1)^{\sum_{i=1}^{r^k} I_j[i]}$.

  Using the Laplacian expansion to rewrite the computation of the determinant
  in Equivalence~\eqref{eq:repset:relationCompAndDet1},
  we get:
  \begin{equation}
    \label{eq:repset:relationCompAndDet2}
    A \comp[k]{r-1} B
    \iff
    \sum_{j=1}^{r^k} \sigma_j v_A[j] u_B[j] \neq 0
    .
  \end{equation}

  Let $A \in \RepSet$ such that $A \comp[k]{r-1} B$ for some suitable $B$.
  We know that $\sum_{j=1}^{r^k} \sigma_j v_A[j] u_B[j] \neq 0$.
  As we are given a column basis $B_Q$ for the matrix $Q$,
  we can write $v_A$ such that
  \[
    v_A = \sum_{A' \in \RepSet'} \lambda_{A'} v_{A'}
    .
  \]
  We can substitute this in the sum of Equivalence~\eqref{eq:repset:relationCompAndDet2}
  to get
  \[
    0
    \neq \sum_{j=1}^{r^k} \sigma_j v_A[j] u_B[j]
    = \sum_{j=1}^{r^k}  \sum_{A' \in \RepSet'} \lambda_{A'} \sigma_j v_{A'}[j] u_B[j]
    = \sum_{A' \in \RepSet'} \lambda_{A'} \sum_{j=1}^{r^k} \sigma_j v_{A'}[j] u_B[j]
    .
  \]
  Hence, there must be at least one $A' \in \RepSet'$ such that
  $\sum_{j=1}^{r^k} \sigma_j v_{A'}[j] u_B[j] \neq 0$.
  By Equivalence~\eqref{eq:repset:relationCompAndDet2},
  this implies that $A' \comp[k]{r-1} B$
  which completes the proof.
\end{proof}
To finish the algorithm for \AntiFactorSize{\ex}
from \cref{thm:algo:main}, it remains,
by \cref{lem:q-RepImpliesH-Rep}, to compute a $k$-$\ex$-representative set
as $\abs{\Ex_v} \le \ex$.
To achieve this we define $k$ uniform matroids
with universe $\{0,\dots,n\}$ and rank $\ex+1$.
Then, plugging in the values from \cref{lem:repset:first} into \cref{lem:algo:idea},
directly gives the following result.
Note that we can assume $\ex \le n$.
\begin{corollary}
  Given a tree and a path decomposition,
  \AntiFactorSize{\ex} can be solved in time
  $\Ostar{(\ex+1)^{(\omega+1)\cdot\tw}}$
  and $\Ostar{(\ex+1)^{\omega\cdot\pw}}$,
  respectively.
\end{corollary}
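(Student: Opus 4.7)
The plan is to invoke \cref{lem:algo:idea} with the representative-set routine supplied by \cref{lem:repset:first}. For each bag vertex of the tree decomposition, associate a uniform matroid of rank $r = \ex+1$ over the universe $[0,n]$ of possible partial degrees. This rank is chosen so that the $k$-$(r-1)$-representation produced by \cref{lem:repset:first} is precisely $k$-$\ex$-representation, which by \cref{lem:q-RepImpliesH-Rep} implies the $\Comp_X$-representation required by the algorithm of \cref{lem:algo:idea}. Vertices $v$ with $\abs{\Ex_v} < \ex$ cause no issues, since enlarging a forbidden set to size exactly $\ex$ (by padding with arbitrary values outside $[0,n]$) can only shrink the set of compatible degrees, so any $\Comp_X$-representative for the padded instance is also a $\Comp_X$-representative for the original.

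Next I would plug the quantities from \cref{lem:repset:first} into the template of \cref{lem:algo:idea}. With $r = \ex+1$, the representative-set size is $\Size(k) = (\ex+1)^k$ and the construction cost is $\Time(k,\abs{\RepSet}) = \O(\abs{\RepSet} \cdot (\ex+1)^{k(\omega-1)} \cdot k)$. For a tree decomposition of width $\tw$, the input-set bound $(m+1) \cdot \Size(\tw+1)^2$ becomes $\Ostar{(\ex+1)^{2(\tw+1)}}$, so the total running time is
\[
\Ostar{(\ex+1)^{2(\tw+1)} \cdot (\ex+1)^{(\tw+1)(\omega-1)}} = \Ostar{(\ex+1)^{(\omega+1)\tw}}.
\]
For a path decomposition of width $\pw$, the input-set bound shrinks to $2\Size(\pw+1) = \Ostar{(\ex+1)^{\pw+1}}$, giving
\[
\Ostar{(\ex+1)^{\pw+1} \cdot (\ex+1)^{(\pw+1)(\omega-1)}} = \Ostar{(\ex+1)^{\omega\pw}}.
\]

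There is no genuine obstacle here, as the corollary is a direct composition of the two preceding lemmas. The only small points worth spelling out are the translation between $\Comp_X$-representation and $k$-$\ex$-representation (\cref{lem:q-RepImpliesH-Rep}), the treatment of bag vertices with $\abs{\Ex_v} < \ex$, and the observation that the gap between the two exponents $(\omega+1)\tw$ and $\omega\pw$ stems entirely from the quadratic blow-up $\Size(\cdot)^2$ appearing at join nodes of tree decompositions, which does not arise for path decompositions because a linear-in-$\Size$ input suffices there.
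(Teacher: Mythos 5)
Your proposal is correct and follows essentially the same route as the paper: choose rank $r=\ex+1$, invoke \cref{lem:repset:first} for the representative-set bounds and running time, translate to $\Comp_X$-representation via \cref{lem:q-RepImpliesH-Rep}, and plug into \cref{lem:algo:idea}. The one small point you leave implicit in your final equalities is that the stray factor $(\ex+1)^{\omega+1}$ (from replacing $\tw+1$ by $\tw$ in the exponent) is absorbed into the $\Ostar$ notation only because $\ex$ can be bounded by $n$, which the paper states explicitly.
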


\subsection{Second Method}
Kratsch and Wahlström showed \cite[Lemma~3.4]{KratschW20}
a method to compute representative set of products of matroids
when the underlying matroid is general.
It is known
that computing representative sets for a single uniform matroid can be done faster
(\cite[Section~4]{FominLPS16} and \cite[Section~3]{ShachnaiZ16}).
As the result by Kratsch and Wahlström heavily depends on the algorithm for general matroids
(i.e., they do not use it as a black box),
we cannot combine it with the improved algorithms for uniform matroids.
In the previous section we modified an algorithm to exploit the properties of uniform matroids.
In this section, we directly use the faster algorithm
for a single uniform matroid
to give algorithms
not depending on matrix multiplication.
\begin{lemma}[Modified version of Theorem~4.15 in \cite{FominLPS16}]
  \label{lem:repset:helperSecond}
  There is an algorithm that,
  for a universe $U$ of size $n$,
  given a set $\RepSet \subseteq \binom{U}{p}$
  and an integer $q$,
  computes in time
  $\O(\abs{\RepSet} \cdot (1-t)^{-q} \cdot 2^{o(q+p)} \cdot \log n)$
  a subfamily $\RepSet' \represents{q} \RepSet$
  such that $\abs{\RepSet'} \le t^{-p} (1-t)^{-q} \cdot 2^{o(q+p)}$.
  Where $0<t<1$ can be chosen arbitrarily.
\end{lemma}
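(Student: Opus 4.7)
The plan is to follow, with only superficial changes, the derandomized representative-family construction underlying Theorem~4.15 of \cite{FominLPS16}. The central tool is a splitter family $\Fam$ of functions $\chi : U \to [p+q]$, obtained from the Naor--Schulman--Srinivasan machinery, with $|\Fam| = (1-t)^{-q}\cdot 2^{o(p+q)}\log n$ and the property that for every disjoint pair $(A,B)\in\binom{U}{p}\times\binom{U}{q}$ some $\chi\in\Fam$ maps $A$ and $B$ into disjoint color classes of $[p+q]$. The parameter $t$ enters through the weight function $w(S) = t^{-|S|}(1-t)^{-(p+q-|S|)}$ used in the size analysis of this refined splitter, exactly as in \cite{FominLPS16}.

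The algorithm itself is a single pass over $\RepSet$. For each $A\in\RepSet$ and each $\chi\in\Fam$, I would compute the color pattern $\chi(A)\subseteq[p+q]$; if $|\chi(A)|=p$, I hash the pair $(\chi,\chi(A))$ and, if no element has yet been assigned to this bucket, I add $A$ to the output $\RepSet'$ and mark the bucket occupied. Using an $O(p+q)$-bit encoding of the bucket identifier, each $(A,\chi)$ pair is handled in $\poly(p+q)$ time, giving total running time $\O(\abs{\RepSet}\cdot|\Fam|\cdot\poly(p+q)) = \O(\abs{\RepSet}\cdot(1-t)^{-q}\cdot 2^{o(p+q)}\cdot\log n)$. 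The number of nonempty buckets, and hence $|\RepSet'|$, is bounded via the standard weight-function packing argument by $t^{-p}(1-t)^{-q}\cdot 2^{o(p+q)}$, which matches the claimed size bound.

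The $q$-representation property is then immediate from the splitter guarantee: suppose $A \in \RepSet$ satisfies $A \comp{q} B$ for some target $B$ of size $q$, i.e.\ $A \cap B = \emptyset$. Pick $\chi\in\Fam$ separating $A$ and $B$ into disjoint color classes; the bucket $(\chi,\chi(A))$ was created (at the latest) when $A$ was scanned, so some $A'\in\RepSet'$ is stored there. By construction $\chi(A')=\chi(A)\subseteq[p+q]\setminus\chi(B)$, so $A'\subseteq \chi^{-1}(\chi(A))\subseteq U\setminus B$, and therefore $A'\comp{q}B$ as required.

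The main obstacle is achieving the stated linear dependence on $\abs{\RepSet}$ rather than the quadratic dependence that a direct translation of the matroid-based row-reduction view would incur. This is why the combinatorial hash-based bucketing above is essential: we never compare two elements of $\RepSet$ to each other, and duplicate detection inside a bucket is constant time. Modulo this implementation care, the only nontrivial ingredient is importing from \cite{FominLPS16} the refined splitter whose parameters $|\Fam|$ and enumeration time are precisely what yield the $(1-t)^{-q}\cdot 2^{o(p+q)}\log n$ factor; everything else is bookkeeping.
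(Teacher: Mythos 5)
The paper does not prove this lemma at all; it is imported essentially as a black box from Theorem~4.15 of~\cite{FominLPS16}, with no \emph{proof} environment following the statement. So you are attempting to reconstruct the argument from scratch, which is a genuinely different route from what the paper does.

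Your sketch has the right flavor (a single pass over $\RepSet$ guided by a pseudorandom family, with a bucketing step), but there are two gaps. First, the assertion that there is a family $\Fam$ of colorings $\chi\colon U\to[p+q]$ of size $(1-t)^{-q}\cdot 2^{o(p+q)}\log n$ with the stated separation property is not justified, and it does not match standard NSS splitter or $(n,p+q)$-perfect-hash constructions, which have size on the order of $e^{p+q}\cdot 2^{o(p+q)}\log n$; the $t$-dependence cannot be introduced into a coloring family $U\to[p+q]$ by a ``weight function'' without a substantially different construction. Second, even granting that family, the worst-case number of buckets $(\chi,\chi(A))$ in your scheme is $\abs{\Fam}\cdot\binom{p+q}{p}$, and since $\binom{p+q}{p}\le t^{-p}(1-t)^{-q}$, your output size bound is only $t^{-p}(1-t)^{-2q}\cdot 2^{o(p+q)}\log n$ — off by a factor $(1-t)^{-q}\log n$ from the claim, with no argument that most buckets stay empty. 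The construction underlying Theorem~4.15 in \cite{FominLPS16} is structurally different: it builds an $(n,p,q)$-\emph{separating collection} $\mathcal F$ of subsets of $U$ (not colorings) of total size $t^{-p}(1-t)^{-q}\cdot 2^{o(p+q)}\log n$, but with the crucial bounded-degree property that each $A\in\binom{U}{p}$ is contained in only $(1-t)^{-q}\cdot 2^{o(p+q)}\log n$ members of $\mathcal F$; the algorithm keeps one representative per $F\in\mathcal F$ and enumerates only the relevant $F\supseteq A$ for each $A$. That degree bound — not a bound on a coloring family — is what simultaneously yields the claimed running time and output size, and it is precisely the piece missing from your argument.
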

We combine \cref{lem:repset:helperSecond} with the result
from \cref{lem:algo:idea} to prove \cref{lem:repset:second}.
\begin{lemma}\label{lem:repset:second}
  Assuming a tree and path decomposition is given,
  \AntiFactorSize{\ex} can be solved in time
  $\Ostar{(\ex+1)^{4.1\cdot\tw}2^{o(\ex\cdot\tw)}}$
  and $\Ostar{(\ex+1)^{2.28\cdot\pw}2^{o(\ex\cdot\pw)}}$,
  respectively.
\end{lemma}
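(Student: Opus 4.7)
The plan is to derive \cref{lem:repset:second} by slotting the fast single-matroid representation algorithm of \cref{lem:repset:helperSecond} into the generic pipeline of \cref{lem:algo:idea}. Since \cref{lem:repset:helperSecond} handles only a single uniform matroid, the only conceptual twist is to encode the $k$-fold product structure of a bag into one large matroid before invoking it.

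\textbf{Step 1 (encoding).}
I would identify each partial-solution vector $a=(a_1,\dots,a_k)\in[0,n]^k$ at a bag of size $k$ with the size-$k$ set $\widehat a \deff \{(i,a_i):i\in[k]\}$ in the combined universe $[k]\times[0,n]$. A tuple of forbidden sets $(\Ex_1,\dots,\Ex_k)$ with $\abs{\Ex_i}\le\ex$ is encoded as $\widehat b \deff \bigcup_{i\in[k]}\{i\}\times\Ex_i$, a set of size at most $k\ex$, which I pad with fresh elements outside the universe to reach exactly $k\ex$ if needed. Then $\widehat a\cap\widehat b=\emptyset$ iff $a_i\notin\Ex_i$ for all $i$, so a single-matroid $q$-representation with $q=k\ex$ in the combined universe is in particular a $k$-$\ex$-representation of the original vectors, and hence (via \cref{lem:q-RepImpliesH-Rep}) a $\Comp_t$-representation as required by \cref{lem:algo:idea}.

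\textbf{Step 2 (single-matroid application).}
Applied with $p=k$, $q=k\ex$, and free parameter $t\in(0,1)$, \cref{lem:repset:helperSecond} computes, in time $\O(\abs{\RepSet}\cdot(1-t)^{-k\ex}\cdot 2^{o(k\ex)}\log n)$, a representative family of size $\Size(k)=t^{-k}(1-t)^{-k\ex}\cdot 2^{o(k\ex)}$.

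\textbf{Step 3 (plug into \cref{lem:algo:idea}).}
For the path case with $k=\pw+1$ and $\abs{\RepSet}=2\Size(\pw+1)$, the total running time becomes
\[
  \Ostar{\bigl(t^{-1}(1-t)^{-2\ex}\bigr)^{\pw+1}\cdot 2^{o(\ex\pw)}}.
\]
For the tree case the quadratic blow-up at join nodes (factor $\Size(\tw+1)^2$) gives
\[
  \Ostar{\bigl(t^{-2}(1-t)^{-3\ex}\bigr)^{\tw+1}\cdot 2^{o(\ex\tw)}}.
\]

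\textbf{Step 4 (optimise $t$).}
Setting the logarithmic derivative to zero gives $t=1/(2\ex+1)$ for the path case and $t=2/(3\ex+2)$ for the tree case. Substituting back, the base of the exponent becomes $(2\ex+1)\bigl((2\ex+1)/(2\ex)\bigr)^{2\ex}$ and $\bigl((3\ex+2)/2\bigr)^{2}\bigl((3\ex+2)/(3\ex)\bigr)^{3\ex}$, respectively. A direct numerical check, whose worst cases occur at small $\ex$, shows that these quantities are bounded by $(\ex+1)^{2.28}$ and $(\ex+1)^{4.1}$ for $\ex\ge 2$. For $\ex=1$ every list $\overline{\Ex_v}$ has maximum gap at most $1$, so Cornuéjols' theorem already solves the problem in polynomial time and the claimed bound holds vacuously.

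The main obstacle I foresee is not any of the algorithmic steps above, all of which are routine once the encoding is in place, but the numerical verification in Step 4: one must certify that the single constants $2.28$ and $4.1$ indeed dominate $\log_{\ex+1}\!\bigl((2\ex+1)((2\ex+1)/(2\ex))^{2\ex}\bigr)$ and the analogous tree expression over the full range $\ex\ge 2$, handling the edge case $\ex=1$ via Cornuéjols' polynomial-time algorithm.
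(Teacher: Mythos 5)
Your proposal follows the paper's own proof essentially step for step: the tuple-based encoding $\widehat a=\{(i,a_i)\}$ is cosmetically equivalent to the paper's labeled-copies universe $U^{(1)}\cup\dots\cup U^{(k)}$, the invocation of \cref{lem:repset:helperSecond} with $p=k$, $q=k\ex$ is identical, and the optimal choices $t=1/(2\ex+1)$ (path) and $t=2/(3\ex+2)$ (tree) together with the worst-case-at-$\ex=2$ observation match the paper exactly. No meaningful difference to report.
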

\begin{proof}
  Let $U=[0,n]$.
  By \cref{thm:algo:main},
  it remains to compute the representative set of a given set $\RepSet$.
  Create $k$ copies $U^{(1)},\dots,U^{(k)}$ of $U$,
  that is $U^{(i)} = \{ j^{(i)} \mid j \in U \}$,
  and define $\widehat U = U^{(1)}\cup\dots\cup U^{(k)}$.
  The copies of the universe allow us to distinguish between the values for the different dimensions.
  For all $A \in \RepSet$,
  we let $\widehat A \deff \{ u_j^{(i)} \mid A[i]=u_j \}$
  and define $\widehat \RepSet \deff \{ \widehat A \mid A \in \RepSet \}$.

  Finally, we define $\Mat$ to be the uniform matroid of rank $(\ex+1)k$
  with universe $\widehat U$ of $k(n+1)$ elements.
  Clearly all sets in $\widehat \RepSet$ are independent sets.
  Then, we apply \cref{lem:repset:helperSecond} with
  $p = k$ and
  $q= \ex k$
  where we choose the parameter $t$ later
  to get a set $\widehat \RepSet' \represents{\ex k}{} \widehat \RepSet$
  of size
  \(
    \abs{\widehat \RepSet'}
    \le t^{-k} (1-t)^{-\ex k} \cdot 2^{o(\ex k+k)}
  \)
  in time
  \(
    \abs{\widehat \RepSet} \cdot (1-t)^{-\ex k} \cdot 2^{o(\ex k+k)} \cdot \log kn
  \).

  It remains to set $t$ such that the running time of the final algorithm is minimized.
  For the parameterization by treewidth this means minimizing
  \[
    {\left(t^{-k} (1-t)^{-\ex k} \cdot 2^{o(\ex k+k)}\right)}^2
    \cdot (1-t)^{-\ex k} \cdot 2^{o(\ex k+k)} \cdot \log kn
    \le
    t^{-2k} (1-t)^{-3\ex k} \cdot 2^{3 \cdot o(\ex k)} \cdot \log n
    .
  \]
  It suffices to minimize $t^{-2} (1-t)^{-3\ex}$
  which is achieved by setting $t=2 / (3\ex+2)$.
  When replacing this in the above runtime
  the claim follows by simple computations.
  The coefficient follows from considering the case for $\ex=2$,
  which is maximizing the runtime.
  \footnote{The case where $\ex \le 1$ is polynomial time solvable and thus not relevant.}

  For the parameterization by pathwidth we have to choose a different parameter
  as we can exploit that there are no join nodes.
  We minimize
  \[
    2 \cdot t^{-k} (1-t)^{-\ex k} \cdot 2^{o(\ex k+k)}
    \cdot (1-t)^{-\ex k} \cdot 2^{o(\ex k+k)} \cdot \log kn
    \le
    t^{-k} (1-t)^{-2\ex k} \cdot 2^{2 \cdot o(\ex k)} \cdot \log n
  \]
  by setting $t=1/(2\ex +1)$
  which leads to the claimed running time.
\end{proof}

\section{Half-Induced Matchings}\label{sec:half-induced-matchings}
In this section, we introduce \emph{half-induced matchings}
and show relations to compatibility graphs and representative sets.
We use these properties later
to prove the lower bounds for the decision and optimization version
of \AntiFactor{\Ex} and \AntiFactorSize{\ex}.

\begin{definition}[Half-induced Matching]
	Let $G=(U\dotcup V, E)$ be a bipartite graph.
	$G$ has a \emph{half-induced matching} of size $\ell$
	if there are
	pairwise different $a_1,\dots,a_\ell \in U$
	and pairwise different $b_1,\dots,b_\ell \in V$
	such that
	(1) $(a_i,b_i)\in E$, for all $i$,
	but
	(2) $(a_i,b_j)\not\in E$, for all $j>i$.
\end{definition}
By an abuse of notation,
$\Comp_\Ex$ denotes the compatibility graph for a vertex
with set $\Ex$ of forbidden degrees.
We show that arithmetic progressions in the set of excluded degrees
are sufficient to obtain large half-induced matchings
in the corresponding compatibility graph.
\begin{lemma}
	\label{claim:lower:progression-to-HIM}
	If $\Ex$ contains an arithmetic progression of length $\ell$,
	but not one of length $\ell+1$,
	then $\Comp_\Ex$
	has a half-induced matching of size $\ell+1$.
\end{lemma}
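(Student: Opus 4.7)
The plan is to exhibit the half-induced matching by an explicit construction based on a suitable arithmetic progression inside $\Ex$. First, I would fix some AP
$p, p+d, p+2d, \dots, p+(\ell-1)d$
of length $\ell$ contained in $\Ex$, with common difference $d \ge 1$. The crucial structural observation to exploit is that, since $\Ex$ contains no AP of length $\ell+1$, the element $p + \ell d$ just past the right end of this progression must satisfy $p + \ell d \notin \Ex$; otherwise $p, p+d, \dots, p + \ell d$ would be an AP of length $\ell+1$ inside $\Ex$, contradicting the hypothesis.

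Next, I would choose the witnesses so that every diagonal sum $a_i + b_i$ equals this single missing value $p + \ell d$, while the strictly upper-triangular sums $a_i + b_j$ with $j > i$ sweep through the elements of the AP itself. A natural parametrization is
\[
  a_i \deff (i-1)\, d \qquad\text{and}\qquad b_j \deff p + (\ell + 1 - j)\, d,
\]
for $i, j \in \{1, \dots, \ell+1\}$, so that $a_i + b_j = p + (\ell + i - j)\, d$. Plugging in $j = i$ gives $p + \ell d \notin \Ex$, and hence each pair $(a_i, b_i)$ is an edge of $\Comp_\Ex$ by \cref{def:compatibilityGraph}. For $j > i$ the exponent $\ell + i - j$ lies in $\{0, 1, \dots, \ell-1\}$ (using $i \ge 1$ and $j \le \ell+1$), so $a_i + b_j$ lies in the progression and therefore in $\Ex$, making $(a_i, b_j)$ a non-edge of $\Comp_\Ex$. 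Distinctness of the $a_i$'s and the $b_j$'s is immediate from $d \ge 1$, and non-negativity is immediate from $p \ge 0$.

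I do not expect any real obstacle: once the key observation that $p + \ell d \notin \Ex$ is in hand, the rest is one line of algebra and a check of the two cases $j = i$ and $j > i$. Note that the below-diagonal sums $a_i + b_j$ with $j < i$ may take arbitrary values in $\Ex$ or outside $\Ex$, but the definition of half-induced matching places no constraint there, so they can be ignored. The only small thing to be aware of while writing the proof is to pick the AP first and then deduce $p + \ell d \notin \Ex$; the more symmetric alternative of using $p - d \notin \Ex$ would lead to a mirror construction but requires $p \ge d$ to keep the $b_j$'s non-negative, which is why the right-hand side is more convenient.
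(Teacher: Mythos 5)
Your proposal is correct and matches the paper's proof essentially verbatim: the same choice $a_i=(i-1)d$, $b_j=p+(\ell+1-j)d$, the same observation that $p+\ell d\notin\Ex$, and the same two-case check on $j=i$ versus $j>i$. The only cosmetic difference is that you spell out the deduction $p+\ell d\notin\Ex$ a bit more explicitly than the paper does.
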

\begin{proof}
	Let $a,a+d,a+2d,\ldots,a+(\ell-1)d\in \Ex$ be an
	arithmetic progression with $d\geq 1$
	such that $a+\ell d\not\in \Ex$.
	We construct the following half-induced matching in $\Comp_\Ex$
	where,
	for all $i\in[\ell+1]$, we set
	\(
		a_i \deff d(i-1)
		\) and \(
		b_i \deff a+(\ell+1-i)d
	\).

	Then, for all $i\in[\ell+1]$,
	we have
	\(
		a_i + b_i =
		d(i-1) + a+(\ell+1-i)d = a+\ell d \not\in \Ex
	\)
	and hence, $(a_i,b_i)\in E(\Comp_\Ex)$.
	Similarly, for all $i\in[\ell]$ and all $i<j\in [\ell+1]$,
	we have $(a_i,b_j)\not\in E(\Comp_\Ex)$ because
	\(
		a_i + b_j =
	 	d(i-1) + a+(\ell+1-j)d = a+(\ell+i-j)d \in \Ex
		.
	\)
\end{proof}
Conversely to the previous lemma,
we also prove that arithmetic progressions are necessary
to obtain large half-induced matchings.
\begin{lemma}
	\label{lem:lower:HIM-to-progression}
	Let $\Ex \subseteq \SetN$ with $\abs{\Ex}=\ell\ge 2$.
	Suppose $\Comp_\Ex$ contains a half-induced matching of size $\ell+1$.
	Then, $\Ex$ is an arithmetic progression.
\end{lemma}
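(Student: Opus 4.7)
The plan is to exploit the half-induced matching $a_1,\dots,a_{\ell+1},b_1,\dots,b_{\ell+1}$ by first extracting the structure of $\Ex$ from the row indexed by $i=1$, and then using the row indexed by $i=2$ to force an arithmetic structure on $\Ex$.

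First, I would apply condition~(2) of the definition to $i=1$: the values $a_1+b_2, a_1+b_3, \dots, a_1+b_{\ell+1}$ all lie in $\Ex$, and they are pairwise distinct because the $b_j$ are pairwise distinct. Since there are $\ell$ of them and $|\Ex|=\ell$, they exhaust $\Ex$, so $\Ex = a_1 + \{b_2,\dots,b_{\ell+1}\}$.

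Next, I would apply condition~(2) to $i=2$: the $\ell-1$ values $a_2+b_3,\dots,a_2+b_{\ell+1}$ are pairwise distinct elements of $\Ex$. Setting $d \deff a_2 - a_1$, which is nonzero because the $a_i$ are pairwise distinct, each such value equals $(a_1+b_j)+d$. By the previous step, $\{a_1+b_j : j\geq 3\} = \Ex \setminus \{a_1+b_2\}$, so $(\Ex \setminus \{a_1+b_2\}) + d \subseteq \Ex$, giving $|\Ex \cap (\Ex+d)| \geq \ell - 1$. Because $\Ex$ is finite and $d\neq 0$, we cannot have $\Ex+d = \Ex$, hence $|\Ex \cap (\Ex+d)| = \ell - 1$ exactly.

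Finally, I would show purely combinatorially that the relation $|\Ex \cap (\Ex+d)|=\ell-1$ for some $d\neq 0$ forces $\Ex$ to be an arithmetic progression with common difference $|d|$. Assuming WLOG $d>0$ and writing $\Ex=\{e_1<\dots<e_\ell\}$, the inequality $e_1-d<e_1$ shows $e_1\notin \Ex+d$, so $e_1$ is the unique element of $\Ex$ missing from $\Ex+d$; equality of sizes then forces $e_i-d\in \Ex$ for every $i\geq 2$. Starting from the observation that $e_2-d<e_2$ and $e_2-d\in \Ex$ force $e_2-d=e_1$ (hence $d=e_2-e_1$), a short induction gives $e_i = e_{i-1}+d$ for all $i$, so $\Ex$ is an AP. I do not expect a serious obstacle: once the first row pins down $\Ex$ modulo a shift, the second row produces a translate $\Ex+d$ overlapping $\Ex$ in $\ell-1$ places, and this overlap condition classically forces arithmetic progression structure. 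The only mildly subtle point is excluding $\Ex+d=\Ex$, which uses only finiteness of $\Ex$ and $d\neq 0$.
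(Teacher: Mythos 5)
Your proof is correct and follows essentially the same approach as the paper: use the $i=1$ row to conclude $\Ex = a_1+\{b_2,\dots,b_{\ell+1}\}$, use the $i=2$ row together with $d=a_2-a_1$ to show a shift of $\Ex$ overlaps $\Ex$ in exactly $\ell-1$ places, and conclude the arithmetic-progression structure. Your reformulation via $\abs{\Ex\cap(\Ex+d)}=\ell-1$ and the explicit induction at the end are cosmetic variants of the paper's direct matching of the sorted sequences $\{x_1+d,\dots,x_{\ell-1}+d\}$ and $\{x_2,\dots,x_\ell\}$.
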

\begin{proof}
	Let $a_1,\dots,a_{\ell+1}$ and $b_1,\dots,b_{\ell+1}$ be the vertices
	of the half-induced matching of size $\ell+1$ in $\Comp_\Ex$.
	Then, we have the following constraints:
	\begin{align*}
		a_1+b_2 \in \Ex,&& a_1+b_3\in \Ex,&& \dots,&& a_1+b_{\ell+1}\in \Ex \\
                   && a_2+b_3\in \Ex,&& \dots,&& a_2+b_{\ell+1}\in \Ex
	\end{align*}
	Let $\Ex=\{x_1,x_2,\ldots,x_\ell\}$ where $x_i\leq x_{i+1}$.
	From $a_1+b_j\neq a_1+b_{j'}$ for any $j\neq j'$, we get
	\[
		\{a_1+b_2,a_1+b_3,\ldots,a_1+b_{\ell+1}\}=\Ex.
	\]
	Now, consider the second set of constraints.
	As the $b_j$ are pairwise different, there is some $i$ such that
	\begin{align*}
		\{a_2+b_3,a_2+b_4,\ldots,a_2+b_{\ell+1}\}&=\Ex\setminus\{x_i\}
		.
		\intertext{
		We set \(d \deff a_2-a_1 \).
		Assuming \( d > 0 \), we get \( x_\ell + d \notin \Ex \) and thus
		}
		\{x_1+d,x_2+d,x_3+d,\ldots, x_{\ell-1}+d\}&=\Ex\setminus\{x_i\}
		.
	\end{align*}
	Now, observe that $x_1$ cannot belong to the left-hand side because
	$d>0$ and $x_1=\min(\Ex)$.
	Thus, we have that $i=1$.
	Similarly, for $a_2-a_1=d<0$ we can argue that $i=\ell$.
	Without loss of generality consider the former case.
	Then, we have that
	\( x_i + d = x_{i+1} \), for all $i\in[\ell]$.
	Hence, $\Ex$ is an arithmetic progression of length $\ell$.
\end{proof}
For a graph $\Comp$ and an integer $k > 1$,
we extend $\comp{\Comp}$ and $\represents{\Comp}$ to $k$ dimensions,
denoted by $\comp[k]{\Comp}$ and $\represents[k]{\Comp}$,
such that the $\Comp$-compatibility must hold for each dimension.
\begin{lemma}\label{lem:half-induced-implies-repset}
	Let $\Ex\subseteq \SetN$, and
	$\epsilon>0$ and $\ell\geq 2$ be constants.
	Then, there exists a constant $k$ depending only on $\epsilon$ and $\ell$
	such that the following holds.
	Suppose the compatibility graph $\Comp_\Ex$
	contains a half-induced matching of size $\ell$.
	Then, there is a set $\RepSet\subseteq \SetN^k$ such that every
	representative set $\RepSet'\represents[k]{\Comp_\Ex}\RepSet$ has size
	$\abs{\RepSet'}\geq (\ell-\epsilon)^k$.
\end{lemma}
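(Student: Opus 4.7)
The plan is to exhibit the desired set as a ``constant-sum slice'' of the $k$-fold product of the half-induced matching. Fix a half-induced matching $a_1, \dots, a_\ell$ and $b_1, \dots, b_\ell$ in $\Comp_\Ex$, let $c \deff \lfloor k(\ell+1)/2 \rfloor$, and set
\[
  \RepSet \deff \left\{ (a_{i_1}, \dots, a_{i_k}) : (i_1, \dots, i_k) \in [\ell]^k,\; i_1 + \cdots + i_k = c \right\} \subseteq \SetN^k.
\]
For each index tuple $I=(i_1,\dots,i_k)$ appearing in the definition, write $A_I = (a_{i_1},\dots,a_{i_k}) \in \RepSet$ and introduce the associated ``witness'' $B_I = (b_{i_1},\dots,b_{i_k}) \in \SetN^k$.

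The heart of the argument is to show that $A_I$ is the \emph{unique} element of $\RepSet$ that is $k$-dimensional $\Comp_\Ex$-compatible with $B_I$, so that every $A_I \in \RepSet$ must lie in any $\RepSet' \represents[k]{\Comp_\Ex} \RepSet$. Compatibility of $A_I$ with $B_I$ is coordinatewise the diagonal property $a_{i_t}+b_{i_t}\notin \Ex$. Conversely, if $A_J \in \RepSet$ is compatible with $B_I$, then for every $t$ we need $a_{j_t}+b_{i_t}\notin \Ex$; but the half-induced matching forbids edges strictly above the diagonal, so $i_t > j_t$ would give $a_{j_t}+b_{i_t}\in \Ex$, a contradiction. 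Hence $i_t \leq j_t$ for all $t$, and combined with $i_1 + \cdots + i_k = c = j_1 + \cdots + j_k$ the coordinatewise inequality becomes equality, forcing $I=J$. Thus $\RepSet' = \RepSet$.

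It remains to lower bound $|\RepSet|$, which equals the coefficient of $x^c$ in $(x+x^2+\cdots+x^\ell)^k$, equivalently $\ell^k$ times the probability that $k$ i.i.d.\ uniform variables on $[\ell]$ sum to the central value $c$. A standard local central limit theorem estimate yields $|\RepSet| = \Theta(\ell^k/\sqrt{k})$. Since $(\ell/(\ell-\epsilon))^k$ grows exponentially in $k$ while $\sqrt{k}$ grows only polynomially, choosing $k$ to be a sufficiently large constant (depending only on $\epsilon$ and $\ell$) yields $|\RepSet| \geq (\ell-\epsilon)^k$ as required.

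The main subtlety lies in the need for the constant-sum restriction. Without it -- for example, if we simply took $\RepSet = \{a_1,\dots,a_\ell\}^k$ -- a single element such as $A_{(\ell,\dots,\ell)}$ could represent every other $A_J$ in the \emph{a priori} possible scenario where $\Comp_\Ex$ happens to contain \emph{all} below-diagonal edges of the half-induced matching, so that componentwise dominance alone would suffice for compatibility. Restricting to a level set forces componentwise dominance to be equality, blocking any such collapse regardless of the unspecified structure of $\Comp_\Ex$ strictly below the diagonal.
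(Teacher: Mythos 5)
Your proof is correct and follows essentially the same strategy as the paper: restrict to a level set of the coordinatewise index sum, use the half-induced-matching structure to force uniqueness of the compatible witness, and take $k$ large enough that the polynomial loss is absorbed. The only (inessential) variation is that you pick the central level and invoke a local CLT to get $\Theta(\ell^k/\sqrt{k})$, whereas the paper picks the most populated level by pigeonhole, getting the weaker but equally sufficient bound $\ell^k/(\ell k)$.
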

Before we prove the lemma, we briefly discuss its implications.
The running time of the algorithm for \AntiFactor{\Ex} from \cref{thm:algo:main}
depends on the size of the representative sets computed.
\Cref{lem:half-induced-implies-repset} implies that any such algorithm using
representative sets in a similar way takes time at least
$(\ell-\epsilon)^{\pw}$.
This can be seen as an \emph{unconditional} version of the lower bounds
for the decision and optimization version
shown in \cref{thm:dec:lbAntiFactor,thm:opt:lbAntiFactor}.

\begin{proof}
	We set the value of $k$ later.
	Let the half-induced matching be between $A, B \subseteq \SetN$
	with $A=\{a_1,\dots,a_\ell\}$ and $B=\{b_1,\dots,b_\ell\}$.
	We define indexing functions $\ind_A$ and $\ind_B$
	such that $\ind_A(a_i)=\ind_B(b_i)=i$.
	For $s\in A^k$, we define $\ind_A(s)=\sum_{i \in [k]}\ind_A(s[i])$.
	We partition $A^k$ into sets $\RepSet_q$
	with $q \in [\ell \cdot k]$ such that
	\[
		\RepSet_q = \{s\in A^k \mid \ind_A(s) = q \}.
	\]
	Hence,
	there exists some $q' \in [\ell \cdot k]$ such that
	\begin{align*}
		\abs{\RepSet_{q'}} \geq \frac{\ell^k}{\ell \cdot k}.
	\end{align*}
	Let $\RepSet = \RepSet_{q'}$ be the set
	for which we want a lower bound on the size of its representative sets.
	To simplify notation, we let $q=q'$.
	Now, consider some $s\in \RepSet$.
	We claim that $s$ is the unique compatible element for
	$t \in B^k$, where
	\[
		t[i] = b_{\ind_A(s[i])} .
	\]
	It is clear that $s\comp[k]{\Comp_\Ex} t$.
	Suppose there is some other $s'\in \RepSet$ such that
	$s'\comp[k]{\Comp_\Ex} t$.
	Then, since $\ind_A(s)=\ind_A(s')=q$ and since $s\neq s'$,
	there is some index $j$ such that $\ind_A(s[j]) > \ind_A(s'[j])$.
	The $j$th index of $s'+t$ is
	\(
		a_{\ind_A(s'[j])} + b_{\ind_A(s[j])}
	\)
	because $s'[j]=a_{\ind_A(s'[j])}$.
	However, observe that this sum must be in $\Ex$ from the fact that
	there is a half-induced matching between $A$ and $B$ in $\Comp_\Ex$
	and the fact that $\ind_A(s[j]) > \ind_A(s'[j])$.
	This is a contradiction,
	implying that $s$ is the only compatible partner of $t$.
	Thus, $s$ is forced to belong to any
	representative set $\RepSet'\represents[k]{\Comp_\Ex}\RepSet$.

	Since the above argument holds for all $s\in \RepSet$,
	we conclude that the only representative set for $\RepSet$ is itself.
	Now, we set $k$ to be large enough such that
	$k\log (\ell- \epsilon) \leq k\log(\ell)-\log(\ell \cdot k)$.
	Then, we have
	\[
		\abs{\RepSet} \geq \frac{\ell^k}{\ell \cdot k} \geq (\ell-\epsilon)^k.
		\qedhere
	\]
\end{proof}
We conjecture that the converse of \cref{lem:half-induced-implies-repset} is also true.
For example, for $\Ex=\{10,100,1000,\ldots\}$
the largest half-induced matching in $\Comp_{\Ex}$ is of size three, a constant,
(even though $\Ex$ itself is infinite).
Intuitively, the
size of the representative set itself must be small
because knowing any
\emph{two} forbidden degrees of a vertex in
the future solution is enough for us to deduce
the degree of the vertex in the partial solution.
\begin{conjecture}
	\label{conj:himImpliesUpperBoundOnRepSet}
Let $\Ex\subseteq \SetN$ and $\ell\geq 2$ be a constant.
Then, there exists a constant $k$
depending only on $\ell$
such that the following holds.
Suppose the largest half-induced matching in $\Comp_\Ex$
has size $\ell$.
Then, every $\RepSet\subseteq \SetN^k$ has a
representative set  $\RepSet'\represents[k]{\Comp_\Ex}\RepSet$ with
$\abs{\RepSet'}\leq \ell^{k+o(k)}$.

\end{conjecture}
Recall,
the runtime of the algorithm in \cref{thm:algo:paraByList} depends on $\max\Ex$
but the lower bound in \cref{thm:dec:lbAntiFactor}
on the size $\ell$ of the half-induced matching.
With \cref{conj:himImpliesUpperBoundOnRepSet} it seems reasonable to get algorithms
for the decision and optimization version based on representative sets
with a running time depending on $\ell$.
This would complement the lower bound.
Note, by \cref{thm:count:lbAntiFactor},
the $\Ostar{(\max\Ex+2)^{\pw}}$ algorithm for the counting version
is essentially optimal.

\section{Lower Bounds for the Decision Version}\label{sec:lower:dec}
In this section we prove the lower bounds for the decision version of \AntiFactor{\Ex} and \AntiFactorSize{\ex}.
Instead of showing the lower bound directly,
we first define the following intermediate problem
and show the hardness of this problem.

\begin{definition}[\AntiFactorR{\Ex}]
	\label{def:dec:antiFactorR}
	Let $\Ex\subseteq \SetN$ be fixed and finite.
	Let
	$G=(V_S\dotcup V_C, E)$ be a vertex labeled graph
	such that
	\begin{itemize}
		\item
		all vertices in $V_S$,
		called \emph{simple} vertices,
		are labeled with set $\Ex$,

		\item
		all vertices $v \in V_C$,
		called \emph{complex} vertices,
		are labeled with a relation $R_v$
		that is given as a truth table
		such that $R_v \subseteq 2^{I(v)}$
		where $I(v)$ is the set of edges incident to $v$ in $G$.
	\end{itemize}
	A set $\widehat E \subseteq E$ is a \emph{solution} for $G$ if
	(1) for $v\in V_S$: $\deg_{\widehat E}(v) \notin \Ex$
	and
	(2) for $v \in V_C$: $I(v)\cap \widehat E \in R_v$.

	\emph{\AntiFactor{\Ex} \textsc{with Relations} (\AntiFactorR{\Ex})}
	is the problem of deciding
	if such an instance $G$ has a solution.
\end{definition}
We show our lower bounds
based on this problem definition.

\begin{lemma}[Lower Bound for \AntiFactorR{\Ex}]
	\label{lem:dec:lbAntiFactorR}
	Let $\Ex \subseteq \SetN$ be a fixed set
	which contains a half-induced matching of size $h\ge 2$.

	Let $f_\Ex\from \SetN\to\SetR^+$ be an arbitrary function
	that may depend on the set $\Ex$.

	For every constant $\epsilon>0$,
	there is no algorithm that can solve \AntiFactorR{\Ex}
	in time $\Ostar{(h-\epsilon)^{\pw+f_{\Ex}(\totalDeg)}}$,
	where $\totalDeg = \max_{\text{bag } \bag{}} \sum_{v \in \bag{}\cap V_C} \deg(v)$,
	even if we are given a path decomposition of width $\pw$,
	unless SETH fails.
\end{lemma}
In a second step we
remove the relations and replace them by appropriate gadgets.
To be able to reuse the reduction later
we introduce a slightly more general version of the problem.
For two finite sets $\Ex,\Y\subseteq\SetN$,
we define \AntiFactor{(\Ex, \Y)}
as the generalization of \AntiFactor{\Ex}
where we allow the sets $\Ex$ and $\Y$ to be assigned to the vertices.
We show hardness when $0\in \Ex$
and when $\maxgap(\Exbar)>1$. The former is to
ensure that there are no trivial solutions and the
latter ensures that the problem is not polynomial-time solvable \cite{Cornuejols88}.
Recall that $\maxgap(\Exbar)$ is the size of the
largest contiguous sequence of integers not in $\Exbar$ but whose boundaries are
in $\Exbar$.

\begin{lemma}
	\label{lem:dec:AntiFactorRToAntiFactor}
	Fix a finite set $\Ex \subseteq \SetN$ such that
	$0\in \Ex$
	and $\maxgap(\Exbar) > 1$.
	Let $\Y \subseteq\SetN$ be arbitrary.
	There is a many-one reduction from \AntiFactorR{\Y} to \AntiFactor{(\Ex,\Y)}
	such that pathwidth increases by at most $f(\totalDeg)$ and
	size by a factor of $f(\totalDeg)$,
	where $\totalDeg = \max_{\text{bag } \bag{}} \sum_{v \in \bag{}\cap V_C} \deg(v)$.
\end{lemma}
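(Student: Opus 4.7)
I would reduce \AntiFactorR{\Y} to \AntiFactor{(\Ex,\Y)} by keeping every simple vertex unchanged and replacing each complex vertex $v\in V_C$ by a gadget $G_v$ built entirely from $\Ex$-labeled vertices, such that the admissible selections of the $\deg(v)$ port edges of $G_v$ are precisely the tuples in $R_v$. The leverage for this comes from the two hypotheses on $\Ex$: because $0\in\Ex$, every $\Ex$-labeled vertex must have strictly positive degree in any solution, and $\maxgap(\overline\Ex)>1$ yields two consecutive forbidden values $a,a+1\in\Ex$, which together provide enough discriminating power to build non-trivial constraint gadgets.

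\textbf{Step 1: primitive gadgets.} First I would build a constant-size library of primitives using only $\Ex$-labeled vertices: a \emph{forced-edge} gadget (pins a port edge to selected or to unselected), a \emph{copy}/$\EQ{2}$ gadget (two ports must agree), and an \emph{or-of-two} gadget. Each primitive consists of one or two $\Ex$-vertices padded by a carefully chosen number of auxiliary dummy edges (supplied through further $\Ex$-vertices), arranged so that the only way to avoid the forbidden degrees $a$ and $a+1$ at the central vertex is to assign the intended Boolean relation to the exposed ports. This mirrors the style of hardness-gadget constructions for $B$-Factor used in~\cite{MarxSS21} and in the work of Dalmau and Ford~\cite{DalmauF03}, and its feasibility relies only on the two hypotheses on $\Ex$.

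\textbf{Step 2: realising $R_v$ and updating the decomposition.} Given these primitives, I would realise an arbitrary relation $R_v\subseteq 2^{I(v)}$ by an enumerate-and-select construction: for each satisfying tuple $s\in R_v$ (there are at most $2^{\deg(v)}$ such tuples) a private branch commits its local copy of the ports to $s$ via forced-edge primitives, a selector tree built from or-of-two primitives activates exactly one branch, and $\EQ{2}$ primitives glue the active branch's private ports to the genuine ports of $v$. This yields $\abs{V(G_v)}\le f(\deg(v))$ for some function $f$ depending only on $\Ex$. For the path decomposition, I would start from the given one of width $\pw$ and, whenever $v\in V_C$ occurs in a bag, substitute the $\deg(v)$ port-vertices of $G_v$ in its place and insert a short subpath of bags covering the at most $f(\deg(v))$ interior vertices of $G_v$. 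The size of any bag therefore grows by at most $\sum_{v\in X\cap V_C} f(\deg(v))\le f(\totalDeg)$ after renaming $f$, and the total vertex count grows by at most a factor of $f(\totalDeg)$, as required.

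\textbf{Main obstacle.} The technical heart is Step~1: producing the primitives assuming \emph{only} $0\in\Ex$ and $\maxgap(\overline\Ex)>1$, without using any further structure of $\Ex$. This demands a careful case analysis of how the consecutive forbidden pair $a,a+1$ sits inside $\Ex$ and a clever choice of dummy padding so that the residual freedom at each $\Ex$-vertex collapses onto the intended Boolean relation. Once the primitives are available, Step~2 and the path decomposition update are essentially combinatorial bookkeeping, and the correctness of the reduction follows immediately from the fact that, by construction, an edge-selection on the port edges extends to a valid $\Ex$-AntiFactor of $G_v$ if and only if it lies in $R_v$.
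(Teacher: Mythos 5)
Your top-level plan matches the paper's: keep the $\Y$-labeled vertices, replace each complex vertex by a gadget built from $\Ex$-labeled vertices that \emph{realizes} its relation, and then observe that the path decomposition only grows by a term bounded by the total degree of complex vertices in a bag. That part is fine.

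However, there are two real problems. First, the selector in your ``enumerate-and-select'' construction is wrong: you propose ``a selector tree built from or-of-two primitives activates exactly one branch,'' but an or-gadget permits two (or more) branches to be simultaneously active, and then the $\EQ{2}$ glue would try to identify the genuine ports with the union of two different satisfying tuples — which need not lie in $R_v$. What you actually need is an \emph{exactly-one-of-$k$} gadget, i.e.\ $\HWeq[k]{1}$; this is precisely what the paper builds (\cref{lem:relation:hw1-decision}) and then routes through the Curticapean–Marx realization theorem, which takes arbitrary relations and expresses them using only $\EQ{k}$ and $\HWeq[k]{1}$ nodes (\cref{lem:dec:realization}). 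Second, and more importantly, you explicitly defer the ``technical heart'' — constructing the primitives from the bare hypotheses $0\in\Ex$ and $\maxgap(\Exbar)>1$ — and never actually carry it out. This is where essentially all of the work in this lemma lives: the paper's chain goes $\HWeq[2]{2}$ (splitting an edge of a $\min\Exbar$-clique, where $0\in\Ex$ guarantees $\min\Exbar>0$), then $\EQ{1}$ (padding a vertex past $\max\Ex$), then $\HWeq[1]{1}$, then $\EQ{d+1}$ exploiting the gap $[a,a+d+1]\cap\Exbar=\{a,a+d+1\}$, and finally $\HWeq[2]{1}$, $\HWeq[3]{1}$ using two threshold vertices pre-loaded to degrees $a+d$ and $a-1$ sandwiching the gap. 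Without either the correct exactly-one primitive or its construction, the proposal as written does not establish the lemma.
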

By combining the lower bound for the intermediate problem
from \cref{lem:dec:lbAntiFactorR}
with the reduction from \cref{lem:dec:AntiFactorRToAntiFactor},
we can show the lower bounds
for \AntiFactor{\Ex} and \AntiFactorSize{\ex}.
\begin{theorem}[Lower Bound for Decision Version I]
	\label{thm:dec:lbAntiFactor}
	Fix a finite set $\Ex \subseteq \SetN$ such that
	\begin{itemize}
		\item
		$0\in \Ex$
		and
		$\maxgap(\Exbar)>1$,
		\item
		and $\Ex$ contains a half-induced matching of size $h$.
	\end{itemize}
	For every constant $\epsilon>0$,
	there is no algorithm that can solve \AntiFactor{\Ex} in time $\Ostar{(h-\epsilon)^\pw}$
	even if we are given a path decomposition of width $\pw$,
	unless SETH fails.
\end{theorem}
\begin{proof}
	Let $H$ be a given \AntiFactorR{\Ex} instance
	for which we apply \cref{lem:dec:AntiFactorRToAntiFactor}
	with $\Ex=\Y$
	to obtain the \AntiFactor{\Ex} instance $G$.

	We know $n_G \le n_H \cdot f(\totalDeg_H)$
	and $\pw_G \le \pw_H + f(\totalDeg_H)$.
	Assume we are given a faster algorithm for \AntiFactor{\Ex}
	and run this algorithm on the instance $G$ in the following time:
	\begin{align*}
		(h-\epsilon)^{\pw_G} \cdot n_G^{\O(1)}
		&\le (h-\epsilon)^{\pw_H + f(\totalDeg_H)} \cdot (n_H \cdot f(\totalDeg_H))^{\O(1)} \\
		&\le (h-\epsilon)^{\pw_H + f(\totalDeg_H) + f''(\totalDeg_H)} \cdot n_H^{\O(1)} \\
		&\le (h-\epsilon)^{\pw_H + f''(\totalDeg_H)} \cdot n_H^{\O(1)}
	\end{align*}
	Thus, the running time directly contradicts SETH by \cref{lem:dec:lbAntiFactorR}.
\end{proof}
The following \cref{thm:dec:lbAntiFactorBySize}
extends \cref{thm:dec:lbAntiFactor}
to the more general \AntiFactorSize{\ex} problem
and shows a more informative lower bound.
\begin{theorem}[Lower Bound for Decision Version II]
	\label{thm:dec:lbAntiFactorBySize}
	For all $\ex\ge 3$ and $\epsilon>0$,
	\AntiFactorSize{\ex} cannot be solved in time $\Ostar{(\ex+1-\epsilon)^{\pw}}$
	on graphs given with a path decomposition of width $\pw$,
	unless SETH fails.
\end{theorem}
\begin{proof}
	We set $\Y \deff \{2,4,\dots,2\ex\}$
	and $\Ex \deff \{0,2,3\}$.
	By \cref{claim:lower:progression-to-HIM},
	$\Y$ contains a half-induced matching of size $\ex+1$.
	Moreover, $\Exbar$ contains a gap of size two.

	We use \cref{lem:dec:AntiFactorRToAntiFactor}
	to transform a \AntiFactorR{\Y} instance
	into an \AntiFactor{(\Ex,\Y)} instance.
	From $\abs{\Ex},\abs{\Y}\le\ex$
	and by the properties of $\Ex$ and $\Y$,
	the claim follows directly.
\end{proof}

\subsection{Replacing Finite Sets by Cofinite Sets}
In this section, we prove \cref{lem:dec:lbAntiFactorR},
i.e., the lower bound for \AntiFactorR{\Ex},
based on a lower bound from \cite{MarxSS21}
for the intermediate problem \BFR.
\begin{definition}[\BFR (Definition~4.1 in \cite{MarxSS21})]\label{def:bfactor-relation}
	Let $B\subseteq \SetN$ be fixed of finite size.
	$G=(V_S\dotcup V_C, E)$ is an instance of \emph{\BFactorRelation (\BFR)}
	if all nodes in $V_S$ are labeled with set $B$
	and all nodes $v \in V_C$ are labeled with a relation $R_v$
	that is given as a truth table
	such that the following holds:
	\begin{enumerate}
		\item Let $I(v)$ be the set of edges incident to $v$ in $G$.
		Then $R_v \subseteq 2^{I(v)}$.
		\item There is an even $c_v>0$
		such that for all $x \in R_v$ we have $\hw(x)=c_v$.
	\end{enumerate}
	A set $\widehat E \subseteq E$ is a \emph{solution} for $G$ if
	(1) for $v\in V_S$: $\deg_{\widehat E}(v) \in B$
	and
	(2) for $v \in V_C$: $I(v)\cap \widehat E \in R_v$.
	\BFR is the problem of deciding if such an instance has a solution.
	\\
	We call $V_S$ the set of \emph{simple} nodes and $V_C$ the set of \emph{complex} nodes.
\end{definition}
We use the corresponding lower bound for \BFR and the restrictions to the graph
as a starting point for our construction.
\begin{lemma}[Corollaries~4.7 and 4.8 in the full version of \cite{MarxSS21}]
	\label{lem:lbBFactorR}
	Let $B \subseteq \SetN$ be a fixed and finite set.
	Given a \BFR instance
	\begin{itemize}
		\item
		and its path decomposition of width $\pw$
		with $\totalDeg = \max_{\text{bag } \bag{}} \sum_{v \in \bag{}\cap V_C} \deg(v)$,
		\item
		moreover the simple vertices form an independent set
		and each simple vertex is only connected to $2$ complex nodes
		by exactly $\max B$ (parallel) edges each,
		\item
		and we are given the promise that with respect to any solution
		the degree of the simple vertices is exactly $\max B$.
	\end{itemize}
	Assume \BFR can be solved in such a case in $\Ostar{(\max B+1-\epsilon)^{\pw+f_B(\totalDeg)}}$ time
	for some $\epsilon > 0$
	and some function $f_B\from\SetN\to\SetR^+$ that may depend on the set $B$.
	Then SETH fails.
	Moreover the result also holds for \CountBFR and \#SETH.
\end{lemma}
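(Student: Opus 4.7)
The plan is a standard ``block encoding'' SETH reduction from $q$-CNF-SAT. Let $g \deff \lfloor \log_2(\max B+1) \rfloor$ and partition the $n$ Boolean variables of a $q$-CNF-SAT instance into $N = \lceil n/g \rceil$ groups of at most $g$ variables each. The constructed \BFR instance will have one horizontal ``track'' per group and one ``column'' per clause; a hypothetical $\Ostar{(\max B+1-\epsilon)^{\pw + f_B(\totalDeg)}}$ algorithm, applied to this instance, will solve SAT in time $\Ostar{(2-\delta)^n}$ for a fixed $\delta>0$ depending on $\epsilon$ and $B$, contradicting SETH.

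I would first set up the state mechanism. A track is a long alternating path $c_0, v_1, c_1, v_2, c_2, \dots, v_L, c_L$ where the $v_i$ are simple vertices (label $B$), the $c_i$ are complex vertices, and each pair $(v_i, c_{i-1})$, $(v_i, c_i)$ is joined by $\max B$ parallel edges. Under the promise $\deg(v_i) = \max B$, the state of $v_i$ is captured by the integer $k_i \in \{0,1,\ldots,\max B\}$ counting selected edges to $c_{i-1}$, giving exactly $\max B+1$ possible states. Fix an injection $\phi : \{0,1\}^g \to \{0,1,\ldots,\max B\}$ once and for all; then the state of $v_i$ encodes one group assignment. The relation $R_{c_i}$ is designed so that, with respect to its two incident simple neighbors, it enforces $k_i = k_{i+1}$ (propagating the group assignment along the track), and optionally enforces a clause constraint when $c_i$ sits at a clause column. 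The even Hamming-weight requirement of \cref{def:bfactor-relation} can be met by adding a constant number of dummy ``always-selected'' ports to each relation.

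The clause gadget at a column corresponding to a clause $C_j$ over at most $q$ variable groups is realized by embedding the clause-checking logic into the complex vertices at column $j$ of the relevant tracks, linked by a small auxiliary tree of further complex vertices across tracks. Each simple vertex still touches exactly two complex neighbors (its track neighbors); the inter-track coordination occurs entirely via complex-to-complex edges whose selections are constrained by the truth tables $R_v$. Since complex relations are arbitrary, hard-coding $C_j$ is immediate; the interconnection pattern is constant-sized and depends only on $B$ and $q$, so both $\totalDeg$ and the local pathwidth overhead per column are bounded by a constant $\kappa_{B,q}$.

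A left-to-right sweep over columns yields a path decomposition whose bags contain one simple vertex per track together with the column's clause subgadget, so $\pw \le N + \kappa_{B,q} = n/g + O_{B,q}(1)$. Running the hypothetical algorithm gives time
\[
(\max B+1-\epsilon)^{n/g + f_B(\totalDeg) + O(1)} \cdot n^{\O(1)}
= 2^{(n/g)\log_2(\max B+1-\epsilon) + O_{B,q}(1)} \cdot n^{\O(1)}.
\]
With $g = \lfloor \log_2(\max B+1) \rfloor$, the ratio $\log_2(\max B+1-\epsilon)/g$ is strictly less than $1$, producing a $2^{(1-\delta)n}$-time SAT algorithm and contradicting SETH. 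The main technical obstacle is the simultaneous reconciliation of the three structural restrictions on the instance---each simple vertex touches exactly two complex neighbors, each complex relation must have a constant even Hamming weight, and each simple-vertex degree must be rigidly $\max B$---while keeping all $\max B + 1$ simple-vertex states freely assignable and the inter-track gadgets small enough to absorb into the $f_B(\totalDeg)$ slack. For the ``moreover'' counting statement, the same construction is carried out with the additional requirement that every gadget admit exactly one internal extension per valid external configuration; this places the satisfying SAT assignments in bijection with the solutions of the constructed \BFR instance, transferring the lower bound to \CountBFR under \#SETH.
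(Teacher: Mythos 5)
The paper does not prove this lemma itself; it imports it verbatim from Corollaries~4.7 and~4.8 of \cite{MarxSS21}, so there is no local proof to compare against. Judging your attempt on its own merits, however, reveals a genuine gap in the tightness analysis.

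Your reduction puts one track position per variable group and fixes $g = \lfloor \log_2(\max B+1) \rfloor$, so that a single simple-vertex state ($\max B+1$ possibilities) suffices to hold an injection from $\{0,1\}^g$. This requires $2^g \le \max B+1$, which caps $g$ at $\lfloor \log_2(\max B+1) \rfloor$, and the resulting running time bound is
$(\max B+1-\epsilon)^{n/g} = 2^{(n/g)\log_2(\max B+1-\epsilon)}$.
For a SETH contradiction you need $\log_2(\max B+1-\epsilon) < g$, but $g = \lfloor \log_2(\max B+1) \rfloor$ can be strictly smaller than $\log_2(\max B+1-\epsilon)$ whenever $\max B+1$ is not a power of two. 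Concretely, take $\max B+1 = 5$: then $g=2$, yet $\log_2(5-\epsilon) \approx 2.32 > 2$ for small $\epsilon$, so the reduction yields a $2^{1.16 n}$-time SAT algorithm, which contradicts nothing. The fix, standard in tight SETH reductions of this type, is to encode one group of $g$ variables by a \emph{block} of $t$ consecutive track positions (so the block has $(\max B+1)^t \ge 2^g$ states), and then let $g \to \infty$ with $t = \lceil g / \log_2(\max B+1) \rceil$. Then the relevant exponent ratio $t/g$ tends to $1/\log_2(\max B+1)$ from above, and for $g$ large enough one gets $(t/g)\log_2(\max B+1-\epsilon) \le 1-\delta$ for some $\delta = \delta(\epsilon,B) > 0$. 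Without this limiting block argument the lower bound is not tight. As secondary points: you assert but do not verify that your clause gadgets and consistency relations can simultaneously enforce the promise $\deg(v)=\max B$ for every simple vertex in \emph{every} solution (not just the intended ones), satisfy the constant even Hamming-weight requirement from \cref{def:bfactor-relation}, and keep $\totalDeg$ bounded by a constant independent of $n$ throughout the sweep; the last of these matters because $f_B(\totalDeg)$ would otherwise appear additively in the exponent and could absorb the whole savings.
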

To show a lower bound for \AntiFactorR{\Ex},
it suffices to replace the simple vertices with set $B$
by an appropriate gadgets consisting of simple vertices with set $\Ex$
and complex vertices.

\begin{figure}
	\centering
	\begin{subfigure}[b]{0.5\textwidth}
		\centering
		\includegraphics[page=1]{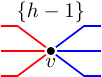}
		\caption{The simple vertex $v$ before the modifications.}
		\label{fig:dec:BFactorRToAntiFactorR:original}
	\end{subfigure}\hfill
	\begin{subfigure}[b]{0.5\textwidth}
		\centering
		\includegraphics[page=2]{img/dec/dec-finite-to-cofinite}
		\caption{The gadget replacing vertex $v$.}
		\label{fig:dec:BFactorRToAntiFactorR:modified}
	\end{subfigure}\hfill

	\caption{The transformation
	in the proof of \cref{lem:dec:lbAntiFactorR}.
	The red, orange, green, and blue edges represent the
	left-external, left-internal, right-internal, and right-external edges, respectively.}
	\label{fig:dec:BFactorRToAntiFactorR}
\end{figure}

\subparagraph*{Modification of the Graph.}
	Let $a_0,\dots,a_{h-1}$ and $b_0,\dots,b_{h-1}$
	be the labels of the half-induced matching of size $h$ of $\Ex$
	and let $U$ be the maximum over these labels.
	Let $H$ be a \BFR instance as stated in \cref{lem:lbBFactorR}
	with $\max B = h-1$.%
	\footnote{%
	It actually suffices to set $B = \{h-1\}$.
	}
	We replace each simple vertex by the following gadget
	and keep the other vertices unchanged
	(see \cref{fig:dec:BFactorRToAntiFactorR}).

	By assumption, each simple vertex $v$ is incident to $2(h-1)$ edges
	which we can partition into two sets of size $h-1$
	depending on their endpoints.
	We call these groups of edges the left-external
	and right-external edges.
	We remove $v$
	and connect the left-external edges
	to a new complex vertex $v_{\LEFT}$ with relation $R_{\LEFT}$.
	The right-external edges are connected similarly
	to another new complex vertex $v_{\RIGHT}$ with relation $R_{\RIGHT}$.
	As a last step, we create a new simple vertex $v'$ with set $\Ex$.
	We connect $v'$ by $U$ (parallel) edges to $v_{\LEFT}$ and
	call these edges the left-internal edges.
	Additionally, we connect $v'$
	by $U$ parallel edges to $v_{\RIGHT}$ and
	call these edges the right-internal edges.

	The relation $R_{\LEFT}$ accepts
	if and only if, for some $i\in[0,h-1]$,
	exactly $i$ left-external
	and exactly $a_{h-1-i}$ left-internal edges are selected.
	Similarly, $R_{\RIGHT}$
	accepts if and only if, for some $j\in[0,h-1]$,
	exactly $b_j$ right-internal
	and exactly $j$ right-external edges are selected.

	We claim that the above replacement
	does not change the existence of solutions.
	For this we show
	that the number of selected left-external edges
	plus the number of selected right-external edges is at most $h-1$
	for each such modification.
	Then, by the properties in \cref{lem:lbBFactorR},
	they sum to exactly $h-1$ selected edges.

	If $i$ left-external edges are selected,
	then $v'$ is incident to $a_{h-1-i}$ selected left-internal edges,
	by definition of $R_{\LEFT}$.
	As $R_{\RIGHT}$ rejects
	when $v_{\RIGHT}$ is incident to exactly $k$ selected right-internal edges
	where $k\neq b_j$ for all $j$,
	vertex $v'$ must be incident
	to $b_j$ right-internal edges for some $j$.
	By the definition of the half-induced matching,
	we get $a_{h-1-i}+b_j \in \Ex$ if $j>h-1-i$.
	Thus, some $b_{h-1-i-i'}$ with $h-1-i \ge i' \ge 0$ must be chosen.
	The relation $R_{\RIGHT}$ maps
	the $b_{h-1-i-i'}$ selected right-internal edges
	to $h-1-i-i'$ selected right-external edges.
	Thus, the gadget is incident to $i+(h-1-i-i') = h-1-i' \le h-1$ edges in total.

	As $v$ was only adjacent to complex vertices,
	we can merge the complex vertices $v_{\LEFT}$ and $v_{\RIGHT}$
	with the existing complex vertices
	and thus, also the corresponding relations.

	We analyze how the size and the pathwidth change.
	Replacing the simple vertices by the gadget
	does not change the pathwidth of the graph
	but only increases the degree of the complex vertices (due to the merging of the relations).
	Hence, $\totalDeg$ increases to at most $\totalDeg \cdot U$.
	As $U$ only depends on the set $\Ex$,
	it can be bounded by $\hat f(\max\Ex)$ for some function $\hat f$.

\begin{proof}[Proof of \cref{lem:dec:lbAntiFactorR}]
	Let $H$ be a given \BFR instance
	where $B = \{h-1\}$
	for which we apply the above construction
	to obtain the \AntiFactorR{\Ex} instance $G$.

	Since the size and the pathwidth of $G$ and $H$ are the same,
	we denote them by $n$ and $\pw$ in the following.
	From the construction we get
	$\totalDeg_G \le \totalDeg_H \cdot \hat f(\max \Ex)$.
	Assume we are given a faster algorithm for \AntiFactorR{\Ex}
	for some function $f_\Ex$.
	We run this algorithm on the instance $G$ in the following time:
	\begin{align*}
		(h-\epsilon)^{\pw + f_\Ex(\totalDeg_G)} \cdot n^{\O(1)}
		&\le (h-\epsilon)^{\pw + f_\Ex(\totalDeg_H \cdot \hat f(\max\Ex))} \cdot n^{\O(1)} \\
		&\le ((h-1)+1-\epsilon)^{\pw + f_\Ex'(\totalDeg_H)} \cdot n^{\O(1)}
	\end{align*}
	As $\Ex$ is fixed,
	$\hat f(\max \Ex)$ can be seen as part of the function $f_\Ex'$
	which is allowed to depend on $B$ and thus also on $\Ex$.
	By \cref{lem:lbBFactorR}, the running time directly contradicts SETH.
\end{proof}

\subsection{Replacing the Relations}
It remains to prove \cref{lem:dec:AntiFactorRToAntiFactor},
where we replace the relations by appropriate graphs.
We use the same definition of realization as in \cite{MarxSS21}.
\begin{definition}[Realization (Definition~5.1 in \cite{MarxSS21})]
  Let $R \subseteq \SetB^k$ be a relation.
  Let $G$ be a node-labeled graph with dangling edges $D=\{d_1, \dots, d_k\} \subseteq E(G)$.
  We say that graph $G$ \emph{realizes} $R$ if for all $D' \subseteq D$:
  $D' \in R$
  if and only if
  there is a solution $S \subseteq E(G)$ with $S \cap D = D'$.
\end{definition}
By assumption we have $\maxgap(\Exbar)>1$
and $0\in \Ex$.
Hence,
we can use some of the constructions from \cite{MarxSS21}
although a careful analysis is necessary
since we are now using the cofinite set $\Exbar$
and the original constructions are for finite sets.
Nevertheless, this makes the construction easier,
as we know that $\Exbar$ always contains an even and an odd number,
e.g., $\max\Ex+1$ and $\max\Ex+2$.

\begin{definition}
  We denote by \HWin[d]{S} the symmetric $d$-ary relation that only accepts
  if the Hamming weight of the input (i.e., the number of ones)
  is contained in $S$.

  To simplify notation we set $\HWeq[d]{k}=\HWin[d]{\{k\}}$
  and $\EQ{d}=\HWin[d]{\{0,d\}}$.
  We call the latter relations also equality relations.

  We write $\HWeq{k}$ for the set of all relations $\HWeq[d]{k}$
  with $d\ge 1$.
\end{definition}
\begin{lemma}\label{lem:dec:realization-hw11-eqk}
	We can realize the relations $\HWeq[1]{1}$ and $\EQ{k}$, for all $k\geq 1$,
	by simple graphs of at most $k\cdot(\max \Ex)^{\O(1)}$ vertices
  of degree at most $\max \Ex+3$.
\end{lemma}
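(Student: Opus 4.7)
The plan is to exploit two structural facts: $0 \in \Ex$ forces every vertex to have positive degree in any solution, and $\maxgap(\Exbar) > 1$ guarantees consecutive forbidden degrees $a, a+1 \in \Ex$; picking this pair at the top of a maximal run of consecutive integers in $\Ex$ additionally ensures $a+2 \notin \Ex$. The key building block is a \emph{forced-edge backbone}: a gadget whose unique valid assignment selects all its edges. A natural candidate is $K_{a+2,a+2}$ — selecting all edges puts every vertex at the allowed degree $a+2$, whereas deselecting any single edge drops its two endpoints to degree $a+1\in \Ex$ and deselecting more drops further endpoints to $a\in \Ex$. When $\{0,1,\dots,a+1\}\subseteq \Ex$ this already forces all edges, and in general one augments the bipartite graph with short ``chain'' gadgets at each vertex to rule out the remaining valid low-degree configurations (e.g.\ perfect matchings using values in $\Exbar \cap [1,a-1]$). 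The backbone has degree at most $a+2 \le \max\Ex+2$ and size polynomial in $\max\Ex$.

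To realize $\HWeq[1]{1}$, take a backbone, remove one internal edge $vu$ so that both $v$ and $u$ have internal degree $a+1\in\Ex$, and attach the dangling edge to $v$. The only way to move $\deg(v)$ into $\Exbar$ is to select it, reaching $a+2$; hence the dangling edge is forced. The partner $u$ is repaired by connecting it into a second backbone via a short ``patching ladder'' whose other end is absorbed by an auxiliary high-degree hub: since $\Exbar$ is cofinite, any vertex with degree exceeding $\max\Ex$ is automatically allowed, so the accumulated degree surplus can be absorbed without introducing further constraints. All degrees stay within $\max\Ex+3$ and the total size is $(\max\Ex)^{\O(1)}$.

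For $\EQ{k}$, I would first build $\EQ{2}$ and then assemble $\EQ{k}$ as a binary tree of $\EQ{2}$ instances, which gives the linear-in-$k$ blow-up in the statement. The difficulty with $\EQ{2}$ is that $0\in \Ex$ rules out a naive ``both off'' realization: internal vertices would still need positive degree. The construction therefore uses a symmetric two-state gadget — an ``ON'' internal configuration selecting a backbone together with both dangling edges, and an ``OFF'' configuration selecting a mirror set of internal edges with neither dangling edge — where in both states every internal vertex reaches a degree in $\Exbar$ (which is possible because $\Exbar$ also contains two consecutive integers, namely any pair above $\max\Ex$, so an attachment vertex can simultaneously accommodate degree $d$ and $d+1$).

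The main obstacle is ruling out the ``mixed'' boundary assignments for the $\EQ{2}$ gadget, i.e.\ configurations where exactly one dangling edge is selected: the two-state internal design must be rigid enough that no such mixed completion leaves every internal vertex in $\Exbar$. This is where the consecutive pair $a,a+1\in\Ex$ is crucial, as it provides a ``double barrier'' that any mixed configuration necessarily crosses, placing some internal vertex in $\Ex$. Performing this case analysis cleanly, together with verifying that the general-case backbone (when $\Ex$ lacks the initial segment $\{0,\dots,a+1\}$) still behaves as forced, constitutes the technical heart of the proof.
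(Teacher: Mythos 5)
Your proposal takes a route that diverges from the paper's and contains gaps that I do not see how to close.

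The most serious problem is the plan to obtain $\EQ{k}$ from $\EQ{2}$. Gluing realizations together means identifying pairs of dangling edges; a chain of $\ell$ copies of a gadget with $d{+}1$ dangling edges has $\ell(d-1)+2$ free dangling edges, so chaining $\EQ{2}$'s (here $d{+}1=2$, i.e.\ $d=1$) leaves you stuck at arity $2$ no matter how many you use, and ``a binary tree of $\EQ{2}$ instances'' is not a well-defined composition because interior nodes of a binary tree have degree $3$ while $\EQ{2}$ has arity $2$. You genuinely need a base gadget of arity at least $3$, and this is precisely what the hypothesis $\maxgap(\Exbar)>1$ hands you: there is a length-$d$ gap $[a+1,a+d]\subseteq\Ex$ with $a,a+d+1\in\Exbar$ and $d\ge 2$, so a single vertex with $a$ forced pendant edges and $d+1$ dangling edges already realizes $\EQ{d+1}$ in one step, because the only attainable allowed degrees are $a$ (none selected) and $a+d+1$ (all selected). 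This also renders moot the ``mixed boundary assignment'' case analysis for $\EQ{2}$ that you flag as the technical heart but leave open --- the paper never builds $\EQ{2}$ at all. From $\EQ{d+1}$ with $d\ge 2$, chaining gives $\EQ{k}$ for every $k$ at linear cost.

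Two further steps are also under-specified. The forced-edge backbone $K_{a+2,a+2}$ is not automatically forced; you yourself note that when $\Exbar\cap[1,a+1]\ne\emptyset$ low-degree valid subgraphs may survive and propose unspecified ``chain gadgets'' to kill them, which is exactly what would need to be proved. The clean way is to take a clique on $\min\Exbar+1$ vertices and split one edge into two dangling edges: since $0\in\Ex$ and every degree below $\min\Exbar$ is forbidden, both dangling edges are forced with no cases to check, giving $\HWeq[2]{2}$ directly. Similarly, the $\HWeq[1]{1}$ construction via ``patching ladders'' and ``auxiliary high-degree hubs'' is not a proof; once $\HWeq[2]{2}$ is in hand, you simply absorb its second dangling edge into an always-accepting $\HWin[1]{\{0,1\}}$ node (a fresh vertex with $\ge\max\Ex+1$ forced pendant edges plus one dangling edge), leaving a single forced dangling edge. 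I'd encourage you to restructure around the $\EQ{d+1}$ idea, since that is the step your approach is missing.
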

\begin{proof}
  We use ideas from Lemma~5.4 and Item~1 of Lemma~5.8 in \cite{MarxSS21}.
  \begin{itemize}
  \item
  We first construct a \HWeq[2]{2} gadget.
  For this we take a clique with $\min \Exbar+1>1$ vertices
  and split one edge into two dangling edges.
  As we know that any degree $d \le \min\Exbar$ is forbidden,
  both dangling edges must always be selected.

  \item
  Next, we realize \HWin[1]{\{0,1\}} which is equivalent to \EQ{1}.
  For this we take a new vertex $v$
  and force $2\ceil{ (\max\Ex+1) / 2 }$ edges to it
  by using $\ceil{ (\max\Ex+1) / 2 }$ copies of a \HWeq[2]{2} gadget.
  Adding one dangling edge to $v$ gives the stated gadget,
  as the vertex $v$ already has $2\ceil{ (\max\Ex+1) / 2 } \ge \max\Ex+1$ neighbors
  and adding more neighbors does not make the solution invalid.

  \item
  For a \HWeq[1]{1} gadget we connect a \HWeq[2]{2} gadget and a \EQ{1} gadget.
  The unused edge of the \HWeq[2]{2} gadget acts as the dangling edge.

  \item
  It suffices to realize \EQ{d+1}
  where $d = \maxgap(\Exbar)>1$.
  Then, we can realize \EQ{k} for arbitrary $k$
  by connecting $\ell=\ceil{ (k-2) / (d-1) }$ copies of a \EQ{d} gadget in a path-like manner.
  To $k-\ell(d-1)-2$ dangling edges we add one \EQ{1} gadget each
  such that the \EQ{d+1} gadget has the correct arity.

  From $\maxgap(\Exbar)=d$,
  we know that there is some $a\ge 1$
  such that $[a,a+d+1] \cap \Ex = [a+1,a+d]$.
  We start with a new vertex $v$ and force $a$ edges to $v$
  using the \HWeq[1]{1} relations.
  Then, it suffices to add $d+1$ dangling edges to $v$ to realize \EQ{d+1}.
  \qedhere
  \end{itemize}
\end{proof}
We additionally need \HWeq[k]{1} gadgets which we realize next.
\begin{lemma}\label{lem:relation:hw1-decision}
  For all $k \geq 1$,
	we can realize the relations $\HWeq[k]{1}$
	by simple graphs of at most $k\cdot(\max \Ex)^{\O(1)}$ vertices
  of degree at most $\max \Ex+3$.
\end{lemma}
\begin{proof}
	It suffices to construct $\HWeq[2]{1}$ and $\HWeq[3]{1}$
	as the case $k=1$ transfers directly from the previous lemma.
	Then, we realize the other relations inductively:
	Assume we already realized \HWeq[1]{\ell} for some $\ell\ge 3$.
	We connect a \HWeq[1]{\ell} and a \HWeq[1]{3} node
	to a common \HWeq[1]{2} node.
	Replacing the relations by their realization,
	completes the construction.
	See Lemma~5.5 in \cite{MarxSS21} for more details.

  The construction of $\HWeq[2]{1}$ and $\HWeq[3]{1}$
  follows the construction from Lemma~5.8 in \cite{MarxSS21}.
	Since $\Exbar$ has a gap of size at least 2, there is some $d\geq 2$ and some $a\geq 1$ such that
	$[a,a+d+1]\cap \Exbar=\{a,a+d+1\}$.
	To construct $\HWeq[k]{1}$, for $k=2,3$, create two vertices $u$ and $v$.
	Create $k$ copies of an $\EQ{3}$ gadgets
  and for each copy connect one dangling edge to $u$, one dangling edge to $v$, and
  use the last dangling edge as a dangling edge for the \HWeq[k]{1} gadget.
	Attach $a+d$ pendant nodes with relation $\HWeq[1]{1}$ to $u$
	and attach $a-1$ pendant nodes with relation $\HWeq[1]{1}$ to $v$.
	This realizes $\HWeq[k]{1}$ because at most one of the $k$ edges
	connecting $u$ to the $\EQ{3}$ gadgets can be selected since
	$a-1+k\in {\Ex}$ and
	at least 1 of the $k$ edges connecting $v$ to the $\EQ{3}$
	gadgets has to be selected since $a+d \in {\Ex}$.
\end{proof}
Based on the results in \cite{CurticapeanM16} and their extension to \BFactor
in Theorem~5.2 of \cite{MarxSS21}, one can easily show \cref{lem:dec:realization}.
These constructions use only $\EQ{k}$ and $\HWeq[k]{1}$ gadgets
which are available to us via \cref{lem:dec:realization-hw11-eqk,lem:relation:hw1-decision}.
Moreover,
as we can realize \EQ{k} and \HWeq[k]{1}, for all $k$,
we do not need to impose any restrictions on the relations we want to realize.
\begin{lemma}\label{lem:dec:realization}
	Let $R\subseteq\{0,1\}^e$ be any relation.
  Then, we can realize $R$ by a simple graph of at most
	$f(e)\cdot(\max \Ex)^{\O(1)}$ vertices
  with degree at most $\max \Ex +3$.
\end{lemma}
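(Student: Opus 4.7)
The plan is to mirror the construction from Theorem~5.2 of \cite{MarxSS21} (which itself generalizes the one in \cite{CurticapeanM16}), where arbitrary Boolean relations are realized using only $\EQ{k}$ and $\HWeq[k]{1}$ gadgets. Because Lemmas~\ref{lem:dec:realization-hw11-eqk} and~\ref{lem:relation:hw1-decision} already supply both gadget families within the required size and degree bounds, the transfer is essentially a black-box application.

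At a high level, I would enumerate the satisfying assignments $R = \{a^{(1)}, \ldots, a^{(r)}\}$ with $r\le 2^e$, and introduce a central ``selector'' vertex carrying a $\HWeq[r]{1}$ relation so that any solution chooses exactly one $a^{(i)}$. For each dangling edge $d_j$, I would wire it via $\EQ{k}$ gadgets to the selector edges corresponding to the indices $i$ with $a^{(i)}[j]=1$, so that $d_j$ is selected if and only if the chosen $a^{(i)}$ has a $1$ in position $j$. This is exactly the condition for realizing $R$ according to the realization definition. The construction uses only $\poly(e, 2^e)$ gadgets of each type, so the total gadget count is a function of $e$ alone.

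The main point to verify is that the resulting graph has at most $f(e)\cdot(\max\Ex)^{\O(1)}$ vertices and maximum degree bounded by $\max\Ex+3$. Each individual $\EQ{k}$ and $\HWeq[k]{1}$ gadget already satisfies these two constraints by the previous lemmas, and since any vertex of the final construction is either a gadget-internal vertex (inheriting the bound directly) or a ``wiring'' vertex whose degree is determined only by $e$ and the shape of $R$, both bounds propagate. The only apparent subtlety compared to \cite{MarxSS21} is that our simple vertices carry the cofinite set $\Exbar$ rather than a finite list; however, the realization construction consumes the gadget families only as black boxes, so the argument transfers verbatim once the two preparatory lemmas are in hand.
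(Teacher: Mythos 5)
Your top-level strategy is the same as the paper's: cite the relation-realization construction of Theorem~5.2 in \cite{MarxSS21} (building on \cite{CurticapeanM16}), observe that it consumes only $\EQ{k}$ and $\HWeq[k]{1}$ gadgets as black boxes, and then invoke \cref{lem:dec:realization-hw11-eqk,lem:relation:hw1-decision} to supply those gadgets within the size bound $k\cdot(\max\Ex)^{\O(1)}$ and degree bound $\max\Ex+3$. The paper's proof is essentially just this citation, so up to that point you agree with it.

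However, the explicit wiring you sketch is not correct, and the error is exactly at the crux. Equality gadgets are the wrong primitive for the per-output routing: if you attach an $\EQ{}$ gadget to $d_j$ together with the selector edges $\{s_i : a^{(i)}[j]=1\}$, you force all of these to be in the \emph{same} state, but the $\HWeq[r]{1}$ constraint on the selector permits exactly one of $s_1,\dots,s_r$ to be selected. Whenever two or more satisfying assignments share a $1$ in position $j$, a legal input $D'=a^{(i^*)}$ with $a^{(i^*)}[j]=1$ then admits \emph{no} extension (one of the tied selector edges would have to be both $0$ and $1$), so the gadget fails to realize $R$. The working construction uses $\HWeq[\cdot]{1}$, not $\EQ{}$, for the per-output constraint and connects $d_j$ to the \emph{complementary} index set: fan out each $s_i$ via an $\EQ{e+1}$ gadget into copies $s_i^1,\dots,s_i^e$, and for each $j$ place a $\HWeq[\abs{T_j}+1]{1}$ gadget on $\{d_j\}\cup\{s_i^j : i\in T_j\}$, where $T_j=\{i : a^{(i)}[j]=0\}$. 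If the selected index $i^*$ lies in $T_j$ then one of the $s_i^j$ in the gadget is already selected so $d_j$ is forced off; if $i^*\notin T_j$ then none is selected so $d_j$ is forced on, giving $d_j=a^{(i^*)}[j]$ as required. Separately, the claim that a ``wiring'' vertex of degree ``determined only by $e$'' automatically meets the bound $\max\Ex+3$ does not follow: the bound holds because in the correct construction every vertex lives inside some $\EQ{k}$ or $\HWeq[k]{1}$ gadget, not because a degree depending only on $e$ is inherently small.
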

With \cref{lem:dec:realization} we can prove the correctness of the second step of our reduction,
which removes the relations from the graph.
\begin{proof}[Proof of \cref{lem:dec:AntiFactorRToAntiFactor}]
  For a given \AntiFactorR{\Y} instance $H$,
  replace all complex vertices and their relations
  by the realizations guaranteed by \cref{lem:dec:realization}
  to get an \AntiFactor{(\Ex,\Y)} instance $G$.
  This step uses only vertices with set $\Ex$
  while the vertices with set $\Y$ remain unchanged.

	Observe that the size of the realizations of a relation with degree $d$
	is bounded by $f(d)$ for some function $f$ which might depend on $\Ex$.
	By this we can bound the size of $G$ by $n_G \le n_H \cdot f(\totalDeg_H)$.
	We additionally modify the path decomposition.
	We replace each complex vertex in a bag $\bag{}$
	by the vertices of the realization of the corresponding relation.
	Thus, the size of the bags increases at most by $\totalDeg_H\cdot f(\totalDeg_H)$
  which can be bounded by $f'(\totalDeg_H)$ for some appropriate $f'$.
\end{proof}

\section{Lower Bounds for the Optimization Version}\label{sec:lower:opt}
For the optimization version we need to consider the minimization and the maximization version.
Observe that when assuming $0 \in \Ex$
(i.e., the empty set is \emph{not} a valid solution),
the hardness and lower bounds for these two problems follow from the decision version.
Further, the minimization version is trivial if $0 \notin \Ex$,
as the empty set is the unique smallest solution.
Thus, it suffices to focus on the maximization version where $0 \notin \Ex$,
i.e., the empty set is always a valid solution.
As the optimization version is polynomial time solvable
if all gaps are of size at most one \cite{DudyczP18},
we can assume that at least one gap has size at least two.
Surprisingly it can be shown that
if there is only one gap of size larger than two
and this gap is from $1$ to $k$ for some $k\ge 1$,
then the maximization version can be solved in polynomial time.
\begin{theorem}
	Let $\Ex =[1,k]$ for some $k \ge 1$.
	Then, \MaxAntiFactor{\Ex} can be solved in polynomial time.
\end{theorem}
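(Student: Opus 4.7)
The plan is to characterize every valid solution by its set of \emph{active vertices} (those with positive degree in $S$) and reduce the problem to computing a $(k{+}1)$-core.

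First I would observe the following structural fact. Since $\Ex = [1,k]$, in any valid solution $S$ each vertex has $\deg_S(v) = 0$ or $\deg_S(v) \geq k+1$. Let $A(S) \deff \{v \in V : \deg_S(v) \geq k+1\}$. Then all edges of $S$ lie inside $G[A(S)]$, and for every $v \in A(S)$ we have $\deg_{G[A(S)]}(v) \geq \deg_S(v) \geq k+1$. Call a vertex subset $A \subseteq V$ \emph{feasible} if every $v \in A$ satisfies $\deg_{G[A]}(v) \geq k+1$. The observation says $A(S)$ is feasible for every solution $S$, and conversely for any feasible $A$ the edge set $S = E(G[A])$ is a valid solution with $|S| = |E(G[A])|$.

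Next I would show that there is a unique inclusion-maximal feasible set $A^*$, namely the $(k{+}1)$-core of $G$. The key closure property is that the union of two feasible sets is feasible: if $A,A'$ are feasible and $v \in A \cup A'$, say $v \in A$, then $v$ has at least $k+1$ neighbors in $A \subseteq A \cup A'$, so $\deg_{G[A\cup A']}(v) \geq k+1$. Hence the union of all feasible sets is itself feasible and therefore maximal; this coincides with the $(k{+}1)$-core, which can be computed in polynomial time by repeatedly deleting vertices of degree less than $k+1$ in the remaining graph.

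To conclude, for any valid solution $S$ we have $A(S) \subseteq A^*$ (since $A(S)$ is feasible and $A^*$ is the unique maximal feasible set), and so
\[
|S| \leq |E(G[A(S)])| \leq |E(G[A^*])|.
\]
The solution $S^* \deff E(G[A^*])$ is valid by the characterization above and attains this bound, so it is optimal. The algorithm is therefore: compute $A^*$ as the $(k{+}1)$-core of $G$, output $E(G[A^*])$ (or report no improvement over the empty set if $A^* = \emptyset$). Both steps run in polynomial time. The only point that requires a little care is the union-closure argument that identifies the maximal feasible set with the $(k{+}1)$-core; everything else is immediate from the structure of $\Ex = [1,k]$.
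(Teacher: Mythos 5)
Your proposal is correct and takes essentially the same approach as the paper: the algorithm in both cases computes the $(k{+}1)$-core by repeatedly deleting low-degree vertices and returns all edges of the core as the solution. You supply the optimality argument in more detail (via the notion of feasible sets and their closure under union), whereas the paper states the key observation—that vertices outside the core have degree $0$ in every solution—without spelling out the justification, but the reasoning is the same.
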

\begin{proof}
	We want to find a subset of edges such that every vertex has either 0 or at least $k+1$ selected incident edges.

	The algorithm is as follows,
	where $V$ are the vertices of the given graph $G$ and $G[W]$ returns the induced graph on the vertices of $W \subseteq V$:
	\begin{enumerate}
		\item
		Let $W = V$.
		\item
		Repeatedly remove all vertices $v \in W$ with $\deg_{G[W]}(v) \le k$ from $W$.
		\item
		Return $E(G[W])$ as the solution.
	\end{enumerate}
	Clearly the algorithm runs in polynomial time,
	as each vertex can be removed at most once
	and the check of the degree can be done in time linear in the size of the graph.

	As each vertex in $W$ has degree at least $k$
	and the vertices in $V \setminus W$ have degree 0,
	the output is a valid solution.
	Observe that all vertices in $V \setminus W$ must have degree 0 for all possible solutions.
	Thus, all solutions must be a subset of the edges of the induced graph $G[W]$.
	Hence, the returned solution is indeed maximum.
\end{proof}
For the case when $0\not\in \Ex$ together with $\Ex\neq [1,k]$, for all $k\ge 1$,
we show similar lower bounds as for the decision version.
These lower bounds are again based on half-induced matchings.
\begin{theorem}[Lower Bound for Maximization Version I]
	\label{thm:opt:lbAntiFactor}
	Fix a finite set $\Ex \subseteq \SetN$ such that
	\begin{itemize}
		\item
		$\maxgap(\Exbar)>1$
		\item
		and $\Ex$ contains a half-induced matching of size $h\ge 2$.
	\end{itemize}
	For every constant $\epsilon>0$, there is no algorithm that can solve \MaxAntiFactor{\Ex} in time $\Ostar{(h-\epsilon)^\pw}$
	even if we are given a path decomposition of width $\pw$,
	unless SETH fails.
\end{theorem}
The underlying reduction is again split into two parts.
The first one is a lower bound for \MaxAntiFactorR{\Ex}
and follows the same procedure as for the decision version.
\begin{lemma}[Lower Bound for \MaxAntiFactorR{\Ex}]
	\label{lem:opt:lbAntiFactorR}
	Let $\Ex \subseteq \SetN$ be a fixed set
	such that it contains a half-induced matching of size $h\ge 2$.

	Let $f_\Ex\from \SetN\to\SetR^+$ be an arbitrary function
	that may depend on $\Ex$.

	For every constant $\epsilon>0$,
	there is no algorithm that can solve \MaxAntiFactorR{\Ex}
	in time $\Ostar{(h-\epsilon)^{\pw+f_{\Ex}(\totalDeg)}}$,
	where $\totalDeg = \max_{\text{bag } \bag{}} \sum_{v \in \bag{}\cap V_C} \deg(v)$,
	even if we are given a path decomposition of width $\pw$,
	unless SETH fails.
\end{lemma}
As for the decision version,
the second step follows the ideas from \cite{MarxSS21}
but we have to modify the constructions to take care of the cofinite set
of allowed degrees.
We define \MaxAntiFactor{(\Ex,\Y)} in the natural way,
as we did for the decision version.
\begin{lemma}
	\label{lem:opt:AntiFactorRToAntiFactor}
	Fix a finite set $\Ex \subseteq \SetN$ such that
	$0 \notin \Ex$
	and $\maxgap(\Exbar) > 1$
	but $\Ex \neq [1,k]$, for all $k\ge 1$.
	Let $\Y \subseteq\SetN$ be arbitrary.

	There is a many-one reduction from \MaxAntiFactorR{\Y}
	to \MaxAntiFactor{(\Ex,\Y)}
	such that pathwidth increases by at most $f(\totalDeg)$,
	the size by a factor of $f(\totalDeg)$,
	and the degree to at least $\max \Ex+2$,
	where $\totalDeg = \max_{\text{bag } \bag{}} \sum_{v \in \bag{}\cap V_C} \deg(v)$.
\end{lemma}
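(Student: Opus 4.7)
The plan is to adapt the two-step blueprint used for the decision version in \cref{lem:dec:AntiFactorRToAntiFactor}, replacing each relation in the given \MaxAntiFactorR{\Y} instance $H$ by a realizing graph over $\Ex$-labelled simple vertices. For the maximization setting the crucial extra requirement is that each gadget be \emph{size-preserving}: for a relation $R$ with realization $G_R$ there should be a constant $c_R$ such that, in every optimal extension, exactly $c_R$ internal edges are selected regardless of which valid subset of the dangling edges is chosen. Given this, the value of an optimal solution in the reduced \MaxAntiFactor{(\Ex,\Y)} instance $G$ equals the value of an optimal solution in $H$ plus the fixed offset $\sum_v c_{R_v}$, which the reduction adds to the target. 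The hypotheses on $\Ex$ ($\maxgap(\Exbar)>1$ and $\Ex\neq[1,k]$ for every $k\ge 1$) are exactly what is needed to build such gadgets: the first gives a gap of size at least two in $\Exbar$ (used to realize \EQ{k}), and the second guarantees either $0\in\Ex$ or a ``hole'' above $1$ in $\Ex$, which provides a second usable low degree to anchor the size-preserving analysis.

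First I would revisit the constructions from \cref{lem:dec:realization-hw11-eqk,lem:relation:hw1-decision} and re-verify each \HWeq[2]{2}, \EQ{k}, \HWeq[1]{1}, and \HWeq[k]{1} gadget, this time tracking the number of internal edges selected in every accepting configuration. The clique-based \HWeq[2]{2} gadget is inherently size-preserving, since the $\min\Exbar$-clique forces every non-dangling edge to be selected. The \EQ{k} and \HWeq[k]{1} gadgets need small modifications: wherever the decision construction attaches a pendant vertex, I would balance it with an additional \HWeq[1]{1} pendant tuned to the target degree, so that the ``large'' accepting degree (any $d>\max\Ex$) cannot beat the canonical accepting degree picked by the gadget. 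Using that $\Ex\ne[1,k]$, the canonical target degree can always be chosen so that pushing any simple vertex into the cofinite tail $\{d:d>\max\Ex\}$ strictly decreases the total number of selected edges in the gadget, hence never occurs in an optimum. Recording the resulting constants $c_R$ completes the base case.

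Once the basic gadgets are size-preserving, the composition argument of \cref{lem:dec:realization} transfers verbatim: the realization of an arbitrary $R\subseteq\SetB^e$ as a simple graph is built by wiring together \EQ{k} and \HWeq[k]{1} components, and the sum of their individual contributions yields a constant $c_R$ for the combined gadget. The size and pathwidth analysis then follows exactly the decision case: replacing a complex vertex of degree $d$ costs $f(d)\cdot(\max\Ex)^{\O(1)}$ vertices and, in each bag, blows up the complex contents by at most $f(\totalDeg)$, so $n_G\le n_H\cdot f(\totalDeg)$ and $\pw_G\le\pw_H+f(\totalDeg)$. The degree bound $\ge\max\Ex+2$ is automatic from the clique used in \HWeq[2]{2}. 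The reduction then sets the target for \MaxAntiFactor{(\Ex,\Y)} to be the target for \MaxAntiFactorR{\Y} plus $\sum_v c_{R_v}$, and correctness is a direct consequence of size-preservation.

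The main obstacle, and the step I expect to require the most care, is precisely the size-preservation analysis at simple vertices in the presence of a cofinite allowed-degree set: a vertex with list $\Ex$ happily accepts every degree above $\max\Ex$, so naively there are infinitely many valid internal configurations per gadget and one must rule out all but the canonical one without sacrificing any relation-level accepting assignment. Handling this uniformly across all three base gadgets, while also ensuring the composition does not re-introduce ambiguity at join vertices shared between sub-gadgets, is where the proof must be written carefully; the combinatorial assumption $\Ex\ne[1,k]$ is what I would lean on repeatedly to force the canonical configuration to dominate the tail options in any optimum.
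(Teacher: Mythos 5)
Your proposal tracks the paper's broad blueprint (replace relations by gadgets, then argue the offset is constant), but it misses the central structural idea the paper uses, and as written it would not close the argument. The paper's maximization lemma is aimed at the regime $0\notin\Ex$ (when $0\in\Ex$ the decision reduction already applies), and with $0\notin\Ex$ the decision-version gadgets from \cref{lem:dec:realization-hw11-eqk,lem:relation:hw1-decision} do not exist: the base $\HWeq[2]{2}$ gadget there takes a $\min\Exbar$-clique and forces edges because degrees below $\min\Exbar$ are forbidden, which requires $0\in\Ex$. Once $0\notin\Ex$ the empty set is always a valid internal assignment, so \emph{no} gadget can outright reject a non-accepting dangling-edge configuration. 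Your ``size-preservation for all $D'\in R$'' condition is therefore not sufficient: a rejected $D'$ might still extend to a solution whose internal edge count is as large as $c_R$, and then the optimum in $G$ need not respect the relations of $H$. This is exactly why the paper replaces realization by \emph{realization with penalty $\beta$}: every accepting $D'$ extends to a solution of internal size exactly $\alpha$, and every non-accepting $D'$ extends to internal size at most $\alpha-\beta$. Choosing $\beta=2\deg(R)$ makes the loss from the penalty strictly dominate the $\le\deg(R)$ extra dangling edges a cheating solution could collect, so the optimum in $G$ must satisfy every relation; this is the whole content of the short proof of the lemma, which then invokes \cref{lem:relation:hw2-optimization} and \cref{lem:relation:hw1-eq-optimization} rather than the decision gadgets. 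Your instinct to lean on $\Ex\neq[1,k]$ is right, but it is used inside those penalty-aware gadget constructions (to find a usable low plateau in $\Exbar$), not to push internal simple vertices out of the cofinite tail; the tail issue is absorbed automatically by the penalty bound. I would recommend rewriting the argument around the penalty notion and the two optimization-realization lemmas; the remaining size and pathwidth bookkeeping you describe then does transfer verbatim from the decision case.
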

Using \cref{lem:opt:lbAntiFactorR,lem:opt:AntiFactorRToAntiFactor},
we prove the lower bound when parameterizing by the set.
\begin{proof}[Proof of \cref{thm:opt:lbAntiFactor}]
	The proof combines \cref{lem:opt:lbAntiFactorR,lem:opt:AntiFactorRToAntiFactor}.
	The analysis of the size and the running time is the same
	as for the decision version in \cref{thm:dec:lbAntiFactor}.
\end{proof}
Additionally, we show a lower bound when parameterizing by the number of excluded degrees.
Again this proof follows the one from the decision version.
\begin{theorem}[Lower Bound for Maximization Version II]
	\label{thm:opt:lbAntiFactorBySize}
	For all $\ex\ge 3$ and $\epsilon>0$,
	\MaxAntiFactorSize{\ex} cannot be solved in time $\Ostar{(\ex+1-\epsilon)^{\pw}}$
	on graphs given with a path decomposition of width $\pw$,
	unless SETH fails.
\end{theorem}

\subsection{Replacing the Relations}
We can again use some of the machinery
from the lower bound for \MaxBFactor in \cite{MarxSS21}.
As for the decision version, we need to carefully check the construction
as the set of allowed degrees is now cofinite.

We start by defining the realization of a relation for the maximization version
which varies slightly from the definition for the decision version,
as we cannot rule out the existence of solutions in all cases.
\begin{definition}[{Realization (cf.\ Definition~6.1 in \cite{MarxSS21})}]
	Let $R\subseteq\{0,1\}^k$ be a relation.
	Let $G$ be a graph with dangling edges $D=\{d_1,\ldots,d_k\}$.
	We say that \emph{$G$ realizes $R$ with penalty $\beta$}
	if we can
	efficiently construct/find an $\alpha>0$ such that for every $D'\subseteq D$:
	\begin{itemize}
		\item
		If $D'\in R$, then there is a solution
		$S\subseteq E(G)$ such that $S\cap D=D'$ and $\abs{S}=\alpha$.
		\item
		If $D'\not\in R$, then, for every solution
		$S\subseteq E(G)$ such that $S\cap D=D'$, we have $\abs{S}\leq \alpha-\beta$.
	\end{itemize}
\end{definition}
The construction of the most basic building block,
i.e., forcing edges to be in the solution, still applies
since there is a gap of size at least 2 in $\Exbar$.
\begin{lemma}[Lemma~6.6 in \cite{MarxSS21}]\label{lem:relation:hw2-optimization}
	There is a function $f\from\SetN\to\SetN$ such that the following holds.
	We can
	realize $\HWeq[2]{2}$
	(with distinct portal vertices)
	with arbitrary penalty $\beta$
	by simple graphs
	using $f(\beta)$ vertices with set $\Ex$ and
	degree at most $\max \Ex+1$.
\end{lemma}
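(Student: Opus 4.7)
The plan is to build a $\HWeq[2]{2}$ gadget with penalty $\beta$ in two stages: construct a base penalty-$1$ gadget, and then amplify. Since $\maxgap(\Exbar) > 1$, I first fix consecutive values $a, a+1 \in \Ex$ with $a+2 \in \Exbar$; these take the role played by $\min \Exbar$ in the decision-version construction of \cref{lem:dec:realization-hw11-eqk}. A key difference from the decision setting is that we now need the \emph{size} of an optimum completion to shift by at least $\beta$ whenever the dangling edges deviate from $\HWeq[2]{2}$, not merely the \emph{existence} to fail.

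For the base gadget, I would introduce portals $u_1, u_2$ carrying the dangling edges $d_1, d_2$, together with a small auxiliary region of vertices labelled $\Ex$ engineered so that the unique maximum completion under $\{d_1, d_2\} = \{1, 1\}$ pushes each portal to total degree $a+2 \in \Exbar$ and uses exactly $\alpha$ internal edges. A concrete candidate takes $a+1$ common neighbours of $u_1, u_2$, possibly padded with forced-edge gadgets from \cref{lem:dec:realization-hw11-eqk} to handle the degenerate subcases where small values such as $0$ or $1$ lie in $\Ex$; the cofiniteness of $\Exbar$ ensures that sufficiently large auxiliary degrees are always legal. Omitting a dangling edge drops the portal's target into the forbidden pair $\{a, a+1\}$, so any valid completion must sacrifice at least one internal edge to jump across the length-$2$ gap, giving penalty at least one.

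To amplify from penalty $1$ to arbitrary $\beta$, the plan is to chain $\beta$ independent copies of the base gadget in series through a line of linking vertices $w_0 = u_1, w_1, \ldots, w_\beta = u_2$, placing one base gadget between each consecutive pair $(w_{i-1}, w_i)$ and rigidly fixing each intermediate $w_i$'s remaining degree via the $\min\Exbar$-clique forcing technique underlying \cref{lem:dec:realization-hw11-eqk}. Because the forcing is rigid, the chain transmits the $d_1 \leftrightarrow d_2$ constraint along its length, and each of the $\beta$ links contributes an independent penalty of one, yielding total penalty $\beta$. Since no vertex participates in more than one base copy, the uniform degree bound $\max\Ex + 1$ is preserved at every internal vertex.

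The main obstacle is achieving arbitrary penalty together with the hard degree cap $\max \Ex + 1$: a naive amplification by $\beta$ parallel copies sharing $u_1, u_2$ would inflate the portal degree linearly in $\beta$ and quickly violate the bound. Chaining avoids this, but correctness requires checking that the penalties across adjacent links add rather than cancel, which boils down to tuning the linking vertices $w_i$ so that the forcing structure on one side does not offer a ``free'' degree adjustment to the other. A finite case analysis, split according to which of the first few non-negative integers lie in $\Ex$ and which in $\Exbar$, handles each configuration uniformly and completes the bookkeeping; the uniform constant $\alpha$ is then just the sum of the base gadgets' contributions plus the (fixed) contribution from the linking forces.
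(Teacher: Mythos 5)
The paper does not actually prove this lemma; it is imported verbatim from \cite{MarxSS21} (Lemma~6.6), and the only justification offered is the one-sentence remark immediately above the statement, namely that the forcing construction ``still applies since there is a gap of size at least 2 in $\Exbar$.'' So there is no in-paper proof to compare against. Evaluating your attempt on its own merits, the base gadget is plausibly headed in the right direction (exploit the two consecutive forbidden degrees $a,a+1\in\Ex$ bounded by allowed degrees on either side), but the amplification step is flawed.

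The chaining argument does not amplify the penalty. Consider the chain $u_1=w_0, w_1,\dots,w_\beta=u_2$ with a penalty-$1$ base gadget $G_i$ between $w_{i-1}$ and $w_i$, and suppose $d_1$ is \emph{not} selected but $d_2$ is. The optimal completion selects all of the internal chain edges $(G_i,w_{i-1})$ and $(G_i,w_i)$ for $i\ge 1$ except $d_1$ itself; then $G_2,\dots,G_\beta$ each see both of their dangling edges selected and contribute their full optimum $\alpha_i$, the intermediate $w_i$'s carry whatever even/forced degree your rigid-forcing demands, and only $G_1$ fails its $\HWeq[2]{2}$ constraint, incurring a loss of $1$. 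The total penalty is therefore $1$, not $\beta$. More conceptually: $\HWeq[2]{2}$ gadgets chained in series behave like an equality relation, where the ``price'' of a single violated equality is the penalty of a single link, because the adversary optimally localizes the break at one link rather than spreading it across the chain. To force the loss to scale with $\beta$, the gadget itself must make the non-selection of a dangling edge cascade into $\beta$ independent edge-losses (for instance via a path of $\Theta(\beta)$ auxiliary vertices whose degree constraints propagate the shortfall, as \cite{MarxSS21} does), not by concatenating $\beta$ penalty-$1$ gadgets. Your proposal also never confronts whether $1$ or $2$ lies in $\Ex$ when it assigns the intermediate linking vertices $w_i$ chain-degree~$2$, which is an additional unresolved case split. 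In short, the base-gadget idea is a reasonable sketch, but the claimed ``independent penalty of one per link'' is incorrect and the lemma is not established.
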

We use this as a building block
to realize $\HWeq[k]{1}$ and $\EQ{k}$ relations,
following which we argue that all relations can be realized.
\begin{lemma}\label{lem:relation:hw1-eq-optimization}
	There is a function $f\from\SetN\times\SetN\to\SetN$ such that the following holds.
	For any $k\geq 1$, we can
	realize
	$\HWeq[k]{1}$
	and $\EQ{k}$
	with arbitrary penalty $\beta$
	by simple graphs
	using $f(\beta,k)$ vertices of
	degree at most $\max \Ex+3$.
\end{lemma}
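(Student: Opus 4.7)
The plan is to realize both families by mimicking the inductive structure of the decision-version proofs (Lemmas~5.4, 5.5, and 5.8 in \cite{MarxSS21}), replacing each forcing step by the $\HWeq[2]{2}$ gadget from \cref{lem:relation:hw2-optimization} with penalty set to a sufficiently large $\Omega(\beta)$, and augmenting each construction with a size-balancing subgadget so that all accepted configurations yield solutions of exactly the same size $\alpha$.

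First I would realize $\EQ{d+1}$ where $d = \maxgap(\Exbar) \ge 2$. Fix $a \ge 1$ with $a, a+d+1 \in \Exbar$ and $[a+1, a+d] \subseteq \Ex$. Take a central vertex $v$ with list $\Ex$, use pendant $\HWeq[2]{2}$ gadgets to force $a$ edges incident to $v$, and route $d+1$ dangling edges out of $v$; the only valid degrees of $v$ are $a$ and $a+d+1$, corresponding to selecting either none or all of the dangling edges. To equalize the two valid solution sizes (which naively differ by $d+1$), I would pair each dangling edge with a compensating internal edge incident to an auxiliary vertex, using further $\HWeq[2]{2}$ couplers to enforce that exactly $d+1$ of each pair's two edges appear in every maximum solution. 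General $\EQ{k}$ follows by path-composing $\lceil (k-2)/(d-1) \rceil$ copies of $\EQ{d+1}$ with fillers, as in the decision version; size balance composes along the chain. The special case $k=1$ is handled by a small ad-hoc balance gadget whose internal edge is forced iff the dangling edge is not selected.

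For $\HWeq[k]{1}$ with $k \in \{2,3\}$ I would follow Lemma~5.8 of \cite{MarxSS21}: take two vertices $u,v$ joined through $k$ copies of $\EQ{3}$ (the third edge of each becoming a dangling edge) and attach $\HWeq[2]{2}$-forced pendants of combined degrees $a+d$ at $u$ and $a-1$ at $v$, so that exactly one dangling edge is forced. Size balancing is achieved through the same pairing trick as above. For $k \ge 4$ I induct by gluing $\HWeq[k-1]{1}$ and $\HWeq[3]{1}$ through a common $\HWeq[2]{1}$ vertex (built as a base case), exactly as in the decision version proof.

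The main obstacle is enforcing size equalization: unlike the decision setting, each accepted external configuration must admit an internal completion of the \emph{same} total size $\alpha$, so we must compensate for each external edge that is or is not chosen by releasing or pushing an equal number of internal edges. The cofiniteness of $\Exbar$ provides ample slack for this, since auxiliary vertices can freely absorb excess degree without violating their constraints. The penalty of the resulting gadget equals the minimum penalty of its $\HWeq[2]{2}$ blocks, which we preset to $\Omega(\beta)$; the total vertex count is $f(\beta, k) = \O(k \cdot \poly(\beta))$ with constants depending only on the fixed set $\Ex$; and the degree bound $\max\Ex+3$ is inherited from the $\HWeq[2]{2}$ gadget and the constant number of pendants incident to each vertex in our constructions.
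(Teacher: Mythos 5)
You correctly identify size equalization as the central new difficulty in the optimization setting, but the mechanism you propose does not resolve it. The claim that ``\(\HWeq[2]{2}\) couplers'' can enforce that \emph{exactly one} edge of each dangling/compensating pair appears is the crux, and it does not work: \(\HWeq[2]{2}\) forces both of its two edges to be present, so it cannot act as the XOR/complementation constraint you need. What you actually want here is an \(\HWeq[2]{1}\)-type constraint (exactly one of two), which is precisely one of the relations this lemma is trying to construct, so this is circular. The paper's construction of \(\EQ{k}\) handles size balance by a different mechanism: it builds two parallel ``mirror'' paths \(v_1,\dots,v_k\) and \(u_1,\dots,u_k\) out of \(\HWin[d+1]{\{0,d+1\}}\) nodes, attaches the dangling edges to the \(v_i\)'s and pendant \(\HWin[1]{\{0,1\}}\) nodes to the \(u_i\}'s, and couples the two paths at \(v_1,u_1\) and \(v_k,u_k\) via \(\HWeq[2]{1}\) nodes, so the optimal solutions for \(D'=\emptyset\) and \(D'=D\) trade the \(v\)-side edges for the \(u\)-side edges in full and hence have equal size. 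Your ``path-compose \(\EQ{d+1}\)'' decomposition would only work if the size-balanced \(\EQ{d+1}\) base case were already available; the paper sidesteps this by constructing \(\EQ{k}\) for all \(k\) directly.

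There is a second gap in the \(\HWeq[k]{1}\) part. Because the optimization lemma only assumes \(\maxgap(\Exbar)>1\) with no lower bound on the gap location, the quantity \(a\) in \([a,a+d+1]\cap\Ex=[a+1,a+d]\) can be \(0\), i.e.\ the gap can start at degree~\(0\). Your construction attaches pendants ``of combined degrees \(a+d\) at \(u\) and \(a-1\) at \(v\),'' which is undefined when \(a=0\). The paper has an explicit case split at this point: when \(a=0\) it finds a \emph{second} gap \([a',a'+d'+1]\) with \(a'>0\) (guaranteed by \(\Ex\neq[1,k]\)) and attaches \(a'-1\) pendant nodes instead, and similarly for \(\HWeq[3]{1}\) there is a further subcase \(a=0,d'=1\). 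These case distinctions are not cosmetic; without them the construction literally cannot be instantiated. So the proposal has the right overall outline and correctly pinpoints where the decision-version proof breaks down, but it lacks a workable size-balancing primitive and omits the \(a=0\) case analysis that the lemma's weaker hypothesis forces.
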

\begin{proof}
	We know $\maxgap(\Exbar)>1$.
	Let $a\ge 0$ be such that $[a,a+d+1] \cap \Ex = [a+1,a+d]$.
	We realize the relations in several steps
	since some of our
	constructions for $\HWeq[k]{1}$ depend on $\EQ{k'}$ gadgets and vice versa.

	\begin{itemize}
		\item
		\HWin[1]{\{0,1\}}:%
		\footnote{
		Strictly speaking this gadget does not realize \HWin[1]{\{0,1\}}
		as the solution size varies
		depending on whether the dangling edge is selected or not.
		We write \EQ{1} if we use a gadget that is indeed realizing this relation.
		The same holds for such gadgets of higher degree.
		}
		Force at least $\max\Ex+1$ edges to a new vertex $v$
		by $\ceil{ (\max\Ex+1) / 2 }$ copies of a $\HWeq[2]{2}$ node.
		Add one dangling edge to this vertex $v$
		and replace all nodes by their realizations.
		Note that the relation \HWin[1]{\{0,1\}} is always satisfied and hence,
		we never need the penalty of the \HWeq[2]{2} nodes,
		i.e., setting it to $1$ is already sufficient.

		\item $\HWeq[1]{1}$:
		We connect a \HWeq[2]{2} gadget to a \HWin[1]{\{0,1\}} gadget from above.
		Replace both relations by their realizations
		with penalty $\beta+1$.
		It is clear that unless the dangling edge is selected, we have a penalty  of at least $\beta$
		because the relation \HWeq[2]{2} is not satisfied.

		\item
		$\HWin[d+1]{\{0,d+1\}}$:
		Start with a new vertex and force $a$ edges to it
		by making it adjacent to $a$ copies of a \HWeq[1]{1} node.
		Further, add $d+1$ dangling edges to the new vertex
		and replace all nodes by their realization with a penalty of $\beta+2$.

		\item $\HWeq[2]{1}$:
		Create two complex vertices $v_1$ and $v_2$ with relation $\HWin[d+1]{\{0,d+1\}}$.
		For all $i\in[d-1]$ (we know $d-1 \ge 1$),
		create a new vertex $u_i$,
		connect it to $v_1$ and $v_2$, and
		attach $a+d$ vertices with relation $\HWeq[1]{1}$ to $u_i$.
		Unless the $\HWeq[1]{1}$ nodes incur a penalty, each $u_i$
		needs at least one of their edges
		to $v_1$ or $v_2$ to be
		selected because $a+d\in \Ex$.

		Now we have two cases.
		If $a>0$, then connect $v_1$ and $v_2$
		to a new vertex $u_0$ connected to $a-1$ vertices with relation $\HWeq[1]{1}$.
		Vertex $u_0$ needs at most one of its edges
		to the $\HWin[d+1]{\{0,d+1\}}$ node to be
		selected because $a\not\in \Ex, a+1\in\Ex$.

		If $a=0$, then we know that we have another gap between
		$a'$ and $a'+d'+1$
		for some $a'>0$ and $d' \ge 1$.
		Connect $v_1$ and $v_2$ to
		a new vertex $u_0$ which is connected to $a'-1\geq 0$ vertices with relation $\HWeq[1]{1}$.
		Similar arguments apply as before.

		Finally, we replace all relations by their realizations with a
		penalty of $\beta+2$.

		\item $\HWeq[3]{1}$: When $a>0$, we can realize this in the same way as
		$\HWeq[2]{1}$ by using the
		construction with three $\HWin[d+1]{\{0,d+1\}}$ vertices $v_i$ instead of two.

		When $a=0$, we first do the following: Connect a vertex to $d-1$ nodes with
		relation $\HWeq[1]{1}$ and $3$ nodes with relation $\HWeq[2]{1}$.
		Add one dangling edge to each of the $\HWeq[2]{1}$ gadgets.
		When all nodes are replaced with realizations of penalty $\beta+3$,
		this realizes a $\HWeq[3]{1}$ or a $\HWin[3]{\{0,1\}}$ depending on whether
		$d+2\in \Ex$. In the former case, we are done.

		In the latter case, we reuse the construction for $\HWeq[2]{1}$ once more.
		Similarly as for the case $a > 0$,
		we use three vertices with relation $\HWin[d+1]{\{0,d+1\}}$ instead of two.
		Then, replace the vertex $u_0$ and its attached $\HWeq[1]{1}$ nodes
		by the realization of $\HWin[1]{\{0,1\}}$ which we created above.
		As before, we replace all vertices by their realization
		with penalty $\beta+3$.

		\item
		\HWeq[k]{1}:
		By the previous items, the cases $k=1,2,3$ are already handled.
		For larger $k$ we use the same inductive construction
		as the one from Lemma~5.5 in \cite{MarxSS21}.
		We connect a \HWeq[k]{1} gadget and a \HWeq[3]{1} gadget
		to a common \HWeq[2]{1} gadget.
		The degree of this gadget is $k+1$.
		If either the \HWeq[3]{1} or the \HWeq[k]{1} gadget
		are incident to exactly one edge,
		then the other gadget is also incident to one edge
		because of the shared \HWeq[2]{1} gadget.
		Recursively replace all gadget by their realization
		with penalty $\beta+1$
		as each additionally selected edge forces one more relation to be invalid.

		\item $\EQ{k}$:
		Create vertices $v_1,\dots,v_k$ and $u_1,\dots,u_k$,
		all with relation \HWin[d+1]{\{0,d+1\}}.
		We connect $v_i$ to $v_{i+1}$
		and $u_i$ to $u_{i+1}$, for all $i \in [k-1]$.
		Moreover we connect $v_i$ and $u_i$ to $(d+1)-3 \ge 0$
		shared \HWeq[2]{1} nodes by on edge each.
		The vertices $v_1$ and $u_1$
		are connected by one more \HWeq[2]{1} node.
		Similarly for $v_k$ and $u_k$.
		Then, we make one \HWin[1]{\{0,1\}} node adjacent to each $u_i$
		and add one dangling edge to each $v_i$.
		Replacing all relations by their realization
		with penalty $\beta+2k(d+1)$ completes the construction.

		If zero or $k$ dangling edges are selected,
		then we can select all edges incident to the $u_i$s or $v_i$s, respectively.
		Since $v_1$ and $u_1$ are always connected by a \HWeq[2]{1} node,
		\footnote{
		If $d=2$, the nodes $v_i,u_i$ are not be connected for all $i\in [2,k-1]$.
		}
		we cannot create a larger solution without violating any relation.
		As we can extend this selection to the realizations,
		the size of the solution is in both cases the same.

		Now, assume that $\ell \in [k-1]$ dangling edges are selected.
		Then, for any solution, we know that there must be some $v_i$ and $v_{i+1}$
		such that at least one of these two relations is not satisfied.
		Hence, we loose a factor of $\beta+2k(d+1)$.
		Further observe that the graph contains $2k(d+1)$ edges.
		Hence, we loose a factor of $\beta$ compared to the optimal solution.
		For this to work we crucially need that the nodes with relation
		\HWin[d+1]{\{0,d+1\}} and \HWin[1]{\{0,1\}} are no realizations
		so that the dangling edges are not included in the size of the solutions.
		\qedhere
	\end{itemize}
\end{proof}
As for the decision version,
it suffices to realize \EQ{k} and \HWeq[k]{1}.
Then, we can use the same construction as in Theorem~6.2 in \cite{MarxSS21}
to realize arbitrary relations.
As we can realize \EQ{k} and \HWeq[k]{1}, for all $k$ and not just specific $k$,
we do not need any constraints for the relation.
\begin{lemma}
	There is a function $f\from\SetN\times\SetN\to\SetN$ such that the following holds.
	Let $R\subseteq\{0,1\}^e$ be a relation.
	Then, we can realize $R$
	with arbitrary penalty $\beta$
	by simple graphs
	using $f(e,\beta)$ vertices of
	degree at most $\max \Ex+3$.
\end{lemma}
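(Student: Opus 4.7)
My plan is to invoke the construction from Theorem~6.2 in \cite{MarxSS21}, which realizes an arbitrary $e$-ary relation $R$ using only $\EQ{k}$ and $\HWeq[k]{1}$ gadgets as black boxes. Since these two relation families are available to us via \cref{lem:relation:hw1-eq-optimization} with arbitrary penalty, and since the construction in \cite{MarxSS21} treats the basic gadgets abstractly (only relying on their specified input-output behaviour and on the fact that penalties compose), the construction carries over directly to our cofinite setting where the list $\overline{\Ex}$ replaces the original finite list.

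More concretely, I would enumerate the satisfying assignments $S_1, \ldots, S_t \in \{0,1\}^e$ of $R$, where $t \le 2^e$. For each such assignment I would introduce a branch vertex, and tie all branch vertices together with a $\HWeq[t]{1}$ selector, forcing exactly one branch to be active in any solution. Then for each dangling edge $d_i$, I would connect an $\EQ{k}$ gadget whose endpoints are $d_i$ together with the branch vertices that set bit $i$ to $1$, so that activating a branch propagates exactly the corresponding pattern onto the dangling edges, while no branch being active forces no dangling edge to be selected. Every $\HWeq[k]{1}$ and $\EQ{k}$ node is then replaced by its realization from \cref{lem:relation:hw1-eq-optimization} with a sufficiently large internal penalty $\beta'$.

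To obtain the global penalty $\beta$, it suffices to choose $\beta'$ larger than the total number of edges in the construction (which depends only on $e$ and $\beta$), so that the loss from any violated subgadget dominates any possible gain elsewhere. The vertex count is bounded by a function of $e$ and $\beta$ alone, since the number of branches is at most $2^e$, the selector has arity $t$, and every subgadget has size bounded by \cref{lem:relation:hw1-eq-optimization}; this yields the desired $f(e,\beta)$. The degree bound of $\max \Ex + 3$ is inherited directly from that of the basic gadgets, since the construction only glues them together through shared dangling edges.

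The main obstacle is the penalty accounting: with many $\EQ{k}$ and $\HWeq[k]{1}$ subgadgets stitched together, one must verify that every selection of edges that does \emph{not} correspond to a satisfying assignment of $R$ triggers the internal penalty of at least one subgadget, while the intended satisfying selections achieve exactly the claimed common maximum size $\alpha$. Concretely, if the selected dangling edges do not match any $S_j$, then either the selector sub-relation $\HWeq[t]{1}$ or one of the propagating $\EQ{k}$ relations must be violated; otherwise, if they match a unique $S_j$, one can consistently extend to all gadgets and reach the maximum. Fixing $\beta'$ as a linear function of $\beta$ and the total edge count then gives the stated global penalty, completing the construction.
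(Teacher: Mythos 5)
Your proposal follows essentially the same route as the paper: the paper simply invokes Theorem~6.2 of \cite{MarxSS21}, observing that the construction there uses only $\EQ{k}$ and $\HWeq[k]{1}$ gadgets (now available from \cref{lem:relation:hw1-eq-optimization}), and that since we can realize these for every $k$ there is no restriction on $R$. Your additional sketch of the branch/selector construction and the penalty accounting goes beyond what the paper states, but the core argument — cite \cite{MarxSS21} and note the subgadgets carry over with composable penalties — is identical.
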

With this lemma we can prove the second step of our reduction,
namely \cref{lem:opt:AntiFactorRToAntiFactor}.
\begin{proof}[Proof of \cref{lem:opt:AntiFactorRToAntiFactor}]
	We follow the outline of the proof for the decision version
	from \cref{lem:dec:AntiFactorRToAntiFactor}.
	For the realizations of the relations
	we set the penalty to be twice the degree of the relation.
	Thus, the size depends only on the degree of the relation
	(and the set $\Ex$).

	Consider the realization of a complex vertex of degree $\delta$.
	Assume we try to create a larger solution
	by additionally selecting up to $\delta$ incident edges.
	Thus, if this selection does not satisfy the relation
	(otherwise the total size does not change by definition)
	we lose a factor of $2\delta$ by the choice of the penalty.
	Hence, this selection does not increase the size of the solution
	and, therefore, all relations must be satisfied.
	See Lemma~6.3 in \cite{MarxSS21} for a detailed proof.
\end{proof}

\section{Lower Bounds for the Counting Version} \label{sec:lower:count}

In this section we prove the two lower bounds for the counting version.
While the lower bound for the decision and maximization version of  \AntiFactor{\Ex} rely on half-induced matching,
we avoid this dependence for \CountAntiFactor{\Ex}
by using interpolation techniques.
This allows us to show a tight lower bound
compared to the running time of the algorithm from \cref{thm:algo:paraByList}.
For the case when $\Ex = \{0\}$, that is \countECover,
we show a completely independent but also tight lower bound
	in \cref{sec:lower:edge-cover}.

We also parameterize by the size $\ex$ of the set of forbidden degrees.
We design a new construction to prove the \sharpW{1}-hardness of
\CountAntiFactorSize{\ex}, even if $\ex = 1$, when parameterizing by treewidth.
Hence, \CountAntiFactorSize{\ex} is most likely not fixed-parameter tractable.

Both bounds use the same two-step approach as
for the decision and optimization version;
we first show the hardness of an intermediate problem
which uses arbitrary relations
and then remove these relations
by a chain of reductions to obtain the actual lower bounds.

\subparagraph*{Parameterizing by the Maximum of the Set.}
We first show a lower bound for the intermediate problem \CountAntiFactorR{\Ex},
which is the counting version of \AntiFactorR{\Ex}.
Recall that we define
$\totalDeg = \max_{\text{bag } \bag{}} \sum_{v \in \bag{}\cap V_C} \deg(v)$
for a graph which contains relations.

\begin{lemma}[Lower Bound for \CountAntiFactorR{\Ex}]
	\label{lem:count:lbAntiFactorR}
	Let $\Ex \subseteq \SetN$ be a fixed, non-empty and finite set.
	%

	For every constant $\epsilon>0$, there is no algorithm that can solve \CountAntiFactorR{\Ex}
	in time $\Ostar{(\max \Ex+2-\epsilon)^{\pw}}$
	even if we are given a path decomposition of width $\pw$
	and $\totalDeg \in \O(\max\Ex)$,
	unless \#SETH fails.
\end{lemma}
We make use of \cref{lem:count:AntiFactorRToAntiFactor} to remove the relations.
We extend the definition of \AntiFactor{(\Ex,\Y)}
in the natural way
to the counting version \CountAntiFactor{(\Ex,\Y)}.
\begin{lemma}
	\label{lem:count:AntiFactorRToAntiFactor}
  Let $\Ex \subseteq \SetN$ be a finite set
  such that $\Ex \not\subseteq \{0\}$.
	Let $\Y \subseteq \SetN$ be arbitrary
	(possibly be given as input).

	There is a Turing reduction from \CountAntiFactorR{\Y}
	to \CountAntiFactor{(\Ex,\Y)}
	increasing the size from $n$ to $n \cdot f(\max\Ex)$,
	decreasing $\totalDeg$ to zero,
	and increasing $\pw$ to $\pw+\totalDeg \cdot f(\max \Ex)$.
\end{lemma}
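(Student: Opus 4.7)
\medskip
\noindent\textbf{Proof proposal for \cref{lem:count:AntiFactorRToAntiFactor}.}

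The plan is to follow the same two-step blueprint used for the decision and optimization versions (\cref{lem:dec:AntiFactorRToAntiFactor,lem:opt:AntiFactorRToAntiFactor}): replace every complex vertex of the input $H$ by a gadget built entirely from $\Ex$-labeled vertices, then glue these gadgets onto the untouched $\Y$-labeled portion. The crucial difference is that counting forces us to track exact multiplicities rather than mere existence. I would therefore construct, for each relation $R_v$, not a single realization but a parameterized family of gadgets whose signatures are polynomially independent, and recover the count of $R_v$-solutions by polynomial interpolation. Since the lemma allows a Turing reduction, we can afford to build polynomially many instances per relation and solve a linear system.

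First I would build exact counting versions of the elementary gadgets. Since $\Ex$ is finite and contains some $a\ge 1$, the set $\Exbar$ is cofinite and contains all sufficiently large integers, so one can realize ``$\HWeq[2]{2}$ with a fixed multiplicative constant'' by a construction analogous to the one in \cref{lem:dec:realization-hw11-eqk}, but now we compute the exact number of internal completions produced for each pinning of the dangling edges. Iterating this, I obtain $\EQ{k}$ and $\HWeq[k]{1}$ gadgets that realize the corresponding Boolean relation with a \emph{known, uniform} multiplicative factor per satisfying assignment. These are the counting analogues of \cref{lem:dec:realization-hw11-eqk,lem:relation:hw1-decision}, and the gadget sizes depend only on $\max\Ex$.

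Next, for an arbitrary relation $R_v\subseteq\SetB^{I(v)}$ I would construct, for each integer parameter $t\in[0,N]$ with $N=\poly(n)$, a gadget $\Gamma_v(t)$ attached to the incident edges of $v$. Each $\Gamma_v(t)$ realizes some signature $\sigma_{v,t}:\SetB^{I(v)}\to\SetN$ built from the counting $\EQ{k}$ and $\HWeq[k]{1}$ blocks above, with the additional parameter $t$ tuning, say, the number of parallel pendant paths of a chosen length. Replacing $v$ by $\Gamma_v(t)$ in $H$ yields an \AntiFactor{(\Ex,\Y)} instance $G_t$ whose solution count satisfies
\[
  \#G_t \;=\; \sum_{x\in\SetB^{I(v)}} \sigma_{v,t}(x)\,N_H(v,x),
\]
where $N_H(v,x)$ counts the partial solutions of $H\setminus\{v\}$ consistent with the pin pattern $x$ at $v$. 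If the matrix $(\sigma_{v,t}(x))_{t,x}$ is invertible over the rationals (which one ensures by choosing the parameter so that the signatures form a Vandermonde-type system in $t$), then polynomially many queries to \CountAntiFactor{(\Ex,\Y)} suffice to read off $N_H(v,x)$ for every $x$, and in particular $\sum_{x\in R_v}N_H(v,x)$. Doing this simultaneously for all complex vertices (or sequentially, reducing one vertex at a time) recovers $\#H$.

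The main obstacle, as in all Holant-style interpolations, is guaranteeing that the family $\{\sigma_{v,t}\}_t$ spans the full space of signatures on $I(v)$ so that the interpolation matrix is invertible; I expect to resolve this by exploiting that $\Exbar$ is cofinite, which provides enough distinct ``degree regimes'' in the gadget to realize pairwise-independent signatures. The parameter bookkeeping is straightforward: each gadget adds at most $f(\max\Ex)$ vertices per incident edge and its vertices can be placed into any bag already containing $v$, so the pathwidth grows by at most $\totalDeg\cdot f(\max\Ex)$ while the new instance contains no complex vertices, giving $\totalDeg=0$ as claimed.
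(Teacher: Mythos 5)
Your plan hinges on being able to construct exact $\HWeq[2]{2}$, $\EQ{k}$ and $\HWeq[k]{1}$ gadgets out of $\Ex$-labeled vertices ``by a construction analogous to the one in \cref{lem:dec:realization-hw11-eqk}.'' That construction requires $\maxgap(\Exbar)>1$: the $\HWeq[2]{2}$ clique trick and the subsequent $\EQ{d+1}$ gadget both need a gap of length at least two in the allowed-degree set to force a jump. But the lemma as stated only assumes $\Ex\not\subseteq\{0\}$, so it must cover sets like $\Ex=\{1\}$ or $\Ex=\{0,1\}$ where $\maxgap(\Exbar)\leq 1$ and your basic building blocks simply do not exist as unweighted gadgets. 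This is not a corner case: it is precisely the regime where \GenFac/\AntiFactor{\Ex} becomes polynomial-time \emph{decidable}, so there is no hope of realizing forcing relations parsimoniously, yet the counting problem is still hard. The paper resolves this by first introducing \emph{signed} unary signatures ($\wtnode{1,1}$, $\wtnode{0,1}$, and crucially the negative-weight $\wtnode{-1,1}$) in \cref{lem:count:forcingEdges}, with a three-way case split on $\maxgap(\Exbar)$ and on whether $\Ex$ is an initial segment, and then eliminates these weighted nodes by a dedicated interpolation in \cref{lem:count:wtnode-removal}. Your proposal never produces any non-$\{0,1\}$ signature, so it cannot cover those cases.

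A second, smaller gap: you assert that the interpolation matrix $(\sigma_{v,t}(x))_{t,x}$ will be invertible because the signatures ``form a Vandermonde-type system,'' but you give no argument. In the paper, the analogue is the claim $F_0F_2\neq F_1^2$ for some pendant count $z$, and proving it requires a nontrivial growth argument ($2^zQ_1(z)=Q_2(z)$ forces $Q_1\equiv 0$ for large $z$, which one then rules out by inspecting leading/constant terms). Moreover, you would need to make the per-vertex parameter $t$ shared across all complex vertices (or across all vertices with the same relation) to keep the number of oracle queries polynomial; done naively per vertex, the interpolation becomes multivariate in $\Theta(n)$ variables. Finally, the paper sidesteps the need to realize arbitrary parameterized signatures at a complex vertex altogether: it invokes the known Holant reduction (Lemmas~7.5--7.6 of \cite{MarxSS21}) to replace all relations by $\HWeq{1}$ forcing nodes first, so that the only gadget one must interpolate away is a unary node with a fixed signature, for which the degree-$\abs{U}$ univariate interpolation of \cref{lem:count:wtnode-removal} suffices.
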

Combining \cref{lem:count:lbAntiFactorR,lem:count:AntiFactorRToAntiFactor},
we can prove the first lower bound for the counting version.
\begin{theorem}[Lower Bound for Counting Version I]
  \label{thm:count:lbAntiFactor}
  Let $\Ex \subseteq \SetN$ be a finite and fixed set
  such that $\Ex \not\subseteq \{0\}$.
	For every constant $\epsilon>0$, there is no algorithm that can solve \CountAntiFactor{\Ex}
	in time $\Ostar{(\max \Ex+2-\epsilon)^\pw}$
	even if we are given a path decomposition of width $\pw$,
	unless \#SETH fails.
\end{theorem}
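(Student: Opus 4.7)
The plan is to chain the two lemmas immediately preceding the statement, exactly as we combined the analogous pairs to obtain \cref{thm:dec:lbAntiFactor} in the decision setting and \cref{thm:opt:lbAntiFactor} in the maximization setting. The starting point is \cref{lem:count:lbAntiFactorR}, which already gives a $(\max\Ex+2-\eps)$-base \#SETH-hardness for \CountAntiFactorR{\Ex} with a slack of $f_\Ex(\totalDeg)$ in the exponent that depends only on $\Ex$ and the degree parameter. The role of \cref{lem:count:AntiFactorRToAntiFactor} is then to strip away the relations while fitting the resulting overhead into precisely this slack. The hypothesis $\Ex\not\subseteq\{0\}$ in the theorem is exactly what is needed to invoke the reduction with $\Y=\Ex$, and the identification \CountAntiFactor{(\Ex,\Ex)}=\CountAntiFactor{\Ex} means the target problem already matches.

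Concretely, suppose for contradiction that for some $\eps>0$ there is an algorithm $\Alg$ solving \CountAntiFactor{\Ex} in time $\Ostar{(\max\Ex+2-\eps)^{\pw}}$. Given an arbitrary \CountAntiFactorR{\Ex} instance $H$ of size $n_H$ with a path decomposition of width $\pw_H$ and parameter $\totalDeg_H$, I would apply \cref{lem:count:AntiFactorRToAntiFactor} with $\Y=\Ex$, obtaining polynomially many \CountAntiFactor{\Ex} instances $G$ with $n_G\le n_H\cdot f(\max\Ex)$ and $\pw_G\le \pw_H+\totalDeg_H\cdot f(\max\Ex)$. Running $\Alg$ on each such $G$ costs at most
\[
  (\max\Ex+2-\eps)^{\pw_G}\cdot n_G^{\O(1)}
  \le (\max\Ex+2-\eps)^{\pw_H+f'_\Ex(\totalDeg_H)}\cdot n_H^{\O(1)}
\]
for a suitable function $f'_\Ex$ depending only on $\Ex$, which absorbs the additive exponent blow-up together with the polynomial factors in $f(\max\Ex)$. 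Summing this cost over the polynomial number of oracle queries preserves the same asymptotic bound, which directly contradicts \cref{lem:count:lbAntiFactorR} and hence \#SETH.

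I do not expect any substantive obstacle here: the two lemmas are stated in shapes deliberately designed to slot together, and the fact that \cref{lem:count:AntiFactorRToAntiFactor} is a Turing (rather than many-one) reduction does no damage because each individual oracle call still inherits the forbidden exponential bound. The only thing requiring care is the routine book-keeping needed to absorb $\totalDeg_H\cdot f(\max\Ex)$ into a new function $f'_\Ex$ of $\totalDeg_H$ alone, and this is precisely what the free parameter $f_\Ex$ in \cref{lem:count:lbAntiFactorR} was introduced to accommodate. The excluded case $\Ex=\{0\}$, corresponding to \countECover, requires a fundamentally different argument and is therefore treated separately in \cref{sec:lower:edge-cover}.
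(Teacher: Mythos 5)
Your proposal is correct and follows essentially the same route as the paper's proof: apply \cref{lem:count:AntiFactorRToAntiFactor} with $\Y=\Ex$ to the given \CountAntiFactorR{\Ex} instance, plug the resulting bounds $n_G\le n_H\cdot f(\max\Ex)$ and $\pw_G\le\pw_H+\totalDeg_H\cdot f(\max\Ex)$ into the hypothetical algorithm's running time, and absorb the overhead into the free function $f_\Ex$ of \cref{lem:count:lbAntiFactorR}. Your explicit remark that the Turing reduction produces polynomially many oracle calls (and that this is harmless) is a small clarification the paper leaves implicit, but it does not constitute a different argument.
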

\begin{proof}
  For a given \CountAntiFactorR{\Ex} instance $H$
	with $\totalDeg_H \in \max\Ex$
  we apply \cref{lem:count:AntiFactorRToAntiFactor} with $\Ex=\Y$
  to obtain a \CountAntiFactor{\Ex} instance $G$.

  We know $n_G \le n_H \cdot f(\max \Ex)$,
	and $\pw_G \le \pw_H + \totalDeg_H \cdot f(\max \Ex)$.
  Assume the claimed algorithm exists, for some $\epsilon >0$,
	and run it on this instance:
  \begin{align*}
    (\max\Ex+2-\epsilon)^{\pw_G} \cdot n_G^{\O(1)}
    &\le (\max\Ex+2-\epsilon)^{\pw_H + \totalDeg_H \cdot f(\max \Ex)} \cdot (n_H \cdot f(\max \Ex))^{\O(1)} \\
    &\le (\max\Ex+2-\epsilon)^{\pw_H + \totalDeg_H \cdot f(\max \Ex) + f'(\max \Ex)} \cdot n_H^{\O(1)}
  \end{align*}
	By our assumption, we have $\totalDeg_H \in \O(\max\Ex)$.
	Since $\Ex$ is fixed,
	we can treat $\max\Ex$ as a constant.
	Hence, the term
	$\totalDeg_H \cdot f(\max \Ex) + f'(\max \Ex)$ in the exponent
	contributes only a constant.
	Thus, the running time directly contradicts \#SETH
	by \cref{lem:count:lbAntiFactorR}.
\end{proof}

\subparagraph*{Parameterizing by the Size of the Set.}
If we do not fix the set $\Ex$ but only the \emph{size} of the set,
the decision and optimization version of \AntiFactorSize{\ex}
are still FPT parameterized by treewidth.
For \CountAntiFactorSize{\ex} the following result
conditionally rules out such algorithms.
\begin{lemma}
	\label{lem:count:lbCount1AntiFactorR}
  There exists a constant $c$
  such that there
  is no $\O(n^{p-c})$ algorithm for \CountAntiFactorSizeR{1} on
  $n$-vertex graphs
	with $\totalDeg\in\O(1)$,
	even if only one set is used
	and
	we are given a path decomposition of width $p$,
	unless \#SETH is false.
\end{lemma}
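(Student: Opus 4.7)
The plan is to reduce from a \#SETH-hard counting problem in such a way that the natural $n^{p+O(1)}$ dynamic program for \CountAntiFactorSizeR{1} cannot be improved to $n^{p-c}$ for a sufficiently large constant $c$. Observe that the natural algorithm tracks, for each simple vertex in a bag, its current partial degree (up to $n+1$ possible values), while each complex vertex contributes only $2^{O(1)}$ states thanks to the $\totalDeg \in O(1)$ constraint; this yields a total running time of $(n+1)^{p} \cdot \poly(n)$. The goal is to exhibit a reduction exploiting precisely this state space.

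I would reduce from \#$k$-\textsc{Clique}, which is known to require time $n^{k(1-o(1))}$ under \#SETH. Given an $n$-vertex graph $H$ and a parameter $k$, the idea is to construct a \CountAntiFactorSizeR{1} instance together with a path decomposition of width $p = k + O(1)$ on $N = \poly(n,k)$ vertices. For each position $i \in [k]$ of the clique, introduce a \emph{selector} simple vertex $w_i$ with singleton forbidden set $\Ex_{w_i} = \{0\}$, together with $n$ pendant optional-edge gadgets, so that $\deg(w_i)$ can be any value in $\{1,\dots,n\}$. The degree $\deg(w_i) = j$ is identified with the choice of the $j$th vertex of $H$ as the $i$th clique member.

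Next, for each pair $\{i,j\} \subseteq [k]$, add a complex vertex (or a small subgadget of complex vertices) whose relation is satisfied precisely when the joint degrees of $w_i$ and $w_j$ correspond to an edge of $H$. Since each pair-check only needs to observe two simple vertices, and this observation can be routed through constant-arity intermediary gadgets, the total degree at each complex vertex remains $O(1)$, preserving $\totalDeg \in O(1)$. The path decomposition is arranged so that $w_1,\dots,w_k$ appear in every bag while the pair-check gadgets are introduced and forgotten sequentially, giving pathwidth $k + O(1)$. A hypothetical $O(N^{p-c})$ algorithm for \CountAntiFactorSizeR{1} would then yield an algorithm for \#$k$-\textsc{Clique} running in time $\poly(n,k)^{k + O(1) - c} = n^{k - \Omega(c)}$, contradicting the \#SETH-based lower bound once $c$ exceeds the hidden constant in the reduction.

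The main obstacle is the design of the pair-check gadget under the $|\Ex_v| \le 1$ restriction. Because a simple vertex can have up to $n$ different degree states while each complex vertex must have only $O(1)$ incident edges (to respect $\totalDeg \in O(1)$), we cannot afford to wire many edges from a pair-check vertex into $w_i$. Instead, the degree of $w_i$ must be ``read'' through a pipeline of intermediary simple and complex vertices that serialize the read-out without either inflating $\totalDeg$ or adding more than $O(1)$ to the pathwidth. Engineering such a read-out using only a single forbidden degree per simple vertex, while preserving the exact count of solutions, is the key technical challenge; routine care must also be taken to multiply the final count by a fixed, easily computable factor accounting for symmetries and slack in the selector gadgets.
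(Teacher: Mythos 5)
Your reduction starts from the wrong source problem. You claim that \#$k$-\textsc{Clique} ``is known to require time $n^{k(1-o(1))}$ under \#SETH,'' but this is false: $k$-\textsc{Clique} (and its counting version via standard tensor tricks) is solvable in time $n^{\omega k/3 + O(1)}$ using fast matrix multiplication, where $\omega < 2.38$, so no $n^{k-o(1)}$ lower bound can hold for it under any plausible hypothesis. The paper instead reduces from \#\ColHS (Counting Colourful Hitting $k$-Sets), a problem engineered by Curticapean and Marx precisely so that it \emph{does} admit an $n^{k-o(1)}$ lower bound under \#SETH (restated here as \cref{lem:count:colHS-lb}), and which is simultaneously \sharpW{1}-hard. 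Swapping in the right source problem is not cosmetic: the combinatorial structure of \#\ColHS (a conjunction of membership tests over $m$ sets, each of which can be processed in a column) is exactly what makes the grid-shaped gadget and the pathwidth-$(k+O(1))$ sweep possible.

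The gap you flag as ``the main obstacle'' -- how a constant-degree complex vertex can ``read'' a simple vertex whose degree ranges over $n$ values -- is genuine and is not resolved by a ``routine'' pipeline. Keeping all $k$ selector vertices $w_i$ in every bag means each $w_i$ must have $\Theta(n)$ incident edges whose selection encodes the choice, and any pair-check that inspects $w_i$'s degree must either attach many edges to it (blowing up $\totalDeg$) or get the information some other way. The paper sidesteps this entirely with a different global architecture: a $k\times m$ grid of cells $F_i^j$, where the choice of element from $X_i$ is encoded by the number of selected edges in a bundle of $n-1$ parallel edges between $F_i^j$ and $F_i^{j+1}$ (using $\Ex=\{n-1\}$, which is legal since only $|\Ex|\le 1$ is required, not that $\Ex$ is fixed), and each cell passes a single vertical ``hit'' bit down its column. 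No vertex sits in every bag; the pathwidth is $k+O(1)$ because a left-to-right sweep of the grid needs to hold only the $k$ horizontal bundles currently being crossed, plus $O(1)$ extra. Each cell is further decomposed into constant-arity relation nodes (the four stages $M, R_0/R_W, C^0/C^+, M'$), which is how $\totalDeg\in O(1)$ is maintained, and weighted relations are introduced and then removed by interpolation via the chain in \cref{lem:count:VertexWeightedToEdgeWeightedAntiFactorR,lem:count:WeightedAntiFactorRToAntiFactorR}. So both the starting problem and the gadget topology need to change before this line of attack can work.
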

Combined with \cref{lem:count:AntiFactorRToAntiFactor} to remove the relations,
we get the hardness result in \cref{thm:count:lbAntiFactorBySize}.
\begin{theorem}[Lower Bound for Counting Version II]
	\label{thm:count:lbAntiFactorBySize}
  There exists a constant $c$
  such that there is no $\O(n^{p-c})$ algorithm
	for \CountAntiFactorSize{1}
	on $n$-vertex graphs,
  even if we are given a path decomposition of width $p$,
  unless \#SETH is false.
\end{theorem}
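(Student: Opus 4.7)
The plan is to combine the two lemmas stated immediately before the theorem, following the same two-step blueprint the paper uses for all of its other lower bounds. First, I would invoke \cref{lem:count:lbCount1AntiFactorR} to obtain, for some constant $c_0$, an $\O(n^{p-c_0})$ lower bound (conditional on \#SETH) for the intermediate relational problem \CountAntiFactorSizeR{1}, restricted to instances where only a single list $\Y$ with $\abs{\Y}\le 1$ is used and the total degree of complex vertices per bag satisfies $\totalDeg \in \O(1)$. Next, I would apply \cref{lem:count:AntiFactorRToAntiFactor} with this $\Y$ and with $\Ex$ chosen to be any fixed singleton satisfying $\Ex \not\subseteq \{0\}$, for instance $\Ex = \{1\}$. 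The reduction replaces every complex vertex by a constant-size gadget of simple vertices labeled with $\Ex$; since both $\Ex$ and $\Y$ are lists of size at most one, the reduced \CountAntiFactor{(\Ex,\Y)} instance is a legitimate \CountAntiFactorSize{1} instance.

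For the quantitative accounting, \cref{lem:count:AntiFactorRToAntiFactor} guarantees that the number of vertices increases by a multiplicative constant $c_1 \deff f(\max\Ex)$ and the pathwidth by an additive term $c_2 \deff \totalDeg \cdot f(\max\Ex)$. Both $c_1$ and $c_2$ are absolute constants because $\Ex$ is fixed and $\totalDeg \in \O(1)$. Suppose towards a contradiction that for every constant $c$ there is an $\O(n^{p-c})$ algorithm for \CountAntiFactorSize{1}. Applied to the reduced instance, such an algorithm would solve the original \CountAntiFactorSizeR{1} instance in time $\O((c_1 n_H)^{p_H + c_2 - c})$. Since $c_1$ and $c_2$ are absolute constants, choosing $c$ strictly larger than $c_0 + c_2$ compresses this bound into the $\O(n_H^{p_H - c_0})$ regime ruled out by \cref{lem:count:lbCount1AntiFactorR}, giving the desired contradiction with \#SETH.

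The main obstacle, and the reason we need the $\totalDeg \in \O(1)$ restriction in \cref{lem:count:lbCount1AntiFactorR}, is to ensure that the pathwidth blowup produced by the relation-removal reduction is bounded by an absolute constant rather than a quantity that grows with the instance. Without this restriction, the additive term $\totalDeg \cdot f(\max\Ex)$ in $p_G$ would no longer be a constant, and a single shift of the exponent by a constant $c$ in the hypothetical algorithm would be insufficient to absorb it. With the total-degree bound in place, however, both the multiplicative $n$-blowup and the additive $p$-blowup are fixed constants depending only on $\Ex$, and the standard argument for transferring $\O(n^{p-c})$-style \#SETH lower bounds through a reduction of this shape carries over without modification. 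The theorem follows from this chain of inequalities, exactly mirroring how \cref{thm:count:lbAntiFactor} is derived from \cref{lem:count:lbAntiFactorR,lem:count:AntiFactorRToAntiFactor}.
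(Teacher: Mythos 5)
Your proposal is correct and follows essentially the same argument as the paper: invoke \cref{lem:count:lbCount1AntiFactorR} for the relational intermediate problem, then eliminate the relations via \cref{lem:count:AntiFactorRToAntiFactor} with a fixed singleton $\Ex$ (the paper uses $\Ex=\{2\}$, you use $\Ex=\{1\}$, both satisfying $\Ex\not\subseteq\{0\}$), and absorb the constant multiplicative $n$-blowup and constant additive pathwidth-blowup into the choice of $c$. Your explanation of why the $\totalDeg\in\O(1)$ hypothesis is essential matches the implicit role it plays in the paper's proof.
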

\begin{proof}
	Let $G$ be a given \CountAntiFactorSizeR{1} instance
	which uses just one set $\Y \subseteq \SetN$ with $\abs{\Y}=1$.
	Use \cref{lem:count:AntiFactorRToAntiFactor} with $\Ex=\{2\}$
	to transform $G$
	into a \CountAntiFactorSize{1} instance $H$.

	Now assume the claim is false,
	i.e., for all $c>0$, there is a $\O(n_H^{\pw_H-c})$ algorithm.
	Run this algorithm on the new instance $H$:
	\begin{align*}
		\O(n_H^{\pw_H-c})
		\le \O(n_G^{\pw_G+\O(1)-c})
		\le \O(n_G^{\pw-(c-\O(1))})
		\le \O(n_G^{\pw-c'}).
	\end{align*}
	Hence, for all $c'$, there is an $\O(n_G^{\pw_G-c'})$ algorithm
	for \CountAntiFactorSizeR{1}.
	This contradicts \#SETH by \cref{lem:count:lbCount1AntiFactorR}.
\end{proof}
We prove \cref{lem:count:lbCount1AntiFactorR} by a
reduction from the \sharpW{1}-hard problem
\textsc{Counting Colorful Hitting $k$-Sets}.
Hence,
the following result holds by applying \cref{lem:count:AntiFactorRToAntiFactor} as before.
\begin{theorem}
	\label{thm:count:w1}
  \CountAntiFactorSize{1} is \sharpW{1}-hard.
\end{theorem}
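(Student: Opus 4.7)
The plan is to compose two FPT-reductions that are essentially already available in the development above. First, I observe that the proof of \cref{lem:count:lbCount1AntiFactorR} is not merely a SETH-style lower bound obtained by ad hoc means: the reduction underpinning it actually starts from \#Colourful Hitting $k$-sets, which is well known to be \sharpW{1}-hard parameterized by $k$. The construction of that lemma produces a \CountAntiFactorSizeR{1} instance whose pathwidth is a computable function $g(k)$ of the source parameter, whose $\totalDeg$ is a constant, and which uses just a single list $\Y$. Read through this parameterized lens, the construction already establishes \sharpW{1}-hardness of \CountAntiFactorSizeR{1} parameterized by pathwidth.

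Second, I would compose this with \cref{lem:count:AntiFactorRToAntiFactor}, applied with a fixed set $\Ex$ satisfying $\Ex\not\subseteq\{0\}$, for instance $\Ex=\{2\}$ as in the preceding theorem. Because $\Ex$ is fixed, $f(\max\Ex)$ is a constant, so the reduction increases size by a constant factor and pathwidth by at most $\totalDeg\cdot f(\max\Ex)$. Combined with the $\totalDeg\in\O(1)$ bound from the first step, the final \CountAntiFactorSize{1} instance has pathwidth at most $g(k)+\O(1)$, still a computable function of $k$, and the full pipeline runs in FPT time in $k$. Since Turing reductions between counting problems preserve \sharpW{1}-hardness, the composition gives a parameterized reduction from a \sharpW{1}-hard problem, yielding the claim.

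The only point that needs genuine care, and which I expect to be the main obstacle, is verifying that the reduction inside \cref{lem:count:lbCount1AntiFactorR} is indeed a parameterized reduction in the source parameter $k$: both that the construction runs in time $f(k)\cdot n^{\O(1)}$ and that the output pathwidth is bounded by a computable function of $k$ alone, not depending on $n$. This is morally forced by the shape of the SETH-style lower bound $\O(n^{\pw-c})$ stated in \cref{lem:count:lbCount1AntiFactorR}, which requires an essentially linear embedding of the colour set into the pathwidth; but a careful pass through the gadget construction used there is required to pin down the exact dependence. Once this is confirmed, no new gadgets or counting arguments are needed beyond the machinery already in place, and the proof is complete by composition.
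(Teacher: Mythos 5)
Your proposal is essentially identical to the paper's own argument: the paper explicitly notes that the proof of \cref{lem:count:lbCount1AntiFactorR} proceeds via \cref{lem:count:colHStoCountAntiFactorR}, which is a parameter-preserving reduction from the \sharpW{1}-hard \#\ColHS problem producing an instance of pathwidth $k+\O(1)$ with constant $\totalDeg$, and then removes relations via \cref{lem:count:AntiFactorRToAntiFactor} with a fixed singleton set. The caveat you flag about verifying parameter-preservation is addressed directly by the claim in the construction that the pathwidth is $k+\O(1)$, so your proof is correct and follows the same route as the paper.
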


\subsection{High-level Construction for SETH Lower Bound}
\label{sec:lower:count:seth}

\begin{figure}
	\centering
	\centering
\begin{tikzpicture}[
problem/.style={rectangle, draw=black, text=black, minimum height=0.75cm,
minimum width=4.5cm
},
break/.style = {
  align=center,
  text width= 5cm,
},
red/.style
  = {draw=black, -latex,thick},
lem/.style={text=black,midway}
]
  \node[problem,break,thick] (BFactor) at (0, 0)
    {\CountBFactor};

  \node[problem, break] (simpleMinAntiFactor) [below =of BFactor]
    {\CountMinAntiFactor{[0,\ex]}};

  \node[problem, break] (simpleAntiFactor) [right=3cm of simpleMinAntiFactor]
        {edge-weighted \CountAntiFactor{[0,\ex]}};

  \node[problem, break] (relWeightedAntiFactor) [below=of simpleAntiFactor]
        {edge- and relation weighted \CountAntiFactorR{\Ex}};

  \node[problem, break] (edgeWeightedAntiFactor) [below=of relWeightedAntiFactor]
        {edge-weighted\\\CountAntiFactorR{\Ex}};

  \node[problem,break,thick] (AntiFactorR) at (BFactor |- edgeWeightedAntiFactor)
        {\CountAntiFactorR{\Ex}};

  \draw[red] ($(BFactor.north) + (0,0.75cm)$) -- (BFactor)
    node[lem, right]
    {\cref{thm:factor-tw-results}};

  \draw[red] (BFactor) -- (simpleMinAntiFactor)
    node[lem, right]
    {\cref{lem:count:lbSimpleMinAFR}};
  
  \draw[red] (simpleMinAntiFactor) -- (simpleAntiFactor)
    node[lem, above]
    {\cref{lem:count:minSimpleAFRtoSimpleAFR}};
  
  \draw[red] (simpleAntiFactor) -- (relWeightedAntiFactor)
    node[lem, right]
    {\cref{lem:count:simpleAFRtoRelWeightedAFR}};
  
  \draw[red] (relWeightedAntiFactor) -- (edgeWeightedAntiFactor)
    node[lem, right]
    {\cref{lem:count:VertexWeightedToEdgeWeightedAntiFactorR}};
  
  \draw[red] (edgeWeightedAntiFactor) -- (AntiFactorR)
    node[lem, above]
    {\cref{lem:count:WeightedAntiFactorRToAntiFactorR}};
  
  \draw[red,dotted] (simpleMinAntiFactor) -- (AntiFactorR)
    node[lem, right]
    {\cref{lem:count:lbAntiFactorR}};

  \draw[red] (AntiFactorR) -- ($(AntiFactorR.south) - (0,0.5cm)$)
    node[lem, below, pos=1]
    {\cref{fig:count:realization:chain}};

\end{tikzpicture}
	\caption{
  The reduction in the proof of \cref{lem:count:lbAntiFactorR},
  i.e., the lower bound of \CountAntiFactorR{\Ex},
	follows from a chain of reduction
  where, for some $\ex \ge 0$, we have $B = \{\ex+1\}$
  and $\Ex \subseteq \SetN$ is a finite and non-empty set with $\max\Ex = \ex$.
	The combined reduction is shown by a dotted line.
  \\  
  For each problem on the left side a lower bound is formally proved,
  while for the problems on the right only reductions are stated.
	}
	\label{fig:count:lower:chain}
\end{figure}

We show the hardness of \CountAntiFactorR{\Ex},
i.e., \cref{lem:count:lbAntiFactorR},
by a reduction from \CountBFactor.
Recall, for the decision and optimization version
we added some gadgets between the (simple) vertices and their neighbors.
For the counting version we use a different approach
which only attaches gadgets to the (simple) vertices
without altering the remaining graph structure.
As this modification is not directly possible,
the reduction splits into multiple steps.
See \cref{fig:count:lower:chain} for an illustration of these steps.
\begin{itemize}
  \item
  The first step makes the transition from a finite set of allowed degrees
  (i.e., the \CountBFactor problem)
  to a simple cofinite set of allowed degrees
  (i.e., \CountAntiFactor{[0,\ex]} for some $\ex\ge0$).
  Formally, we show in \cref{lem:count:lbSimpleMinAFR} a lower bound for
  \CountMinAntiFactor{\Ex} with $\Ex=[0,\ex]$ and $\ex\ge0$.

  The \CountMinAntiFactor{\Ex} problem is the variant of \CountAntiFactor{\Ex}
  where we only count solutions of \emph{minimum} size.

  \item
  In the second step we remove the requirement of counting only minimum solutions.
  In \cref{lem:count:minSimpleAFRtoSimpleAFR}
  we use interpolation techniques to reduce from \CountMinAntiFactor{\Ex}
  to the \emph{edge-weighted}
  \CountAntiFactor{\Ex} problem where $\Ex=[0,\ex]$ with $\ex\ge0$.

  For the edge-weighted version of the problem
  the edges are assigned weights by which they contribute to the solution
  if they are selected.
  The weight of a solution is the product
  of the weights of the selected edges.
  The output of the problem is the sum of all weighted solutions.
  The unweighted problem can be seen as assigning weight 1 to all edges.

  \item
  As a third step we transition from simple cofinite sets of allowed degrees
  to arbitrary cofinite sets of allowed degrees.
  This step introduces relations,
  a technique we have already seen in earlier problems.
  Formally, \cref{lem:count:simpleAFRtoRelWeightedAFR}
  shows a reduction from the problem of the previous step
  to the \emph{edge- and relation-weighted} \CountAntiFactorR{\Ex} problem. 

  In this problem each accepted input of a relation is assigned a (rational) weight.
  Then, an accepted input contributes by this weight to the solution.
  The weight of the solution is the product of the weights of the relations
  multiplied by the weights of the selected edges.
  The problem without relation-weights can be obtained
  by assigning weight 1 to all accepted inputs
  and weight 0 to all rejected inputs.

  \item
  The last two steps are dedicated to removing the weights
  from the relations and from the edges.
  We remove the relation-weights
  in \cref{lem:count:VertexWeightedToEdgeWeightedAntiFactorR}
  by using additional edge-weights.

  \item
  The last step in \cref{lem:count:WeightedAntiFactorRToAntiFactorR}
  removes these edge-weights by an appropriate Turing-reduction
  combined with known techniques for the interpolation
  of multivariate polynomials.
\end{itemize}

\subparagraph*{The Holant Framework.}
We use the Holant framework (cf.\ \cite{CaiHL10,CaiLX11,GuoL17,HuangL16,KowalczykC16,Lu11})
to contextualize several versions of \CountAntiFactorSize{}.
The framework was already extensively used in proving the \#SETH lower bounds
for \textsc{Counting Perfect Matchings} \cite{CurticapeanM16}
and \textsc{Counting General Factors} \cite{MarxSS21}.

In the Holant framework, we are given a signature graph
$\Omega=(V,E)$, where every edge $e\in E$ has a weight $w_e$.
Every vertex $v\in V$ is labeled with a
signature $f_v\from\{0,1\}^{I(v)}\to \mathbb{Q}$, where $I(v)$ is the incidence
vector of edges incident to $v$.
A \emph{solution} is a subset of edges such that the
signature for each vertex is non-zero.
The weight of a solution is the product of the weights of all edges
in the solution multiplied by the product of the signatures of all the vertices.
Then, the Holant of $\Omega$ is defined as the sum
of the weights of all solutions.
\[
\Hol{\Omega}=\sum_{x\in \{0,1\}^{E(\Omega)}} \prod_{e\in x} w_e\prod_{v\in V(\Omega)}f_v(x|_{I(v)}).
\]
The Holant problem is easily seen to be a weighted generalization of
the counting versions of \GenFac and \AntiFactorSize{}.
For example,
the problem \CountAntiFactor{\Ex} is a Holant problem
on unweighted graphs where every vertex has the following symmetric relation.
\[
f(z) = \begin{cases}
	1 & \text{ if } \hw(z)\not\in \Ex\\
	0 & \text{ if } \hw(z)\in \Ex
\end{cases}
\]
where $\hw(\cdot)$ is the Hamming weight operator. We call
vertices with such functions to be $\HWin{\Exbar}$ nodes.

For relations $R_1, \ldots, R_k$,
we define $\Hol{R_1,\ldots,R_k}$ to be the set of Holant problems
where every edge is unweighted
and every vertex has signature $R_j$, for some $j\in [k]$.
By an abuse of notation also let $R_j$ be a \emph{family} of
relations. For example, we may use $\Hol{\HWeq{1}}$ when
every vertex has relation \HWeq[k]{1}, for some $k$.

The \emph{edge- and relation-weighted} version of \CountAntiFactorR{\Ex}
corresponds to a variant of the Holant problem
where we have the signature $\HWin{\Exbar}$ for all simple vertices
not having a specified signature.

The \emph{edge-weighted} version of \CountAntiFactorR{\Ex}
corresponds to a variant of the Holant problem
where we additionally require that the value of the signatures is either 0 or 1,
i.e., they only accept or reject.

\subparagraph*{The Chain of Reductions.}
We prove the lower bound in \cref{lem:count:lbSimpleMinAFR}
by a reduction from \CountBFR where $B=\{\max\Ex+1\}$.
Observe that a direct reduction to \CountAntiFactor{\Ex} is non-trivial
although $B \subseteq \Exbar$:
Let $H$ be a given \CountBFR instance.
When treating $H$ as a \CountAntiFactorR{\Ex} instance $G$,
each solution of $H$
is also a solution for $G$
because of our choice of $B$.
However, the converse is not true.
Indeed, as $\Ex$ is finite (and thus the set of allowed degree is cofinite),
the degree of the solutions for $G$
can be larger than $\max\Ex+1$.
We avoid this issue by only counting solutions which have minimum size.
Then, no vertex can have degree larger than $\max\Ex+1$ in the solution.
\begin{lemma}
	\label{lem:count:lbSimpleMinAFR}
	Let $\Ex = [0,\ex-1]$ be a fixed set for some $\ex \ge 1$.
	For every constant $\epsilon>0$,
	there is no algorithm that can,
	even if we are given a path decomposition of width $\pw$,
	solve \CountMinAntiFactor{\Ex}
	in time $\Ostar{(\ex+1-\epsilon)^{\pw}}$,
	unless \#SETH fails.
\end{lemma}
\begin{proof}
  Assume the claimed algorithm for \CountMinAntiFactor{\Ex} exists.
  We use this algorithm to give a faster algorithm for \CountBFactor
  where $B \deff \{\max\Ex+1\} = \{\ex\}$
  to contradict \#SETH by \cref{thm:factor-tw-results}.

  Let $H$ be an instance of \CountBFactor.
  Use \cref{thm:factor-tw-results} to check if $H$ has a solution,
  i.e., an $\ex$-factor.
  If $H$ has no such solution, then return $0$ as solution.
  Otherwise,
  we replace all vertices in $H$ with set $B$ (of allowed degrees)
  by vertices with the set $\Ex$ (of forbidden degrees)
  to obtain an instance $G$ of \CountMinAntiFactor{\Ex}.
  Then, we apply the algorithm for \CountMinAntiFactor{\Ex}
  and return the number of solutions.

  We first analyze the runtime of the algorithm for \CountBFactor.
  By \cref{thm:factor-tw-results},
  it can be checked in time 
  $\Ostar{(\max B +1)^{\pw}}=\Ostar{(\ex+1)^\pw}$
  if $H$ has a solution.
  Given $H$, the construction of instance $G$
  takes time polynomially in the size of $H$.
  Hence, the claimed algorithm for \CountMinAntiFactor{\Ex}
  directly contradicts \cref{thm:factor-tw-results}
  assuming the above algorithm is correct.

  It remains to argue that the above procedure correctly solves \CountBFactor.
  By definition, we have $B \subseteq \SetN\setminus\Ex$
  and thus, every solution for $H$ is also a solution for $G$.
  Hence, we get
  \[
    \#\operatorname{Sol}(H) \le \#\operatorname{Sol}(G)
  \]
  To show the converse of this inequality,
  we can assume that $H$ has a solution $S_0$
  (the algorithm checked this initially).
  Consider a minimum solution $S$ of $G$
  that is not an $\ex$-factor (the other case is trivial).
  Since $\Ex=[0,\ex-1]$,
  there must be some vertex $v\in V(G)$
  such that $\deg_S(v) > \ex$.
  We get
  \[
    2 \abs{S} = \sum_{v\in V(G)} \deg_S(v) > \abs{V(G)} \cdot \ex
    .
  \]
  Recall, we assumed that $H$ has a solution $S_0$
  and moreover, $S_0$ is an $\ex$-factor.
  Using that all vertices are incident to exactly $\ex$ selected edges in $S_0$,
  we have
  \[
    2\abs{S_0}=\sum_{v\in V(H)} \deg_{S_0}(v)=\abs{V(H)} \cdot \ex
    .
  \]
  Hence, $S$ is of larger size than $S_0$
  which contradicts the assumption that $S$ is a minimum solution for $G$.

  Therefore, there is no vertex $v$ in $S$ with $\deg_S(v) > \ex$
  which proves that $S$ is a solution for $H$
  and we get
  \[
    \Slns(G)
    \le
    \Slns(H)
  \]
  which concludes the proof.
\end{proof}
As a next step we remove the restriction of counting only minimum solutions.
For this we introduce edge-weights
and compute the (weighted) number of solutions.
Then, we recover the number of solutions for the original instance
by polynomial interpolation techniques.

\begin{lemma}
  \label{lem:count:minSimpleAFRtoSimpleAFR}
  Let $\Ex =[0,\ex-1] \subseteq \SetN$ be a finite and non-empty set
  for some $\ex \ge 1$.
  There is a Turing-reduction from \CountMinAntiFactor{\Ex}
  to edge-weighted \CountAntiFactor{\Ex}
  with a single edge-weight
  running in time $n^{\O(1)}$
  without changing the pathwidth of the graph.
\end{lemma}
\begin{proof}
  Let $G$ be an instance of \CountMinAntiFactor{\Ex}.
  For a positive integer $w$,
  we define the instance $G_w$ of edge-weighted \CountAntiFactor{\Ex}
  as the modification of $G$, where we put weight $w$
  on every edge of $G$.

  For all $i\in [0,m]$, where $m$ denotes the number of edges of $G$,
  we denote by $a_i$ the number of solutions of $G$
  with respect to \CountAntiFactor{\Ex}
	that have size exactly $i$.
  Note, $a_i$ includes solutions that do not have minimum size.

  With this notation, we can rewrite the number of solutions of $G_w$ as
  \[
    \Slns(G_w) = \sum_{i=0}^m a_i w^i
    .
  \]
  If we treat $w$ as a variable,
  $\Slns(G_w)$ can be seen as a polynomial in $w$
  of degree at most $m$.

  We use the oracle-access to \CountAntiFactor{\Ex}
  to compute $\Slns(G_w)$ for $m+1$ different values of $w$.
  Then we can easily, i.e., in time polynomially in $m$,
  recover the coefficients of $\Slns(G_w)$.
  With $i^* = \min \{ i \in [0,m] \mid a_i\neq 0\}$
  we get that $G$ has $a_{i^*}$ solutions
  with respect to \CountMinAntiFactor{\Ex}.

  Since the construction of $G_w$ can be done in polynomial time
  and recovering the $a_i$s also takes polynomial time,
  this finishes the proof.
\end{proof}
The next step now drops the requirement on $\Ex$ to be a contiguous interval.
Instead arbitrary finite sets are allowed,
which results in arbitrary cofinite sets of allowed degrees.
However, to remove this constraint we have to make use of (weighted) relations.

\begin{lemma}
  \label{lem:count:simpleAFRtoRelWeightedAFR}
  Let $\Ex \subseteq \SetN$ be a fixed, non-empty and finite set.
  There is a many-one reduction from
  edge-weighted \CountAntiFactor{[0,\max\Ex]} with $r$ weights
  to edge- and relation-weighted \CountAntiFactorR{\Ex}
  with at most $r+\max\Ex+2$ weights
  such that
  the size increases by a constant factor,
  $\totalDeg$ of the resulting instances is $\max\Ex+1$
  and pathwidth increases only by 1.
\end{lemma}
\begin{figure}
	\begin{subfigure}[t]{0.25\textwidth}
		\centering
		\includegraphics[page=3]{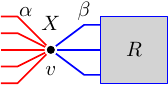}
		\caption{The simple vertex before the modification.}
		\label{fig:count:BFactorRToWeightedAntiFactorR:original}
	\end{subfigure}\hfill
	\begin{subfigure}[t]{0.3\textwidth}
		\centering
		\includegraphics[page=1]{img/count/cofinite-main-gadget-construction}
		\caption{
    The gadget resulting from \cref{lem:count:simpleAFRtoRelWeightedAFR}.
    }
		\label{fig:count:BFactorRToWeightedAntiFactorR}
	\end{subfigure}\hfill
	\begin{subfigure}[t]{0.4\textwidth}
		\centering
		\includegraphics[page=2]{img/count/cofinite-main-gadget-construction}
		\caption{The modifications of the complex vertices from \cref{lem:count:VertexWeightedToEdgeWeightedAntiFactorR}.
		The vertices $u_i$ are assigned relation $\HWin[1]{\{0,1\}}$.
		}
		\label{fig:count:VertexWeightedToEdgeWeightedAntiFactorR}
	\end{subfigure}
	\caption{
  The modifications of the vertices
  in the different steps of the reductions
  (cf.\ \cref{lem:count:simpleAFRtoRelWeightedAFR,%
  lem:count:VertexWeightedToEdgeWeightedAntiFactorR}).
  The outer edges are shown in red,
  the inner edges are shown in blue,
  and the weighted edges in black.
  }
	\label{fig:count:gadget}
\end{figure}
For the reduction we construct a gadget
and attach to each (simple) vertex a copy of this gadget.
The gadget is such that each vertex
is incident to at least $\max\Ex+1$ selected edges from the original graph.
We define the relation-weights such that all other solutions cancel out
and hence, all solutions are also valid for \CountAntiFactor{[0,\max\Ex]}.
We exploit the fact that the set $\Ex$
does not allow certain combinations of selected incident edges
but once a vertex is incident to at least $\max\Ex+1$ selected edges,
all states are essentially equivalent (with respect to $\Ex$).

\begin{proof}[Proof of \cref{lem:count:simpleAFRtoRelWeightedAFR}]
	We start with an edge-weighted \CountAntiFactor{[0,\max\Ex]} instance $H$
	and apply the following transformation, illustrated in
	\cref{fig:count:BFactorRToWeightedAntiFactorR:original,%
	fig:count:BFactorRToWeightedAntiFactorR},
	for each simple vertex $\widehat v$
  to obtain an instance $G$ of \CountAntiFactorR{\Ex}.

	\proofsubparagraph*{Transformation.}
	We remove $\widehat v$ (with set $B$)
	and create a simple vertex $v$ (with set $\Ex$) and a complex vertex $u_v$.
	We connect $v$ to all neighbors of $\widehat v$ in $H$
	and call these edges \emph{outer} edges.
	We connect $v$ and $u_v$ by $\max\Ex+1$ parallel edges
	which we call \emph{inner} edges.%
	\footnote{Though these parallel edges disappear later,
	one could place \EQ{2} nodes on them to obtain a simple graph.}
	We assign the set $\Ex$ to $v$
	and the relation $R$ to $u_v$.
	(The relation $R$ is actually the same for all simple vertices.)
	For all $\beta \in [0,\max\Ex+1]$,
  relation $R$ accepts exactly one (arbitrary) set
  of exactly $\beta$ selected inner edges with weight $w_\beta$.
  All other sets of selected edges are accepted with weight 0,
  i.e., rejected.
  Before we define the weights, we provide some intuition.

	The weights for relation $R$ are chosen such that
  $v$ is always incident to $k$ selected outer edges,
	for some $k \ge \max\Ex+1$.
	For this we exploit that the weights $w_\beta$ might also be negative.
	Then, the ``bad'' solutions cancel out
	and only the ``good'' solutions contribute to the total number of solutions.

	If we denote by $\alpha$ the number of selected outer edges
  and by $\beta$ the number of selected inner edges,
	then the weights must satisfy
	\begin{equation}
		\sum_{\substack{\beta=0\\\alpha+\beta \notin \Ex}}^{\max\Ex+1} w_\beta
		= \begin{cases}
				1 & \alpha\ge \max\Ex+1 \\
				0 & \alpha\le \max\Ex
			\end{cases}
    .
    \label{eqn:count:weightRemoval:constraints}
  \end{equation}
  If we consider each $w_\beta$ as a variable,
  then the constraints from \cref{eqn:count:weightRemoval:constraints}
	form a system of linear equations with $\max\Ex+2$ variables.
  Observe further that all constraints with $\alpha\ge \max\Ex+1$ are identical.
  Hence, there are $\max\Ex+2$ constraints.
	Now consider the matrix $M$ formed by the coefficients
	of \cref{eqn:count:weightRemoval:constraints}.
	Observe that the anti-diagonal of $M$ and all entries below are ones
  as $\max\Ex+1, \max\Ex+2,\dots \notin \Ex$.
  Moreover, every entry one row above the anti-diagonal in $M$ is zero
	by the fact that $\max\Ex \in \Ex$.
	This observation directly implies that the sums for $\alpha+1$ and $\alpha$
	differ by (at least) one summand,
	i.e., by $w_{\max\Ex-\alpha}$.
	Hence, the matrix $M$ is invertible and there is a solution
	to \cref{eqn:count:weightRemoval:constraints}.
	Moreover,
	when starting with the sums for $\max\Ex+1$ and $\max\Ex$,
	we can find the value for $w_0$
	and eliminate it from the system of linear equations.
	Then, we repeat this process
	to iteratively eliminate $w_{1}$ up to $w_{\max\Ex+1}$.

	We conclude that there are values for each $w_\beta$
	such that \cref{eqn:count:weightRemoval:constraints} hold.
	Moreover, the above procedure already finds these values
	in time polynomially in the number of variables, which is $\max\Ex+2$.

  \proofsubparagraph*{Finalizing the Reduction.}
  Since the weights for the gadget are chosen such that
  the remaining graph does not see a difference
  between the vertex $\widehat v$ and this new gadget,
  we can replace all simple vertices by this procedure
  to get an edge- and relation-weighted \CountAntiFactorR{\Ex} instance
  which we denote by $G$.

	Recall that $H$ is the initial edge-weighted \CountAntiFactor{\Ex} instance.
	We have $n_G \le 2n_H$,
	$\totalDeg_G = \max\Ex+1$,
	and $\pw_G \le \pw_H + 1$.
	This finishes the proof.
\end{proof}
The next step of our reduction removes the weights from the relations
by using additional weighted edges.
\begin{lemma}
	\label{lem:count:VertexWeightedToEdgeWeightedAntiFactorR}
	Let $\Ex \subseteq \SetN$ be an arbitrary set
	(possibly given as input).

	We can many-one reduce edge- and relation-weighted \CountAntiFactorR{\Ex}
	with $r$ different weights
	to edge-weighted \CountAntiFactorR{\Ex}
  with at most $r$ weights
	such that
	\begin{itemize}
		\item
		the size increases by a multiplicative factor of $\O(r)$,
		\item
		$\totalDeg$ increases to $\totalDeg + r+1$,
		\item
		and $\pw$ increases to $\pw+1$.
	\end{itemize}
\end{lemma}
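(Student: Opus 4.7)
The plan is to push the weights from the relations onto newly introduced edges: for each complex vertex $v$ whose weighted relation $R_v$ assigns one of the $r$ distinct values $w_1,\dots,w_r$ to its accepted configurations, I would introduce $r$ new edges $e_1,\dots,e_r$ incident to $v$, carrying edge weights $w_1,\dots,w_r$, and connect them to $r$ fresh pendant complex vertices $p_1,\dots,p_r$. Each pendant $p_i$ gets the trivial (always accepting) degree-one relation $\HWin[1]{\{0,1\}}$, and the relation at $v$ itself is replaced by the unweighted relation $R'_v$, defined on the original incident edges together with $\{e_1,\dots,e_r\}$, that accepts $(S,T)$ if and only if $T=\{e_j\}$ for the unique $j$ satisfying $R_v(S)=w_j$.

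For correctness, I would check the Holant identity vertex by vertex. Given any selection $S$ of $v$'s original incident edges, summing over all possible $T\subseteq\{e_1,\dots,e_r\}$ collapses to the single surviving term $T=\{e_{j(S)}\}$ (where $j(S)$ is the unique weight-class index of $S$), whose contribution to the edge-weighted Holant is exactly the edge weight $w_{j(S)}=R_v(S)$; this matches the original relation-weighted contribution at $v$, while the always-accepting pendants contribute the multiplicative identity. Since these replacements act independently across complex vertices, the edge-weighted Holant of the new instance equals the relation-weighted Holant of the original, proving the reduction.

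For the parameter bookkeeping, each complex vertex contributes $r$ fresh pendants and $r$ edges, so the overall size grows only by a factor of $\O(r)$. Given a path decomposition of width $\pw$ of the old graph, I would add each pendant $p_i$ to a bag already containing $v$, and when no such slot is free I would splice in a new small bag containing only $v$ and $p_i$; this arrangement keeps at most one of $v$'s pendants in any single bag and raises the bag size by at most one, so the pathwidth grows by at most $1$. Since in any bag the contribution of $v$ to $\totalDeg$ rises from $\deg(v)$ to $\deg(v)+r$, while at most one degree-one pendant is additionally present, the overall $\totalDeg$ grows by at most $r+1$ as claimed.

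The main technical nuisance is the routing of the $r$ new pendants through the path decomposition so that at most one appears in any single bag together with the (modified) $v$; otherwise the $r$ pendants piled onto a single bag would drive $\totalDeg$ up by roughly $2r$ instead of the advertised $r+1$. The Holant identity itself is immediate from the construction, and the size bound is entirely routine.
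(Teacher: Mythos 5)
Your construction is exactly the paper's: for each complex vertex $v$, add one degree-one pendant per distinct weight, place the weight on the connecting edge, give the pendant the always-accepting relation $\HWin[1]{\{0,1\}}$, and fold the weight selection into a new unweighted relation at $v$ that forces the matching pendant edge to be selected. Your bookkeeping argument (splicing one pendant per bag to control $\totalDeg$) is, if anything, slightly more explicit than the paper's.
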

\begin{proof}
	We apply the following procedure to each complex vertex $u$
  (that uses weights).
	See \cref{fig:count:VertexWeightedToEdgeWeightedAntiFactorR}
  for an illustration of the modification.
	Let $R$ be the relation of $u$.
	Let $w_1,\dots,w_{r'}$ be the $r'<r$ different weights used by $R$.
	Assume without loss of generality that $r'=r$.

	For all $i\in[r]$,
	we add a vertex $u_i$ with relation $\HWin[1]{\{0,1\}}$
	and make it adjacent to $u$ by an edge of weight $w_i$.

	Based on $R$, we design a new relation $R'$ as follows:
	Whenever $R$ accepts the input $x$ with weight $w_i$, for some $i\in [r]$,
	then $R'$ accepts $x$
	but additionally requires that the edge to $u_i$ is selected
	while the edges to the other $u_{i'}$ remain unselected.

	One can easily check that this modification does not change the solution.
	Moreover, the pathwidth increases by at most 1
	and the degree of the complex vertices by at most $r$.
\end{proof}
In the next step of our reduction, we remove the edge weights from the graph.
First observe that we do not have to change edges of weight 1.
Furthermore, we can simply remove all edges with weight 0.

\begin{lemma}
	\label{lem:count:WeightedAntiFactorRToAntiFactorR}
	Let $\Ex \subseteq \SetN$ be a finite and non-empty set
	(possibly given as input).
	There is a Turing-reduction from edge-weighted \CountAntiFactorR{\Ex}
	with $r$ different weights
	to unweighted \CountAntiFactorR{\Ex}
	running in time $n^{\O(r)}$.
	The reduction
	is
	such that
	\begin{itemize}
		\item
		the size increases by a multiplicative factor of $\O(\log^2(n))$,
		\item
		the degree of the simple vertices stays the same,
		\item
		$\totalDeg$ increases to $\totalDeg + \O(1)$,
		\item
		and $\pw$ increases to $\pw+\O(1)$.
	\end{itemize}
\end{lemma}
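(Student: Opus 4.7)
The plan is polynomial interpolation. Group weighted edges by weight: let $E_j$ be the edges of weight $w_j$ for $j=1,\ldots,r$, and consider the polynomial
\[
    P(y_1,\ldots,y_r) \deff \sum_{S} \prod_{j=1}^r y_j^{\abs{S\cap E_j}},
\]
summed over all valid unweighted solutions. The edge-weighted count is exactly $P(w_1,\ldots,w_r)$, and $P$ has individual degree at most $\abs{E_j}\le n$ in each variable. Evaluating $P$ at $(n+1)^r = n^{\O(r)}$ well-chosen grid points and then doing $r$-variate polynomial interpolation recovers all coefficients of $P$, after which we plug in $(w_1,\ldots,w_r)$. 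Each evaluation is done by a single call to the unweighted \CountAntiFactorR{\Ex} oracle on a suitably modified instance.

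To evaluate $P$ at a chosen point $(y_1,\ldots,y_r)$, I would replace every edge $e\in E_j$ by a small local gadget $G_j(k_j)$, attached to the two endpoints of $e$ by exactly one edge each so that the degrees of adjacent simple vertices do not change. Internally, $G_j(k_j)$ uses a complex ``gate'' vertex whose relation couples the two portal edges: they are either both selected or both unselected, mirroring the two states of the original edge. The rest of the gadget is a parameter-dependent ``counting'' subgraph, realized via the $\HWeq$ and $\EQ$ gadgets from \cref{lem:dec:realization-hw11-eqk,lem:relation:hw1-decision}, whose number of valid configurations is $\alpha_j(k_j)$ in the ``both selected'' state and $\beta_j(k_j)$ in the ``both unselected'' state. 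Dividing the resulting unweighted count by $\prod_j \beta_j(k_j)^{\abs{E_j}}$ then yields $P(y_1,\ldots,y_r)$ with $y_j = \alpha_j(k_j)/\beta_j(k_j)$; picking $k_j$ from a set of $n+1$ values whose induced ratios are pairwise distinct gives a product-Vandermonde system from which $P$ is recovered by standard multivariate interpolation.

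A compact tree-structured (binary-encoding) realization of the counting subgraph, rather than a parallel repetition of $k$ copies, keeps the gadget's pathwidth and internal degree bounded by constants while still providing $n+1$ distinct ratios; this is what yields an $\O(\log^2 n)$ blow-up in instance size and an additive $\O(1)$ overhead in both $\pw$ and $\totalDeg$. The main obstacle will be the joint design requirement that the gadget simultaneously (i)~preserves degrees of existing simple vertices, (ii)~keeps $\pw$ and $\totalDeg$ within additive $\O(1)$, and (iii)~generates $n+1$ pairwise distinct ratios $\alpha_j(k)/\beta_j(k)$. Constraints (i)--(ii) rule out the naive $k$-parallel-copy gadget and force the tree-like design; constraint (iii) is an algebraic non-degeneracy condition that must be verified by direct calculation (and, if needed, guaranteed by working in a sufficiently large field so that the resulting rational functions of $k$ are not identically equal). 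Once the gadget is in place, the multivariate interpolation, the $n^{\O(r)}$ oracle calls, and the size/pathwidth bookkeeping are routine.
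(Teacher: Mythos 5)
Your high-level plan --- polynomial interpolation over the weights, realized by substituting a small gadget for each weighted edge, with a binary-encoding trick to keep the gadget size at $\O(\log^2 n)$ --- matches the paper's proof. But there are two concrete gaps in how you propose to build the gadget.

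First, you build the counting subgraph from the realizations of $\HWeq$ and $\EQ$ in \cref{lem:dec:realization-hw11-eqk,lem:relation:hw1-decision}. Those lemmas construct realizations out of \emph{simple} vertices labeled $\Ex$ and crucially require $0\in\Ex$ and $\maxgap(\Exbar)>1$; neither hypothesis is assumed here, where $\Ex$ is an arbitrary non-empty finite set, possibly part of the input. This step would therefore fail outright for, say, $\Ex=\{3\}$, and even when it applies, the realizations have size and degree $(\max\Ex)^{\O(1)}$, which breaks the claimed $\O(\log^2 n)$ size and $\totalDeg+\O(1)$ bounds once $\Ex$ is not fixed. The fix (and what the paper does) is to notice that the target of this reduction is still \CountAntiFactorR{\Ex}, i.e., relations are \emph{allowed} --- so one may place $\HWeq{1}$ and $\EQ{2}$ nodes directly as complex vertices with those relations, no realization needed and no extra assumptions on $\Ex$.

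Second, your scheme evaluates $P$ at rational points $y_j=\alpha_j(k_j)/\beta_j(k_j)$ and requires $n+1$ pairwise-distinct such ratios, a non-degeneracy condition you explicitly leave unverified (and "work in a larger field" does not help, since the counts are fixed integers). The paper sidesteps this entirely: for a target integer weight $W$, write $W=\sum_i 2^{s_i}$ in binary, and replace the weighted edge by a chain in which for each bit $s_i$ a block of $s_i$ parallel pairs (with $\HWeq{1}$ nodes) has exactly $2^{s_i}$ completions when the bit is ``active'' and exactly one otherwise; chaining these blocks so that precisely one bit-block is active makes the gadget's count equal $W$ when the original edge is selected and $1$ otherwise. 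Evaluating at the integer weights $1,\dots,n^2$ gives distinct interpolation points for free. I'd recommend adopting the complex-vertex, integer-weight gadget rather than the ratio-based one; once you do, the interpolation, bookkeeping, and pathwidth argument in your proposal go through.
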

\begin{proof}
	We use the same interpolation technique
	already used in \cite{CurticapeanM16}
	and later in \cite{MarxSS21} to remove the edge weights.

	We replace the edge weights $w_1,\dots,w_r$
	by variables $x_1,\dots,x_r$
	and treat the value of the solution as an $r$-variate polynomial $P$.

	Observe that, for each $i\in [r]$,
	$P$ has degree at most $m < n^2$ in $x_i$.
	Assume we can choose $n^2$ values for each $x_i$ independently.
	If we can evaluate $P$ on all $(n^2)^r$ combinations of values,
	then,
	by Lemma~1 in \cite{Curticapean15Icalp},
	we can recover the coefficients of $P$ in $n^{\O(r)}$ time.
	Finally, we output $P(w_1, \dots, w_r)$.

	It remains to show that we can achieve $n^2$ different values for each $x_i$.
	We first show how to realize edge weights of the form $2^i$.
	We replace the edge of weight $2^i$ by a chain of $2i+3$ edges
	where, for all $j\in[i]$, the $2j+1$th edge is a parallel edge.%
	\footnote{
	To avoid parallel edges,
	one can place one \EQ{2} node on each parallel edge.
	}
	We assign the relation \HWeq{1} to all vertices
	that are introduced by this gadget.
	When the original edge was selected,
	we have, for each $j \in [i]$, two choices
	for selecting one of the $2i+1$th parallel edges.
	Moreover, these choices are independent from each other.
	This contributes a weight of $2^i$ to the solution.
	If the original edge was not selected,
	then we only have a unique solution;
	we select the edges that are not parallel edges,
	i.e., the 2nd, 4th, \dots, edge.

	We use this as building block to realize arbitrary positive weights $W$.
	From the binary representation of $W$ we get a set $S=\{s_1,\dots,s_\ell\}\subseteq \SetN$
	such that $W = \sum_{i=1}^\ell2^{s_i}$.
	Let the edge of weight $W$ be between the vertices $v$ and $u$.
	We create new vertices $v_1,v_1',\dots,v_\ell,v_\ell'$
	and $u_1,u_1',\dots,u_\ell,u_\ell'$
	all with relation \HWeq{1}.
	Set $v_{\ell+1}=v$ and $u_{\ell+1}=u$.
	For all $i\in[\ell]$,
	we connect $v_i$ to $v_i'$ and connect $v_i'$ to $v_{i+1}$
	and similarly for $u_i, u_i', u_{i+1}$.
	For all $i\in[\ell]$,
	we add an edge of weight $2^{s_i}$ between $v_i$ and $u_i$.
	We use the previous method to realize the weight of these edge.
	The total weight sums up to $W$
	if exactly one weighted edge can be selected at a time.
	Now assume the edge from $v_i$ to $u_i$ is selected.
	Then, the edges from $u_i$ to $u_i'$ is not selected.
	By the relation of $u_i'$,
	the edge from $u_i'$ to $u_{i+1}$ must be selected.
	Therefore, the weighted edge from $u_{i+1}$ to $v_{i+1}$ cannot be selected.
	As the edge $u_{i+1}$ to $u_{i+1}'$ also cannot be selected
	and the argument applies recursively to the vertices $u_{i'}$ and $u_{i'}'$ with $i'<i$,
	no other weighted edge can be selected.

	Thus, it suffices to consider the weights from $1$ to $n^2$.
	Adding the parallel edges and the paths
	increases pathwidth only by a constant.
	The same holds for $\totalDeg$
	as the degree of each vertex is at most three.
	The size increases by a factor of $\O(\log^2(n))$
	as we introduce $\O(\log(n))$ many new vertices
	and the gadget for weight $2^i$ is of size $\O(i)$.
\end{proof}
\begin{note*}
	The previous construction uses weights between $1$ and $n^2$.
	By this choice the size increases by a logarithmic factor (in $n$) only.
	Now assume that the construction would use weights from $2^1$ to $2^{n^2}$.
	This choice could lead to a polynomial blow-up of the size of the graph.
	While this is not an issue for the lower bound for \CountAntiFactor{\Ex},
	this lemma could not be applied for the proof of the \sharpW{1}-hardness
	which we give later.
\end{note*}
Now, we can combine the previous steps
and prove the lower bound for \CountAntiFactorR{\Ex}.
\begin{proof}[Proof of \cref{lem:count:lbAntiFactorR}]
	Let $H$ be an edge-weighted \CountMinAntiFactor{\Ex} instance 
	with $\O(1)$ different edge-weights.
	We apply \cref{%
  lem:count:minSimpleAFRtoSimpleAFR,%
  lem:count:simpleAFRtoRelWeightedAFR,%
	lem:count:VertexWeightedToEdgeWeightedAntiFactorR,%
	lem:count:WeightedAntiFactorRToAntiFactorR}
  to obtain $n_H^{\O(\max\Ex)}$ instances $G_i$ of \CountAntiFactorR{\Ex}.
	As $\Ex$ is fixed, there are $n_H^{\O(1)}$ such instances.

	For each such instance $G_i$,
  we get $n_{G_i} \le \O(n_H \log^2(n_H))$,
	$\totalDeg_{G_i} \in \O(\max\Ex)$,
	and $\pw_{G_i} \le \pw_H + \O(1)$.
  Assume a fast algorithm for \CountAntiFactorR{\Ex} exists
	for some $\epsilon>0$.
	Then, since $\totalDeg_{G_i}\in\O(1)$,
	we can run this algorithm on all these instances $G_i$
	and recover the solution for $H$.
	The total running time of this process is
  \begin{align*}
		&n_H^{\O(1)} \cdot \sum_i
    (\max\Ex+2-\epsilon)^{\pw_{G_i}} \cdot n_{G_i}^{\O(1)}\\
		&\le \sum_i (\max\Ex+2-\epsilon)^{\pw_H + \O(1)}
		 	\cdot (n_H \log^2(n_H))^{\O(1)} \\
		&\le (\max\Ex+2-\epsilon)^{\pw_H} \cdot n_H^{\O(1)}
		.
  \end{align*}
	Thus, the running time directly contradicts \#SETH
	by \cref{lem:count:lbSimpleMinAFR}.
\end{proof}

\subsection{High-level construction for \#W[1]-Hardness}
In this section, we show the proof of \cref{lem:count:lbCount1AntiFactorR},
i.e., the lower bound for \CountAntiFactorSizeR{1}.
For this we use a similar high-level idea as for the previous subsection.
However, we do not start from \#SAT
but from \textsc{Counting Colorful Hitting $k$-Sets} (\#\ColHS).
\begin{definition}[\#\ColHS]
	Given $k$ disjoint sets $U_1,U_2,\dots,U_k$ of $n$ elements each
	and sets $A_1,A_2,\dots,A_m$ each a subset of $\bigcup_{i\in [k]} U_i$,
	find the size of the set
	\[
		S=\{(s_1,s_2,\dots,s_k) \in U_1\times U_2\times \dots \times U_k
		\mid \forall j\in[m] \; \exists i\in[k]: s_i\in A_j\}
	\]
\end{definition}
We use a known lower bound for \#\ColHS as an initial step of our proof.
\begin{lemma}[Lemma~6.1 from \cite{CurticapeanM16}] \label{lem:count:colHS-lb}
	For each fixed $k\geq 2$, the following holds.
	If there exists an $\epsilon>0$ such that, for each $d\in \SetN$,
	there is an $\O(n^{k-\epsilon})$ time algorithm for
	\#\ColHS on instances with universe size $n$ and $\O(k^d\cdot \log^d n)$ sets,
	then \#SETH fails.
	Here, the constant factor in the running
	time may depend on $d$.
\end{lemma}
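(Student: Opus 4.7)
The plan is a direct reduction from \#$k'$-SAT to \#\ColHS. Given a $k'$-SAT instance $\phi$ on $N$ variables with $M=\Theta(N)$ clauses (where $k'$ will be chosen large depending on $k$ and $\epsilon$), I would partition the variable set into $k$ disjoint blocks $V_1,\dots,V_k$ of size $N/k$ each. The universe $X_i$ then consists of all $2^{N/k}$ partial truth assignments to $V_i$, so $n=\abs{X_i}=2^{N/k}$ and consequently $N=k\log_2 n$.

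For each clause $C_j$ of $\phi$ I create a single set $A_j\subseteq\bigcup_i X_i$: a partial assignment $s\in X_i$ lies in $A_j$ exactly when $s$ sets some literal of $C_j$ that falls in block $V_i$ to true. A tuple $(s_1,\dots,s_k)\in X_1\times\cdots\times X_k$ is a colourful hitting set iff for every clause at least one block-assignment satisfies a literal of that clause, which is the same as the combined assignment satisfying $\phi$. Hence the number of colourful hitting $k$-tuples equals the number of satisfying assignments of $\phi$, turning any counting algorithm for \#\ColHS into a counting algorithm for \#$k'$-SAT.

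Next I bound the resources of the produced instance. The number of sets is $M=\O(N)=\O(k\log n)$, which lies within $\O(k^d\log^d n)$ already for $d=1$. Building the $A_j$ costs $\O(Mkn)=\O(k^2 n\log n)$ time, since for each clause and each of the at most $k'$ literals we mark the partial assignments of the relevant block that satisfy it. Feeding the resulting instance into the hypothesised algorithm yields a total running time of $\O(n^{k-\epsilon})=\O(2^{N(1-\epsilon/k)})$ plus the polynomial preprocessing, i.e.\ an $\O(2^{(1-\delta)N})$-time algorithm for \#$k'$-SAT with $\delta\deff \epsilon/k>0$.

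The only point requiring a little care is the interplay between $k$ (fixed by the lemma statement) and the clause width $k'$ of the \#SAT family we start from: \#SETH is a statement uniform in $k'$, so we may pick $k'$ large enough that any $2^{(1-\delta)N}$-time algorithm for \#$k'$-SAT already refutes \#SETH for the specific $\delta=\epsilon/k$ produced above. All other aspects — clauses whose literals lie entirely inside one block, or touch only a few blocks — are routine to handle in the construction of the $A_j$. The generous $\O(k^d\log^d n)$ slack in the number of allowed sets plays no role in our reduction; it is presumably left in the statement to accommodate downstream reductions that inflate the number of sets by polylogarithmic factors.
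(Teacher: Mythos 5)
Your reduction is the right one and is essentially the standard argument; note for context that the present paper does not re-derive this lemma but imports it verbatim from \cite{CurticapeanM16}. The block decomposition of the variable set, the identification of each $X_i$ with the partial assignments to block $V_i$, the clause-to-set translation, and the bijection between colourful hitting $k$-tuples and satisfying assignments are all correct and parsimonious, so the core of the counting reduction is sound.

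The one genuine gap is your unsupported assumption that the starting $k'$-SAT instance has $M=\Theta(N)$ clauses. \#SETH is stated for arbitrary $k'$-CNFs, which may have up to $\Theta(N^{k'})$ distinct clauses; bringing this down to $O(N)$ would require a \emph{counting} version of the sparsification lemma, a nontrivial result that you neither cite nor prove (and that does not follow trivially from the decision version of Impagliazzo--Paturi--Zane, since their output is a disjunction of sparse CNFs and inclusion--exclusion over it is unaffordable). Fortunately, the lemma's statement is set up precisely so that no sparsification is needed: a $k'$-CNF on $N$ variables has at most $(2N)^{k'}$ distinct clauses, and since $N=k\log_2 n$ in your encoding, this is $O\bigl(k^{k'}\log^{k'} n\bigr)$ sets --- exactly the allowed budget with $d=k'$. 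So the slack in the number of sets is not, as you speculate, there to absorb inflation from downstream reductions; it is there so that one can feed dense $k'$-CNFs directly into the reduction, picking $d$ as a function of the clause width $k'$, which (via \#SETH) depends only on $\epsilon$. With this observation in place of the $M=\Theta(N)$ claim, your argument closes: the running-time conversion $O(n^{k-\epsilon})=O(2^{(1-\epsilon/k)N})$ and the final invocation of \#SETH with $\delta=\epsilon/k$ go through unchanged.
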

The main result of this section is \cref{lem:count:colHStoCountAntiFactorR}.
\begin{lemma}
	\label{lem:count:colHStoCountAntiFactorR}
	There is a polynomial time many-one reduction from \#\ColHS
	with a universe size of $n$ elements each and $m$ sets
	to relation-weighted \CountAntiFactorSizeR{1}
	with $3$ different weights
	such that the size is bounded by $\O(k \cdot n \cdot m)$,
	the total degree of the complex nodes is $\O(1)$,
	and the pathwidth is at most $k+\O(1)$.
\end{lemma}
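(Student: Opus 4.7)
The plan is to reduce \#\ColHS on $k$ color classes of size $n$ with $m$ sets to relation-weighted \CountAntiFactorSizeR{1} by constructing, for each \#\ColHS instance, a weighted graph $G$ of size $\O(knm)$ whose path decomposition has width $k+\O(1)$ and in which every complex vertex has constant degree. The architecture places $k$ ``color buses'', one per color class, in parallel; for each set $A_j$, a simple check vertex $v_j$ with $\Ex_{v_j}=\{0\}$ is wired to the color buses so that its degree equals the number of colors whose chosen element lies in $A_j$, so that the constraint $\deg(v_j)\neq 0$ enforces the hitting condition.

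The core is the color bus for color $i$, which must encode a choice $d_i\in X_i$ among $n$ possibilities. Since simple vertices are restricted to $|\Ex_v|\le 1$, no single simple vertex can directly encode $n$ states; instead I would use a flag-flip chain of $n$ constant-degree complex vertices $\sigma_i^1,\dots,\sigma_i^n$ whose boundary conditions are enforced by two pendant simple vertices with $\Ex=\{0\}$ and $\Ex=\{1\}$. The relation at each $\sigma_i^x$ says the flag flips iff $\sigma_i^x$'s output $o_i^x$ is selected, so that exactly one $o_i^x$ fires and this encodes $d_i=x$. To route the signal $o_i^x$ to each set $A_j$ containing $x$, I attach a constant-degree copy-and-query chain per pair $(i,x)$; each copy vertex either passes the bit along or emits a query edge to $v_j$. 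Only three relation weights are needed (one to mark a selection, one to pass-through, and $0$ for rejection), matching the statement.

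To build the path decomposition I iterate the columns $j=1,\dots,m$ in order and, within each column, iterate the colors $i=1,\dots,k$; the invariant is that every bag contains $k$ anchor vertices (one per color, representing the current position on color $i$'s bus) together with the current check vertex $v_j$ and $\O(1)$ working vertices, giving width $k+\O(1)$. Crucially, since $\totalDeg$ sums degrees of complex vertices only, the anchors may be chosen as simple ``hubs'' of unbounded degree without affecting $\totalDeg$.

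The hard part will be synchronizing the color buses so that each bus emits its query for set $A_j$ at the right step of the decomposition, letting $v_j$ be introduced, connected to all $k$ colors sequentially, and then removed within $\O(1)$ consecutive bags beyond the anchors; a careless layout would force $v_j$ to persist across unrelated sections and blow up the width. The fix will be to interleave the fan-out chains column-first so that the $j$th outputs of all colors occur contiguously. Once this is in place, verifying that the weighted count equals $\#\ColHS$ reduces to local checks: each tuple $(d_1,\dots,d_k)$ satisfying the hitting condition yields a unique solution of weight $1$, while any other configuration either violates a chain relation (along some bus) or some check vertex (contributing $0$ via the weight-$0$ entries).
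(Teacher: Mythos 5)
Your high-level architecture (a $k\times m$ grid, one row per color class and one column per set $A_j$, with horizontal information carrying the selected element and a vertical signal to the check vertex recording whether $A_j$ has been hit) matches the paper's plan. But there is a concrete gap in the encoding mechanism that, as described, prevents the pathwidth bound $k+\O(1)$ from holding.

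You explicitly assert that ``no single simple vertex can directly encode $n$ states,'' and therefore introduce a flag-flip chain of $n$ complex vertices per color plus a separate ``copy-and-query chain per pair $(i,x)$'' to fan out the selected element to the columns. However, to carry the selected element of color $i$ forward past column $j$, the path decomposition must remember, for every $(i,x)$ with $i\in[k]$, $x\in[n]$, whether the per-$(i,x)$ chain's bit is set; those $kn$ chains all cross each column cut. That forces $\Omega(kn)$ vertices per bag (or $\Omega(kn)$ searchers), not $k+\O(1)$. The anchors you invoke can absorb high degree for free with respect to $\totalDeg$, but a degree-unbounded simple hub only helps if its degree actually encodes the choice — which is exactly the thing you ruled out. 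There is no obvious way to compress your $n$ per-$(i,x)$ chains into $k$ bag vertices without re-introducing some vertex whose ``state'' ranges over $n$ possibilities.

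The paper's construction resolves this with precisely the mechanism you dismiss: it sets $\Ex=\{n-1\}$ (one set of size $1$, which is allowed for \AntiFactorSize{1}) and places $n-1$ parallel edges between consecutive per-color vertices $J_i^j$. The number of selected parallel edges between $J_i^j$ and the gadget $G_i^j$ is a unary encoding of the chosen element of $X_i$, and the single simple vertex $J_i^j$ does carry all $n$ states via its degree. Monotonicity of the selection among the parallel edges is enforced by a constant-degree relation gadget, and carefully chosen weights on $G_i^j$ (via a small linear system) make the combined $J_i^j$-plus-$G_i^j$ behave like an equality test between ingoing and outgoing unary values. This lets the path decomposition keep exactly the $k$ current hubs $J_1^j,\dots,J_k^j$ (plus $\O(1)$ gadget-internal vertices) in each bag. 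In short, your ``anchors of unbounded degree'' idea is exactly right, but you need the anchors' \emph{degree} to carry the choice; a flag-flip chain alongside does not achieve the $k+\O(1)$ pathwidth and duplicates information the anchors should already hold.

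A secondary point: your boundary vertices use two different lists $\{0\}$ and $\{1\}$. This is permitted by the literal statement of the lemma, but the paper's downstream result (\cref{lem:count:lbCount1AntiFactorR}) strengthens the hardness to ``even if only one set is used,'' which relies on the reduction here producing an instance whose simple vertices all carry the same singleton set. The paper's choice $\Ex=\{n-1\}$ preserves that strengthening; your two-set gadget would lose it, even if the rest of the construction were repaired.
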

We combine \cref{lem:count:colHStoCountAntiFactorR}
with the intermediate steps from \cref{sec:lower:count:seth}
to prove \cref{lem:count:lbCount1AntiFactorR}.
This also proves \cref{thm:count:w1} since
all our reductions are parameter preserving and
\#\ColHS is \sharpW{1}-hard \cite{CurticapeanM16}.

\begin{proof}[Proof of \cref{lem:count:lbCount1AntiFactorR}]
	For a given \#\ColHS instance $H$
	we apply \cref{lem:count:colHStoCountAntiFactorR}
	to get a relation-weighted \CountAntiFactorSizeR{1} instance,
	then use \cref{lem:count:VertexWeightedToEdgeWeightedAntiFactorR}
	to transform this into an edge-weighted \CountAntiFactorSizeR{1} instance,
	and finally apply \cref{lem:count:WeightedAntiFactorRToAntiFactorR}
	to transform this into polynomially many unweighted \CountAntiFactorSizeR{1} instance $G_i$.

	For each such instance $G_i$, we get
	$n_{G_i} \le \O(n \cdot m \cdot k \cdot \log^2(n \cdot m \cdot k))$
	and $\pw_G = k + c_1$
	for some constant $c_1>0.$

	Note that \cref{lem:count:WeightedAntiFactorRToAntiFactorR}
	adds an overhead of
	$\O((n \cdot m \cdot k)^{c_2})$, for some $c_2>0$,
	to the running time.
	Assume that, for all $c>0$,
	there is an $\O(N^{\pw-c})$ time algorithm for \CountAntiFactorSizeR{1}
	on instances of size $N$ and pathwidth $\pw$.
	We apply this algorithm to all instances
	and get the following time bound:
	\begin{align*}
		\O\left(
			(n \cdot m \cdot k)^{c_2} + \sum_i n_{G_i}^{\pw_{G_i}-c}
			\right)
		& \le \O\left(
			(n \cdot m \cdot k)^{c_2} \cdot {\left(n \cdot m \cdot k \cdot \log^2(n \cdot m \cdot k)\right)}^{\! k + c_1-c}
			\right)
	\intertext{With \(c_3 = c_1+c_2 \), this can be bound by}
		&\le\O\left(
			\left(
				n \cdot m \cdot k \cdot \log^2(n \cdot m \cdot k)
			\right)^{k + c_3-c}
			\right)\\
		&\le\O\left(
			(n \cdot m \cdot k)^{k+c_3-c} \cdot \log^{2(k+c_3-c)}(n \cdot m \cdot k)
			\right) \\
		\intertext{Since \(\log^a N \le N^{b}\), for large enough \(N\),
		we get:}
		&\le\O\left(
			(n \cdot m \cdot k)^{k + c_3-c} \cdot (n\cdot m\cdot k)^{c_4}
			\right) \\
		&\le\O\left(
			n^{k+c_5-c} \cdot m^{k+c_5-c} \cdot k^{k+c_5-c}
			\right) \\
		&\le\O\left(
			n^{k+c_5-c} (k^d \log^d n)^{k+c_5-c} \cdot k^{k+c_5-c}
			\right) \\
		\intertext{We again use the above bound for the \(\log^d(n)\)
		and the fact that \(k\) is fixed:}
			&\leq \O(n^{k+c_5-c} n)
			 = \O(n^{k+c_5+1-c}).
	\end{align*}
	By choosing $c>c_5+1$,
	we have an algorithm for \#\ColHS that runs in time
	$\O(n^{k-\epsilon})$, for some constant $\epsilon>0$.
	This contradicts \cref{lem:count:colHS-lb}.
\end{proof}
In the remaining part of the section we prove \cref{lem:count:colHStoCountAntiFactorR}.

\subparagraph*{Idea of the Construction.}
We reduce \#\ColHS to relation-weighted \CountAntiFactorSizeR{1}.
Let $k$ be the number of sets $U_i$,
$n$ the size of the sets $U_i$,
and $m$ the number of set $A_j$ we want to hit.
We follow a similar high-level construction as for previous SETH based lower bounds
and derive the construction from a $k \times m$ grid.
The $i$th row corresponds to set $U_i$
and the $j$th column to the set $A_j$.
See \cref{fig:count:W1high-level} for an illustration.

\begin{figure}
	\centering
	\includegraphics[page=1]{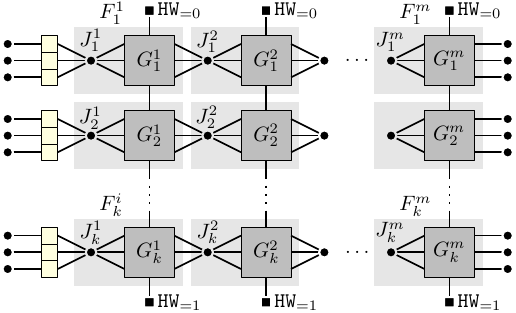}
	\caption{
	High-level construction of the \CountAntiFactorSizeR{1} instance.
	The boxes containing $G_i^j$ and $J_i^j$ describe the gadgets $F_i^j$.
	The details of the gadgets $G_i^j$ are shown in \cref{fig:count:W1mainGadget}.
	}
	\label{fig:count:W1high-level}
\end{figure}

At the crossing point of the $i$th row and the $j$th column,
the gadget $F_i^j$ checks
whether the element selected from $U_i$ hits the set $A_j$.
For the encoding of the elements,
we connect $F_i^j$ and $F_i^{j+1}$ by $n-1$ edges
such that a selection of $\ell$ edges
encodes the selection of the $\ell+1$th element from $U_i$.
Moreover, $F_i^j$ is connected to $F_{i+1}^j$ by a single edge.
This edge indicates whether $A_j$ was hit by an element from $U_{i'}$ for some $i' \le i$.

We first formally describe the gadgets $F_i^j$.
Then, we construct the graph and these gadgets
such that the pathwidth of the construction is not too large,
i.e., bounded by $k + \O(1)$.

Fix a gadget $F_i^j$ in the following.
Let $\IN_1,\dots,\IN_{n-1}$ be the edges to $F_i^{j-1}$,
let $\OUT_1,\dots,\OUT_{n-1}$ be the edges to $F_i^{j+1}$,
let $\IN_v$ be the (vertical) edge to $F_{i-1}^j$,
and let $\OUT_v$ be the (vertical) edge to $F_{i+1}^j$.
We define $F_i^j$ such that it satisfies the following constraints.
\begin{enumerate}
	\item
	The selection of the outgoing edges must be monotonous.

	Formally, there must be an $\ell \in [n]$
	such that the edges $\OUT_1,\dots,\OUT_{n-\ell}$ are not selected
	but the edges $\OUT_{n-\ell+1},\dots,\OUT_{n-1}$ are selected.

	\item
	$F_i^j$ ensures that the information about the selected element is transferred consistently.

	That is,
	for all $\ell\in[n]$,
 	edge $\IN_\ell$ is selected if and only if edge $\OUT_\ell$ is selected.

	\item
	$F_i^j$ checks whether the element from $U_i$ hits the set $A_j$.
	This information is encoded by the edge between $F_i^j$ and $F_{i+1}^j$.

	If $\IN_v$ is selected,
	then $\OUT_v$ is selected.
	Otherwise,
	$\OUT_v$ is selected if and only if
	there is an $\ell\in[n]$ such that
	exactly $\ell-1$ edges from $\IN_1,\dots,\IN_{n-1}$ are selected
	and the $\ell$th element in $U_i$ is contained in $A_j$.
\end{enumerate}

\subparagraph*{Construction of the Instance.}
We set $\Ex=\{n-2\}$
which is possible as only the \emph{size} of $\Ex$ is fixed
and not $\Ex$ itself.
The graph is formally defined as follows:
\begin{itemize}
	\item
	For all $i\in[k]$ and all $j\in[m]$,
	there is a vertex $J_i^j$.
	\item
	For all $i\in[k]$ and all $j\in[m]$,
	there is a weighted gadget $G_i^j$ such that
	$G_i^j$ is connected to $J_i^j$ and $J_i^{j+1}$ by $n-1$ parallel edges each
	and to $G_{i\pm1}^j$ by a single edge each.
	The weights are defined later.

	\item
	For all $i\in[k]$ and all $j\in[m]$,
	the vertex $J_i^j$ together with the gadget $G_i^j$
	constitute the gadget $F_i^j$ from above.

	\item
	For all $i\in[n]$,
	there is a monotonicity gadget
	whose behavior is identical to the one of the $F_i^j$s for edges $\OUT_1,\dots,\OUT_{n-1}$
	(i.e., the first property from above)
	and attach it to $J_i^1$ by $n-1$ parallel edges.
	These gadgets are defined later as the fourth stage of $G_i^j$.
	\item
	For all $i\in[n]$,
	attach $n-1$ fresh vertices to $G_i^m$.
	\item
	For all $j\in[m]$,
	there is a vertex with relation \HWeq[1]{0}
	that is adjacent to $G_1^j$.
	\item
	For all $j\in[m]$,
	there is a vertex with relation \HWeq[1]{1}
	that is adjacent to $G_k^j$.
\end{itemize}
\subparagraph*{Proof of Correctness.}
Assume that we choose the weights for $G_i^j$
such that the combined behavior of $J_i^j$ and $G_i^j$
corresponds to that of $F_i^j$ as we described it above.
Then, it is clear that there is a one-to-one correspondence
between the solutions for the \#\ColHS instance
and the solutions for the \CountAntiFactorR{\Ex} instance.
For this we crucially depend on the nodes with relations \HWeq[1]{0} and \HWeq[1]{1}.
The \HWeq[1]{0} relations ensure
that every set $A_j$ is initially not hit.
Since the gadgets $G_i^j$ propagate the selection of the top edge to the bottom edge
whenever the set is not hit by the selected element,
there must be some $G_{i'}^j$ such that the element from $U_{i'}$ hits $A_j$.
Otherwise the relation \HWeq[1]{1} cannot be satisfied, but this is forbidden.

We first show how to define the weights for the $G_i^j$
such that we can treat $G_i^j$ and $J_i^j$ together as the $F_i^j$ gadget.
Observe that the degree of $G_i^j$ is large
and more specifically depends on the input.
Hence, we cannot treat $G_i^j$ as a complex node.
Thus, in a second step we construct
the $G_i^j$ from smaller building blocks
such that pathwidth and degree are constants.

\subparagraph*{Defining the Weights.}
We focus on a fixed $F_i^j$ and thus $G_i^j$ in the following.
By the monotonicity assumption,
we know that we can describe the signature of $F_i^j$ by a function $\widehat f(\alpha, \IN_v, \gamma, \OUT_v)$,
where $\alpha$ denotes the \emph{number} of selected edges from $\{\IN_\ell\}_\ell$
and $\gamma$ denotes the number of selected edges from $\{\OUT_\ell\}_\ell$.
It actually suffices to consider the function $f(\alpha,\gamma)$,
with the same meaning for $\alpha$ and $\gamma$ as for $\widehat f$
where we have
\[
	\widehat f (\alpha, \IN_v, \gamma, \OUT_v) =
	\begin{cases}
		0 & \IN_v = 1 \land \OUT_v = 0 \\
		f(\alpha,\gamma) & \IN_v = \OUT_v \\
		f(\alpha,\gamma) & \IN_v = 0 \land \OUT_v = 1 \land \text{ the \(\gamma+1\)th element of \(U_i\) hits } A_j
	\end{cases}
\]
By the definition of $F_i^j$, we need
$f(\alpha,\alpha)=1$, for all $\alpha$,
and $f(\alpha,\gamma)=0$, whenever $\gamma \neq \alpha$.

Now, consider the signature of $G_i^j$
and let $\beta$ denote the number of selected edges between $J_i^j$ and $G_i^j$.
We can sue the same arguments to denote this signature by $g(\beta,\gamma)$
if we additionally impose monotonicity constraints on the ingoing and outgoing edges.
From the definition of $\Ex$ we get:
\[
	f(\alpha, \gamma) = \sum_{\substack{\beta=0 \\ \alpha+\beta \neq n-2}}^{n-1} g(\beta,\gamma)
\]
	To satisfy the constraints for $f$,
	one can easily check that it suffices to set
	\begin{itemize}
		\item
		for all $\gamma \neq n-1$:
		$g(n-2-\gamma, \gamma)\deff -1$,

		\item
		for all $\delta \neq n-1$:
		$g(n-1,\delta) = g(\delta,n-1) \deff 1$,

		\item
		$g(n-1,n-1) \deff -(n-2)$,

		\item
		and $g(\beta,\gamma)=0$ otherwise.
	\end{itemize}
As seen above,
we lift the function $g$ to all inputs of $G_i^j$
by imposing that the edges going to $J_i^j$ are selected monotonically.
But in contrast to the previous requirement for the edges $\IN_\ell$ of $F_i^j$,
now, for some $\ell\in[n]$,
the top/first $\ell$ edges $\OUT_1,\dots,\OUT_\ell$ are selected
and the bottom $n-1-\ell$ edges $\OUT_{\ell+1},\dots,\OUT_n$ are not selected.

\subparagraph*{Construction of the Gadgets.}
It remains to define the gadgets $G_i^j$.
We decompose the gadget into four stages
as shown in \cref{fig:count:W1mainGadget}.
We informally describe the purpose of the stages
and then define the relations formally.
For this we refer to the edges as left, top, bottom, and right edges
and abbreviate them by $\ell, t, b, r$.
If there are more than one edge on one side,
we enumerate them from left to right or from top to bottom.
\begin{enumerate}
	\item
	Check whether the ingoing edges are selected monotonically
	starting from the top.
	While doing this,
	select the $\ell$th edge going to the next stage
	if and only if the first $\ell-1$ edges are selected.
	The vertical edges encode whether all ingoing edges of the blocks above have been selected.

	We formally define
	$M = \{\ell tb,tr,\emptyset \}$.

	\item
	This stage assigns the different weights to the solutions
	depending on the number of selected ingoing and outgoing edges.
	By the first and last stage (monotonicity gadgets),
	this information is encoded by the position of the selected edge
	coming from the first stage
	and the position of the selected edge going to the third stage.
	The first and second vertical edge encodes whether $n-1$ ingoing edges
	and $n-1$ outgoing edges are selected,
	respectively.

	We define
	$R_0 = \{
		\ell r,
		\ell t_2b_2, t_1b_1r,
		\emptyset, t_1b_1, t_2b_2, t_1t_2b_1b_2
	\}$.
	The input $\ell r$ of $R_0$ is assigned weight $-1$
	and all other accepted inputs are assigned weight $1$.
	$R_W$ completely agrees with $R_0$
	except that the input $t_1t_2b_1b_2$ is assigned weight $-(n-2)$.

	\item
	If the $\ell$th edge from the bottom is selected,
	then this corresponds to element $u_\ell$ being selected from $U_i$.
	We propagate this information to the next stage.
	If $u_\ell$ is contained in $A_j$,
	then we assign relation $C^+$ to the $\ell$th block from the bottom
	and otherwise the relation $C^0$.
	The vertical edges are selected
	if the set has been hit by any element
	(possibly also by some element from some $U_{i'}$ with $i' < i$).

	We define
	$C^0=\{\emptyset, \ell r, tb, \ell tbr\}$
	and $C^+ = C^0 \cup \{\ell br \}$.

	\item
	This stage is a rotated version of the first stage.
	That is,
	if the $\ell$th edge from the bottom is selected,
	then we select $\ell-1$ outgoing edges.
	The vertical edges now correspond to the fact
	whether all outgoing edges from below have been selected.

	We define $M' = \{tbr, \ell b, \emptyset \}$.
\end{enumerate}
\begin{figure}
	\centering
	\includegraphics[page=2]{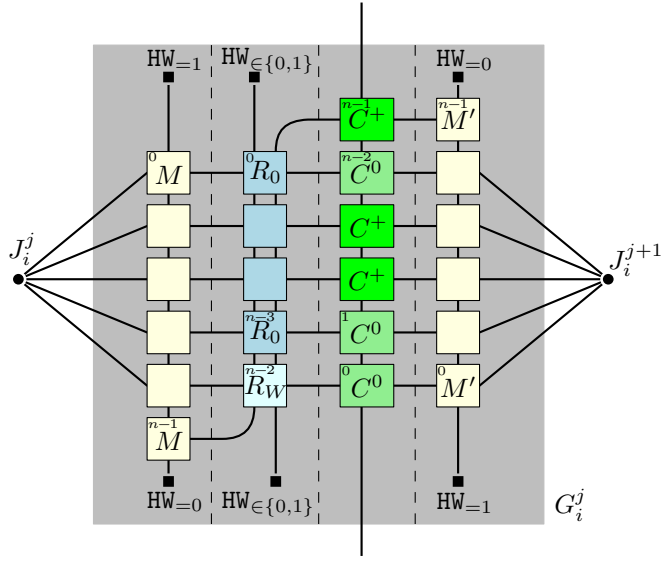}
	\caption{
	The gadget $G_i^j$ checks whether the element from $U_i$ hits the set $A_j$
	(here $u_3,u_4,u_6$ hit $A_j$ and the other elements not).
	It additionally ensures together with $J_i^j$
	that the information about this selected element
	is consistent in the whole graph.
	The different stages are separated by dashed lines.
	}
	\label{fig:count:W1mainGadget}
\end{figure}

\begin{claim}
	The pathwidth of the final graph is $k+\O(1)$.
\end{claim}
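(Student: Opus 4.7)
The plan is to construct a path decomposition that sweeps across the grid column by column, and within each column sweeps across rows from $i=1$ to $i=k$. The driving intuition is that between two adjacent columns $j$ and $j+1$, the only vertices straddling the interface are the $k$ boundary vertices $\{J_i^{j+1}\}_{i\in[k]}$: each $J_i^{j+1}$ is incident to $G_i^j$ on its left and to $G_i^{j+1}$ on its right through the $n-1$ parallel edges. I would thus arrange that $\{J_i^{j+1}\}_{i\in[k]}$ forms the frontier at each column boundary, contributing exactly $k$ vertices to the bag there.

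Next I would describe how the bag evolves inside a single column $j$. Before processing row $i$, the bag should contain the mixed $J$-frontier $\{J_{i'}^{j+1} : i' < i\} \cup \{J_{i'}^j : i' \ge i\}$, which still has exactly $k$ vertices. Then I would introduce $J_i^{j+1}$ and open up the gadget $G_i^j$; while $G_i^j$ is being processed, the bag additionally contains $J_i^j$, a constant-size sliding window inside $G_i^j$, and (for $i>1$) the endpoint in $G_{i-1}^j$ of the vertical edge between $G_{i-1}^j$ and $G_i^j$ retained from the previous row. Once $G_i^j$ is fully processed, I would forget $J_i^j$ and the internal vertices of $G_i^j$, keeping only the endpoint in $G_i^j$ of the vertical edge to $G_{i+1}^j$ for use in the next iteration. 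The boundary attachments (monotonicity gadgets at $j=1$, fresh vertices at $j=m$, and the $\HWeq{0}$ and $\HWeq{1}$ pendants) each add only $\O(1)$ to the bag locally and can be absorbed into the sliding window.

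The key supporting step is to argue that each $G_i^j$ admits an \emph{internal} path decomposition of constant width, which is what the sliding window above requires. Inspecting \cref{fig:count:W1mainGadget}, each gadget consists of four stages; stage 3 has $n$ blocks (one per element of $X_i$) arranged in a linear chain, while the other stages have $\O(1)$ blocks, and every block uses $\O(1)$ vertices with $\O(1)$-size interfaces to its neighbors inside the gadget. Processing the blocks left-to-right within each stage and then stage-by-stage therefore yields an internal path decomposition of width $\O(1)$. Putting everything together, the bag size at any point is $k + \O(1)$. The main thing I expect to be slightly delicate is the precise bookkeeping showing that $J_i^j$ can be safely forgotten after row $i$ of column $j$ is fully processed, ensuring that the $J$-frontier always has exactly $k$ elements and never temporarily grows to $2k$ during a column sweep.
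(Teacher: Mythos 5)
Your construction is essentially the paper's proof restated as an explicit path decomposition rather than a mixed-search strategy: both sweep the grid column by column, maintaining $k$ searchers (bag vertices) on the current $J$-frontier plus $\O(1)$ inside the gadget $G_i^j$ currently being processed. The paper places a searcher on $J_i^{j+1}$ before cleaning $G_i^j$ and removes the one on $J_i^j$ afterward, which is exactly your ``mixed frontier'' bookkeeping. One small inaccuracy in your internal argument about $G_i^j$: all four stages, not only stage~3, have $\Theta(n)$ blocks (stage 1 has one $M$-node per edge to $J_i^j$, stage 2 one $R$-node per degree value, stage 3 one $C$-node per element, stage 4 one $M'$-node per edge to $J_i^{j+1}$), so the gadget is roughly a $4\times\Theta(n)$ grid rather than a comb with an $\O(1)$-sized spine. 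Consequently a literal stage-by-stage sweep would produce $\Theta(n)$-sized bags; the sweep that works runs position by position, keeping the $\O(1)$ blocks at the current position across all four stages in the bag simultaneously. This still yields $\O(1)$ internal pathwidth, so your conclusion $k+\O(1)$ stands; only the intermediate description of the sweep order inside the gadget needs adjusting.
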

\begin{claimproof}
	We use a mixed search strategy to clean the graph from left to right.
	For this we place searchers on the vertices $J_1^j$, for all $j$,
	and clean the monotonicity gadgets going from top to bottom.

	The remaining graph is cleaned in $m$ rounds.
	For the $j$th round we clean the gadgets $G_i^j$ from top to bottom.
	Before cleaning each $G_i^j$
	we place one searcher on $J_i^{j+1}$
	and afterwards we remove the searcher from $J_i^{j}$.
	Then, we continue with the next row until all $G_i^j$ are cleaned
	and we proceed to the $j+1$th round.
\end{claimproof}
For the correctness of the final construction it remains to show
that the gadget $G_i^j$ together with vertex $J_i^j$
satisfy all properties of $F_i^j$.
For this we observe
that the fourth stage of $G_i^j$ ensures property (1) of $F_i^j$.
The third stage ensures the third property
and the
first and second stage together with $J_i^j$ ensure property (2)
by the choice of the weights.

\subsection{Replacing the Relations}
\label{sec:lower:count:relations}
In this section, we show
the reduction from \CountAntiFactorR{\Y} to \CountAntiFactor{(\Ex,\Y)},
i.e., we prove \cref{lem:count:AntiFactorRToAntiFactor},
by a chain of reductions
(cf.\ \cref{fig:count:realization:chain}).
We make use of the Holant framework,
which was also used in \cite{MarxSS21},
to formally state the results.
The first step uses Lemmas~7.5 and 7.6 from \cite{MarxSS21}.
Observe for this that the lemmas work for \CountBFactor even when
$B$ is cofinite,
that is \CountAntiFactor{\overline B}
because the simple vertices of the instance are not changed in any way.

\begin{figure}
	\centering
	\centering
\begin{tikzpicture}[
problem/.style={rectangle, draw=black, text=black, minimum height=0.75cm,
minimum width=4.5cm
},
red/.style
  = {draw=black, -latex,thick},
lem/.style={text=black,midway}
]
  \node[problem,thick] (antFacR) at (0, 0)
    {\CountAntiFactorR{\Y}};

  \node[problem,thick] (antFac) [below=of antFacR]
    {\Hol{\HWin{\overline\Y}, \HWin{\Exbar}}};

  \node[problem, minimum width=6cm] (HWmixed1) [right=3cm of antFac]
    {\Hol{\HWin{\overline\Y},\wtnode{1,1},\wtnode{-1,1},\wtnode{0,1}}};

  \node[problem, minimum width=6cm] (HWone) at (HWmixed1 |- antFacR)
        {\Hol{\HWin{\overline\Y},\HWeq{1}}};

  \node[problem, minimum width=6cm] (HWmixed2) [below=of HWmixed1]
        {\Hol{\HWin{\overline\Y}, \HWin{\Exbar},\wtnode{-1,1},\wtnode{0,1}}};

  \node[problem] (HWmixed3) at (antFac |- HWmixed2)
        {\Hol{\HWin{\overline\Y}, \HWin{\Exbar},\wtnode{0,1}}};

  \draw[red] (antFacR) -- (HWone)
    node[lem, above]
    {\cref{lem:count:relationsToForcingEdges}};

  \draw[red] ($(HWone.south) - (3mm,0)$)
    -- ($(HWmixed1.north) - (3mm,0)$);
  \draw[red] (HWone) -- (HWmixed1);
  \draw[red] ($(HWone.south) + (3mm,0)$)
    -- ($(HWmixed1.north) + (3mm,0)$)
    node[lem, right, align=left]
    {\cref{lem:count:forcingEdges:case1,lem:count:forcingEdges:case2,lem:count:forcingEdges:case3}\\
    (\cref{lem:count:forcingEdges})
    };

  \draw[red] (HWmixed1) -- (HWmixed2)
    node[lem, right]
    {\cref{lem:count:wtnode-removal}};

  \draw[red] (HWmixed2) -- (HWmixed3)
    node[lem, above]
    {\cref{lem:count:wtnode-removal}};

  \draw[red] (HWmixed3) -- (antFac)
    node[lem, right]
    {\cref{lem:count:wtnode-removal}};

  \draw[red,dotted] (HWmixed1) -- (antFac)
    node[lem, above]
    {\cref{lem:count:wtnode-interpolation}};

  \draw[red,dotted] (antFacR) -- (antFac)
    node[lem, right]
    {\cref{lem:count:AntiFactorRToAntiFactor}};

\end{tikzpicture}
	\caption{
	Steps in the chain of reductions
	from \CountAntiFactorR{\Y}
	to \CountAntiFactor{(\Ex,\Y)}, i.e., \Hol{\HWin{\overline \Y},\HWin{\Exbar}}.
	Dotted lines indicate results
	obtained by combined reductions.
	}
	\label{fig:count:realization:chain}
\end{figure}
\begin{lemma}[Lemma~7.5 and~7.6 in \cite{MarxSS21}]
	\label{lem:count:relationsToForcingEdges}
	There is a polynomial-time Turing reduction from \CountAntiFactorR{\Ex}
	to $\Hol{\HWin{\Exbar}, \HWeq{1}}$
	such that the maximum degree increases to at least 6 and the
	pathwidth increases by at most a constant depending only on $\totalDeg$,
	where $\totalDeg$ denotes the maximum total
	degree of the complex nodes in any bag of the path decomposition.
\end{lemma}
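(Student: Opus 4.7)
The plan is to invoke the construction of Lemmas~7.5 and~7.6 from~\cite{MarxSS21} essentially verbatim, after verifying that nothing in that construction relies on the simple nodes carrying a finite set $B$ rather than the cofinite set $\Exbar$ used here. The central observation, already flagged in the paragraph preceding the lemma statement, is that the MarxSS21 reduction only rewrites the complex nodes and never inspects or modifies the simple ones, so swapping the label $B$ for $\Exbar$ at the simple nodes is transparent to the construction.

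Concretely, I would handle each complex node $v$ with relation $R_v$ separately. Following \cite{MarxSS21}, the relation $R_v$, presented as a truth table on the incidence set $I(v)$, is realized by a gadget composed exclusively of fresh \HWeq{1} nodes attached to the dangling edges of $v$. Because a generic relation cannot be realized exactly by a single unweighted gadget, the reduction is Turing rather than many-one: several auxiliary instances are constructed by varying the sizes of certain \HWeq{1}-trees attached to the gadget, and a polynomial interpolation over the resulting Holant values recovers the true contribution of $R_v$. Substituting these gadgets into the input graph yields an instance of $\Hol{\HWin{\Exbar},\HWeq{1}}$, because the only surviving vertices of the original graph are the simple nodes (with signature \HWin{\Exbar}) and the freshly introduced \HWeq{1} vertices.

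The last step is to verify the parameter bounds. The degree bound of at least $6$ arises from the internal structure of the \HWeq{1}-based interpolation subgadgets used in \cite{MarxSS21}, which introduce vertices of degree up to $6$. For pathwidth, a complex node $v$ of degree $d$ is replaced by a gadget of size $f(d)$ that shares with the rest of the graph exactly the $d$ original dangling edges; hence every bag of the original path decomposition that contained $v$ can be enlarged by $f(d)$ to cover all the new \HWeq{1} vertices at once. Summing over the complex nodes in any single bag, the blow-up is at most a function of $\totalDeg$, giving the claimed bound $\pw + g(\totalDeg)$.

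The main obstacle is really just careful verification rather than new technical content: one must trace through the MarxSS21 construction and confirm that every gadget and every parameter bound there is defined purely in terms of the relations $R_v$ and the structure of the input graph, and never in terms of the signature carried by the simple nodes. Once this is confirmed, the entire argument transfers directly, and the lemma follows.
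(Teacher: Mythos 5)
Your high-level plan matches the paper's: the lemma is essentially a citation of Lemmas~7.5 and~7.6 of~\cite{MarxSS21}, and the key observation that makes the transfer legitimate is exactly the one you identify, namely that their construction only rewrites complex nodes and never touches the simple ones, so the signature on the simple nodes (finite $B$ versus cofinite $\Exbar$) is invisible to it.

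However, there is one genuine gap. You assert that it suffices to check that the MarxSS21 construction depends only on the relations $R_v$ and the graph structure, but you do not check the hypothesis of their Lemma~7.5: it requires the relations to be \emph{even}, i.e.\ every accepted tuple must have even Hamming weight. That hypothesis is guaranteed in \BFR (Definition~\ref{def:bfactor-relation} explicitly demands a fixed even weight $c_v$), but \AntiFactorR{\Ex} places no parity restriction on the relations, so a blind application of Lemma~7.5 fails. The paper patches this by augmenting each relation with one extra input that is forced on precisely when the original input has odd Hamming weight, so that the augmented relation is even; this extra dangling edge is then grounded at an $\EQ{1}$ node, which is itself realized by saturating a fresh vertex with $\max\Ex+1$ forced edges via $\HWeq[1]{1}$ nodes. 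Without this parity-completion step your reduction is invoking a lemma whose precondition is not met, so the argument as written does not go through for relations with mixed-parity accepting sets.
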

\begin{note*}
	Lemma~7.5 in \cite{MarxSS21} requires that the relation is even,
	i.e., the Hamming weight of every accepted input is even.
	We can easily make every relation even by adding an additional input
	that is selected whenever the parity of the original input is odd.
	This additional input is then connected to a \EQ{1} node,
	which can easily be realized
	by forcing $\max\Ex+1$ edges to a fresh vertex using \HWeq[1]{1} nodes.
\end{note*}
Before proceeding with the next steps,
we define, for all $x,y\in \SetZ$,
$\wtnode{x,y}$ as a new type of node
which has one dangling edge $e$ and the following signature:
\[
f(e) = \begin{cases}
	x & \text{ if $e$ is not selected}
	\\
	y & \text{ if $e$ is selected}
\end{cases}
.
\]
Observe that \HWeq[1]{1} is precisely \wtnode{0,1}
and \HWin[1]{\{0,1\}} corresponds to \wtnode{1,1}.
In the following constructions we additionally use a \wtnode{-1,1} node.
We use the \wtnode{x,y} notation in the following wherever possible.

\begin{lemma}
	\label{lem:count:forcingEdges}
	Let $\Ex \subseteq \SetN$ be a finite set
	such that $\Ex \not\subseteq \{0\}$.
	Let $R_1,\dots,R_d$ be $d$ arbitrary relations for some $d \ge 0$.
	There is a polynomial-time Turing reduction from
	\[
	\Hol{R_1,\dots,R_d,\HWin{\Exbar}, \HWeq{1}}
	\text{ to } \Hol{R_1,\dots,R_d,\HWin{\Exbar}, \wtnode{1,1}, \wtnode{-1,1}, \wtnode{0,1}}
	\]
	such that
	$\totalDeg$ increases to $\totalDeg \cdot f(\max \Ex)$
	and $\pw$ increases to $\pw+\totalDeg \cdot f(\max \Ex)$.
\end{lemma}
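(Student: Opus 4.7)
The plan is to replace each \HWeq[k]{1} node in the source Holant instance by an equivalent gadget built from the target building blocks: simple \HWin{\Exbar} nodes together with the three weighted unary nodes \wtnode{0,1}, \wtnode{1,1}, \wtnode{-1,1}. The base case $k=1$ is immediate, since \HWeq[1]{1} is literally \wtnode{0,1}. For $k \geq 2$, I would realize \HWeq[k]{1} by inductively chaining smaller primitives: first build \HWeq[2]{1} and \HWeq[3]{1} directly, then obtain \HWeq[k]{1} for $k \geq 4$ by attaching a \HWeq[3]{1} and a \HWeq[k-1]{1} to a common \HWeq[2]{1} mediator (identifying one dangling edge of each primitive with the two dangling edges of the \HWeq[2]{1}). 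A direct case analysis on the state of the mediator shows that the resulting signature is exactly \HWeq[k]{1} and that the multiplicities of the contributing configurations compose correctly in the Holant sum.

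To construct the primitive \HWeq[2]{1} and \HWeq[3]{1}, I would use a single \HWin{\Exbar} vertex $v$ of total degree $k+L$, where $k\in\{2,3\}$ edges are the external dangling edges and $L$ is a constant (depending only on $\max \Ex$) number of internal dangling edges, each attached to one of the three weighted unary node types. By the symmetry of \HWin{\Exbar}, the contribution at $v$ depends only on the count $j$ of selected external edges and the selected subset $I$ of internal edges, so the effective signature on the external edges is
\[
    g(j) = \sum_{\substack{I \subseteq [L] \\ j+|I| \notin \Ex}} \prod_{i \in I} y_i \prod_{i \notin I} x_i,
\]
where $(x_i, y_i)$ is the weight pair of the unary node attached to internal edge $i$. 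I would then choose the multiplicities of the three allowed weighted types so that $g(1) = 1$ and $g(j) = 0$ for all $j \in \{0\} \cup \{2,\ldots,k\}$. This is a small linear system in a bounded number of unknowns, whose feasibility I would establish by a case analysis on whether $0 \in \Ex$: in the affirmative case, clique constructions as in \cref{lem:dec:realization-hw11-eqk} immediately force edges, and the weighted unary nodes provide the sign corrections; in the negative case, the assumption $\Ex \not\subseteq \{0\}$ guarantees a positive forbidden degree, and one uses \wtnode{-1,1} together with this element to carry out a polynomial-interpolation cancellation that leaves only $j=1$ nonzero.

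With the family of \HWeq[k]{1} gadgets in hand, I would substitute each \HWeq[k]{1} node of the source instance by the corresponding gadget, leaving the $R_1,\ldots,R_d$ and \HWin{\Exbar} nodes untouched. Each gadget has $O(k \cdot f(\max \Ex))$ internal vertices of constant degree, and can be accommodated in the path decomposition by augmenting every bag that contains the replaced node with the $O(f(\max \Ex))$ new vertices introduced by each internal \HWeq[2]{1}-style block. Summed over all complex nodes in a bag, this raises $\totalDeg$ to at most $\totalDeg \cdot f(\max \Ex)$ and pathwidth to at most $\pw + \totalDeg \cdot f(\max \Ex)$, matching the claimed bounds.

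The main obstacle is constructing the primitive gadgets \HWeq[2]{1} and \HWeq[3]{1} for arbitrary finite $\Ex \not\subseteq \{0\}$. Unlike the decision and optimization versions (where exceeding a threshold or paying a penalty is tolerated), the counting version demands that the signature equal the indicator of ``exactly one dangling edge selected'' \emph{on the nose}: no overcount, no undercount, no penalty. This removes the slack exploited in \cref{lem:dec:realization-hw11-eqk} and \cref{lem:relation:hw1-eq-optimization} and forces a genuine linear-algebraic argument over the weights. The delicate part is verifying that the three weighted unary types, combined with a single \HWin{\Exbar} vertex, span a rich enough space of symmetric signatures to hit the \HWeq[k]{1} signature exactly, for every admissible $\Ex$.
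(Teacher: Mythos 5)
The skeleton of your plan --- replace each $\HWeq{1}$ node by a gadget built from $\HWin{\Exbar}$ vertices plus the three weighted unary nodes, build $\HWeq[2]{1}$ and $\HWeq[3]{1}$ as primitives, and then chain inductively via a shared $\HWeq[2]{1}$ mediator --- matches the paper's high-level structure. But the genuinely hard part, constructing the primitive gadgets exactly for every admissible finite $\Ex$, is where your proposal breaks down, and the gap is not a small one.

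First, your case split on whether $0\in\Ex$ is the wrong dichotomy. The paper's proof (\cref{lem:count:forcingEdges:case1,lem:count:forcingEdges:case2,lem:count:forcingEdges:case3}) splits on the gap structure of $\Exbar$, i.e.\ whether $\maxgap(\Exbar)>1$, $=1$, or $=0$. This is the distinction that actually controls which building blocks are available: a gap of size $\ge 2$ in $\Exbar$ is what lets you force edges and then ``open a window'' of exactly $d+1$ dangling edges to get an $\EQ{d+1}$ node with exact multiplicity $1$, after which $\wtnode{1,1}$ nodes can be used to \emph{collapse} the window to $\EQ{3}$ without introducing overcounting (because every intermediate degree falls in $\Ex$ and is killed). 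Your criterion $0\in\Ex$ does not track this; there are sets with $0\in\Ex$ in each of the three cases, and the machinery available to you (forcing by cliques versus sign cancellation) really does differ between them.

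Second, your central step --- posing $g(1)=1$, $g(j)=0$ for $j\in\{0,2,\dots,k\}$ as ``a small linear system'' and asserting its feasibility --- is exactly the thing that needs to be proved, and it is false in general if you only allow the three unary weight types as written. The space of symmetric signatures you can reach at a single $\HWin{\Exbar}$ vertex by hanging copies of $\wtnode{0,1}$, $\wtnode{1,1}$, $\wtnode{-1,1}$ is highly constrained: $\wtnode{0,1}$ is a forced edge (it only shifts the base degree), and $\wtnode{1,1}$ and $\wtnode{-1,1}$ give binomial-type convolutions with coefficients $\binom{L}{i}$ and $\sum_i (-1)^i\binom{L}{i}$ respectively, not arbitrary vectors. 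The paper's proof in \cref{lem:count:forcingEdges:case2,lem:count:forcingEdges:case3} does not solve a general linear system; it exhibits a single red edge and a single blue edge and verifies by direct enumeration that the contributions cancel to $0$ for the wrong Hamming weights and sum to $1$ for the right one. That cancellation identity ($1-1+1-1=0$ for $\ge 2$ dangling edges selected, $1+(-1)=0$ for $0$, $1-1+1=1$ for exactly one) uses the precise interaction between the degree constraints at $v$ and the $\pm 1$ weights, and is the technical heart of the argument; it is not recovered by a feasibility claim.

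Third, your plan silently assumes the primitive gadget realizes $\HWeq[\ell]{1}$. In the paper's cases \ref{case:forcingEdges:gapTwo:oneToK} and \ref{case:forcingEdges:gapZero:zeroToK}, the cancellation gadget realizes $\HWin[\ell]{\{0,1\}}$, not $\HWeq[\ell]{1}$, and an additional Turing-reduction step (Lemma~7.8 from \cite{MarxSS21}, via interpolation) is needed to pass from $\Hol{\HWin{\Exbar},\HWeq{1}}$ to $\Hol{\HWin{\Exbar},\HWin{\{0,1\}}}$ before the gadgets can be deployed. Your proposal omits this and would therefore realize the wrong signature in those cases. In short, you have identified the shape of the argument and, correctly, the obstacle; but the construction of the primitives --- which is the whole content of the lemma --- is left as an unverified feasibility claim, with an incorrect case split and without the needed $\HWeq{1}\to\HWin{\{0,1\}}$ detour.
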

For the proof of \cref{lem:count:forcingEdges}
we distinguish the following cases depending on the structure of $\Ex$:
\begin{enumerate}
	\item
	\label{case:forcingEdges:gapTwo}
	$\maxgap(\Exbar)>1$:
	\begin{enumerate}
		\item
		\label{case:forcingEdges:gapTwo:notOneToK}
		$\Ex \neq [k]$ for all $k\geq 2$
		\item
		\label{case:forcingEdges:gapTwo:oneToK}
		$\Ex = [k]$ for some $k\geq 2$
	\end{enumerate}

	\item
	\label{case:forcingEdges:gapOne}
	$\maxgap(\Exbar)=1$:
	In this case, $\Ex$ contains no two consecutive numbers
	but contains at least one positive integer.

	\item
	\label{case:forcingEdges:gapZero}
	\label{case:forcingEdges:gapZero:zeroToK}
	$\maxgap(\Exbar)=0$:
	In this case, there is a $k \ge 1$ such that $\Ex=[0,k]$.
\end{enumerate}
Note that the uncovered case $\Ex=\{0\}$ corresponds to the edge-cover problem.
In this case, our techniques to realize relations fail.
However,
we show the tight lower bound by a separate reduction in \cref{sec:lower:edge-cover}.

We group Cases~\ref{case:forcingEdges:gapTwo:oneToK}
and \ref{case:forcingEdges:gapZero:zeroToK} together
such that we are left with three separate proofs.
Then, \cref{lem:count:forcingEdges} follows from
\cref{lem:count:forcingEdges:case1,lem:count:forcingEdges:case2,%
lem:count:forcingEdges:case3}.

\begin{lemma}[Case~\ref{case:forcingEdges:gapTwo:notOneToK} of \cref{lem:count:forcingEdges}]
	\label{lem:count:forcingEdges:case1}
	Let $\Ex\subseteq \SetN$ be a finite set such that
	$\maxgap(\Exbar)>1$ and $\Ex\neq [k]$ for some $k\geq 2$.
	Let $R_1,\dots,R_d$ be $d$ arbitrary relations for some $d \ge 0$.
	There is a polynomial-time many-one reduction from
	\[
	\Hol{R_1,\dots,R_d,\HWin{\Exbar}, \HWeq{1}}
	\text{ to } \Hol{R_1,\dots,R_d,\HWin{\Exbar}, \wtnode{1,1}, \wtnode{0,1}}
	\]
	such that
	$\totalDeg$ increases to $\totalDeg \cdot f(\max \Ex)$
	and $\pw$ increases to $\pw+\totalDeg \cdot f(\max \Ex)$.
\end{lemma}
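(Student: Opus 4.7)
The plan is to replace every $\HWeq[k]{1}$ node in a $\Hol{R_1,\dots,R_d,\HWin{\Exbar},\HWeq{1}}$ instance by a constant-size gadget constructed from $\HWin{\Exbar}$ simple vertices and $\wtnode{0,1}$ forced pendants (the tool $\wtnode{1,1}$ is available but not needed in the present case). Each gadget will match the signature of $\HWeq[k]{1}$ up to a constant multiplier $c_k$ that depends only on $\Ex$ and $k$; the overall Holant then changes by the global factor $\prod_v c_{k_v}$, a single rational number that can be computed alongside the reduction and divided out.

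The hypothesis $\maxgap(\Exbar)>1$ together with $\Ex\neq[k]$ allows me to fix integers $a\geq 1$ and $d\geq 2$ satisfying $[a+1,a+d]\subseteq\Ex$ and $a,a+d+1\in\Exbar$; verifying that such a pair exists is the chief combinatorial step and is handled by inspecting the gap structure of $\Ex$, exploiting the fact that $\Ex$ is not the pure initial interval $[k]$. Using this gap I realize $\EQ{d+1}$ with a single simple vertex having $a$ forced pendants and $d+1$ dangling edges, whose degree $a+x$ lies in $\Exbar$ iff $x=0$ or $x=d+1$; chaining these yields $\EQ{k}$ of arbitrary arity exactly as in \cref{lem:dec:realization-hw11-eqk}.

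Next I realize $\HWeq[k]{1}$ for $k\in\{2,3\}$ via the two-vertex construction of \cref{lem:relation:hw1-decision}: introduce simple vertices $u,v$, connect them by $k$ copies of $\EQ{3}$ with one edge of each left dangling externally, and attach $a+d$ forced pendants to $u$ and $a-1$ forced pendants to $v$ (omit the latter when $a=1$). If $x$ of the $k$ external edges are selected, then $u$ has degree $x+a+d$ and $v$ has degree $x+a-1$; the constraint at $u$ forbids $x=0$ (since $a+d\in\Ex$) and the constraint at $v$ forbids $x\in\{2,\dots,d+1\}$ (since $a+1,\dots,a+d\in\Ex$), forcing $x=1$ as required. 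For $k\geq 4$ I induct by identifying one edge each of a $\HWeq[k-1]{1}$ and a $\HWeq[3]{1}$ realization with the two edges of a $\HWeq[2]{1}$ realization.

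Substituting each $\HWeq[k]{1}$ node of the input by its realization finishes the many-one reduction, leaving an instance that uses only $\HWin{\Exbar}$, $\wtnode{0,1}$, and the unchanged $R_1,\dots,R_d$. Every realization has constant size depending only on $\max\Ex$ and on $k\leq\totalDeg$, so pathwidth and total degree per bag grow by at most $\totalDeg\cdot f(\max\Ex)$ as claimed. The main obstacle is ensuring that each realization equals $\HWeq[k]{1}$ up to a truly constant multiplier rather than a pattern-dependent one; this is guaranteed by the fact that every sub-gadget ($\EQ{d+1}$, $\EQ{3}$, $\wtnode{0,1}$) admits exactly one completion per valid external pattern and zero otherwise, a property that composes cleanly through the chaining and inductive gluing steps and therefore collapses to a single global constant that can be corrected at the end.
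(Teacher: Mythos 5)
The chief combinatorial step in your proposal --- asserting that the hypotheses guarantee integers $a\geq 1$ and $d\geq 2$ with $[a+1,a+d]\subseteq\Ex$ and $a,a+d+1\in\Exbar$ --- is false. Consider $\Ex=\{1,2,4\}$, so $\Exbar=\{0,3,5,6,\ldots\}$. We have $\maxgap(\Exbar)=2>1$ (the gap between $0$ and $3$), and $\Ex\neq [k]$ for every $k$, so the hypotheses of the lemma hold. But the only gap of size $\geq 2$ in $\Exbar$ starts at $a=0$; the gap between $3$ and $5$ has size $1$. There is therefore no $a\geq 1$ and $d\geq 2$ as you claim, and all of the subsequent constructions (the $\EQ{d+1}$ node, the two-vertex $\HWeq[k]{1}$ gadget requiring $a-1\geq 0$ forced pendants at $v$) cannot even be set up.

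The paper's proof is structured precisely to sidestep this: it allows $a\geq 0$ for the primary large gap and then, using $\Ex\neq [k]$, extracts a \emph{second} gap with $a'>0$ but only $d'\geq 1$, combining the two in the $\HWeq[2]{1}$ gadget (with $a'-1$ forced pendants on $u$ and $\max\Ex$ forced pendants on $v$, the latter giving the ``at least one'' constraint without needing $a\geq 1$). It also has a dedicated subcase for $a=0,d'=1$ when constructing $\HWeq[3]{1}$, which your induction base does not cover. You would need to rework your base cases along these lines before the inductive step for $k\geq 4$ can be salvaged.
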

\begin{proof}
	Since $\maxgap(\Exbar)=d>1$,
	there is some $a\ge 0$ such that
	$[a,a+d+1]\cap \Ex=[a+1,a+d]$.
	We show how to get $\HWeq[2]{1}$ and \HWeq[3]{1} nodes.
	This is enough to get $\HWeq[\ell]{1}$ nodes for arbitrary $\ell$
	by using the construction from \cref{lem:relation:hw1-decision}
	(cf.~Lemma~5.5 from \cite{MarxSS21} for more details).

	We follow the process from Lemma~5.8 from \cite{MarxSS21}
	to first get a $\EQ{3}$ node:
	We first construct a \EQ{d+1} node
	by forcing $a$ edges to a new vertex using \wtnode{0,1} nodes.
	Adding $d-2$ copies of a \wtnode{1,1} node to this \EQ{d+1} node
	gives us a \EQ{3} node.

	Next we construct a \HWeq[2]{1} node.
	Since $\maxgap(\Exbar)>1$ and $\Ex \neq [k]$,
	there are $a'>0$ and $d'>0$ such that
	$[a',a'+d'+1]\cap \Ex=[a'+1,a'+d']$.
	Take two $\EQ{3}$ nodes and set one edge of each of them as a dangling edge.
	Using the remaining two edges, connect each of them to two vertices $u$ and $v$.
	Attach $a'-1$ copies of a \wtnode{0,1} node to $u$
	and $\max \Ex$ copies of a \wtnode{0,1} node to $v$.
	Observe that, by virtue of $u$,
	at most one $\EQ{3}$ node can have its edges selected
	and by virtue of $v$, at least one $\EQ{3}$ node must have its edges selected.
	Thus, exactly one dangling edge must be selected in any solution.

	To get a \HWeq[3]{1} node we distinguish two cases.
	If $d'\ge 2$ or $a \ge 1$ (and $d\ge 2$),
	we can use the same approach as for the \HWeq[2]{1} node
	but using three \EQ{3} nodes instead.
	For the case $a=0$ and $d'=1$, this construction is not sufficient.
	Then, we force $d-1$ edges to a new vertex and add three \HWeq[2]{1} nodes via one edge to it.
	The remaining edges of the \HWeq[2]{1} nodes are the dangling edges.
	If $d+2 \in \Ex$, then we get a \HWeq[3]{1} gadget and we are done.
	Otherwise we have a \HWin[3]{\{0,1\}} gadget.
	In this case we can use the construction for the \HWeq[2]{1} gadget
	with three \EQ{3} nodes and replace the vertex $u$ by this \HWin[3]{\{0,1\}} gadget.
\end{proof}

\begin{lemma}[Case~\ref{case:forcingEdges:gapOne} of \cref{lem:count:forcingEdges}]
	\label{lem:count:forcingEdges:case2}
	Let $\Ex\subseteq \SetN$ be a finite set such that
	$\maxgap(\Exbar)=1$.
	Let $R_1,\dots,R_d$ be $d$ arbitrary relations for some $d \ge 0$.
	There is a polynomial-time many-one reduction from
	\[
	\Hol{R_1,\dots,R_d,\HWin{\Exbar}, \HWeq{1}}
	\text{ to } \Hol{R_1,\dots,R_d,\HWin{\Exbar}, \wtnode{1,1}, \wtnode{-1,1}, \wtnode{0,1}}
	\]
	such that
	$\totalDeg$ increases to $\totalDeg \cdot f(\max \Ex)$
	and $\pw$ increases to $\pw+\totalDeg \cdot f(\max \Ex)$.
\end{lemma}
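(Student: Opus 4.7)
The plan is to realize every relation $\HWeq[\ell]{1}$ for $\ell \geq 1$ using only the target gadgets. Since $\HWeq[1]{1}$ equals $\wtnode{0,1}$ and the inductive construction from Lemma~5.5 of~\cite{MarxSS21} builds $\HWeq[\ell]{1}$ for $\ell \geq 4$ from copies of $\HWeq[2]{1}$ and $\HWeq[3]{1}$ linked through a common $\HWeq[2]{1}$, it suffices to realize $\HWeq[2]{1}$ and $\HWeq[3]{1}$.

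First I would build $\EQ{m}$ as a routing primitive. Let $k$ be the smallest positive element of $\Ex$, which exists because $\Ex \not\subseteq \{0\}$; since $\maxgap(\Exbar) = 1$, both $k-1$ and $k+1$ lie in $\Exbar$. A vertex with signature $\HWin[k+1]{\Exbar}$, whose $k-1$ extra edges are attached to $\wtnode{0,1}$ forcing nodes, then has effective signature $(1,0,1) = \EQ{2}$ on its two remaining dangling edges; chaining these gives $\EQ{m}$ for every $m \geq 2$.

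For $\HWeq[2]{1}$ and $\HWeq[3]{1}$ the main tool is the sign-flipping node $\wtnode{-1,1}$. Attaching such a node to an edge of a symmetric signature $f$ transforms it into the first forward difference $w \mapsto f(w+1)-f(w)$, and attaching $t$ of them yields the $t$-th forward difference $\Delta^t f$. Because $\HWin{\Exbar}$ has signature $f(w) = 1-[w\in \Ex]$ and $\Delta^t$ annihilates the constant summand for $t \geq 1$, the result depends only on $\Ex$ and is supported in a window of width $t$ around $\Ex$. Taking $\HWin{\Exbar}$ of sufficiently large arity and attaching a carefully chosen mixture of $\wtnode{-1,1}$ and $\wtnode{0,1}$ nodes to all but two (respectively three) of its edges, the effective signature on the remaining dangling edges becomes a prescribed integer linear combination of shifted indicators of $\Ex$. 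Since $\Ex$ is finite with no two consecutive integers, the symmetric signature $(0,1,0)$ and its three-edge analogue $(0,1,0,0)$ should lie in this span up to a nonzero scalar, which can be absorbed by extra $\wtnode{1,1}$ and $\EQ{m}$ attachments.

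The hard part will be producing an \emph{exact} realization of $(0,1,0)$ and $(0,1,0,0)$ rather than only a scalar multiple; this requires an explicit choice of the arity of the base $\HWin{\Exbar}$ vertex together with the number of $\wtnode{-1,1}$ and $\wtnode{0,1}$ nodes attached, and the argument will split into subcases according to the structure of $\Ex$ (in particular whether $0 \in \Ex$ and the value of $k$). Once the gadgets are specified, the parameter analysis is routine: each replacement has size bounded by a function of $\max \Ex$ and increases both the pathwidth and total degree by $\O(1)$ per replaced vertex, matching the claimed bounds.
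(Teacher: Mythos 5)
Your forward-difference observation about $\wtnode{-1,1}$ is the right lens, but you stop short of carrying it out, and you underestimate how clean the answer is. You explicitly flag the production of $(0,1,0)$ and $(0,1,0,0)$ as ``the hard part'' and propose to split into subcases depending on $\Ex$; the paper's construction needs no subcases at all, and also bypasses your detour through $\EQ{m}$, $\HWeq[2]{1}$, $\HWeq[3]{1}$, and the Lemma~5.5 induction. The key is to work at the \emph{top} of $\Ex$ rather than at your choice of smallest positive element: set $k=\max\Ex$. Because $\maxgap(\Exbar)=1$ means $\Ex$ has no two consecutive integers and because everything above $k$ is allowed, we automatically have $k\in\Ex$ and $k-1,k+1,k+2,\dots\notin\Ex$. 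Now take one vertex $v$ of signature $\HWin{\Exbar}$, attach $k-1$ $\wtnode{0,1}$ nodes (forcing $k-1$ edges), one $\wtnode{1,1}$ node (free edge), one $\wtnode{-1,1}$ node, and $\ell$ dangling edges. In your difference language: shifting by $k-1$ and then applying one $\wtnode{1,1}$ and one $\wtnode{-1,1}$ to $f(w)=1-[w\in\Ex]$ yields exactly $g(w)=[w+k-1\in\Ex]-[w+k+1\in\Ex]$, which evaluates to $(0,1,0,0,\dots)$ because $k-1,k+1\notin\Ex$, $k\in\Ex$, and $w+k-1>\max\Ex$ for $w\ge 2$. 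So $v$ directly realizes $\HWeq[\ell]{1}$ for every $\ell$ in one shot. (A small secondary point: even if you only got a scalar multiple of $(0,1,0)$, that would already suffice for a many-one reduction, since the scalar factors out of the Holant; so your worry about exactness versus a scalar multiple is also not load-bearing.) As written, your argument is a plan, not a proof — the step that would actually establish the lemma is precisely the one you defer.
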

\begin{proof}
	Let $k=\max \Ex$.
	Since $\maxgap(\Exbar)=1$
	and $k\in\Ex$, we have $k-1,k+1\not\in\Ex$.
	Create a vertex $v$ and attach $k-1$ copies of a $\wtnode{0,1}$ node,
	one $\wtnode{1,1}$ node (by a red edge) and
	a $\wtnode{-1,1}$ node (by a blue edge) to it.
	Finally, we add $\ell$ dangling edges to $v$.
	See \cref{fig:count:forcingGadget:case2} for an illustration.
	We claim that $v$ acts as a $\HWeq[\ell]{1}$ node.

	\begin{figure}
		\centering
		\includegraphics{./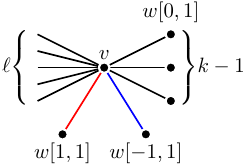}
		\caption{
		Construction of the gadget in \cref{lem:count:forcingEdges:case2}
		and \cref{lem:count:forcingEdges:case3}.
		}
		\label{fig:count:forcingGadget:case2}
		\label{fig:count:forcingGadget:case3}
	\end{figure}

	Because of the $k-1$ different $\wtnode{0,1}$ nodes,
	$v$ cannot be incident to exactly one more edge.
	\begin{itemize}
		\item Suppose no dangling edges are chosen.
		Then, we can extend the solution by selecting neither of the blue and red edges
		or selecting both of them.
		The number of solutions is $1+(-1)=0$ and therefore,
		this is not a possibility.
		\footnote{
		Technically this is a solution but contributes $0$ to the Holant and thus can be ignored (or treated as invalid).
		}
		\item
		Suppose more than one dangling edge is chosen.
		Then, we can pick any combination
		of the red and blue edges. The number of solutions is $1-1+1-1=0$.
		As before, this is not a
		possibility.
		\item Suppose exactly one dangling edge is chosen.
		Then, we must pick at least one of the
		red or blue edges.
		The number of solutions is $1-1+1=1$.
	\end{itemize}
	Thus, the only solution is to pick exactly one dangling edge.
\end{proof}

\begin{lemma}[Cases~\ref{case:forcingEdges:gapTwo:oneToK} and~\ref{case:forcingEdges:gapZero:zeroToK} of \cref{lem:count:forcingEdges}]
	\label{lem:count:forcingEdges:case3}
	For some integer $k$,
	let $\Ex=[0,k]$ with $k\geq 1$ or $\Ex=[k]$ with $k\geq 2$.
	Let $R_1,\dots,R_d$ be $d$ arbitrary relations for some $d \ge 0$.
	There is a polynomial-time Turing reduction from
	\[
	\Hol{R_1,\dots,R_d,\HWin{\Exbar}, \HWeq{1}}
	\text{ to } \Hol{R_1,\dots,R_d,\HWin{\Exbar}, \wtnode{1,1}, \wtnode{-1,1}, \wtnode{0,1}}
	\]
	such that
	$\totalDeg$ increases to $\totalDeg \cdot f(\max \Ex)$
	and $\pw$ increases to $\pw+\totalDeg \cdot f(\max \Ex)$.
\end{lemma}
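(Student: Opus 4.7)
The plan splits on the two sub-cases. First, for $\Ex=[0,k]$ with $k\geq 1$ (so $\Exbar=[k+1,\infty)$), I would give a direct many-one realization of $\HWeq[\ell]{1}$. Take a fresh simple vertex $v$ labeled with $\Ex$, attach $k-1$ pendants of $\wtnode{0,1}$ and one pendant of $\wtnode{-1,1}$, and designate the remaining $\ell$ edges as dangling. The $\wtnode{0,1}$-pendants force their edges into every solution, so $v$'s degree equals $(k-1)+\varepsilon+s$, where $\varepsilon\in\{0,1\}$ is the state of the $\wtnode{-1,1}$-edge and $s$ is the number of selected dangling edges. The constraint $\deg_v\in\Exbar$ becomes $\varepsilon+s\geq 2$. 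Summing over $\varepsilon$ weighted by the $\wtnode{-1,1}$-signature gives contribution $0$ for $s=0$ (no valid $\varepsilon$), contribution $+1$ for $s=1$ (only $\varepsilon=1$), and cancelling contributions $-1+1=0$ for $s\geq 2$ -- exactly the signature of $\HWeq[\ell]{1}$.

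Second, for $\Ex=[k]$ with $k\geq 2$ (so $\Exbar=\{0\}\cup[k+1,\infty)$), the same gadget instead produces signature $(-1,1,0,\ldots)$, since the degree-$0$ option at $v$ is now also allowed and the value at $s=0$ no longer vanishes. Here I would use polynomial interpolation. For each positive integer $t$, define a gadget $V_t$ by attaching to $v$ precisely $k-1$ pendants of $\wtnode{0,1}$ and $t$ pendants of $\wtnode{-1,1}$, leaving $\ell$ dangling edges. Using the identity $\sum_{w=0}^{t}\binom{t}{w}(-1)^{t-w}=0$ for $t\geq 1$, all but the one forbidden-degree term in the sum cancel, and the signature simplifies to $f_t(s) = (-1)^t\bigl(t\cdot [s=0] - [s=1]\bigr)$, nonzero only at $s\in\{0,1\}$. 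Given an input instance with $N$ occurrences of $\HWeq{1}$-nodes (of possibly different arities), replace each by a copy of $V_t$ (with matching arity). Writing $Q_j$ for the sum over all size-$j$ subsets $I\subseteq[N]$ of the Holant of the rest of the graph with exactly one dangling edge selected at each location in $I$ and zero at each location outside $I$, the multilinear expansion of Holant yields
\[
	(-1)^{tN}\cdot \Hol{V_t,\ldots,V_t} \;=\; \sum_{j=0}^{N} (-1)^j\, t^{N-j}\, Q_j.
\]
The right-hand side is a polynomial in $t$ of degree $N$. Querying the \CountAntiFactor oracle at $t=1,2,\ldots,N+1$ and applying Vandermonde interpolation recovers every $Q_j$; in particular the constant coefficient equals $(-1)^N Q_N$, and $Q_N$ is precisely $\Hol{\HWeq{1},\ldots,\HWeq{1}}$, since a $\HWeq{1}$-constraint at every location forces exactly one incident dangling edge to be selected.

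Each $V_t$ is a star attached to a single new vertex, so its pendants can be introduced one at a time in a standard path decomposition, contributing only $O(1)$ per gadget to the pathwidth; this matches the promised bound $\pw + \totalDeg\cdot f(\max\Ex)$. Each oracle instance has size polynomial in $n$ and $t\leq N+1$, and we make only $N+1$ oracle calls, giving a polynomial-time Turing reduction. The main subtlety is verifying the binomial-sum collapse that makes $f_t$ effectively $2$-dimensional for all $t\geq 1$ -- a routine but essential computation that relies on $k-1$ pendants of $\wtnode{0,1}$ shifting the forbidden-degree condition $\deg_v=k$ into $\varepsilon+s+w=1$, so only $(s,w)\in\{(0,1),(1,0)\}$ are excluded. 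Once this is established, the polynomial identity and interpolation are routine, and the Turing reduction meets the claimed size, degree, and pathwidth bounds uniformly in both sub-cases.
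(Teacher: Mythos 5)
Your direct realization of $\HWeq[\ell]{1}$ in the sub-case $\Ex=[0,k]$ is correct and is in fact a simpler route than the paper takes: the paper's gadget keeps an extra $\wtnode{1,1}$ pendant, which realizes $\HWin[\ell]{\{0,1\}}$ rather than $\HWeq[\ell]{1}$, and therefore needs the extra Turing reduction of Lemma~7.8 of~\cite{MarxSS21} to first trade $\HWeq{1}$ for $\HWin{\{0,1\}}$. Dropping the $\wtnode{1,1}$ pendant as you do yields $\HWeq[\ell]{1}$ directly and gives a many-one replacement step, so this half of the argument is a genuine streamlining.

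The treatment of the sub-case $\Ex=[k]$, $k\ge 2$, however, contains a genuine error. You assert that ``the degree-$0$ option at $v$ is now also allowed and the value at $s=0$ no longer vanishes,'' but for $k\ge 2$ the vertex $v$ carries $k-1\ge 1$ pendants of $\wtnode{0,1}$, which force $\deg_v\ge k-1\ge 1$, so the degree-$0$ option contributes weight~$0$ and is irrelevant. Moreover the forbidden degrees are the whole interval $[k]=\{1,\dots,k\}$, not just $\{k\}$ as you use in ``shifting the forbidden-degree condition $\deg_v=k$ into $\varepsilon+s+w=1$.'' After the $k-1$ forced pendants the additional degree $a=\varepsilon+s+w$ is forbidden precisely when $a\in\{0,1\}$ (note $\deg_v=k-1$ lies in $\Ex$ too), so the constraint is $a\ge 2$ in \emph{both} sub-cases. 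This makes your binomial sum off by one: the $w=0$ term must also be excluded, giving $f_t(0)=(t-1)(-1)^t$ rather than $t(-1)^t$, which in turn breaks the claimed identity $(-1)^{tN}\Hol{V_t,\dots}=\sum_j(-1)^j t^{N-j}Q_j$ (the correct basis is $(t-1)^{N-j}$). Carried out correctly, $V_1$ already has signature $(0,1,0,\dots)$, i.e.\ the very same one-blue-pendant gadget from the first sub-case realizes $\HWeq[\ell]{1}$ here as well, and no interpolation is required. The lemma's conclusion is therefore still reachable, but the second sub-case as written does not establish it.
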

\begin{proof}
	We first reduce from $\Hol{\HWin{\Exbar}, \HWeq{1}}$
	to $\Hol{\HWin{\Exbar}, \HWin{\{0,1\}}}$ using Lemma~7.8 from \cite{MarxSS21}.
	Then, the construction is the same as for the reduction in \cref{lem:count:forcingEdges:case2}.
	See \cref{fig:count:forcingGadget:case3} for the construction.

	We claim that $v$ acts as a $\HWin[\ell]{\{0,1\}}$ node.
	Note that after attaching the $k-1$ different $\wtnode{0,1}$ nodes,
	$v$ must have at least two more selected incident edges
	(zero and one are not allowed).

	\begin{itemize}
		\item
		Suppose more than one dangling edge is chosen.
		Then, we can pick any combination
		of the red and blue edges. The number of solutions is $1-1+1-1=0$. Therefore, this is not a
		possibility.
		\item Suppose no dangling edges are chosen.
		Then, we can extend the solution only by selecting the blue and the red edge simultaneously.
		The number of solutions is $1$.
		\item Suppose exactly one dangling edge is chosen.
		Then, we must pick at least one of the
		red or blue edges. The number of solutions is $1-1+1=1$.
	\end{itemize}
	Thus, the only solution is to pick exactly one or zero dangling edges.
\end{proof}
Next, we show that we can realize $\wtnode{x,y}$ nodes.
In particular,
we can get the $\wtnode{1,1}$, $\wtnode{-1,1}$, and $\wtnode{0,1}$
nodes introduced by \cref{lem:count:forcingEdges}.

\begin{lemma}\label{lem:count:wtnode-removal}
	Let $\Ex\subseteq \SetN$ be a fixed, finite set
	with $\Ex\not\subseteq \{0\}$.
	Let $R_1,\dots,R_d$ be $d$ arbitrary relations for some $d \ge 0$.
	The following holds for arbitrary values $x$ and $y$.
	There is a polynomial-time Turing reduction from
	$\Hol{R_1,\dots,R_d, \HWin{\Exbar}, \wtnode{x,y}}$
	to $\Hol{R_1,\dots,R_d, \HWin{\Exbar}}$
	such that
	$\totalDeg$ decreases
	and $\pw$ increases to $\pw+\totalDeg \cdot f(\max \Ex)$.
\end{lemma}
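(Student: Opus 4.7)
The plan is to remove the $\wtnode{x,y}$ nodes via polynomial interpolation, in the same spirit as \cref{lem:count:WeightedAntiFactorRToAntiFactorR} but now applied directly at the Holant level. Let $N$ denote the number of $\wtnode{x,y}$ nodes in the given instance $\Omega$. Since each such node contributes a multiplicative factor of $x$ when its dangling edge is excluded and $y$ when it is included, and these contributions are independent across the $\wtnode{}$ nodes, the Holant value is a bivariate homogeneous polynomial
\[
\Hol{\Omega}=\sum_{k=0}^N a_k\, x^{N-k} y^k,
\]
where $a_k$ sums, over all valid configurations in which exactly $k$ of the dangling $\wtnode{}$-edges are in the solution, the contributions of $R_1,\dots,R_d$, the \HWin{\Exbar} vertices, and the remaining edges. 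In the degenerate cases $x=0$ or $y=0$ only a single monomial survives, and the relevant coefficient is recovered by one oracle call after locally replacing each $\wtnode{x,y}$ with a ``force selected'' or ``force unselected'' gadget of \HWin{\Exbar} nodes. For $xy\neq 0$ we set $w\deff y/x$ and rewrite the Holant as $x^N P(w)$ with $P(w)=\sum_{k=0}^N a_k w^k$ of degree at most $N$; then $N+1$ evaluations of $P$ at distinct positive integers suffice to recover $P$ by Lagrange interpolation, after which we return $x^N P(y/x)$.

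Each evaluation $P(v)$ will be obtained by replacing every $\wtnode{x,y}$ in $\Omega$ with a gadget $G_v$ whose dangling-edge signature is $(1,v)$, meaning that the number of valid extensions in $G_v$ is $1$ if the dangling edge is excluded and $v$ if it is included. A clean way to realize $G_v$ is to attach the dangling edge $e$ to a vertex $u$ carrying the relation $\HWeq[v+1]{v}$ (degree exactly $v$) and to connect $u$ to $v$ padded pendants $w_1,\dots,w_v$ whose background degree is pushed into the cofinite set $\Exbar$ by attaching $\max\Ex+1$ forced edges each. A straightforward count then yields $\binom{v}{v}=1$ configuration when $e$ is excluded (all edges $uw_i$ must be chosen) and $\binom{v}{v-1}=v$ configurations when $e$ is included (exactly one $uw_i$ is dropped), which is precisely $\wtnode{1,v}$. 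The multiplicative factors contributed by the fixed auxiliary sub-gadgets are identical in every query and can be divided out at the end.

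The main obstacle is that the $\HWeq[v+1]{v}$ and force-edge building blocks used above were constructed earlier (\cref{lem:dec:realization-hw11-eqk,lem:relation:hw1-decision}) only under the extra assumption $\maxgap(\Exbar)>1$, whereas here we allow an arbitrary finite $\Ex$ with $\Ex\not\subseteq\{0\}$. To handle the remaining structures of $\Ex$, we exploit that $\Exbar$ is cofinite and contains $[\max\Ex+1,\infty)$: a vertex whose degree is forced into this regime admits all of its incident edges, which is enough to iteratively bootstrap force-edge and equality gadgets by attaching padded pendants, analogously to the case analyses already carried out in \cref{lem:count:forcingEdges:case1,lem:count:forcingEdges:case2,lem:count:forcingEdges:case3}. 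Each of these gadgets has size bounded by some $f(\max\Ex)$, which gives the stated blow-ups: $\totalDeg$ strictly decreases because we delete the complex $\wtnode{}$ nodes and insert only simple \HWin{\Exbar} vertices, the pathwidth grows by at most $\totalDeg\cdot f(\max\Ex)$ since each complex node in any bag is replaced locally, and the oracle is invoked only $N+1=\poly(n)$ times, yielding the polynomial-time Turing reduction claimed.
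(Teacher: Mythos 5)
Your high-level framework — write $\Hol{\Omega}$ as a polynomial in $x,y$ of degree $N$, evaluate it on $N+1$ instances in which each $\wtnode{x,y}$ is swapped for a parameterised gadget, and interpolate — is exactly the paper's strategy, and the degenerate $xy=0$ and $\frac{P_0(d)}{P_1(d)}$ manipulations are essentially the same as well. The genuine gap is in how you propose to build the interpolation gadget.

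You want a gadget $G_v$ whose dangling-edge signature is \emph{exactly} $(1,v)$, and to obtain it you attach the dangling edge to a vertex $u$ carrying the relation $\HWeq[v+1]{v}$ and pad with pendants whose degree is pushed past $\max\Ex$. But the target signature set is $\Hol{R_1,\dots,R_d,\HWin{\Exbar}}$ — you are not allowed to use $\HWeq[v+1]{v}$ or force-edge relations as primitives, you must \emph{realize} them from $\HWin{\Exbar}$ vertices alone. The realizations you point to (\cref{lem:dec:realization-hw11-eqk,lem:relation:hw1-decision}) are explicitly conditioned on $\maxgap(\Exbar)>1$; the present lemma, however, must cover every finite $\Ex\not\subseteq\{0\}$, in particular sets like $\Ex=\{1\}$ or $\Ex=\{1,3,5\}$ where $\maxgap(\Exbar)=1$, and $\Ex=\{0,1\}$ where pendant tricks fail outright because a degree-$1$ pendant cannot have either $0$ or $1$ incident selected edges. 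Appealing to \cref{lem:count:forcingEdges:case1,lem:count:forcingEdges:case2,lem:count:forcingEdges:case3} does not repair this: those lemmas \emph{introduce} $\wtnode{1,1}$, $\wtnode{0,1}$ and $\wtnode{-1,1}$ nodes (the last of which is not even a $0/1$ relation), so routing through them would put you back in a Holant with $\wtnode{}$ signatures rather than the target $\Hol{R_1,\dots,R_d,\HWin{\Exbar}}$.

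The paper sidesteps exactly this obstruction by never demanding an exact signature like $(1,v)$. Instead $G_d$ is a path of $d$ list-$\Ex$ vertices decorated with $z$ pendants — pendants that are either $\HWin[1]{\{0,1\}}$ or $\EQ{2}$ nodes, both of which can be built from $\HWin{\Exbar}$ vertices alone by a short case split on whether $0\in\Ex$ and $1\in\Ex$ — and the signature $(P_0(d),P_1(d))$ is whatever the linear recurrence with transfer matrix $M=\begin{smallbmatrix}F_0&F_1\\F_1&F_2\end{smallbmatrix}$ produces. The whole difficulty then concentrates in proving the non-degeneracy $F_0F_2\neq F_1^2$ for some admissible pendant count $z$, which the paper does by a growth-rate argument ($2^z=\omega(z^c)$ against constant-degree polynomials $Q_1,Q_2$). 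Your proposal is missing any analogue of this non-degeneracy argument, and without it there is no guarantee the ratios $P_0(d)/P_1(d)$ you would obtain are pairwise distinct, so the interpolation step could fail.
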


\begin{proof}
	We use Lemma~7.11 from \cite{MarxSS21} as our prototype. However, some arguments from their
	proof do not follow in our case.

	Let $G$ be an instance of $\Hol{\HWin{\Exbar}, \wtnode{x,y}}$
	and let $U$ be the set of $\wtnode{x,y}$ nodes in $G$.
	Let $A_i$ denotes the number of possible solutions in $G$
	where for exactly $i$ of the $\wtnode{x,y}$ nodes
	the dangling edge is not selected
	and for the other $\abs{U}-i$ nodes the dangling edge is selected.
	Then, we have
	\begin{align}
		\label{eq:count:holantOriginalGraph}
		\Hol{G} = \sum_{i=0}^{\abs{U}} A_i x^i y^{\abs{U}-i}
		.
	\end{align}
	For a new parameter $d$,
	we construct graphs $G_d$ from $G$
	where we replace each $\wtnode{x,y}$ node by a gadget $H_d$
	which has exactly one dangling edge.
	The construction of $H_d$ is given later
	as it depends on $\Ex$.
	Let $h_0(d)$ denote the number of solutions for $H_d$
	when the dangling edge is not selected
	and $h_1(d)$ when the dangling edge is selected.
	We get
	\[
		\Hol{G_d}
		= \sum_{i=0}^{\abs{U}} A_i h_0(d)^i h_1(d)^{\abs{U}-i}
		= h_1(d)^{\abs{U}} \sum_{i=0}^{\abs{U}} A_i {\left(\frac{h_0(d)}{h_1(d)}\right)}^{\!\! i}
		.
	\]
	Assume we can find at least $\abs{U}+1$ values for $d$ such that
	for all values the ratios ${h_0(d)} / {h_1(d)}$ are pairwise different.
	After computing $\Hol{G_d}$ for these values of $d$
	we can recover the value of each $A_i$.
	By \cref{eq:count:holantOriginalGraph}, we can finally output the value of $\Hol{G}$.

	It remains to construct the gadgets $H_d$ and to find the values for $d$.
	We construct the gadgets in a way such that
	there are constants $F_1$, $F_2$, and $F_3$ only depending on $\Ex$ with
	\begin{align*}
		h_0(d) &\deff F_0 \cdot h_0(d-1) + F_1 \cdot h_1(d-1) &  h_0(1) &\deff F_0\\
		h_1(d) &\deff F_1 \cdot h_0(d-1) + F_2 \cdot h_1(d-1) &  h_1(1) &\deff F_1
		.
	\end{align*}
	Given these properties of $H_d$,
	we use \cref{prop:count:sequenceWithoutRepetitions}
	to find sufficiently many values for $d$.
	\begin{proposition}
		[Special Case of Proposition~7.7 in \cite{MarxSS21}]
		\label{prop:count:sequenceWithoutRepetitions}
		Given three constants $F_0$, $F_1$, and $F_2$ with
		$F_0 F_2 \neq (F_1)^2$ and $F_0,F_1 \neq 0$.
		Let $\{A_n\}_{n\in \SetN}, \{B_n\}_{n\in \SetN}$
		be two sequences with
		\begin{align*}
			\begin{bmatrix}
				A_n \\
				B_n
			\end{bmatrix}
			= M \cdot
				\begin{bmatrix}
					A_{n-1} \\
					B_{n-1}
				\end{bmatrix}
			= M^n \cdot U
			\quad
			\text{ where }
			\quad
			M
			= \begin{bmatrix}
					F_0 & F_1 \\
					F_1 & F_2
				\end{bmatrix}
			\quad
			\text{ and }
			\quad
			U
			= \begin{bmatrix}
					A_0 \\
					B_0
				\end{bmatrix}
			= \begin{bmatrix}
					F_0 \\
					F_1
				\end{bmatrix}
			.
		\end{align*}
		Then $\{{A_n} / {B_n}\}_{n\in \SetN}$ is a sequence which does not contain any repetitions.
	\end{proposition}
	\begin{proof}
		By Proposition~7.7 in \cite{MarxSS21}
		we need to show that $M$ is invertible and $U$ is not an eigenvector of $M$.
		By assumption, $M$ is obviously invertible.
		Now, assume that $U$ is an eigenvector of $M$.
		Then, there is some $\lambda\neq 0$ such that
		\(
			(F_0)^2 + (F_1)^2 = \lambda F_0
		\)
		and
		\(
			F_1(F_0+F_2) = \lambda F_1
		\).
		This implies
		\(
			F_1(F_0)^2 + (F_1)^3 = F_1(F_0)^2+ F_0F_1F_2
		\)
		which contradicts the assumptions.
	\end{proof}
	As a last step we construct the $H_d$ gadgets.

	\proofsubparagraph*{\boldmath Case 1: $0\in \Ex$ or $1\not\in \Ex$.}
	We first show how to get a
	$\HWin[1]{\{0,1\}}$ node.
	\begin{itemize}
		\item If $0,1\not\in \Ex$, then any vertex with a dangling edge
		acts as a $\HWin[1]{\{0,1\}}$ node.
		\item If $0\in \Ex,1\not\in \Ex$, then attach $\max \Ex+1$ pendant
		vertices to any vertex $v$. Then, $v$ acts as a $\HWin[1]{\{0,1\}}$ node.
		\item If $0\in \Ex,1\in \Ex$, then take a clique of size $\min(\Exbar)+1$
		and split the edge between two vertices into two dangling edges.
		This now acts as a $\HWeq[2]{2}$ node.
		Attaching $\ceil{ ({\max \Ex+1}) / {2}}$ copies of such a $\HWeq[2]{2}$ node
		to a new vertex with one dangling edge gives us a $\HWin[1]{\{0,1\}}$ node.
	\end{itemize}

	\begin{figure}[t]
		\centering
		\includegraphics{./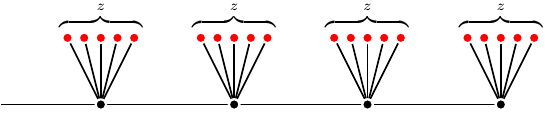}
		\caption{Gadget to realize $\wtnode{x,y}$ nodes in Case 1. Red nodes are $\HWin[1]{\{0,1\}}$ nodes.}
		\label{fig:count:wtnode-removal:case1}
	\end{figure}

	The gadget $H_d$ consists of a path of $d$ vertices
	with a dangling edge on the first vertex.
	For an integer $z\geq \max \Ex+1$ that we choose later,
	attach $z$ pendant $\HWin[1]{\{0,1\}}$ nodes to each
	vertex in the path.
	See \cref{fig:count:wtnode-removal:case1} for an illustration.
	By this definition, we get
	\begin{align*}
		F_0 = \sum_{i\geq 0: i  \in \Exbar} \binom{z}{i},\qquad
		F_1 = \sum_{i\geq 0: i+1\in \Exbar} \binom{z}{i},\qquad
		\text{and}\qquad
		F_2 = \sum_{i\geq 0: i+2\in \Exbar} \binom{z}{i}.
	\end{align*}

	We claim that there is a $z$
	such that the assumptions from \cref{prop:count:sequenceWithoutRepetitions} hold.
	If we can choose $z$ larger than $\max\Ex+1$,
	then $F_0$, $F_1$, and $F_2$ are never equal to 0.
	For contradictions sake, suppose that $F_0F_2 = (F_1)^2$.
	We first expand the equations above. Then for every $z$,
	\begin{align*}
		\left( \sum_{i\in \Exbar} \binom{z}{i}\right)\left( \sum_{i+2\in \Exbar} \binom{z}{i}\right)
		&= \Bigg( \sum_{i+1\in \Exbar} \binom{z}{i}\Bigg)^2\\
		\left( 2^z-\sum_{i\in \Ex} \binom{z}{i}\right) \left( 2^z- \sum_{i+2\in \Ex} \binom{z}{i}
		\right) &= \Bigg( 2^z-\sum_{i+1\in \Ex} \binom{z}{i}\Bigg)^2
	\end{align*}
	which implies $2^z Q_1(z) = Q_2(z)$,
	where
	\begin{align*}
		Q_1(z)&=\left (2\sum_{i+1\in \Ex} \binom{z}{i} -
		\sum_{i\in \Ex} \binom{z}{i} -\sum_{i+2\in \Ex} \binom{z}{i} \right)\\
		Q_2(z)&=\Bigg(\sum_{i+1\in \Ex} \binom{z}{i} \Bigg)^{2}
		-\left (\sum_{i\in \Ex} \binom{z}{i}\right )
		\left( \sum_{i+2\in \Ex} \binom{z}{i}  \right).
	\end{align*}
	For large enough $z$, we argue that $Q_1(z)$ is not identically zero.
	Observe that the second term in $Q_1(z)$ gives a non-zero $z^{\max \Ex}$
	monomial whereas the other two terms cannot give a monomial of this degree.
	Now, since $\Ex$ is a fixed, finite set, $Q_1$ and $Q_2$ are polynomials with constant degree.
	Thus, $Q_1(z)$ is zero only for finitely many $z$. Hence,
	there are infinitely many (positive) $z$ such that $Q_1(z)$ is non-zero.
	For each such $z$ we have
	\begin{align*}
		\abs{2^z Q_1(z)} &= \abs{Q_2(z)}.
	\end{align*}
	This is immediately a contradiction since $2^z \in \omega(z^c)$, for any constant $c$,
	if $z$ is large enough.
	Thus, there is some positive integral value of $z$ such that $F_0F_2\neq (F_1)^2$. We use this value of $z$
	in the construction of the gadget.
	Note that $z$ only depends on $\Ex$ and can thus be precomputed.

	\proofsubparagraph*{\boldmath Case 2: $0\not\in \Ex, 1\in \Ex$.}

	In this case, we first get an $\EQ{2}$ node in the following way.
	We create a
	clique of size $\min(\Exbar\setminus\{0\})+1$ and split one edge into two dangling edges.

	\begin{figure}[t]
		\centering
		\includegraphics{./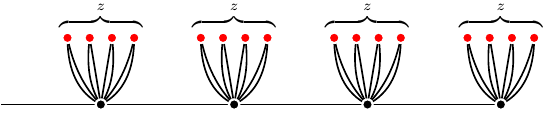}
		\caption{Gadget to realize $\wtnode{x,y}$ nodes in Case 2. Red nodes are $\EQ{2}$ nodes.}
		\label{fig:count:wtnode-removal:case2}
	\end{figure}

	For the gadget $H_d$, we create a path of length $d$ and add a
	dangling edge to the first vertex. To each vertex in the path,
	attach $z$ many $\EQ{2}$ nodes via both of its dangling edges,
	where $z$ is chosen later such that $2z\geq \max \Ex+1$.
	See \cref{fig:count:wtnode-removal:case2} for an illustration.
	We get
	\begin{align*}
		F_0 = \sum_{i\geq 0: 2i  \in \Exbar} \binom{z}{i}, \qquad
		F_1 = \sum_{i\geq 0: 2i+1\in \Exbar} \binom{z}{i}, \qquad
		\text{and}\qquad
		F_2 = \sum_{i\geq 0: 2i+2\in \Exbar} \binom{z}{i}.
	\end{align*}
	Suppose there is no $z$ such that $F_0F_2 \neq (F_1)^2$.
	Then, for every $z$,
	\begin{align*}
		\left( \sum_{2i\in \Exbar} \binom{z}{i}\right)\left( \sum_{2i+2\in \Exbar} \binom{z}{i}\right)
		&= \Bigg( \sum_{2i+1\in \Exbar} \binom{z}{i}\Bigg)^2\\
		\left( 2^z-\sum_{2i\in \Ex} \binom{z}{i}\right) \left( 2^z- \sum_{2i+2\in \Ex} \binom{z}{i}
		\right) &= \Bigg( 2^z-\sum_{2i+1\in \Ex} \binom{z}{i}\Bigg)^2
	\end{align*}
	which implies $2^z Q_1(z) = Q_2(z)$,
	where
	\begin{align*}
		Q_1(z)&=\left (2\sum_{2i+1\in \Ex} \binom{z}{i} -
		\sum_{2i\in \Ex} \binom{z}{i} -\sum_{2i+2\in \Ex} \binom{z}{i} \right)\\
		Q_2(z)&=\Bigg(\sum_{2i+1\in \Ex} \binom{z}{i} \Bigg)^{2}
		-\left (\sum_{2i\in \Ex} \binom{z}{i}\right )
		\left( \sum_{2i+2\in \Ex} \binom{z}{i}  \right).
	\end{align*}
	As before, $Q_1(z)$ is not identically zero for large enough $z$.
	The first term of $Q_1(z)$ gives a constant of $2$,
	the second term cannot give a constant since $0\not\in \Ex$,
	and the third term gives a constant of $-1$ if $2\in\Ex$.
	Thus, $Q_1(z)$
	has a non-zero constant term and hence, is zero only for finitely many $z$.
	Therefore, for
	infinitely many positive values of $z$, we have
	\begin{align*}
		\abs{2^zQ_1(z)}=\abs{Q_2(z)}
		.
	\end{align*}
	This is a contradiction
	since $2^z \in \omega(z^c)$, for any constant $c$,
	and $Q_1$ and $Q_2$ are polynomials
	of constant degree.
	Thus, we can choose a $z$ such that $F_0F_2\neq (F_1)^2$ to do
	the interpolation.
\end{proof}

\begin{lemma}\label{lem:count:wtnode-interpolation}
	Let $\Ex\subseteq \SetN$ be a fixed, finite set
	with $\Ex \not\subseteq \{0\}$.
	Let $R_1,\dots,R_d$ be $d$ arbitrary relations for some $d \ge 0$.
	There is a polynomial-time Turing reduction from
	\[
	\Hol{R_1,\dots,R_d,\HWin{\Exbar}, \wtnode{1,1}, \wtnode{-1,1}, \wtnode{0,1}}
	\text{ to } \Hol{R_1,\dots,R_d,\HWin{\Exbar}}
	\]
	such that
	$\totalDeg$ decreases
	and $\pw$ increases to $\pw+\totalDeg \cdot f(\max \Ex)$.
\end{lemma}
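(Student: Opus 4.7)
The plan is to iterate \cref{lem:count:wtnode-removal} three times, once for each of the three unary weighted node families $\wtnode{1,1}$, $\wtnode{-1,1}$, and $\wtnode{0,1}$. The key observation enabling this is that \cref{lem:count:wtnode-removal} permits an arbitrary list $R_1,\dots,R_d$ of extra signatures to be carried through unchanged, and each $\wtnode{x,y}$ is itself just a unary Holant signature. So at every step of the iteration we can absorb the weighted unary types that we have not yet removed into the $R_i$ slot, remove one more type using the lemma, and proceed.

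Concretely, I would first invoke \cref{lem:count:wtnode-removal} with $(x,y)=(1,1)$, treating $\wtnode{-1,1}$ and $\wtnode{0,1}$ as additional elements of $R_1,\dots,R_d$. This yields a polynomial-time Turing reduction from $\Hol{R_1,\dots,R_d,\HWin{\Exbar},\wtnode{1,1},\wtnode{-1,1},\wtnode{0,1}}$ to $\Hol{R_1,\dots,R_d,\HWin{\Exbar},\wtnode{-1,1},\wtnode{0,1}}$. A second invocation with $(x,y)=(-1,1)$, this time absorbing only $\wtnode{0,1}$ into the extra relations, further reduces to $\Hol{R_1,\dots,R_d,\HWin{\Exbar},\wtnode{0,1}}$. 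A third invocation with $(x,y)=(0,1)$ eliminates the last weighted unary and yields the target $\Hol{R_1,\dots,R_d,\HWin{\Exbar}}$. Composing the three Turing reductions (each polynomial time) produces the desired reduction.

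It remains to track the parameter bounds. Each application of \cref{lem:count:wtnode-removal} increases the pathwidth by an additive term of the form $\totalDeg\cdot f(\max\Ex)$ and decreases $\totalDeg$, since every $\wtnode{\cdot,\cdot}$ removed is a degree-one complex vertex replaced by a gadget whose interface to the rest of the graph is a single edge. After three applications the cumulative pathwidth increase is still of the form $\totalDeg\cdot f'(\max\Ex)$ for an appropriate $f'$ depending only on $\max\Ex$, and $\totalDeg$ does not grow overall. I do not expect any real obstacle here; the only thing to verify carefully is that the internal gadget constructions inside \cref{lem:count:wtnode-removal} (the $\HWin[1]{\{0,1\}}$ and $\EQ{2}$ auxiliaries built from $\HWin{\Exbar}$) do not interact with the wtnode types that are temporarily parked in the $R_i$ slot, but this is immediate from the construction, since the auxiliaries are attached only to freshly created vertices and use only $\HWin{\Exbar}$ nodes.
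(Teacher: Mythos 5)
Your proposal is correct and matches the paper's proof: the paper also applies \cref{lem:count:wtnode-removal} three times in succession, removing $\wtnode{1,1}$, then $\wtnode{-1,1}$, then $\wtnode{0,1}$, implicitly parking the not-yet-removed unary weighted nodes in the $R_i$ slots exactly as you describe. The paper additionally remarks that $\wtnode{1,1}$ can also be realized directly with a fresh vertex and $\max\Ex+2$ forced edges, but this is an optional shortcut rather than a different argument.
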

\begin{proof}
	We first use \cref{lem:count:wtnode-removal} to remove the $\wtnode{1,1}$ nodes.
	Observe that this can alternatively be done by a simple construction
	using a fresh vertex with $\max\Ex+1$ forced edges.
	Then, we apply \cref{lem:count:wtnode-removal} two more times
	to remove the $\wtnode{-1,1}$ nodes
	and finally the $\wtnode{0,1}$ nodes.
\end{proof}
Now, we can prove the reduction from \CountAntiFactorR{\Y} to \CountAntiFactor{(\Ex,\Y)}.
\begin{proof}[Proof of \cref{lem:count:AntiFactorRToAntiFactor}]
	By the reduction of \cref{lem:count:relationsToForcingEdges},
	we can reduce \CountAntiFactorR{\Y}
	to \Hol{\HWin{\overline \Y},\HWeq{1}}.
This can trivially be reduced to
\Hol{\HWin{\overline \Y},\HWin{\Exbar},\HWeq{1}}
as we do not have any vertices with relation \HWin{\Exbar}.
Then, we invoke \cref{%
lem:count:forcingEdges,%
lem:count:wtnode-interpolation}
such that the vertices with relation \HWin{\overline \Y} are
not changed (or used for any construction).
By this we end the reduction with \Hol{\HWin{\overline \Y}, \HWin{\Exbar}}
which precisely corresponds to \CountAntiFactor{(\Ex,\Y)}.
\end{proof}

\section{Lower Bound for Counting Edge Covers}\label{sec:lower:edge-cover}
Observe that the construction in the previous section
does not apply for the case $\Ex=\{0\}$,
which precisely corresponds to \countECover.
Due to its special structure, we show a different reduction
in this section
which still leads to the expected hardness result.
\begin{theorem}
	\label{thm:lower:count:hardnessEdgeCover}
	For every constant $\epsilon>0$,
	no algorithm can solve
	\countECover in time $\Ostar{(2-\epsilon)^\pw}$
	even if a path decomposition of width $\pw$ is given,
	unless \#SETH fails.
\end{theorem}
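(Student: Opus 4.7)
The plan is to prove a tight $(2-\epsilon)^\pw$ lower bound for \countECover by a direct reduction from \#CNF-SAT, following the standard SETH-lower-bound-on-pathwidth blueprint. The key difficulty is that the realization framework of the previous section cannot be applied: every case of \cref{lem:count:forcingEdges} requires either $\Ex \not\subseteq \{0\}$ or a nontrivial gap in $\Exbar$, both of which fail for $\Ex = \{0\}$. Consequently, we cannot realize \HWeq[k]{1} or \EQ{k} nodes, and the reduction must stay entirely within the expressiveness of the edge-cover constraint ``at least one incident edge is selected''.

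First I would lay out the instance as a grid: $n$ horizontal variable wires crossing $m$ vertical clause columns. The wire for variable $x_i$ should encode a single binary value propagated consistently across all clauses in which $x_i$ appears. The clause gadget for clause $C_j$ should admit a valid edge-cover restriction precisely when at least one of its literals is in the ``true'' state. Since ``at least one incident edge is selected'' is exactly the edge-cover constraint at a single vertex, clause gadgets can be realized simply by connecting the three literal wires to a shared auxiliary vertex, providing OR for free.

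Second I would design the variable-wire gadget. Without equality relations, the wire must be built from pendant vertices (which force specific edges into every cover) together with small subgraphs such as pairs of parallel edges or theta-style components, arranged so that the admissible edge covers partition into exactly two ``macro-states'' (encoding true and false) with equal internal multiplicity. The common multiplicity $N$ should depend only on the structure of the gadget, so that the total number of edge covers of the constructed graph equals $N \cdot \#\text{SAT}(\varphi)$, and dividing by $N$ recovers the \#SAT count. The path decomposition sweeps column by column; at any moment it contains at most one active vertex per variable wire plus $O(1)$ vertices from the current clause column, giving $\pw \le n + O(1)$, so a hypothetical $(2-\epsilon)^\pw \cdot n^{O(1)}$ algorithm for \countECover would yield a $(2-\epsilon)^n \cdot n^{O(1)}$ algorithm for \#CNF-SAT, contradicting \#SETH.

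The main obstacle is engineering the variable-wire gadget so that only two macro-states survive globally and both admit identical internal multiplicities, since edge cover provides neither ``exactly one'' nor ``equality'' constraints and thus every attempt to pin down a bit leaks alternative configurations that must be balanced by hand. A robust fallback, if direct combinatorial engineering proves too fragile, is to first prove hardness for an edge-weighted variant of edge cover and then recover the unweighted count via polynomial interpolation as in \cref{lem:count:WeightedAntiFactorRToAntiFactorR}; the interpolation step itself does not invoke the failing realization lemmas, so it remains available to us even when $\Ex = \{0\}$.
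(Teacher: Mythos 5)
Your proposal has a genuine gap in both the main route and the fallback. The core problem is that edge cover is a \emph{monotone} constraint: if $S$ is an edge cover and $S \subseteq S'$ then $S'$ is also an edge cover. This means the set of valid configurations of any gadget is upward closed, so you can never design a variable-wire gadget whose admissible edge covers ``partition into exactly two macro-states with equal internal multiplicity'' — whatever two configurations you try to distinguish, every superset of either is also valid, and the upward cones overlap and have unequal sizes in general. This is not a matter of combinatorial engineering being ``fragile''; it is a structural obstruction, and it is precisely why the paper does not attempt a direct grid construction for \countECover. Your fallback is also blocked: the weight-removal interpolation in \cref{lem:count:WeightedAntiFactorRToAntiFactorR} builds its weight gadgets out of chains with \HWeq{1} relations (and parallel edges with \EQ{2} nodes), and \HWeq{1} cannot be realized when $\Ex = \{0\}$ — the same realization machinery you correctly identify as unavailable. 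So that lemma's interpolation is not, in fact, available to you here.

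The paper's route is quite different and worth internalizing. It first reduces \#SAT (made satisfiable via \cref{lem:hardnessSatCountingSat}) to \countMIS on bounded-degree graphs (\cref{lem:lower:count:hardnessCountMIS}); the grid construction you had in mind is carried out there, and it works because independence is \emph{not} monotone, so gadgets can genuinely pin down states and one can count only maximum independent sets. After regularizing the graph (\cref{lem:lower:count:hardnessCountMISregular}), it then performs a Turing reduction from \countMIS on $r$-regular graphs to \countECover: subdivide every edge, attach paths of lengths $m$ and $k$ to the original and the subdivision vertices respectively, and observe that the number of edge covers of the resulting graph is a bivariate polynomial in $1+\frac{M_{m-1}}{M_m}$ and $1+\frac{M_{k-1}}{M_k}$, where $M_m$ is the Fibonacci-like count of edge covers of a path of length $m$. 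Since these ratios take infinitely many distinct values, one can interpolate to recover the coefficients $N_{ij}(G')$ and hence $I_j(G)$. This interpolation sidesteps monotonicity (it classifies covers by how many vertices are left uncovered) and needs no relation nodes at all — exactly the ingredient your fallback was missing.
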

The main step towards proving this lower bound
is to show a lower bound for \countMIS.
\begin{definition}[Counting Maximum Independent Sets (\countMIS)]
	Given a graph $G=(V,E)$.
	Let $\mathcal I=\{I\subseteq V \mid\forall\{u,v\}\in E: u\in I \lor v\in I\}$
	and $M = \max_{I\in\mathcal I} \abs{I}$.

	Compute $\abs{\{ I \in\mathcal I \mid M = \abs{I}\}}$.
\end{definition}
We split the lower bound for \countECover into the following steps.

\begin{enumerate}
	\item
	\cref{lem:hardnessSatCountingSat} reduces the counting version of SAT
	to a variant of SAT where we want to count the satisfying assignments
	given the promise that the formula is satisfiable.

	\item
	In \cref{lem:lower:count:hardnessCountMIS} we show a lower bound
	for \countMIS on low degree graphs under \#SETH.

	\item
	\cref{lem:lower:count:hardnessCountMISregular}
	extends the previous lower bound to regular graphs.

	\item
	To prove the lower bound in \cref{thm:lower:count:hardnessEdgeCover},
	we reduce \countMIS on regular graphs to \countECover
	using ideas from \cite{BordewichDK08,BubleyD97}.
\end{enumerate}
We start with the reduction from \#SAT to the variant of \#SAT.
\begin{lemma}\label{lem:hardnessSatCountingSat}
	Given a SAT formula $\phi$ with $n$ variables and $m$ clauses,
	we can construct in time linear in the output size a SAT formula
	$\psi$ on $n+1$ variables and $m+n$ clauses
	such that $\#SAT(\phi)+1=\#SAT(\psi)$.
\end{lemma}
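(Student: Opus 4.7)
The plan is to construct $\psi$ from $\phi$ by introducing a single fresh variable $y$ that acts as a switch: when $y$ is true, $\psi$ reduces to $\phi$ (contributing $\#SAT(\phi)$ satisfying assignments), and when $y$ is false, $\psi$ forces a unique designated assignment on the original variables (contributing exactly one more satisfying assignment).

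Concretely, let $\phi = C_1 \land \dots \land C_m$ over variables $x_1, \dots, x_n$. I would define
\[
  \psi \deff \bigwedge_{i=1}^{m} (C_i \lor \neg y) \;\land\; \bigwedge_{j=1}^{n} (y \lor \neg x_j).
\]
This has $n+1$ variables and $m+n$ clauses, and each clause is obtained from a clause of $\phi$ or a pair of literals in constant time, so the construction runs in linear time in the output size.

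For the correctness analysis, I would split on the value of $y$ in a satisfying assignment. If $y$ is set to true, then every clause $y \lor \neg x_j$ is automatically satisfied, and each clause $C_i \lor \neg y$ reduces to $C_i$; hence the satisfying assignments with $y = \true$ are in bijection with the satisfying assignments of $\phi$, contributing exactly $\#SAT(\phi)$. If $y$ is set to false, then every clause $C_i \lor \neg y$ is automatically satisfied, while each clause $y \lor \neg x_j$ reduces to $\neg x_j$, forcing $x_j = \false$ for every $j$; this yields exactly one satisfying assignment. Summing, $\#SAT(\psi) = \#SAT(\phi) + 1$, as required.

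There is no real obstacle here: the construction is routine, and the main (minor) point is to verify the clause count and the linear-time bound. The $+1$ offset is precisely what guarantees $\psi$ is always satisfiable, which is the property needed downstream when reducing to \countMIS via a promise version of \#SAT.
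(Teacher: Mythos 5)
Your proof is correct and takes essentially the same approach as the paper: the paper introduces a fresh switch variable $x_0$ and defines $\psi$ with clauses $(C_i \lor x_0)$ and $(\neg x_0 \lor x_j)$, which is your construction up to the relabeling $y = \neg x_0$ and forcing the all-false rather than the all-true assignment on the extra branch. The clause count, linear-time bound, and the $\#SAT(\phi)+1$ accounting all match.
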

\begin{proof}
	Let $x_1,\dots,x_n$ be the variables of $\phi$
	and $x_0$ be a free variable.
	We define $\psi$ such that $\psi \equiv (x_0 \lor \phi) \land (\neg x_0 \lor(x_1 \land \dots \land x_n))$.
	One can easily see that the assignment $x_0=x_1=\dots=x_n=\true$ is satisfying for $\psi$.
	Further, for $x_0=\true$ this is the only satisfying assignment.
	For $x_0=\false$
  the satisfying assignments of $\phi$ directly transfer to $\psi$.

	One can easily transform the right-hand side into a CNF
	by adding $x_0$ to each clause of $\phi$
	and additionally adding, for all $i\in[n]$, the clauses $(\neg x_0 \lor x_i)$.
\end{proof}
Lokshtanov, Marx and Saurabh have shown a
lower bound of $\Ostar{(2-\epsilon)^\pw}$
for \emph{finding some} maximum independent set on graphs with
pathwidth $\pw$ under SETH \cite{LokshtanovMS18}.
However, their reduction is not parsimonious
and, therefore, does not hold under \#SETH.
We strengthen this result by showing a lower bound for \countMIS
based on the weaker assumption of \#SETH.

\begin{lemma}\label{lem:lower:count:hardnessCountMIS}
	For every constant $\epsilon>0$,
	there is an $r>0$ such that
	\countMIS on graphs with maximum degree $r$
	cannot be solved in time $\Ostar{(2-\epsilon)^\pw}$,
	even if we are given a path decomposition of width $\pw$,
	unless \#SETH fails.
\end{lemma}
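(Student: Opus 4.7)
The plan is to reduce \#SAT to \countMIS on bounded-degree graphs by combining the $(2-\epsilon)^{\pw}$ lower bound of Lokshtanov, Marx, and Saurabh \cite{LokshtanovMS18} for the decision version with the promise-satisfiability transformation of \cref{lem:hardnessSatCountingSat}. The key point is that once we know the SAT formula is satisfiable, the size of the maximum independent set in the resulting graph is a fixed, known value $N$, so counting maximum independent sets becomes equivalent to counting independent sets of size exactly $N$, and the latter can be made parsimonious with respect to satisfying assignments.

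Given a \#SAT instance $\phi$ on $n$ variables, first apply \cref{lem:hardnessSatCountingSat} to obtain a formula $\psi$ on $n+1$ variables with $\#SAT(\psi) = \#SAT(\phi)+1$ and the guarantee that $\psi$ is satisfiable (the all-true assignment suffices). Then apply the LMS-style group construction with block size $c = c(\epsilon)$: partition the variables of $\psi$ into blocks of $c$ variables, introduce for each block a clique of size $2^c$ (one vertex per partial assignment of that block), and for each clause attach a constant-size gadget so that a maximum-size independent set picks exactly one vertex per block-clique plus a prescribed number of clause-gadget vertices, and exists if and only if the chosen block-assignments jointly satisfy $\psi$. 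Design these gadgets parsimoniously so that each satisfying assignment of $\psi$ yields a unique maximum-size completion; since $\psi$ is guaranteed to be satisfiable, the maximum IS size is indeed $N$, and hence $\countMIS(G) = \#SAT(\psi)$. The pathwidth of $G$ grows like $n/c + \O(1)$ and the maximum degree is $r = \O(2^c + q)$, both bounded constants depending only on $\epsilon$ and the clause width $q$. A hypothetical $\Ostar{(2-\epsilon)^{\pw}}$ algorithm for \countMIS on graphs of degree at most $r$ would thus solve \#SAT in time $2^{(1-\delta)n} n^{\O(1)}$ for some $\delta = \delta(\epsilon) > 0$, contradicting \#SETH.

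The main obstacle will be ensuring parsimony of the clause gadgets: the standard LMS gadgets were designed for the decision problem and may admit multiple maximum-size completions per consistent clique-assignment, inflating the count. I would replace them by a variant in which, conditional on a satisfying joint assignment, the maximum-size completion within the gadget is unique -- for instance by structuring each clause gadget as a small tree rooted at a vertex that is blocked precisely when the clause is falsified, so that the remaining completion is forced. A secondary concern is keeping the degree bounded when a variable appears in many clauses; I would address this by first preprocessing $\psi$ with parsimonious copy gadgets (chains of equality constraints as used in \cite{CurticapeanM16}) to make $\psi$ of bounded occurrence, incurring only a constant-factor blow-up in $n$ that is absorbed into $\delta(\epsilon)$.
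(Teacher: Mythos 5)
Your high-level plan — start from \cref{lem:hardnessSatCountingSat} to get a satisfiable instance, build a grid of group gadgets and clause gadgets, make the gadgets parsimonious so that $\countMIS$ equals the number of satisfying assignments, and control the degree — is exactly the paper's outline. The gap is in the encoding and the resulting pathwidth estimate, and it is fatal for small $\epsilon$. You encode a block of $c$ variables by a clique on $2^c$ vertices, one vertex per partial assignment, and then claim $\pw(G) = n/c + \O(1)$. That cannot hold: any bag must eventually contain the entire $2^c$-clique (a clique contributes its size, not $1$, to every width parameter), and because the chosen assignment for every block must be visible while you sweep through clause gadgets, the bags must simultaneously carry the clique (or a column of such cliques) for each of the $n/c$ blocks. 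The honest estimate is $\pw(G) = \Theta\bigl((n/c)\cdot 2^c\bigr)$, so the coefficient in front of $n$ is $2^c/c \geq 2$. Plugging this into a hypothetical $\Ostar{(2-\epsilon)^{\pw}}$ algorithm gives running time roughly $(2-\epsilon)^{2n}$, which beats $2^n$ only when $\epsilon > 2-\sqrt{2}$; and since $2^c/c$ is increasing in $c$, no choice of the block size repairs this. So the reduction yields no contradiction with \#SETH for the full range of $\epsilon$ that the lemma must cover.

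The paper avoids this by never using a one-vertex-per-assignment clique for the state encoding. A group of $g$ variables is encoded by the choice of a $d/2$-element subset of $d$ designated \emph{input vertices}, where $d$ is the smallest even integer with $\binom{d}{d/2}\ge 2^g$; this makes $d = g + \O(\log g)$, so $d/g$ can be driven arbitrarily close to $1$ by taking $g$ large. The $\binom{d}{d/2}$-sized cliques do appear (one vertex per $d/2$-subset, used to enforce column-to-column consistency and to interface with the clause gadgets), but they are \emph{local}: each such clique touches only the $d$ input vertices on its two sides and distinct cliques are pairwise non-adjacent, so a mixed-search sweep keeps only one of them in the bag at any time. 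This yields $\pw = \ceil{n/g}\cdot d + \binom{d}{d/2} + d + \O(1)$, whose leading coefficient $d/g$ tends to $1$; the large clique size enters only additively (and into the degree bound $r$, which is allowed to depend on $\epsilon$). That near-optimal subset encoding, not the parsimony of the clause gadgets (which you correctly identify as a design requirement), is the ingredient your proposal is missing.
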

\begin{proof}
	Let $\phi$ be a SAT instance with $n$ variables and $m$ clauses.
	By \cref{lem:hardnessSatCountingSat}, we can assume
	that $\phi$ is satisfiable.
	We construct a graph $G_\phi$ which mainly consists of two types of gadgets.
	The first type encodes the assignment and
	ensures that this information is transferred consistently
	while the second one checks if a clause is satisfied
	by the encoded assignment.
	The ideas for both types of gadgets is that
	the size of the maximum independent set decreases significantly
	if the properties of these gadgets are not fulfilled.
	For this we bound the size of the maximum independent set
	for each type of gadget separately.
	Finally,
	we argue that the number of maximum independent sets of $G_\phi$
	corresponds to the number of satisfying assignments of the formula $\phi$.

	\proofsubparagraph*{Encoding Assignments.}
	We group $g$ variables together, where $g$ is chosen later.
	For each group, we encode the partial assignments by subsets of vertices such that
	each partial assignment corresponds to some $S \subseteq [d]$ of size $d/2$,
	where $d$ is also chosen later as an even integer.
	The high-level construction has a grid-like structure:
	There are $t \deff \ceil{n/g}$ rows,
	one for each group of variables,
	and $m$ columns, one for each clause.
	See \cref{fig:ecover:mis:high-level} for an illustration of the construction.

	We repeat the following construction
	for every $i\in[t]$, denoting the group,
	and every $j\in[m]$, denoting the clause.
	We create a gadget $D_i^j$
	with $d$ new input vertices	$v_{i,1}^j,\dots,v_{i,d}^j$
	which ensures the consistency
	of the encoded assignments.

	We encode the partial assignments to the variables of the $i$th group
	by subsets of these vertices (or their indices from $[d]$).
	To each assignment we assign a unique subset of $[d]$ of size exactly $d/2$.
	This is possible as we set $d$
	such that $\binom{d}{d/2} \ge 2^g$.%
	\footnote{
	Note that for different $i$, we could use in principle different mappings.
	}
	We assign the all zeros assignment
	to the remaining subsets of $[d]$ of size exactly $d/2$.
	Then, every subset of $[d]$ of size $d/2$
	or rather the corresponding selection of the input vertices
	encodes a partial assignment for the $i$th group.

	To ensure the assignment is consistent for all $j$,
	we add $d$ cliques to $D_i^j$, each with $\binom{d}{d/2}$ vertices.
	Each vertex of such a clique is identified with a unique subset of size $d/2$ from $[d]$.
	We denote the vertex for the set $S$ in the $\ell$th clique by $w_{i,S,\ell}^j$.
	For all $S\subseteq[d]$ with $\abs{S}=d/2$
	and all $\ell \in[d]$,
	we connect $w_{i,S,\ell}^j$ to the vertices
	$v_{i,k}^j$ and $v_{i,k}^{j+1}$, for all $k\in [d]\setminus S$.
	\footnote{
	For $j=m$, the vertices are connected only to the first set of vertices,
	as the vertices $v_{i,k}^{m+1}$ do not exist.
	}
	That is,
	if the vertices $\{v_{i,k}^j \mid k \in S\}$
	are selected,
	where $S\subseteq [d]$ and $\abs{S}=d/2$,
	then we can also select the vertices $\{w_{i,S,\ell}^j\}_\ell$
	by the definition of the graph.

	\begin{figure}
		\centering
		\includegraphics[width=\textwidth]{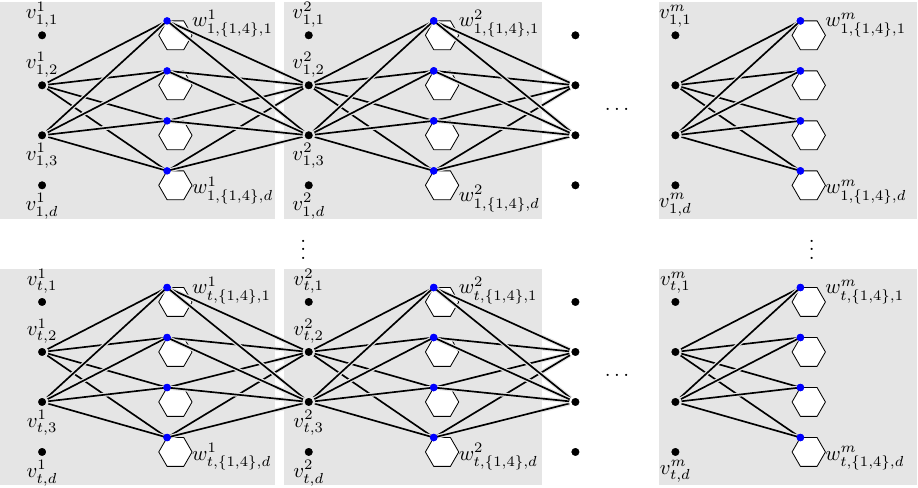}
		\caption{
		High-Level construction of the graph for $d=4$.
		The hexagons represent $6$-cliques
		where the blue vertex corresponds to the vertex of set $S=\{1,4\}$.
		}
		\label{fig:ecover:mis:high-level}
	\end{figure}

	\begin{claim}
		\label{clm:ecover:mis:misInSingleConsistencyGadget}
		The maximum independent sets of $D_i^j$ have size exactly $d/2+d$.
	\end{claim}
	\begin{claimproof}
		For arbitrary $S \subseteq [d]$ with $\abs{S}=d/2$
		(depending on the assignment),
		we select the vertices $\{v_{i,k}^j \mid k \in S\}$
	and the vertices $w_{i,S,\ell}^j$, for all $\ell\in[d]$.

	Now, assume that there is a independent set
	where more than $d/2+d$ vertices are selected.
	Trivially each of the $d$ cliques can contribute at most one vertex to the total number.
	Hence, there must be more than $d/2$ input vertices selected.
	Then, all sets of $d/2$ input vertices intersect with the set of selected input vertices.
	Thus, all vertices of the cliques are adjacent to at least one selected input vertex.
	As a consequence this is not an independent set
	and the bound is tight.
	\end{claimproof}

	As a next step, we consider, for a fixed $i$, all $D_i^j$ together.
	\begin{claim}
		\label{clm:ecover:mis:misInMultipleConsistencyGadgets}
		Let $i\in[t]$ be fixed.
		The size of the maximum independent set in the graph induced by $D_i^1,\dots,D_i^m$
		is exactly $(d/2+d)m$
		if and only if
		for each $D_i^j$ the same subset of vertices
		(with respect to the indices) is selected.
		Otherwise the size of the maximum independent set is smaller.
	\end{claim}
	\begin{claimproof}
		The ``if'' follows directly by the definition of the graph
		and by bounding the size for each $C_i^j$ individually
		using \cref{clm:ecover:mis:misInSingleConsistencyGadget}.

		For the ``only if'' direction
		we are given an maximum independent set of size $(d/2+d)m$.
		Assume there is an $j\in[m]$
		and some $S,S' \subseteq [d]$ with $\abs{S}=\abs{S'}=d/2$ and $S\neq S'$
		such that
		the vertices $\{v_{i,k}^j \mid k\in S\}$
		and the vertices $\{v_{i,k}^{j+1} \mid k\in S'\}$ are selected.
		Thus, there is some $k \in S' \setminus S$
		such that $w_{i,S,\ell}^j$ is connected to $v_{i,k}^{j+1}$.
		As we are given an independent set,
		either none of the vertices $w_{i,S,\ell}^j$ is selected
		or $v_{i,k}^{j+1}$ is not selected.
		Therefore, the size can be at most $(d/2+d)m-1$
		by using the upper bound from \cref{clm:ecover:mis:misInSingleConsistencyGadget}
		for the remaining gadgets.
	\end{claimproof}
	By \cref{clm:ecover:mis:misInMultipleConsistencyGadgets}, for a fixed $i$,
	the assignment is actually the same for each $D_i^j$
	when considering a maximum independent set.
	We say that the partial assignment is consistently encoded
	if the maximum independent set has the size
	from \cref{clm:ecover:mis:misInMultipleConsistencyGadgets}.

	\begin{figure}
		\centering
		\includegraphics[page=1,width=\textwidth]{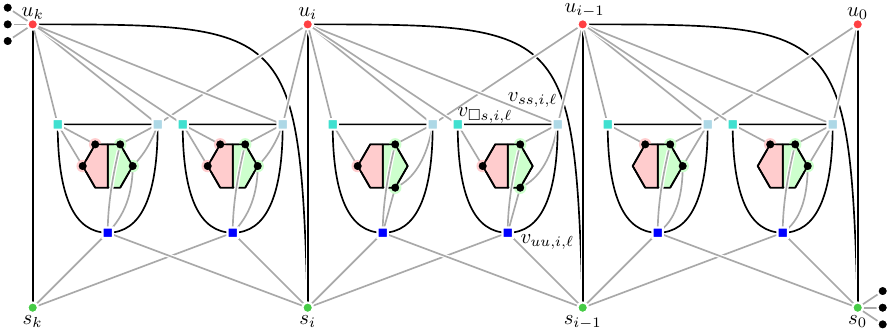}
		\caption{The clause gadget from the lower bound for \countMIS
		with $d=2,k=3$.
		The split hexagons correspond to the $d$ cliques of each group
		where the top part contains the $w_{i,S,\ell}$ with not satisfying $S$
		and the bottom part the ones with satisfying $S$.
		Black and gray edges connect vertices of the same and different type,
		respectively.
		}
		\label{fig:lower:mis:clause}
	\end{figure}

	\proofsubparagraph*{Encoding the Clauses.}
	The clause gadget checks whether a specific clause
	is satisfied by the encoded partial assignments.
	See \cref{fig:lower:mis:clause} for an illustration of the construction.
	For the construction we fix some clause $C_j$ of $\phi$.
	We omit $j$ as superscript from all vertices in the following
	to simplify notation.
	Without loss of generality we assume that $C_j$
	contains only variables from the first $k$ groups.

	The gadget consists of three types of vertices:
	\begin{enumerate}
	\item
	We create vertices $u_0,\dots,u_k$ and $s_0,\dots,s_k$
	and connect each pair $u_i,s_i$ by an edge.
	Moreover, for all $i\in[k]$, we connect $s_{i-1}$ to $u_i$.
	The idea is that the vertices $s_i$ represent the ``satisfied'' state
	while the vertices $u_i$ represent the ``unsatisfied'' state.

	\item
	We repeat the following procedure for all $i\in[k]$.
	We introduce vertices $v_{ss,i,\ell}$,
	$v_{uu,i,\ell}$,
	and $v_{\square s,i,\ell}$
	which are pairwise connected for each $\ell \in [d]$,
	i.e., for each $\ell$ the vertices form a triangle.
	We make $v_{ss,i,\ell}$ adjacent to $u_{i-1}$ and $u_i$.
	Likewise, we connect $v_{uu,i,\ell}$ to $s_{i-1}$ and $s_i$.
	Additionally, we connect $v_{\square s,i,\ell}$ to $u_i$.

	For all $\ell\in[d]$,
	we connect the vertices $v_{uu,i,\ell}$ and $v_{ss,i,\ell}$
	to $w_{i,S,\ell}$
	for each $S \subseteq [d]$ with $\abs{S}=d/2$
	that represent a \emph{satisfying} partial assignment.
	Conversely,
	for all $\ell\in [d]$,
	we connect the vertex $v_{\square s, i,\ell}$ to $w_{i,S,\ell}$
	for each $S$ representing an \emph{unsatisfying} partial assignment.

	The idea is as follows.
	The vertices $v_{uu,i,\ell}$ are selected
	if the clause is not yet satisfied
	and the clause is also not satisfied by the assignment for the $i$th group.
	The vertices $v_{ss,i,\ell}$ are selected
	if the clause is already satisfied
	but not satisfied by the assignment for the $i$th group.
	The vertices $v_{\square s,i,\ell}$ are selected
	if the clause is satisfied by the assignment for the $i$th group
	independent from whether the clause is already satisfied.

	\item
	We make $s_0$ and $u_k$ adjacent to $d$ new vertices, each.
	This ensures that in a maximum independent set the vertices $s_0$ and $u_k$ are not selected.
	Instead,
	the vertices $u_0$ and $s_k$ should be selected.
	Which corresponds to the case 
	that the clause is initially not satisfied
	but eventually is satisfied.
	\end{enumerate}
	For the proof of correctness, we only count the vertices that are newly introduced.
	That is, we do not count the vertices $w_{i,S,\ell}$.

	\begin{claim}
		If the assignment is consistently encoded and satisfies the clause,
		then
		there is a unique maximal independent set with
		$(k+1) + (kd) + (2d)$ vertices.
	\end{claim}
	\begin{claimproof}
		We select $k+1$ vertices of the first type,
		$kd$ vertices of the second type,
		and $2d$ vertices of the third type.
		Bounding the size of the maximum independent set for each type of vertices separately,
		shows that no independent set can be of larger size.

		By these observations,
		we must select all vertices of the third type.
		As we must select, for all $i\in[0,k]$, either $s_i$ or $u_i$,
		the vertex $u_0$ must be selected.
		We select the other vertices in rounds.

		If $u_{i-1}$ is selected,
		we cannot select $v_{ss,i,\ell}$ for the independent set.
		By the above observation
		and since we want to construct an independent set,
		we must select either all $v_{uu,i,\ell}$ vertices, for $\ell\in[d]$,
		or all $v_{\square s,i,\ell}$ vertices, for $\ell\in[d]$.
		Observe that we can only select the vertices $v_{\square s,i,\ell}$
		if the assignment for group $i$ is satisfying
		because these vertices are only connected to the vertices
		that represent an \emph{unsatisfying} assignment.
		Conversely,
		the vertices $v_{uu,i,\ell}$ can only be selected if the assignment is unsatisfying,
		as they are connected to vertices representing a \emph{satisfying} assignment.

		If $s_{i-1}$ is selected,
		we cannot select $v_{uu,i,\ell}$ for the independent set
		because of the adjacency of these two vertices.
		We know that $s_{i}$ must also be selected, as $u_i$ cannot be selected.
		Therefore,
		either all $v_{ss,i,\ell}$ vertices, for $\ell\in[d]$,
		or all $v_{\square s,i,\ell}$ vertices, for $\ell\in[d]$, are selected.
		Depending on the assignment which we are given,
		we can apply the same argument as before.

		By this procedure, we eventually arrive at a point,
		where either $u_k$ or $s_k$ must be selected.
		Since the clause is satisfied by the encoded assignment,
		there is some $c\in[k]$ such that $u_{c-1}$ and $s_c$ are selected.
		Thus, $s_k$ is selected
		and we can also select the remaining vertices of the third type,
		namely the ones adjacent to $u_k$.
	\end{claimproof}
	\begin{claim}
		Assume the assignment is consistently encoded.
		If the maximum independent set has size $(k+1) + (kd) + (2d)$,
		then this assignment satisfies the clause.
	\end{claim}
	\begin{claimproof}
		From the definition of the graph, we get
		that $k+1$ vertices of the first type,
		$kd$ vertices of the second type,
		and $2d$ vertices of the third type must be selected.

		Therefore, $u_0$ and $s_k$ must be selected.
		Since, for all $i\in[k]$, either $u_i$ or $s_i$ must be selected
		and
		the fact that the vertices $s_{i-1}$ and $u_{i}$ are adjacent,
		there must be some $c\in[k]$
		such that $u_0,\dots,u_{c-1}$ and $s_c,\dots,s_k$ are selected.
		This especially implies that none of the vertices
		$v_{uu,c,\ell}$ and $v_{ss,c,\ell}$, for any $\ell\in[d]$, can be selected
		because they are connected to $s_c$ and $u_{c-1}$, respectively.
		Therefore, the vertices $v_{\square s,c,\ell}$ must be selected.
		This is only possible if,
		for some $S$ corresponding to a satisfying assignment,
		the vertices $w_{c,S,\ell}$ are selected.
		Thus,
		the assignment satisfies the clause.
	\end{claimproof}

	\proofsubparagraph*{Analysis of the Construction.}
	For the final part of the lower bound,
	it remains to analyze the pathwidth of the graph $G_\phi$ and to bound the largest degree.
	Both properties depend on the parameters $d$ and $g$ we still need to set.

	\begin{claim}
		The pathwidth of the graph $G_\phi$ is bounded by
	\(
	\ceil{{n}/{g}} \cdot d + \binom{d}{d/2} + d + \O(1).
	\)
	\end{claim}
	\begin{claimproof}
	We use a mixed search approach.
	For this reason, we iterate over all clauses and
	clean the clause gadget and the corresponding columns
	starting from the first row (which corresponds to the first group of variables).
	We place $d$ searchers on the input vertices
	of the next gadget
	and then clean the cliques and clause gadget sequentially
	by cleaning repetition by repetition.
	As the cliques are not connected to each other,
	this requires only $\binom{d}{d/2}+\O(1)$ searchers.
	Moreover, the cliques are only connected to vertices that already have a searcher on them,
	that is, the vertices $v_{i,k}^j$,
	$v_{ss,i,\ell}$, $v_{uu,i,\ell}$, and $v_{\square s,i,l}$,
	for a fixed $\ell$, as we place searcher simultaneously on them.
	Thus, we can clean the clause gadgets with $\O(1)$ searchers.
	\end{claimproof}
	The degree of the vertices is bounded by $\binom{d}{d/2}+2d+\O(1)$.
	We choose the value for the parameter $r$ from the lemma statement
	to be precisely this value.
	As we see in the remaining proof,
	the value of $d$ depends on $\epsilon$ only.

	\proofsubparagraph*{Lower Bound.}
	Assume we are given an algorithm for \countMIS
	with running time $\Ostar[N]{(2-\eps)^\pw}$ for a graph with $N$ vertices
	and maximum degree $r$
	for some $\eps>0$.
	We use this algorithm to count the number of satisfying assignments
	of a SAT formula $\phi$ in time $\Ostar{(2-\delta)^n}$
	where $\delta>0$.

	We choose an $\alpha>1$ such that $(2-\eps)^{\alpha} \le (2-\delta)$
	for some suitable $\epsilon>\delta>0$.
	For ease of notation, we define
	$\eps' = \log(2-\eps)$ and $\delta'=\log(2-\delta)$.
	Hence, we have that $\eps' \cdot \alpha \le \delta'<1$.
	Now, we choose some $0<\beta<1$ such that $\alpha \beta >1$.
	To sue the construction from above,
	we set $d$ to be an even integer large enough such that
		$g \deff \floor{\beta d} > 0$,
		$d \le \alpha \floor{\beta d}$,
		and
		$d \le 2^{(1-\beta)d +1}$.

	We first show that we can encode all partial assignments
	by our choice of parameters.
	For this, it suffices to bound $\binom{2b}{b}$ from below by $4^b/b$
	which follows directly by induction starting from $b=4$.
	Then, we get
	\[
		\binom{d}{d/2} \ge \frac{2^{d+1}}{d} \ge 2^{\beta d} \ge 2^g.
	\]
	Now, for the formula $\phi$, we construct the graph $G_\phi$ as given above
	and count the number of maximum independent sets.
	Since we know that $\phi$ is satisfiable,
	the number of maximum independent sets corresponds to the number of satisfying assignments.

	The running time of the whole procedure can be bound by
	\begin{align*}
		\Ostar[(n+m)]{(2-\eps)^\pw}
		&\le
		\Ostar[(n+m)]{(2-\eps)^{\ceil{\frac{n}{g}} \cdot d + \binom{d}{d/2} + d + \O(1)}} \\
		&\le
		\Ostar[(n+m)]{(2-\eps)^{\frac{n}{g} \cdot d + \binom{d}{d/2} + 2d +  \O(1)}}.
		\intertext{
		Observe that \(d\) does not depend on the input formula
		but just on \(\eps\).
		Thus, the term \((2-\eps)^{\binom{d}{d/2} + 2d + \O(1)}\)
		contributes only a constant factor to the overall running time
		and can be hidden by the \((n+m)^{\O(1)}\) term.
		}
		&\le
		\Ostar[(n+m)]{(2-\eps)^{\frac{n}{g} \cdot d}} \\
		&\le
		\Ostar[(n+m)]{2^{\eps'\frac{n}{\floor{\beta d}} \cdot d}} \\
		\intertext{
		By our choice of \(d\), \(\alpha\), and \(\beta\) we get
		}
		&\le
		\Ostar[(n+m)]{2^{\eps' \alpha n}}
		\le
		\Ostar[(n+m)]{2^{\delta' n}}
		\le
		\Ostar[(n+m)]{(2-\delta)^{n}}
		.
	\end{align*}
	This running time now immediately contradicts \#SETH and the claim follows.
\end{proof}
The next step extends the previous result to regular graphs.
\begin{lemma}\label{lem:lower:count:hardnessCountMISregular}
	For every constant $\epsilon>0$,
	there is an $r>0$ such that
	\countMIS on
	$r$-regular graphs given with a path decomposition of width $\pw$
	cannot be solved in time $\Ostar{(2-\epsilon)^{\pw}}$,
	unless \#SETH fails.
\end{lemma}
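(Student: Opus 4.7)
The plan is to reduce \countMIS on graphs of maximum degree $r$ (as provided by \cref{lem:lower:count:hardnessCountMIS}) to \countMIS on $r'$-regular graphs for some $r' \geq r$ depending only on $\epsilon$. The idea is to regularize the graph by attaching constant-size padding gadgets to low-degree vertices, preserving the MIS count up to a computable factor and raising the pathwidth by only an additive constant.

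For each vertex $v$ of deficiency $d_v = r' - \deg_G(v)$, I would attach $d_v$ copies of a padding gadget $H$. The gadget $H$ is a constant-size graph built from small cliques (such as $K_{r'}$ or $K_{r'+1}$ with a few edges removed) having one or two designated port vertices of internal degree $r'-1$; every non-port vertex of $H$ already has internal degree $r'$. Attaching a port to $v$ raises the port to degree $r'$ and increments $v$'s degree by one. Parity of the degree sum may force a two-port version of $H$ in which $v$'s degree is raised by two at a time, possibly by first pairing low-degree vertices. The result is an $r'$-regular graph $G'$ on which the MIS count can be analyzed vertex-by-vertex.

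The critical property of $H$ is that its contribution to the global MIS count is a known factor regardless of whether $v$ is in the outer MIS. Since $H$ is built from cliques and each clique contributes exactly one vertex to every MIS, careful placement of the port (for instance, forcing it out of every MIS by adjoining a dominating clique) yields an identity relating the MIS counts of $G'$ and $G$ by a computable constant depending only on $G$ and the gadget, from which the original count is recovered. Pathwidth grows by only $O(1)$ because each gadget has constant size and is attached to a single vertex, so while processing $v$ the bag of the path decomposition holds the $O(1)$ gadget vertices temporarily.

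The main obstacle is the combinatorial design of $H$ so that its MIS count is cleanly symmetric in $v$'s state; if direct symmetry proves unwieldy, I would fall back on an interpolation argument by parametrizing $H$ with an internal size parameter, computing the MIS count of $G'$ at polynomially many values, and solving a small linear system at polynomial overhead. Once regularization is in place, an $\Ostar{(2-\epsilon)^{\pw}}$ algorithm for \countMIS on $r'$-regular graphs would, via $G'$, yield the same running time (up to additive $O(1)$ in $\pw$) for max-degree-$r$ graphs, contradicting \cref{lem:lower:count:hardnessCountMIS} and hence \#SETH.
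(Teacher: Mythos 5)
Your overall plan is the same as the paper's: regularize the max-degree-$r'$ instance by attaching constant-size port gadgets to every low-degree vertex so that the MIS count is preserved (up to a known multiplicative factor of~$1$) and the pathwidth grows by only $O(1)$. That part is sound, including the reduction back to \cref{lem:lower:count:hardnessCountMIS}.

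The gap is the gadget itself. You correctly identify this as the crux, but the concrete sketch you give does not work: if the port sits in a $K_{r'+1}$ (your ``dominating clique'') it already has internal degree $r'$ and cannot absorb the extra edge to the outer vertex; if it sits in a $K_{r'}$ it has internal degree $r'-1$, but then the other $r'-1$ vertices of that clique are also at degree $r'-1$ and must be padded further, and there is no reason the port is forced out of every MIS. The paper's construction closes this exactly: pick $r$ to be the odd one of $r'$ and $r'+1$, and build the gadget from $(r-1)/2$ copies of $K_{r+1}$ with one edge $\{a_i,b_i\}$ deleted, with the single port vertex $v$ joined to all of $a_1,b_1,\dots,a_{(r-1)/2},b_{(r-1)/2}$. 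Then every non-port vertex already has degree $r$, and $v$ has degree $r-1$. Each near-clique admits a size-$2$ independent set $\{a_i,b_i\}$ only when $v$ is excluded, so for $r\geq 5$ the gadget has a \emph{unique} maximum independent set (of size $r-1$) that avoids $v$; consequently attaching one gadget per missing unit of degree leaves every MIS of $G$ in one-to-one correspondence with an MIS of the regularized graph, no interpolation needed. Your interpolation fallback would also work, but it adds polynomial overhead and a linear-algebra step that the explicit gadget makes unnecessary.
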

\begin{proof}
	Let $\epsilon>0$ be an arbitrary constant.
	By \cref{lem:lower:count:hardnessCountMIS},
	there is some $r'>0$ (just depending on $\epsilon$)
	such that \countMIS on graphs of maximum degree $r'$
	cannot be solved in time $\Ostar{(2-\epsilon)^\pw}$,
	unless \#SETH fails.
	Let $G$ be such a \countMIS instance.
	We modify $G$ such that we get a regular graph $H$.

	Set $r$ to be the odd number of $r'$ and $r'+1$.
	We first construct a gadget $J$ containing a distinguished vertex $v_J$
	as portal
	that is ``forced'' to \emph{not} be in any solution of
	the \countMIS instance.
	For this gadget, we first
	take a clique with $r+1$ vertices and remove an edge between two arbitrary vertices.
	Then, add an edge from both of these vertices to a new vertex $v_J$.
	We repeat this process $(r-1)/2-1$ more times resulting in $(r-1)/2$ modified cliques.
	We denote the resulting gadget by $J$
	and look at $v_J$ as the portal vertex of $J$.
	Observe that $v_J$ has degree $r-1$ whereas
	all other vertices in the gadget have degree $r$.
	Observe that $J$ has a unique maximum independent set
	of size $r-1$ which does not contain $v_J$,
	that is, the independent set where, for each of the $(r-1)/2$ modified cliques,
	the two non-adjacent vertices are selected.

	Now, for each vertex $u$ in $G$,
	we introduce $r-\deg(u)$ copies of the gadget $J$
	and make all copies of $v_J$ adjacent to $u$.
	Let $H$ denote the resulting $r$-regular graph.
	Observe that $H$
	retains all the maximum independent sets from $G$.

	The size of the graph $H$ can be bounded by $\O(nr^3)$
	and the pathwidth increases by an additive term of $r+2$
	compared to the pathwidth of $G$.
	Recall that $r$ depends on $r'$ which only depends on the fixed $\epsilon$.
	When running the claimed algorithm on the modified instance $H$,
	the change of the size and pathwidth
	contributes only a constant to the running time
	which asymptotically does not change the running time.
	By \cref{lem:lower:count:hardnessCountMIS}, this contradicts \#SETH.
\end{proof}
Now we have everything ready for the last step of the reduction
where we reduce from \countMIS on regular graphs to \countECover.

\begin{proof}[Proof of \cref{thm:lower:count:hardnessEdgeCover}]
	For a given $\epsilon>0$,
	we know from \cref{lem:lower:count:hardnessCountMISregular},
	that there is an $r>0$ such that
	\countMIS on $r$-regular graphs
	cannot be solved in time $\Ostar{(2-\epsilon)^\pw}$,
	unless \#SETH fails.
	Let $H=(V,E)$ be such an $r$-regular \countMIS instance.
	We follow the ideas from the \sharpP-completeness proof of \countECover in \cite{BordewichDK08}
	to obtain a Turing reduction from \countMIS to \countECover.

	Let $I_j(H)$ be the number of independent
	sets of size exactly $j$ in $H$.
	We subdivide all edges of $H$
	by placing a new vertex on each edge.
	Let $G$ be the resulting graph
	and let $U$ be the set of new vertices.
	Let $N_i(G)$ be the number of
	subsets $E' \subseteq E(G)$ such that $i$ vertices of $V$ are not
	covered and all vertices of $U$ are covered. We follow a similar argument
	as in \cite{BordewichDK08, BubleyD97} to count the number of subgraphs that
	contribute to $N_i(G)$.
	
	\begin{claim}
		\label{clm:hardnessEC:firstEquality}
		With the notation as before, it holds that
		\[
			3^{rn/2-jr} I_j(H) = \sum_{i=j}^{n} \binom{i}{j} N_i(G).
		\]
	\end{claim}
	\begin{claimproof}
		We prove the claim by analyzing how independent sets for $H$
		can be transformed into edge covers of $G$.
		The first formulation covers with the left-hand side
		and the second corresponds to the right-hand side.

	For some independent set $S \subseteq V(H)$ of size $j$ in $H$,
	consider some vertex $v\in S$.
	Let
	$(v,v_1),\ldots,(v,v_{r})$ be the incident edges of $u$.
	Suppose these edges are subdivided to get the edges
	$(v,u_1),(u_1,v_1),\dots,(v,u_r),(u_r,v_r)$
	for some $u_1,\ldots,u_r\in U$.
	We select none of $(v,u_1),\ldots,(v,u_r)$ and select all of $(u_1,v_1),\ldots,(u_r,v_r)$
	in the solution.
	We repeat the previous selection procedure for every $v\in S$.
	Now, observe that there are $rn/2-jr$ edges in $H$
	that are not incident to a vertex from $S$.
	For each of these edges,
	there is a unique vertex $u\in U$ subdividing this edge in $G$.
	We select at least one of the two subdividing edges incident to $u$;
	there are three possible ways to do this.
	The number of possible ways for this
	forms the left-hand side of the equation.

	For the right-hand side, observe that the above
	process creates every subset of edges in $G$
	where $S$ is not covered and all of $U$ is covered.
	If a selection $E' \subseteq E(G)$ leaves exactly $i$ vertices, say $S' \subseteq V$, uncovered in $G$,
	then the above process generates $E'$ only if $S\subseteq S'$.
	This means that there are exactly $\binom{i}{j}$ sets $S$ of size $j$
	for which the process creates $E'$.
	\end{claimproof}

	By \cref{clm:hardnessEC:firstEquality},
	it suffices to recover the values of $N_i(G)$
	to recover the values of $I_j(H)$.

	Let $N_{ij}(G)$ be the number of edge subsets of $G$
	that leave exactly $i$ vertices in $V$ and exactly $j$ vertices in $U$
	uncovered.
	We attach paths of length $m$ (i.e., with $m$ edges) to all
	vertices $v\in V$ and attach paths of length $k$ to all vertices $u\in U$. Let
	the resulting graph be $G_{m,k}$.
	For a path of length $\ell$ (i.e., with $\ell$ edges),
	let $M_\ell$ be the number of edge covers of such a path.
	Observe that $M_\ell = M_{\ell-1} + M_{\ell-2}$ with $M_1=M_2=1$
	which is precisely the definition of the Fibonacci numbers.

	Now, consider an edge subset $E'$ of $G$
	that leaves exactly $i$ vertices in $V$ and exactly $j$ vertices in $U$
	uncovered.
	(Note that $E'$ contributes towards $N_{ij}(G)$.)
	We analyze how $E'$ can be extended to an edge cover of $G_{m,k}$.
	For a vertex $v \in V \cup U$ that is already covered by some edge
	we can choose
	an edge cover of the path attached to $v$ that either covers $v$ or not.
	For a vertex $v \in V \cup U$ that is not covered by any edge we must choose an edge
	cover of the path attached to $v$ that also covers $v$.
	Denote by $\cEC(G_{m,k})$ the number of edge covers of $G_{m,k}$.
	Then, by the above observations, we get
	\begin{align*}
		\cEC(G_{m,k})
			&=\sum_{i=0}^{n} \sum_{j=0}^{rn/2}
								M_m^i(M_m+M_{m-1})^{n-i}M_k^j(M_k+M_{k-1})^{rn/2-j} N_{ij}(G)\\
			&=M_m^nM_k^{rn/2}\sum_{i=0}^{n} \Bigg( 1+\frac{M_{m-1}}{M_m}\Bigg)^{n-i}
				\sum_{j=0}^{n} \Bigg( 1+\frac{M_{k-1}}{M_k} \Bigg)^{rn/2-j} N_{ij}(G).
	\end{align*}
	We interpret $\cEC(G_{m,k})$ as a polynomial in the two variables
	$1+{M_{m-1}} / {M_m}$
	and $1+{M_{k-1}} / {M_k}$.
	We use the algorithm for \countECover
	to get the value of $\cEC(G_{m,k})$.
	Since the $M_\ell$s correspond to Fibonacci numbers,
	we know that ${M_\ell} / {M_{\ell-1}}$ takes infinitely many distinct values
	(though the sequence is converging).
	Hence, we can use interpolation to recover $N_{ij}(G)$
	for all values of $i$ and $j$.
	Since $N_i(G)=N_{i0}(G)$,
	we can recover the number of maximum independent sets of $H$,
	that is $I_{j^*}$ where $j^* = \max\{j' \mid I_{j'} \neq 0\}$.

	The only remaining task is to argue that we do not affect the pathwidth much.
	Observe that we only subdivided edges
	and attached graphs of constant pathwidth to vertices.
	Both of these steps do not alter the pathwidth by more than a constant. Thus, given a
	$\Ostar{(2-\epsilon)^\pw}$ algorithm for \countECover, we
	get a $\Ostar{(2-\delta)^\pw}$ algorithm for \countMIS
	on $r$-regular graphs.
	By \cref{lem:lower:count:hardnessCountMISregular},
	this contradicts \#SETH.
\end{proof}

\bibliography{bib}

\end{document}